\theoremstyle{plain}
\newtheorem{thm}{Theorem}[section]
\newtheorem{prop}[thm]{Proposition}
\newtheorem{lemma}[thm]{Lemma}
\newtheorem{remark}[thm]{Remark}
\newtheorem{cor}[thm]{Corollary}
\newtheorem{defn}[thm]{Definition}
\theoremstyle{definition}
\newcommand{\Z}{\mathbb{Z}}
\newcommand{\N}{\mathbb{N}}
\newcommand{\R}{\mathbb{R}}
\newcommand {\T} {\mathbb T} % Torus
\newcommand{\Cov}{\mathrm{Cov}}
\newcommand{\diam}{\mathrm{diam}}
\newcommand{\obs}{\mathrm{obs}}
\newcommand{\ext}{\mathrm{ext}}
\newcommand{\es}{\emptyset}
\newcommand{\1}{\mathbbm{1}}
\newcommand{\vertiii}[1]{{\left\vert\kern-0.25ex\left\vert\kern-0.25ex\left\vert #1 
    \right\vert\kern-0.25ex\right\vert\kern-0.25ex\right\vert}}
\newcommand{\de}{\mathrm{d}}								%Für integral
\begin{document}

\hyphenation{co-va-ri-ance cor-re-la-tion nor-mali-sa-tion}

\title{Decay of covariance for gradient models with non-convex potential}
\author{Susanne Hilger\footnote{Email: shilger@posteo.de}}
\date{}
\maketitle

\begin{abstract}
We consider gradient models on the lattice $\Z^d$. These models serve as effective models for interfaces and are also known as \textit{continuous Ising models}. The height of the interface is modelled by a random field with an energy which is a non-convex perturbation of the quadratic interaction. We are interested in the Gibbs measure with tilted boundary condition $u$ at inverse temperature $\beta$ of this model.

In this paper we present a fine analysis of the covariance of the gradient field. We show that the covariances of the Gibbs distribution agree with the covariance of the Gaussian free field up to terms which decay at a faster algebraic rate. The key tool is the extension of the renormalisation group method to observables as developed in \cite{BBS_Crit2}.

\end{abstract}

\section{Introduction}

We analyse continuous Ising models which are effective models for random interfaces. Let $\Lambda \subset \Z^d$ be a finite subset of the lattice.  We consider fields $\varphi: \Lambda \to \R$ which can be interpreted as height variables of the interface. To each configuration $\varphi \in \R^{\Lambda}$ an energy $H_{\Lambda}(\varphi)$ is assigned This Hamiltonian is given by a potential $W:\R \to \R$ that only depends on discrete gradients of the field,
$$
H_{\Lambda}(\varphi) = \sum_{x \in \Lambda} \sum_{i = 1}^d W(\nabla_i \varphi (x)),
$$
where $\nabla_i \varphi (x) = \varphi(x+e_i) - \varphi(x)$ is the finite difference quotient on the lattice.
We impose tilted boundary conditions, namely
$$
\varphi(x) = \psi^u(x) \quad\mbox{for }x \in \partial\Lambda,
\quad \psi^u(x) = u \cdot x \mbox{ for } u \in \R^d.
$$
The finite-volume Gibbs measure with boundary condition $\psi^u$ at inverse temperature $\beta>0$ is then
$$
\gamma_{\beta,\Lambda}^{\psi^u} (\de \varphi) = \frac{1}{Z_{\beta,\Lambda}^{\psi^u}} \,\, e^{-\beta H_{\Lambda}(\varphi)} \prod_{x \in \Lambda} \de \varphi(x) \prod_{x \in \partial \Lambda} \delta_{\psi^u(x)}(\de \varphi (x)),
$$
where $Z_{\beta,\Lambda}^{\psi_u}$ is the partition function which normalizes the measure.

~\\
In the case of strictly convex, symmetric $W$ a lot is known about the behaviour of $\gamma_{\beta,\Lambda}^{\psi^u} (\de \varphi)$: The infinite-volume gradient Gibbs measure exists and is uniquely determined by the tilt, see \cite{FS97}. The long distance behaviour is described by the Gaussian free field (see \cite{NS97} and \cite{GOS01}) and the decay of the covariance is polynomial as in the massless Gaussian case (\cite{DD05}). Moreover the surface tension is strictly convex (DGI00).

~\\
The situation is not that clear for models with non-convex potentials.

A special class of gradient fields with non-convex potentials (log-mixture of centered Gaussians) is considered in \cite{BK07}. At tilt $u=0$, a phase transition is shown to happen at some critical value of inverse temperature $\beta_c$. This result demonstrates that one can expect neither the uniqueness of gradient Gibbs measures corresponding to a fixed tilt $u$ nor strict convexity of the surface tension. However, the scaling limit in this case is still the Gaussian free field, as shown in \cite{BS11}.

For a class of gradient models where the potential is a small non-convex perturbation of a strictly convex one, \cite{CDM09}shows strict convexity of the surface tension at high temperature. For the same class in the same temperature regime, in \cite{CD12} it is shown that for any $u$ there exists a unique ergodic, shift-invariant gradient Gibbs measure . Moreover, the measure scales to the Gaussian free field and the decay of the covariance is algebraic as above.

~\\
The complementary temperature regime is considered in \cite{AKM16}. The authors consider potentials which are small perturbations of the quadratic one, the perturbation chosen such that it does not disturb the convexity at the minimum of the potential. For small tilt $u$ and large inverse temperature $\beta$ they prove strict convexity of the surface tension obtained as a limit of a subsequence of $(N_l)_{l \in \N}$, where $L^N$ is the side length of the box $\Lambda$, and relying on a quite restrictive lower bound on $W$, namely
$$
W(s) \geq (1-\epsilon) s^2
$$
for a small $\epsilon$.

In the same setting the paper \cite{Hil16} shows that there is $q \in \R^{d \times d}_{\text{sym}}$ small, such that the scaling limit is the Gaussian free field on $\T^d$ with covariance $\mathcal{C}^q_{\T^d}$, where
$$
 \left( \mathcal{C}_{\T^d}^q \right)^{-1}
 = - \sum_{i,j=1}^d \left( \delta_{ij} + q_{ij} \right) \partial_i \partial_j,
%\quad q \in \R^{d \times d}_{\text{sym}}\,\, \text{ small},
$$
 and that a "smoothed" covariance decays algebraically. The convergences are on a subsequence.

~\\
In \cite{ABKM} the class of potentials is widened to such which satisfy less restrictive bounds on the potential, namely
$$
W(s) \geq \epsilon s^2,
$$
and to vector-valued fields and finite-range instead of only nearest-neighbour interaction. The last two improvements are of interest for the application in nonlinear elasticity. The authors show that the surface tension is strictly convex and that the scaling limit is the Gaussian free field on the torus. All convergences are still on a subsequence. This assumption is removed in \cite{Hil19_1}.

~\\
The setting in this paper is similar to the one from \cite{ABKM} and \cite{Hil19_1}: We restrict to small tilts and large inverse temperature and use the same smallness condition on the potential. For the sake of simplicity we formulate our results and proofs for scalar-valued fields and nearest-neighbour interaction. We show refined covariance estimates, namely
$$
\left| \Cov(\nabla_i \varphi(a), \nabla_j\varphi(b))
\right|
\leq C \frac{1}{|a-b|^d}.
$$
More precisely, it is shown that to first order in $|a-b|$ the Gaussian covariance $C^q_{\Z^d}$ appears, where $C^q_{\Z^d}$ is the kernel of $\mathcal{C}^q_{\Z^d}$ with $\left( \mathcal{C}^q_{\Z^d} \right)^{-1} = \sum_{i,j=1}^d (\delta_{ij} + q_{ij}) \nabla_j^* \nabla_i$:
$$
\Cov(\nabla_i \varphi(a), \nabla_j\varphi(b))
= \nabla_j^* \nabla_i C^q(a,b) + R_{ab},
\quad
\left|R_{ab}\right| \leq C \frac{1}{|a-b|^{d + \nu}},
\quad
\nu>0.
$$

~\\
The proof builds on a rigorous renormalisation group approach for the partition function as developed by Bauerschmidt, Brydges and Slade in a series of papers (\cite{BS1},\cite{BS2}, \cite{BS3}, \cite{BS4}, \cite{BS5}). This approach is developed for the model at hand in \cite{AKM16} and improved in \cite{ABKM} and \cite{Hil19_1}. We augment the technique in the following direction: The renormalisation group analysis is enlarged from the bulk flow (which determines the partition function) to observables. This allows us to prove fine estimates for the covariance.

\paragraph{Structure of the paper}

In Section \ref{sec:SettingResults}, gradient models are introduced and the main result on a fine estimate on the covariance (Theorem \ref{Thm:DecayCovariance}) is stated. Furthermore, a technical theorem on which the proof of the result is based is formulated (Theorem \ref{Thm:RepresentationExtendedPartitionFunction}). The technical theorem contains a representation of the generating partition function and provides a straightforward proof of the main result.

In Section \ref{sec:RG-analysis_BulkFlow} steps from the RG analysis for the bulk flow in \cite{Hil19_1} are outlined. They are needed for the extended proof in the next section.

Section \ref{sec:RG-analysis_ObservableFlow} is dedicated to the RG analysis for the observable flow and the proof of Theorem \ref{Thm:RepresentationExtendedPartitionFunction}.

In Section \ref{sec:Proofs_ExtendedFlow}, details for certain extensions and intermediate steps are provided. The presentation follows closely the one in \cite{ABKM} in order to facilitate the understanding of the extensions. Proofs are only provided if they differ from the ones in \cite{ABKM}.

%~\\
\paragraph{Notations}
Throughout the whole paper we will use the following notations.
\begin{itemize}
	\item $C_c^{\infty}$ will denote the set of smooth, compactly supported functions.
	\item Partial derivatives will be denoted by $\partial_s$ instead of $\frac{\partial}{\partial s}$.
	\item The symbol $\partial_i$ will be used for usual derivatives, in contrast to $\nabla_i$ for discrete finite differences.
	\item $C^r$ denotes the set of $r$-times differential functions.
	\item $\R^{d \times d}_{{\text{sym}}}$ denotes the set of $d \times d$ symmetric matrices.
	\item The Kronecker-delta $\delta_{ij}$ is $1$ if $i=j$ and $0$ else.
	\item The indicator function $\1_z$ is given by $\1_z = 1$ if condition $z$ is satisfied and $\1_z = 0$ otherwise.
	\item We use the \textit{big O notation} $f(x) = \mathcal{O}(g(x))$ as $x \rightarrow \infty$ to describe the limiting behaviour of the function $f$ in terms of the function $g$. It means that for all sufficiently large values of $x$, the absolute value of $f(x)$ is at most a positive constant multiple of $g(x)$.
	\item For $x \in \R$ let $(x)_+$ be $x$ if $x \geq 0$ and $0$ else.
	\item For $x,y \in \R$ let $(x \wedge y)$ denote the minimum of $x$ and $y$.
	\item The symbol $C$ will mostly denote a positive constant whose value is allowed to change in a chain of inequalities from line to line.
\end{itemize}

\section{Setting and result} \label{sec:SettingResults}

We start by describing gradient models and their finite-volume Gibbs distributions and stating the main result, namely the decay of correlations in Theorem \ref{Thm:DecayCovariance}.

Then we state a technical key theorem (Theorem \ref{Thm:RepresentationExtendedPartitionFunction}), which is the main component of the proof of the main result. It contains a powerful representation of the normalisation constant of the Gibbs measure with observables. From this representation the proof of the main result can be deduced straightforwardly.

\subsection{Gradient models}\label{sec:GradientModels}

Fix an odd integer $L \geq 3$ and a dimension $d \geq 2$. Let $\T_N = \left( \Z / L^N \Z \right)^d$ be the $d$-dimensional discrete torus of side length $L^N$ where $N$ is a positive integer. We equip $\T_N$ with the quotient distances $| \cdot |$ and $| \cdot |_{\infty}$ induced by the Euclidean and maximum norm respectively. The torus can be represented by the cube
$$
\Lambda_N = \left\lbrace x \in \Z^d: |x|_{\infty} \leq \frac{1}{2} \left(L^N - 1 \right) \right\rbrace
$$
of side length $L^N$ once it is equipped with the metric
$$
|x-y|_{\mathrm{per}} = \inf \left\lbrace |x-y+k|_{\infty}: k \in \left( L^N \Z \right)^d \right\rbrace.
$$

 Define the space of fields on $\Lambda_N$ as 
$$
\mathcal{V}_N = \lbrace \varphi: \Lambda_N \rightarrow \R \rbrace = \R^{\Lambda_N}.
$$
Since we will consider shift invariant energies, we are only interested in gradient fields on $\mathcal{V}_N$. Gradient fields can be described by elements in $\mathcal{V}_N /_{\lbrace\text{constants}\rbrace}$, or, equivalently, by usual fields with vanishing average
$$
\chi_N = \bigg\lbrace \varphi \in \mathcal{V}_N: \sum_{x \in \Lambda_N} \varphi(x) = 0 \bigg\rbrace.
$$
We equip $\chi_N$ with a scalar product via
$$
(\varphi,\psi) = \sum_{x \in \Lambda_N} \varphi(x) \psi(x).
$$
Let $\lambda_N$ be the $\left(L^{Nd}-1\right)$-dimensional Hausdorff measure on $\chi_N$. Let $e_i$, $i= 1, \ldots, d$, be the standard unit vectors in $\Z^d$. Then the discrete forward and backward derivatives are defined by
\begin{align*}
\nabla_i \varphi(x) = \varphi (x + e_i) - \varphi(x), \quad i \in \lbrace 1, \ldots, d \rbrace,
\\
\nabla_i^* \varphi(x) = \varphi (x - e_i) - \varphi(x), \quad i \in \lbrace 1, \ldots, d \rbrace.
\end{align*}

Let $W: \R \rightarrow \R$ be a potential which is a perturbation of a quadratic potential,
$$
W(s) = \frac{1}{2} s^2 + V(s), \quad V: \R \rightarrow \R.
$$
We study a class of random gradient fields defined in terms of a Hamiltonian
\begin{align*}
H_N(\varphi)
= \sum_{x \in \Lambda_N} \sum_{i=1}^d W(\nabla_i \varphi(x))
= \sum_{x \in \Lambda_N} \sum_{i=1}^d \left( \frac{1}{2} |\nabla_i \varphi (x)|^2 + V(\nabla_i \varphi(x))\right).
\end{align*}

We equip the space $\chi_N$ with the $\sigma$-algebra $\mathfrak{B}_{\chi_N}$ induced by the Borel-$\sigma$-algebra with respect to the product topology, and use $\mathcal{M}_1(\chi_N) = \mathcal{M}_1(\chi_N, \mathfrak{B}_{\chi_N} )$ to denote the set of probability measures on $\chi_N$.

The finite-volume gradient Gibbs measure $\gamma_{N,\beta} \in \mathcal{M}_1(\chi_N)$ at inverse temperature $\beta$ is defined as
\begin{align*}
\gamma_{N,\beta}(\de \varphi) = \frac{1}{Z_{N,\beta}} e^{-\beta H_N(\varphi)} \lambda_N(\de \varphi)
\end{align*}
with partition function
\begin{align*}
Z_{N,\beta} = \int_{\chi_N} e^{-\beta H_N(\varphi)} \lambda_N(\de \varphi).
\end{align*}
The model describes the behaviour of a random microscopic interface. A microscopic tilt applied to the discrete interface can be implemented by the Funaki-Spohn trick introduced in \cite{FS97}. Given $u \in \R^d$, we define the Hamiltonian $H_N^u$ on the torus~$\T_N$ with tilt $u$ by
\begin{align*}
H_N^u(\varphi) = \sum_{x \in \Lambda_N} \sum_{i=1}^d W(\nabla_i \varphi(x) + u_i).
\end{align*}
Consequently, the finite-volume gradient Gibbs measure $\gamma_{N,\beta}^u$ with tilt $u$ is defined~as
\begin{align*}
\gamma_{N,\beta}^u(\de \varphi) = \frac{1}{Z_{N,\beta}(u)} e^{-\beta H_N^u(\varphi)} \lambda_N(\de\varphi),
\end{align*}
where $Z_{N,\beta}(u)$ is the normalisation constant. A useful generalisation of the partition function with a source term $f \in \mathcal{V}_N$ is given by the generating functional
\begin{align}
Z_{N,\beta}(u,f) = \int_{\chi_N} e^{-\beta H_N^u(\varphi)+(f,\varphi)} \lambda_N(\de \varphi).
\label{eq:gen_func}
\end{align}

\subsection{Main results}

We assert an asymptotic expression for the gradient-gradient covariance of the Gibbs measure.

~\\
We impose the following assumptions on the potential $W$:
\begin{align}
\begin{cases}
&\mbox{Let } r_0 \geq 3, r_1 \geq 2, \, V \in C^{r_0+r_1}, V'(0) = V''(0) = 0. \nonumber
\\
&\mbox{Let } 0 < \omega < \frac{1}{16}  \mbox{ and suppose that } \sum_{i=1}^d W(z_i) \geq \omega |z|^2 \mbox{ and}
\tag{$\star$}
\label{eq:AssumptionsW}
\\
&\lim_{t \rightarrow \infty}t^{-2} \ln \Psi (t) = 0
\\ & \quad\quad
 \mbox{ where }
\Psi(t) = \sup_{|z|\leq t} \sum_{3 \leq |\alpha| \leq r_0+r_1} \frac{1}{\alpha !} \lvert \partial^{\alpha} \sum_{i=1}^d W(z_i) \rvert.
\nonumber
\end{cases}
\end{align}

~\\

%~\\
We give a formula for the gradient-gradient covariance. Given $a,b \in \Lambda_N$ and directions $m_a,m_b \in \lbrace 1, \ldots, d \rbrace$, define
\begin{align*}
&\Cov_{\gamma_{N,\beta}^u} \left(\nabla_{m_a} \varphi(a), \nabla_{m_b} \varphi(b) \right)%\nonumber
\\
&=
\int_{\chi_N} \nabla_{m_a} \varphi(a) \nabla_{m_b} \varphi(b) \gamma_{N,\beta}^u(\de \varphi)
- \int_{\chi_N} \nabla_{m_a} \varphi(a) \gamma_{N,\beta}^u(\de \varphi)
\int_{\chi_N} \nabla_{m_b} \varphi(b) \gamma_{N,\beta}^u(\de \varphi).
\end{align*}

For $q \in \R^{d \times d}_{\text{sym}}$ small, let $\mathcal{C}^q_{\Z^d}$ be the inverse of the differential operator on gradient fields on~$\Z^d$,
$$
\mathcal{C}^q_{\Z^d} = \left( \mathcal{A}^q_{\Z^d} \right)^{-1},
\quad
\mathcal{A}^q_{\Z^d} = \sum_{i,j=1}^d  \left( \delta_{ij} + q_{ij} \right) \nabla_j^* \nabla_i.
$$
%(see \cite{Fun05} for details on gradient fields on $\Z^d$ and existence of $\mathcal{C}^q_{\Z^d}$)
 Let $C^q_{\Z^d}$ be the kernel corresponding to the operator $\mathcal{C}^q_{\Z^d}$.

The following theorem states that in the thermodynamic limit $\Lambda_N \rightarrow \Z^d$ the gradient-gradient covariance is dominated by the covariance $\mathcal{C}^q_{\Z^d}$ of the discrete Gaussian free field on $\Z^d$.

\begin{thm}[Decay of the covariance]\label{Thm:DecayCovariance}
Let $W$ satisfy \eqref{eq:AssumptionsW}.
There is $L_1$ such that for all odd integers $L \geq L_1$ there is $\delta>0$ and $\beta_0$ with the following property. For all $u \in B_{\delta}(0)$ and $\beta \geq \beta_0$ there is $q = q(u,\beta,V) \in \R^{d \times d}_{\mathrm{sym}}$ such that
\begin{align*}
\lim_{N \rightarrow \infty}
\Cov_{\gamma_{N,\beta}^u} \left(\nabla_{m_a} \varphi(a), \nabla_{m_b} \varphi(b) \right)
=
\frac{1}{\beta}\left(
\nabla_{m_b}^* \nabla_{m_a} C^{q}_{\Z^d}(a,b) + R_{ab}
\right).
\end{align*}
Here, $R_{ab}$ can be estimated as follows. There is $\nu > 0$ and a constant $C_1 = C_1(L)$ such that for $a \neq b$
\begin{align*}
\left| R_{ab} \right| \leq C_1 \frac{1}{|a-b|^{d+\nu}}.
\end{align*}
\end{thm}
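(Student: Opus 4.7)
My plan is to reduce the theorem to an explicit differentiation of the generating functional \eqref{eq:gen_func} and then invoke the representation provided by Theorem~\ref{Thm:RepresentationExtendedPartitionFunction}. Concretely, I choose a two-parameter source
$$
f_{s,t}(x) = s\bigl(\delta_{x,a+e_{m_a}} - \delta_{x,a}\bigr) + t\bigl(\delta_{x,b+e_{m_b}} - \delta_{x,b}\bigr),
$$
so that $(f_{s,t},\varphi) = s\, \nabla_{m_a}\varphi(a) + t\, \nabla_{m_b}\varphi(b)$. Standard differentiation of a log-partition function then yields
$$
\Cov_{\gamma_{N,\beta}^u}\bigl(\nabla_{m_a}\varphi(a), \nabla_{m_b}\varphi(b)\bigr) = \partial_s\partial_t\big|_{s=t=0}\, \log Z_{N,\beta}(u, f_{s,t}),
$$
so the task is to compute the second order Taylor coefficient of $\log Z_{N,\beta}(u, f_{s,t})$ in $(s,t)$ at the origin.

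Second, I would feed $f_{s,t}$ into the representation of Theorem~\ref{Thm:RepresentationExtendedPartitionFunction}, which is the effective, renormalised description of the model produced by the observable RG flow. Up to $(s,t)$-independent factors, that representation factorises as a Gaussian partition function associated with the renormalised quadratic form $\mathcal{A}^q_{\T_N}$, multiplied by a non-Gaussian remainder collecting all higher-order observable contributions. Differentiating the logarithm of the Gaussian factor is just a Wick computation and produces exactly $\tfrac{1}{\beta}\nabla_{m_b}^*\nabla_{m_a} C^q_{\T_N}(a,b)$, the $\tfrac{1}{\beta}$ being the inverse of the temperature multiplying $\mathcal{A}^q_{\T_N}$ in the quadratic form. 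Differentiating the observable remainder then defines $\tfrac{1}{\beta}R_{ab}$.

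To control $R_{ab}$, I would exploit that the bilinear-in-$(s,t)$ piece of the observable remainder only becomes non-trivial once the two observable polymers rooted at $a$ and $b$ sit inside a common block, forcing the RG scale $k$ to satisfy $L^k \gtrsim |a-b|$. The contractivity of the observable flow, encoded in the norms used in Theorem~\ref{Thm:RepresentationExtendedPartitionFunction}, then supplies a factor $L^{-(d+\nu)k}$ for each coupled contribution at scale~$k$, and summing the resulting geometric series in $k$ gives the bound $|R_{ab}| \leq C |a-b|^{-(d+\nu)}$. All estimates are uniform in $N$, so the thermodynamic limit $N \to \infty$ poses no additional problem: the claim follows from the classical convergence $C^q_{\T_N}(a,b) \to C^q_{\Z^d}(a,b)$ of the discrete Green's function on the torus at fixed lattice points, combined with uniform convergence of the remainder.

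The principal obstacle is the step that extracts the extra factor $L^{-\nu k}$ on top of the naive gradient-gradient scaling $L^{-d k}$. This requires the observable norms in Theorem~\ref{Thm:RepresentationExtendedPartitionFunction} to be calibrated so that the irrelevant part of the effective two-point observable coupling is genuinely contractive and not merely bounded, and so that the renormalisation $q$ absorbs all marginal contributions into the renormalised quadratic form $\mathcal{A}^q_{\T_N}$. This is exactly what the BBS-type extension of the RG to observables in Theorem~\ref{Thm:RepresentationExtendedPartitionFunction} is designed to deliver; once it is in place, the deduction of Theorem~\ref{Thm:DecayCovariance} is mostly bookkeeping around the Gaussian main term and its observable perturbation.
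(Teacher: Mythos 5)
Your overall route coincides with the paper's: write the covariance as $\partial_s\partial_t\big\vert_{s=t=0}\ln Z_{N,\beta}(u,f_{ab}(s,t))$ with the point observable $f_{ab}(s,t)=s\nabla^*_{m_a}\1_a+t\nabla^*_{m_b}\1_b$ (your Kronecker-delta source is exactly this), invoke Theorem \ref{Thm:RepresentationExtendedPartitionFunction}, read off the Gaussian leading term plus a remainder, and send $N\to\infty$ using $C^q_{\Lambda_N}\to C^q_{\Z^d}$. There is, however, a genuine gap in how you apply the key theorem. Theorem \ref{Thm:RepresentationExtendedPartitionFunction} is a statement about the unit-temperature functional $\mathcal{Z}_N(u,f_{ab})$ for a perturbation $\mathcal{K}\in B_{\epsilon_1}(0)\subset\mathbf{E}_{\zeta}$; to use it for $Z_{N,\beta}(u,f)$ you must first pass through the rescaling of the field by $\sqrt{\beta}$ and the Mayer function $\mathcal{K}_{u,\beta,V}$ (formula \eqref{repr_gesamt}), and then invoke Lemma \ref{Lemma:From_K_to_V} to see that $\beta\geq\beta_0$ and $u\in B_{\delta}(0)$ force $\mathcal{K}_{u,\beta,V}\in B_{\epsilon_1}(0)$ — this is the only place where the hypotheses on $W$, $u$, $\beta$ enter, and your plan never uses them. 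It is also where the factor $\frac{1}{\beta}$ really comes from: one evaluates the representation at $f_{ab}(s/\sqrt{\beta},t/\sqrt{\beta})$, so each observable field carries $1/\sqrt{\beta}$. Your alternative mechanism — a Wick computation on "the Gaussian factor" of the representation — does not work as stated: in \eqref{eq:formula_extended_partition_function} the ratio $Z_N^{(q(\mathcal{K}))}/Z_N^{(0)}$ and the factor $e^{-L^{Nd}|\lambda(\mathcal{K})|}$ are independent of $(s,t)$ and drop out of $\partial_s\partial_t\ln$; the Gaussian covariance enters only through the explicit exponent $st\,q_N^{ab}$ with $q_N^{ab}=\nabla^*_{m_b}\nabla_{m_a}C^q_{\Lambda_N}(a,b)+R_{ab}$, while the contributions of $Z_N^{\ext}$ (the terms $K_N^a,K_N^b,K_N^{ab}$) are $\mathcal{O}(2^{-N})$ and vanish in the limit.

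A second, lesser point: your sketch of how to bound $R_{ab}$ (coalescence of the two observables plus contractivity of the observable flow) is redundant once Theorem \ref{Thm:RepresentationExtendedPartitionFunction} is taken as given, and on its own it would not suffice. Coalescence and contractivity only yield $q_N^{ab}=\big(\delta_{m_a}+S^a_{j_{ab}}\big)\big(\delta_{m_b}+S^b_{j_{ab}}\big)\nabla^*\nabla C^q(a,b)+\tilde R_{ab}$, and the dressed coefficients $S^{\alpha}_{j_{ab}}$, which are merely bounded at this stage, would contaminate the leading term at order $|a-b|^{-d}$; eliminating them requires the separate single-observable, infinite-volume argument compared against the scaling limit. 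Since that work is interior to the key theorem, it does not invalidate your deduction of the present statement, but the step is substantially more than bookkeeping.
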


Let us mention a straightforward consequence of Theorem \ref{Thm:DecayCovariance}.

\begin{cor}[Algebraic decay of the covariance]\label{Cor:Algebraic_DecayCovariance}
Under the assumptions of Theorem~\ref{Thm:DecayCovariance} there is a constant $C$ such that the following estimate holds:
\begin{align*}
\left| 
\lim_{N \rightarrow \infty} \Cov_{\gamma_{N,\beta}^u}\left(
\nabla_{m_a} \varphi(a) \nabla_{m_b} \varphi(b)
\right)
\right|
\leq C \frac{1}{|a-b|^d}.
\end{align*}
\end{cor}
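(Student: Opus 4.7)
The plan is to read the corollary off immediately from Theorem \ref{Thm:DecayCovariance}. Once the covariance is written as the Gaussian main term plus a faster-decaying remainder, it suffices to bound the Gaussian main term by $C|a-b|^{-d}$, since $R_{ab}$ decays at the strictly better rate $|a-b|^{-(d+\nu)}$.

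First I would invoke Theorem \ref{Thm:DecayCovariance} to obtain
$$
\lim_{N\to\infty}\Cov_{\gamma_{N,\beta}^u}\bigl(\nabla_{m_a}\varphi(a),\nabla_{m_b}\varphi(b)\bigr) = \frac{1}{\beta}\Bigl(\nabla_{m_b}^*\nabla_{m_a}C^{q}_{\Z^d}(a,b) + R_{ab}\Bigr),
\qquad |R_{ab}|\leq C_1|a-b|^{-(d+\nu)}.
$$
The remaining ingredient is the decay estimate
$$
\bigl|\nabla_{m_b}^*\nabla_{m_a}C^{q}_{\Z^d}(a,b)\bigr| \leq C|a-b|^{-d}\qquad\text{for }a\neq b.
$$
This is a standard property of the Green's function of the constant-coefficient, uniformly elliptic discrete operator $\mathcal{A}^q_{\Z^d} = \sum_{i,j}(\delta_{ij}+q_{ij})\nabla_j^*\nabla_i$ for $q$ small. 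The cleanest route is Fourier: represent $\nabla_{m_b}^*\nabla_{m_a}C^{q}_{\Z^d}(a,b)$ as an oscillatory integral over $[-\pi,\pi]^d$ whose symbol is the quotient of two homogeneous-of-degree-two trigonometric expressions times $e^{i(a-b)\cdot\xi}$, and then apply the standard stationary-phase / integration-by-parts argument (or compare with the continuum Green's function $\sim |a-b|^{2-d}$ on which two discrete differences act). Each discrete difference removes one power of $|a-b|$, yielding the claimed $|a-b|^{-d}$ decay uniformly in $d \geq 2$.

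Combining the two bounds via the triangle inequality gives, for $|a-b|\geq 1$,
$$
\left|\lim_{N\to\infty}\Cov_{\gamma_{N,\beta}^u}\bigl(\nabla_{m_a}\varphi(a)\nabla_{m_b}\varphi(b)\bigr)\right|
\leq \frac{1}{\beta}\Bigl(C|a-b|^{-d} + C_1|a-b|^{-(d+\nu)}\Bigr)
\leq \frac{C'}{|a-b|^d},
$$
since the second summand is subdominant. There is essentially no obstacle here: the theorem does all the heavy lifting by producing the explicit Gaussian main term with a faster error, and the only input needed beyond it is the classical algebraic decay of second differences of the discrete Green's function, which is well documented in the literature and uniform for small perturbations $q$ of the identity.
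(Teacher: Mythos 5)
Your proposal is correct and matches the paper's intended (and unwritten) argument: the corollary is read off from Theorem \ref{Thm:DecayCovariance} by bounding the Gaussian main term $\nabla_{m_b}^*\nabla_{m_a}C^{q}_{\Z^d}(a,b)$ by $C|a-b|^{-d}$ and absorbing the subdominant remainder $R_{ab}$. As a side remark, the needed Green's function bound also follows from the paper's own finite-range decomposition (Proposition \ref{Prop:FRD_Buchholz} together with the vanishing of $\nabla^*\nabla C^q_k(a,b)$ for $k\leq j_{ab}$, summing $L^{-(k-1)d}$ over $k>j_{ab}$), so no external Fourier estimate is strictly required.
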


%\begin{proof}
%We apply Theorem \ref{Thm:DecayCovariance} and use that the Gaussian covariance $C^q_{\Z^d}$ satisfies
%$$
%\left|\nabla^{\alpha} C^q_{\Z^d}(x) \right| \leq C \frac{1}{|x|^{d-2+|\alpha|}},
%$$
%see, e.g., Proposition 2.6.14 in \cite{Runa} for a proof of this estimate.
%\end{proof}

\begin{remark} \label{rem:extension_main_thm}
\begin{enumerate}
	\item As in \cite{Hil19_1} one can state the assumptions \eqref{eq:AssumptionsW} on the potential $W$ in a more general form allowing a bigger class of perturbations $V$. We will comment on this again in the next section, see Lemma \ref{Lemma:From_K_to_V} and Remark \ref{rem:weak_ass_V}. For the sake of simplicity we decided to state the main results with assumptions \eqref{eq:AssumptionsW}.
	\item Theorem \ref{Thm:DecayCovariance} can also be formulated for $m$-component fields on~$\T_N$,
	$$
	\varphi: \Lambda \rightarrow \R^m.
	$$
	Discrete derivatives are understood component-wise,
	$$
	(\nabla_i \varphi)_s(x)= \varphi_s(x+e_i)-\varphi_s(x),
	\quad s \in \lbrace 1, \ldots, m \rbrace, i \in \lbrace 1, \ldots, d \rbrace.
	$$
	The potential $W$ and the perturbation $V$ are maps from $\R^m$ to $\R$ and the tilted boundary condition $u \in \R^d$ is replaced by a deformation $F \in \R^{m \times d}$. See \cite{ABKM} and \cite{Hil19_1} for more details on the set-up. This extension shows up in the notation but does not change the arguments in the proofs.
	\item The statement in Theorem \ref{Thm:DecayCovariance} can also be extended to more general finite-range interaction (not only nearest-neighbour).
	Let $A \subset \Z^d$ be a finite set. Consider the potential
	$$
	W: \left( \R^m \right)^A \rightarrow \R.
	$$
	Then one can define the Hamiltonian with finite-range interaction and \textit{external deformation} $F \in \R^{d \times m}$ as
	$$
	H_N^F(\varphi) = \sum_{x \in\T_N} W\left((\varphi + F)_{\tau_x(A)}\right),
	$$
	where for any $\varphi \in \chi_N$ and $B \subset \Z^d$ we use $\varphi_B$ to denote the restriction of $\varphi$ to $B$, and $\tau_x(A)$ denotes the set $A$ translated by $x$.
	
	For $m=d$, this is the setting for microscopic models of nonlinear elasticity with $F$ representing an affine deformation applied to a solid.
	See \cite{ABKM} and \cite{Hil19_1} for more details on the set-up and \cite{ABKM} for the application to elasticity.

\end{enumerate}

\end{remark}

\subsection{Key theorem and proof of the main result}

The goal of this section is the formulation of a technical key theorem, which states a powerful representation of the generating functional with observables of the model. It is based on a representation obtained in \cite{Hil19_1}, Theorem 2.4. The proof is obtained by a subtle renormalisation group (RG) analysis which is an extension of the RG method in \cite{Hil19_1}. We will sketch the arguments presented in \cite{Hil19_1} in Section \ref{sec:RG-analysis_BulkFlow} and give the proof of the representation needed here in Section \ref{sec:RG-analysis_ObservableFlow}.

\subsubsection{Reformulation of $Z_{N,\beta}(u,f)$}

 As is often the case in statistical mechanics we compute correlation functions as derivatives with respect to an external field, which we refer to as an \textit{observable} field. Namely, we express the gradient-gradient covariance in terms of the perturbed generating partition function:
\begin{align}
\Cov_{\gamma_{N,\beta}^u}\left( \nabla_{m_a} \varphi(a), \nabla_{m_{b}} \varphi(b) \right)
= \partial_s \partial_t \Big\vert_{s=t=0} \ln Z_{N,\beta}\left( u,f_{ab}(s,t) \right)
\label{Cov-Z}
\end{align}
where
\begin{align*}
f_{ab}(s,t) = s \nabla^*_{m_a} \1_a + t \nabla^*_{m_b} \1_b
\end{align*}
is the \textit{observable}. We start by a reformulation of $Z_{N,\beta}(u,f)$ (the very same one as in \cite{Hil19_1}).

~\\
Let $\overline{V}(z,u)$ be the remainder of the linear Taylor expansion of $V(z+u)$ around~$u$,
\begin{align*}
\overline{V}(z,u) = V(z+u) - V(u) - V'(u)z.
\end{align*}
We can write the generating functional $Z_{N,\beta}(u,f)$ from \eqref{eq:gen_func} in the form
\begin{align*}
Z_{N,\beta}(u,f) &=
e^{- \beta L^{Nd} \left( \frac{|u|^2}{2} + \sum_{i=1}^d V(u_i) \right)} %\cdot
\\ & \quad \times
\int_{\chi_N} e^{(f,\varphi)} e^{- \beta \sum_{x \in \Lambda_N} \sum_{i=1}^d \left( \overline{V} (\nabla_i \varphi(x), u_i) + \frac{1}{2}|\nabla_i\varphi(x)|^2\right)} \lambda_N(\de \varphi).
\end{align*}
Let
\begin{align}
\mu_{\beta}(\de\varphi) = \frac{1}{Z_{N,\beta}^{(0)}} e^{-\frac{\beta}{2} \sum_{x \in \Lambda_N}\sum_{i=1}^d |\nabla_i \varphi(x)|^2} \lambda_N(\de\varphi)
\end{align}
be the Gaussian measure at inverse temperature $\beta$ with corresponding normalisation
\begin{align}
Z_{N,\beta}^{(0)} = \int_{\chi_N} e^{-\frac{\beta}{2} \sum_{x \in \Lambda_N}\sum_{i=1}^d |\nabla_i \varphi(x)|^2} \lambda_N(\de\varphi).
\label{eq:10}
\end{align}
Consequently,
\begin{align*}
Z_{N,\beta}(u,f)
=
e^{- \beta L^{Nd} \sum_{i=1}^d W(u_i) } Z_{N,\beta}^{(0)}
\int_{\chi_N} e^{(f,\varphi)} e^{- \beta \sum_{x \in \Lambda_N} \sum_{i=1}^d \overline{V} (\nabla_i \varphi(x), u_i)} \mu_{\beta}(\de\varphi).
\end{align*}
Now we rescale the field by $\sqrt{\beta}$ and introduce the \textit{Mayer function} $\mathcal{K}_{u,\beta,V}: \R^d \rightarrow \R$,
\begin{align}
\mathcal{K}_{u,\beta,V}(z) = e^{-\beta \sum_{i=1}^d \overline{V}(\frac{z_i}{\sqrt{\beta}},u_i)} - 1.
\label{eq:zero_perturbation_K}
\end{align}
We can express the partition function $Z_{N,\beta}(u,f)$ in terms of the polymer expansion:
\begin{align*}
Z_{N,\beta}(u,f)
&=
e^{- \beta L^{Nd} \sum_{i=1}^d W(u_i) } Z_{N,\beta}^{(0)}
\int_{\chi_N} e^{\left(f,\frac{\varphi}{\sqrt{\beta}}\right)}
e^{- \beta \sum_{x \in \Lambda_N} \sum_{i=1}^d \overline{V} \left(\frac{\nabla_i \varphi(x)}{\sqrt{\beta}}, u_i\right)} \mu_1(\de\varphi)
\\
&=
e^{- \beta L^{Nd} \sum_{i=1}^d W(u_i)} Z_{N,\beta}^{(0)}
\int_{\chi_N} e^{\left(f,\frac{\varphi}{\sqrt{\beta}}\right)}
\prod_{x \in \Lambda_N} \left( 1 + \mathcal{K}_{u,\beta,V}(\nabla \varphi(x)) \right) \mu_1(\de\varphi)
\\
&=
e^{- \beta L^{Nd} \sum_{i=1}^d W(u_i)} Z_{N,\beta}^{(0)}
\int_{\chi_N} e^{\left(f,\frac{\varphi}{\sqrt{\beta}}\right)}
\sum_{X \subset \Lambda_N} \prod_{x \in X} \mathcal{K}_{u,\beta,V}(\nabla \varphi(x)) \mu_1(\de\varphi).
\end{align*}
The integral in the last expression gives the perturbative contribution
\begin{align*}
\mathcal{Z}_{N,\beta}\left( u,\frac{f}{\sqrt{\beta}} \right) =
\int_{\chi_N} e^{\left(\frac{f}{\sqrt{\beta}},\varphi\right)}
\sum_{X \subset \Lambda_N} \prod_{x \in X} \mathcal{K}_{u,\beta,V}(\nabla \varphi(x)) \mu_1(\de\varphi).
\end{align*}
In summary, we obtain the representation
\begin{align}
Z_{N,\beta}(u,f)
=
e^{- \beta L^{Nd} \sum_{i=1}^d W(u_i)} Z_{N,\beta}^{(0)}
\,
\mathcal{Z}_{N,\beta}\left( u,\frac{f}{\sqrt{\beta}} \right).
\label{repr_gesamt}
\end{align}

We introduce a space for the perturbation $\mathcal{K}_{u,\beta,V}$.
Let $\zeta \in (0,1)$. For $r_0 \geq 3$ we define the Banach space $\mathbf{E}_{\zeta}$ consisting of functions $\mathcal{K}:\R^d \rightarrow\R$ such that the following norm is finite
\begin{align*}
\Vert \mathcal{K} \Vert_{\zeta} = \sup_{z \in \R^d} \sum_{|\alpha| \leq r_0} \frac{1}{\alpha !} \vert \partial^{\alpha}\mathcal{K}(z) \vert e^{-\frac{1}{2}(1- \zeta)|z|^2}.
\end{align*}

Let us generalise the expression for the perturbative part to arbitrary $\mathcal{K} \in \mathbf{E}_{\zeta}$ from the rather explicit $\mathcal{K}_{u,\beta,V}$ in \eqref{eq:zero_perturbation_K}. Namely, let
\begin{align}
\mathcal{Z}_{N}\left( u,f \right) =
\int_{\chi_N} e^{\left(f,\varphi\right)}
\sum_{X \subset \Lambda_N} \prod_{x \in X} \mathcal{K}(\nabla \varphi(x)) \mu_1(\de\varphi).
\label{eq:expression_PF}
\end{align}

Proposition 2.4 in \cite{ABKM} provides conditions on $V$ such that $\mathcal{K} \in B_{\rho}(0) \subset \mathbf{E}_{\zeta}$ for any $\rho>0$ is satisfied. We cite the proposition in the following lemma.

\begin{lemma}%[Proposition 2.4. in {\cite{ABKM}}]
\label{Lemma:From_K_to_V}
Let $W$ satisfy \eqref{eq:AssumptionsW}.
Then there exist $\tilde{\zeta}$, $\delta_0 > 0$, $C_1$ and $\theta > 0$ such that for all $\delta \in \left(0,\delta_0\right]$ and for all $\beta \geq 1$ the map 
$$
B_{\delta}(0) \ni u \mapsto \mathcal{K}_{u,\beta,V} \in \mathbf{E}_{\tilde{\zeta}}
$$
is $C^{r_1}$ and satisfies
\begin{align}
&\Vert \mathcal{K}_{u,\beta,V} \Vert_{\tilde{\zeta}}
\leq
C_1 \left( \delta + \beta^{-\frac{1}{2}} \right)
\quad \text{and} \quad
\sum_{|\gamma|\leq r_1} \frac{1}{\gamma !} \Vert \partial_u^{\gamma} \mathcal{K}_{u,\beta,V} \Vert_{\tilde{\zeta}} \leq \theta.
\label{eq:Bound_K}
\end{align}
In particular, given $\rho>0$, there exist $\delta > 0$ and $\beta_0 \geq 1$ such that for all $\beta \geq \beta_0$ and all $u \in B_{\delta}(0)$ we have
$$
\Vert \mathcal{K}_{u,\beta,V} \Vert_{\tilde{\zeta}} \leq \rho
$$ 
and the bound on the derivatives in \eqref{eq:Bound_K} holds.
\end{lemma}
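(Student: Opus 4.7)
The plan is to analyse the exponent in the explicit formula $\mathcal K_{u,\beta,V}(z)=e^{\Phi(z,u)}-1$ with $\Phi(z,u):=-\beta\sum_{i=1}^d\overline V(z_i/\sqrt\beta,u_i)$, showing that $\Phi$ is pointwise small for bounded $|z|$ and grows strictly more slowly than the Gaussian weight $e^{(1-\tilde\zeta)|z|^2/2}$ at infinity. Combined with Fa\`a di Bruno's formula for derivatives of $e^\Phi$, this produces both bounds in \eqref{eq:Bound_K}.

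First I would use $V(s)=W(s)-s^2/2$ to rewrite
$$\Phi(z,u)=-\beta\sum_{i=1}^d\overline W(z_i/\sqrt\beta,u_i)+\tfrac12|z|^2,\qquad \overline W(h,u):=W(u+h)-W(u)-W'(u)h,$$
so that Taylor expansion of $\overline W$ together with $W''(0)=1$ and $V''(u)=\mathcal O(u)$ (from $V''(0)=0$) shows that the leading quadratic contribution of $-\beta\sum_i\overline W$ cancels against $|z|^2/2$ up to a residual term $-\tfrac12\sum_i V''(u_i)z_i^2=\mathcal O(\delta|z|^2)$, while the cubic Taylor remainder contributes $\mathcal O(|z|^3/\sqrt\beta)$. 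Consequently, on every fixed ball $|z|\le t_0$ one obtains $|\Phi(z,u)|\le C(t_0)(\delta+\beta^{-1/2})$ and therefore $|\mathcal K_{u,\beta,V}(z)|\le 2|\Phi(z,u)|$ is of the same order, which is the small-$z$ portion of the norm bound.

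For $|z|>t_0$ the coercivity $\sum_i W(s_i)\ge\omega|s|^2$ from \eqref{eq:AssumptionsW} applied at $s=\sqrt\beta u+z$ yields, after Young's inequality on the cross term,
$$\Phi(z,u)\le\bigl(\tfrac12-\omega+\epsilon\bigr)|z|^2+C(\epsilon)\beta\delta^2.$$
Since $\omega<1/16$ I can choose $\tilde\zeta\in(0,2\omega)$ and $\epsilon$ small enough that $c:=(1-\tilde\zeta)/2-(1/2-\omega+\epsilon)>0$, so $e^{\Phi(z,u)-(1-\tilde\zeta)|z|^2/2}\le e^{-c|z|^2+C\beta\delta^2}$ decays as a Gaussian in $|z|$ with deficit $c$ that absorbs the constant $e^{C\beta\delta^2}$ once $t_0$ is large. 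Taking the supremum over $|z|>t_0$ then yields a contribution smaller than the small-$z$ term in the previous step, completing the first bound in \eqref{eq:Bound_K}.

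The derivative bounds follow by the same scheme applied to Fa\`a di Bruno expansions of $\partial_z^\alpha\partial_u^\gamma(e^\Phi)$ for $|\alpha|\le r_0$, $|\gamma|\le r_1$, which produce sums of products of derivatives of $\Phi$ times $e^\Phi$. A $z$-derivative of order $k\ge1$ of $\Phi$ contributes a factor $\beta^{1-k/2}W^{(k)}(u_i+z_i/\sqrt\beta)$, whose size is controlled by $\Psi$, and $u$-derivatives add factors of $W^{(k)}(u_i)$ which are bounded on $B_\delta(0)$ uniformly in $\beta$. The assumption $t^{-2}\ln\Psi(t)\to0$ ensures that these contributions are sub-Gaussian in $|z|$ and can be absorbed into the Gaussian deficit from the previous step. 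I expect the main technical obstacle to lie in the resulting bookkeeping: the number of Fa\`a di Bruno terms grows factorially, each mixes different powers of $\sqrt\beta$ with polynomial prefactors in $z$, and one must verify that in every such product the deficit $c$ still suffices; this is precisely where the regularity $V\in C^{r_0+r_1}$ and the full strength of the growth condition on $\Psi$ in \eqref{eq:AssumptionsW} are used.
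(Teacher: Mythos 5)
First, a point of reference: the paper does not prove this lemma at all -- it is imported verbatim as Proposition 2.4 of \cite{ABKM} -- so your attempt has to be judged as a from-scratch argument rather than against an in-paper proof. Your overall strategy (write $\mathcal{K}_{u,\beta,V}=e^{\Phi}-1$, cancel the quadratic part of $-\beta\sum_i\overline{V}$ against $\tfrac12|z|^2$ via $W$, use the coercivity $\sum_i W(z_i)\ge\omega|z|^2$ to beat the Gaussian weight with $\tilde\zeta<2\omega$, and handle $z$- and $u$-derivatives by Fa\`a di Bruno together with the growth condition on $\Psi$) is the right kind of argument and is essentially how such bounds are obtained in \cite{ABKM}.

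There is, however, a genuine gap in your large-$|z|$ step, and it concerns exactly the uniformity in $\beta$ that the lemma asserts (the bound $C_1(\delta+\beta^{-1/2})$ must hold for \emph{all} $\beta\ge1$ and $\delta\le\delta_0$ with fixed $C_1$). After applying coercivity at $s=\sqrt\beta u+z$ and Young's inequality you arrive at
$\Phi(z,u)\le(\tfrac12-\omega+\epsilon)|z|^2+C(\epsilon)\beta\delta^2$,
and you claim the resulting factor $e^{C\beta\delta^2}$ is absorbed by the Gaussian deficit ``once $t_0$ is large''. But $t_0$ is a fixed cutoff while $\beta\delta^2\to\infty$ as $\beta\to\infty$ at fixed $\delta$; the supremum of $-c|z|^2+C\sqrt\beta\,\delta|z|+C\beta\delta^2$ over $|z|>t_0$ is attained near $|z|\sim\sqrt\beta\,\delta$ and is of order $+C\beta\delta^2$, so this step only yields $\sup_z|\mathcal{K}|e^{-(1-\tilde\zeta)|z|^2/2}\le e^{C\beta\delta^2}$, not the claimed uniform bound. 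The region $t_0<|z|\lesssim\sqrt\beta$ is simply not controlled by your two cases. The repair is to split at a $\beta$-dependent scale rather than a fixed $t_0$: for $|z|\le\epsilon_0\sqrt\beta$ the arguments $u_i+z_i/\sqrt\beta$ stay in a small neighbourhood of $0$ where $W''=1+V''\ge1-\epsilon$ (using $V''(0)=0$ and continuity), so the second-order Taylor remainder gives $\beta\sum_i\overline{W}(z_i/\sqrt\beta,u_i)\ge\tfrac{1-\epsilon}{2}|z|^2$ and hence $\Phi\le\tfrac{\epsilon}{2}|z|^2$, with the finer expansion producing the $(\delta+\beta^{-1/2})$ smallness; for $|z|>\epsilon_0\sqrt\beta$ the error terms from coercivity satisfy $\sqrt\beta\,\delta|z|\le(\delta/\epsilon_0)|z|^2$ and $\beta\delta^2\le(\delta/\epsilon_0)^2|z|^2$, i.e.\ they are $\mathcal{O}(\delta)|z|^2$ and can be absorbed into the margin $\omega-\tilde\zeta/2$ once $\delta_0$ is small. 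The same modification is needed inside your Fa\`a di Bruno bookkeeping for $\partial_z^{\alpha}\partial_u^{\gamma}\mathcal{K}$, since each such term carries the factor $e^{\Phi}$ and inherits the same intermediate-$|z|$ problem; as written, the derivative bounds rest on the flawed absorption step as well.
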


\begin{remark}\label{rem:weak_ass_V}
As noted in the previous section we can state more general assumptions on the potential $W$ than \eqref{eq:AssumptionsW}. Namely, it is enough to assume the smallness condition on the Mayer function $\mathcal{K}$, $\Vert \mathcal{K}_{u,\beta,V} \Vert_{\tilde{\zeta}} \leq \rho$. Then Theorem \ref{Thm:DecayCovariance} can be applied for every $V$ such that its Mayer function satisfies the bound.
\end{remark}

\subsubsection{Representation of $Z_{N,\beta}(u,f)$ and conclusion}

Let us introduce $\mathcal{C}^q_{\Lambda_N} = \left( \mathcal{A}_{\Lambda_N}^{q} \right)^{-1}$ for $q \in \R^{d \times d}_{\text{sym}}$, where
$$
\mathcal{A}^{q}_{\Lambda_N}:
\chi_N \rightarrow \chi_N,
\quad
\mathcal{A}^{q}_{\Lambda_N}
 = \sum_{i,j=1}^d \left( \delta_{ij} + q_{ij} \right) \nabla_j^* \nabla_i.
$$
We use $\Vert q \Vert$ to denote the operator norm of $q$ viewed as an operator on $\R^d$ equipped with the $l_2$ metric.
If $q$ is small, $\Vert q \Vert \leq \frac{1}{2}$, we can define a Gaussian measure $\mu_{\mathcal{C}^{q}_{\Lambda_N}}$ on $\chi_N$ with covariance $\mathcal{C}^{q}_{\Lambda_N}$,
$$
\mu_{\mathcal{C}^{q}_{\Lambda_N}} (\de \varphi) = \frac{1}{Z_N^{(q)}} e^{-\frac{1}{2}\left( \varphi, \mathcal{A}^{q}_{\Lambda_N} \varphi \right)} \de \lambda_N(\varphi).
$$
Observe that we changed notation from $Z_{N,\beta=1}^{(0)}$ in \eqref{eq:10} to $Z_N^{(0)}$.

~\\
Remember from \eqref{Cov-Z} that with \eqref{repr_gesamt} the covariance can be computed as follows:
\begin{align}
\Cov_{\gamma_{N,\beta}^u}\left( \nabla_{m_a} \varphi(a), \nabla_{m_{b}} \varphi(b) \right)
&= \partial_s \partial_t \Big\vert_{s=t=0} \ln Z_{N,\beta}\left( u,f_{ab}(s,t) \right) \nonumber
\\ &
= \partial_s \partial_t \Big\vert_{s=t=0} \ln \mathcal{Z}_{N,\beta}\left( u,\frac{f_{ab}(s,t)}{\sqrt{\beta}} \right),
\label{eq:Cov-partitionfct}
\end{align}
where
\begin{align}
f_{ab}(s,t) = s \nabla^*_{m_a} \1_a + t \nabla^*_{m_b} \1_b
\end{align}
is the observable. The observable fields $s$ and $t$ are constant external fields which couple to the field $\varphi$ only at the points $a$ and $b$ due to the indicator functions. 
An external field is also employed to analyse the scaling limit in \cite{Hil19_1}, but there the macroscopic regularity of this test function is important. The application of the representation in Theorem 2.4 in \cite{Hil19_1} does not give a good estimate on $Z_N\left(\mathcal{K},\mathcal{C}_{\Lambda_N}^{(q(\mathcal{K}))}f_{ab}\right)$ since $f_{ab}$ is too rough. If we smooth out $f_{ab}$, we can get a decay for the "smoothed covariance" by exploiting the decay $\eta^N$. This is done in \cite{Hil16}.

Instead we use a finer analysis based on the RG method for the bulk flow but extended to observables and obtain a refined representation of the generating partition function in Theorem \ref{Thm:RepresentationExtendedPartitionFunction}.

~\\
In view of \eqref{eq:Cov-partitionfct}, we are only interested in the behaviour of $\mathcal{Z}_{N,\beta}\left( u, \frac{f_{ab}(s,t)}{\sqrt{\beta}}\right)$ up to first order in $s,t$ and $st$. To make this precise, one considers the quotient algebra in which two maps of $s,t$ become equivalent if their formal power series in $s, t$ agree to order $1,s,t,st$, see Section \ref{sec:RG-analysis_ObservableFlow} for the details.

\begin{thm}[Representation of the extended partition function]\label{Thm:RepresentationExtendedPartitionFunction}
Fix $a,b \in \Lambda_N$, $\zeta \in (0,1)$ and $\eta \in~(0,\frac{1}{4})$. There is $L_1$ such that for all odd integers $L \geq L_1$ there is $\epsilon_1 > 0$ with the following properties.
For any $N \in \N$ there exist smooth maps (with bounds on the derivatives which are independent of $N$)
$$
\lambda: B_{\epsilon_1}(0) \subset \mathbf{E}_{\zeta} \rightarrow \R, \quad
q:  B_{\epsilon_1}(0) \subset \mathbf{E}_{\zeta} \rightarrow \R^{d \times d}_{\text{sym}},
$$
and, for any $N \in \N$, a smooth map
 $Z_N^{\ext}: B_{\epsilon_1}(0) \times \chi_N \rightarrow \R$ such that (up to first order in $s$ and $t$)
\begin{align}
\mathcal{Z}_{N}(u,f_{ab})
=\frac{Z_N^{\left(q(\mathcal{K})\right)}}{Z_N^{\left(0\right)}} e^{-L^{Nd}|\lambda(\mathcal{K})|}
e^{st q_N^{ab} + s \lambda_N^a + t \lambda_N^b}
Z_N^{\ext}(\mathcal{K},0).
\label{eq:formula_extended_partition_function}
\end{align}
There is a constant $C_1 = C_1(L)$, such that
$$
q_N^{ab} = \nabla^*_{m_b} \nabla_{m_a} C_{\Lambda_N}^q(a,b) + R_{ab},
\quad
\vert R_{ab}\vert \leq C_1 \frac{1}{|a-b|^{d+\nu}},
$$
where $0 < \nu \leq - \frac{\ln(4 \eta)}{\ln L}$,
and
$\lambda^a_N$ and $\lambda^b_N$ are uniformly bounded in $N$.

Moreover, the remainder $Z_N^{\ext}(\mathcal{K},0)$ can be expressed (up to first order in $s$ and $t$) as follows:
\begin{align*}
&Z_N^{\ext}(\mathcal{K},0) = Z_N(\mathcal{K},0) + s K_N^a + t K_N^b + st K_N^{ab},
\\
& \left\vert Z_N(\mathcal{K},0) - 1 \right\vert = \mathcal{O}\left(\eta^N \right),
\quad
K_N^a, K_N^b = \mathcal{O}\left(2^{-N}\right), %\text{( with a constant which depends on $|a-b|$...)}
\quad
K_N^{ab} = \mathcal{O} \left(\eta^N 4^{-N}\right). %\text{( with a constant which depends on $|a-b|$...)}
\end{align*}
%with a constant $C$ that is independent from $N$.
\end{thm}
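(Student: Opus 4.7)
The plan is to enlarge the renormalisation-group (RG) scheme of \cite{Hil19_1} by an \emph{observable sector}, in the spirit of \cite{BBS_Crit2}. Since the generating functional is only needed modulo $(s^2,t^2)$, I would work throughout in the quotient algebra $\R[s,t]/(s^2,t^2)$; every coupling and every polymer activity is then naturally graded into four components indexed by $\bullet \in \{\varnothing, a, b, ab\}$. Starting from \eqref{eq:expression_PF}, I expand $e^{(f_{ab},\varphi)}$ modulo $(s^2,t^2)$ as $1 + s(\nabla_{m_a}^*\1_a,\varphi) + t(\nabla_{m_b}^*\1_b,\varphi) + st(\nabla_{m_a}^*\1_a,\varphi)(\nabla_{m_b}^*\1_b,\varphi)$, absorb these insertions into extended polymer activities $\mathcal{K}_0^\bullet$ (supported at $a$, at $b$, or at $\{a,b\}$ for the non-bulk components), and feed the result into the bulk RG scheme of \cite{Hil19_1}.

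The finite-range decomposition of $\mathcal{C}^{q(\mathcal{K})}_{\Lambda_N}$ is then used to integrate out one scale at a time. At scale $k$ I would carry along, in addition to the bulk couplings $E_k, q_k$ of \cite{Hil19_1}, three relevant observable couplings $\lambda_k^a, \lambda_k^b, q_k^{ab}$ living in the $s$-, $t$- and $st$-sectors respectively. The key algebraic input is that the Gaussian convolution by the scale-$k$ covariance $C_k$ maps the dipole $(\nabla_{m_a}^*\1_a,\varphi)(\nabla_{m_b}^*\1_b,\varphi)$ precisely to $\nabla_{m_b}^*\nabla_{m_a} C_k(a,b)$; telescoping these contributions across scales identifies $\nabla_{m_b}^*\nabla_{m_a} C^{q(\mathcal{K})}_{\Lambda_N}(a,b)$ as the leading part of $q_N^{ab}$. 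The irrelevant residues form polymer activities $K_k^\bullet$, measured in norms that give each sector the appropriate scaling dimension of a gradient observable, and the bulk contraction estimates of \cite{Hil19_1} extend to this sectorised setting using the observable-flow machinery of \cite{BBS_Crit2}, producing a uniform contraction of the extended RG map.

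At scale $N$ the polymer expansion collapses to the single polymer $\Lambda_N$, yielding \eqref{eq:formula_extended_partition_function}: the bulk prefactor $Z_N^{(q(\mathcal{K}))}/Z_N^{(0)}\cdot e^{-L^{Nd}|\lambda(\mathcal{K})|}$ is inherited from \cite{Hil19_1}, the accumulated observable couplings exponentiate to $e^{s\lambda_N^a + t\lambda_N^b + st\,q_N^{ab}}$, and the remaining polymer activity gives $Z_N^{\ext}(\mathcal{K},0) = Z_N(\mathcal{K},0) + sK_N^a + tK_N^b + stK_N^{ab}$. The four stated bounds then follow directly from the sectorised contraction: $\eta^N$ for the bulk correction $Z_N(\mathcal{K},0)-1$; $2^{-N}$ for each single-observable activity (after taking $L$ large enough that the per-scale contraction in the single-observable norm is at most $1/2$); and $\eta^N 4^{-N}$ for the $st$-sector, where the extra factor $4^{-N}$ reflects that below the coalescence scale $k_{ab}$ with $L^{k_{ab}}\sim|a-b|$ the two observables sit in disjoint polymers, each carrying its own single-observable contraction. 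Summing the irrelevant part of the $st$-flow over $k\geq k_{ab}$ gives $|R_{ab}| \leq C (4\eta)^{N-k_{ab}} L^{-k_{ab} d}$, which together with $L^{k_{ab}} \sim |a-b|$ and $\nu \leq -\ln(4\eta)/\ln L$ yields the claimed bound $|R_{ab}| \leq C_1 |a-b|^{-(d+\nu)}$.

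The substance of the argument, and the main obstacle, lies in the middle step: one must design norms on the four-sectorised polymer activities that are compatible with the bulk scheme of \cite{Hil19_1}, correctly single out $\nabla_{m_b}^*\nabla_{m_a} C_k(a,b)$ as the \emph{relevant} part of the $st$-sector at each scale, and verify the contraction with strict gains in both the single-observable and the two-observable sectors. The merging of the two observable polymers at the coalescence scale $k_{ab}$ is the most delicate point, since it is where the character of the $st$-contraction changes from ``two independent single observables'' to ``a single irrelevant two-point activity'' and where the $|a-b|$-dependence of $R_{ab}$ is generated.
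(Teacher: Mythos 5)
Your overall architecture --- the quotient algebra in $s,t$, the four sectors, extended couplings $\lambda^a,\lambda^b,q^{ab}$, the coalescence scale, and the $L^{-dj_{ab}}(4\eta)^{j_{ab}}$ accounting for the remainder --- is indeed the paper's strategy, but two essential ingredients are missing. First, your list of relevant observable couplings omits the \emph{linear} couplings $n^a_k,n^b_k$ multiplying $\nabla\varphi(a)$, $\nabla\varphi(b)$. The paper must keep these in the relevant Hamiltonian up to the coalescence scale (the localisation $\Pi^{\alpha}_k=\Pi_1$ for $k<j_{ab}$, $\Pi_0$ afterwards) precisely because the observable weight has to grow like $L^{dk/2}$ below coalescence in order to eventually produce the $|a-b|^{-d}$ decay; relative to that weight the linear monomial does not contract, so leaving it inside $K^a_k,K^b_k$ destroys the per-scale contraction you invoke. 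Moreover, the Gaussian covariance is not produced by an automatic ``telescoping'': it is generated as second-order perturbation theory in these linear couplings, $q^{ab}_{+}-q^{ab}=\int H^aH^b\,\de\mu_{k+1}+\mathbf{B}_kK^{ab}=n^an^b\,\nabla^*\nabla C_{k+1}(a,b)+\mathbf{B}_kK^{ab}$, and the map $\mathbf{A}$ is designed so that the $ab$-part of $D^2_H\mathbf{S}^{\ext}(0,0)$ vanishes; this is what yields the improved bound $\Vert K^{ab}_k\Vert\lesssim\eta^{3k}$, hence $K^{ab}_N=\mathcal{O}(\eta^N4^{-N})$ and the summability giving $|R_{ab}|\lesssim|a-b|^{-(d+\nu)}$. (Minor: the correct factor is $(4\eta)^{j_{ab}}$, coming from $\sum_{k\geq j_{ab}}l_{\obs,k}^{-2}\eta^{3k}$, not $(4\eta)^{N-k_{ab}}$.)

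Second, and more seriously: because the linear couplings flow, $n^{\alpha}_{j_{ab}}=n^{\alpha}_0+\sum_{l<j_{ab}}(\mathbf{B}_lK^{\alpha}_l)^1=\delta_{m_\alpha}+S^{\alpha}_{j_{ab}}$, the RG alone only gives the leading term $(\delta_{m_a}+S^a_{j_{ab}})(\delta_{m_b}+S^b_{j_{ab}})\nabla^*_{m_b}\nabla_{m_a}C^q(a,b)$ with $S^{\alpha}_{j_{ab}}$ merely $\mathcal{O}(1)$-small, i.e.\ a correction of the same order $|a-b|^{-d}$ as the main term. To reach the stated $q_N^{ab}=\nabla^*_{m_b}\nabla_{m_a}C^q_{\Lambda_N}(a,b)+R_{ab}$ with $R_{ab}=\mathcal{O}(|a-b|^{-(d+\nu)})$, the paper needs a separate argument: run the single-observable flow, extend it to infinite volume via the $(\Z^d)$-property, and identify $\lim_k n^{a,\Z^d}_k=n^a_0$ by testing a smoothed observable against the scaling-limit result of \cite{Hil19_1}, which forces $\sum_{k\geq0}(\mathbf{B}_kK^{a,\Z^d}_k)^1=0$ and hence $S^{\alpha}_{j_{ab}}=\mathcal{O}(\eta^{j_{ab}})$. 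Nothing in your proposal supplies this cancellation, so as written the claimed form of $q_N^{ab}$ --- and with it Theorem \ref{Thm:DecayCovariance} --- does not follow from your sketch.
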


The maps $\lambda$, $q$ and $Z_N$ are the ones that are studied in Theorem 2.4 in \cite{Hil19_1}.

~\\
This representation can be used for a straightforward proof of Theorem~\ref{Thm:DecayCovariance}.

\begin{proof}[Proof of Theorem \ref{Thm:DecayCovariance}]
Let $\tilde{\zeta}$ be the parameter from Lemma \ref{Lemma:From_K_to_V}, fix $\eta \in (0,1/4)$ and let $L_1$ and $\epsilon_1$ be the corresponding parameters from Theorem 2.4 in \cite{Hil19_1} (stated below in Theorem \ref{Thm:RepresentationPartitionFunction}). Then, for $\beta$ large enough and $\delta$ small enough, $\mathcal{K}_{u,\beta,V} \in B_{\epsilon_1}(0) \subset \mathbf{E}_{\tilde{\zeta}}$ is satisfied. Therefore we can apply the representation \eqref{eq:formula_extended_partition_function} from Theorem \ref{Thm:RepresentationExtendedPartitionFunction} with
$$
f_{ab}(\tilde{s}, \tilde{t}) = f_{ab}\left(\frac{s}{\sqrt{\beta}}, \frac{t}{\sqrt{\beta}}\right)
$$
in the computation of the correlations as follows:
\begin{align*}
\Cov_{\gamma_{N,\beta}^u}\left( \nabla_{m_a} \varphi(a), \nabla_{m_{b}} \varphi(b) \right)
&
= \partial_s \partial_t \Big\vert_{s=t=0}
\ln \mathcal{Z}_{N,\beta}\left( u,\frac{f_{ab}(s,t)}{\sqrt{\beta}} \right)
\\ &=
\partial_s \partial_t \Big\vert_{s=t=0}
\ln \mathcal{Z}_{N,\beta}\left( u,f_{ab}(\tilde{s},\tilde{t}) \right)
\\&=
\partial_s \partial_t \Big\vert_{s=t=0}
\ln \left[ e^{st \frac{q_N^{ab}}{\beta} + s \frac{\lambda_N^a}{\sqrt{\beta}} + t \frac{\lambda_N^b}{\sqrt{\beta}}} Z_N^{\ext}(\mathcal{K}_{u,\beta,V},0) \right]
\\&=
\frac{1}{\beta}q_N^{ab}
+ \frac{K_N^{ab}}{\beta Z_N^{\es}(\mathcal{K}_{u,\beta,V},0)}
- \frac{K_N^a K_N^b}{\beta Z_N^{\es}(\mathcal{K}_{u,\beta,V},0)}
\\&=
\frac{1}{\beta}
\left(
 \nabla^*_{m_b} \nabla_{m_a} C_{\Lambda_N}^q(a,b) + R_{ab} + \mathcal{O}\left(2^{-N}\right)
\right).
\end{align*}
By a standard argument $C_{\Lambda_N}^q \rightarrow C^q_{\Z^d}$ as $N \rightarrow \infty$, and thus the theorem is proven.
\end{proof}

\section{RG analysis for the bulk flow} \label{sec:RG-analysis_BulkFlow}

To prove Theorem \ref{Thm:RepresentationExtendedPartitionFunction} we extend the proof of Theorem 2.4 in \cite{Hil19_1} by \textit{observables}. In this section we outline the steps from \cite{Hil19_1} which are needed for the extended proof in the next section. For motivations and details we refer to the original paper.

~\\
The goal of Theorem 2.4 in \cite{Hil19_1} is an expression for
\begin{align*}
\mathcal{Z}_N(\mathcal{K},f)
= \int_{\chi_N} e^{(f,\varphi)} \sum_{X \subset \Lambda_N} \prod_{x \in X} \mathcal{K}(\nabla\varphi(x)) \mu_1(\de\varphi)
\end{align*}
where $u \in \R^d$, $f \in \chi_N$, $\mathcal{K} \in \mathbf{E}_{\zeta}$ and $\zeta \in (0,1)$ fixed.

\begin{thm}[Theorem 2.4 in \cite{Hil19_1}]\label{Thm:RepresentationPartitionFunction}
Fix $\zeta, \eta \in (0,1)$. There is $L_0$ such that for all odd integers $L \geq L_0$ there is $\epsilon_0 > 0$ with the following properties. There exist smooth maps
$$
\lambda: B_{\epsilon_0}(0) \subset \mathbf{E}_{\zeta} \rightarrow \R,\quad q: B_{\epsilon_0}(0) \subset \mathbf{E}_{\zeta} \rightarrow \R^{d \times d}_{\mathrm{sym}},
$$
and, for any $N \in \N$, a smooth map $Z_N^{\es}: B_{\epsilon_0}(0) \times \chi_N \rightarrow \R$ such that for any $f \in \chi_N$ and $\mathcal{K} \in B_{\epsilon_0}(0)$ the following representation holds:
\begin{align}
\mathcal{Z}_{N}(\mathcal{K},f) =  e^{\frac{1}{2}\left(f,\mathcal{C}_{\Lambda_N}^{q(\mathcal{K})}f\right)} \frac{Z_N^{(q(\mathcal{K}))}}{Z_N^{(0)}} e^{- L^{Nd} \lambda(\mathcal{K}) } Z_N^{\es}\left(\mathcal{K},\mathcal{C}_{\Lambda_N}^{q(\mathcal{K})}f\right).
\label{formula...}
\end{align}
If $f(x) = g_N(x)-c_N$, $g_N(x)=L^{-N\frac{d+2}{2}}g(L^{-N}x)$ for $g \in C_c^{\infty}(\T^d)$ with $\int g =0$, $c_N$ such that $\sum_{x \in \T_N}f(x) = 0$, then there is a constant $C$ which is independent of $N$ such that the remainder $Z_N^{\es}(\mathcal{K})$ satisfies the estimate
\begin{align*}
 \left\vert Z_N^{\es} \left(\mathcal{K}, \mathcal{C}_{\Lambda_N}^{q(\mathcal{K})}f\right) - 1 \right\vert \leq C \eta^N.
\end{align*}
\end{thm}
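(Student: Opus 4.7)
The plan is to implement the rigorous renormalisation group (RG) analysis in the spirit of Bauerschmidt-Brydges-Slade, adapted to gradient fields as in \cite{AKM16,ABKM}. The polymer representation is already built into the definition of $\mathcal{Z}_N(\mathcal{K},f)$: the sum over $X \subset \Lambda_N$ is the Mayer expansion of $\prod_{x\in\Lambda_N}(1+\mathcal{K}(\nabla\varphi(x)))$. The first preparatory step is to identify the scalar $\lambda(\mathcal{K})$ and the symmetric matrix $q(\mathcal{K})$ that absorb the two relevant directions (volume and quadratic form in $\nabla\varphi$), and to replace the reference Gaussian $\mu_1$ by $\mu_{\mathcal{C}^{q(\mathcal{K})}_{\Lambda_N}}$. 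The factor $Z_N^{(q(\mathcal{K}))}/Z_N^{(0)}$ in \eqref{formula...} is precisely the ratio of normalisations of these two Gaussians, while the prefactor $e^{\frac{1}{2}(f,\mathcal{C}^{q(\mathcal{K})}_{\Lambda_N}f)}$ and the shift $\varphi\mapsto\varphi+\mathcal{C}^{q(\mathcal{K})}_{\Lambda_N}f$ in the polymer activities arise from completing the square in the linear term $(f,\varphi)$.

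The heart of the proof is a finite-range decomposition $\mathcal{C}^q_{\Lambda_N}=\sum_{k=1}^N C_k^q$ with each $C_k^q$ of range at most $L^k/2$, followed by a progressive integration over the fluctuation fields $\zeta_k \sim \mathcal{N}(0,C_k^q)$. At each scale $k$ one keeps the integrand in a polymer representation on the scale-$k$ block decomposition $\mathcal{P}_k$,
\begin{equation*}
Z_k(\varphi) = \sum_{X\in\mathcal{P}_k} e^{-H_k(\Lambda_N\setminus X,\varphi)}\, K_k(X,\varphi),
\end{equation*}
where $H_k$ is a local relevant Hamiltonian carrying the running parameters $(\lambda_k,q_k)$ and $K_k$ is the irrelevant polymer activity. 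The RG map $T_{k+1}$ performs the Gaussian integration against $C_{k+1}^q$, then a reblocking and extraction step that splits the resulting activity into a local relevant piece (updating $\lambda_{k+1},q_{k+1}$) and a nonlocal irrelevant remainder $K_{k+1}$.

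The main obstacle is establishing contractivity of the RG map in a norm well-adapted to the large-field behaviour of $\mathcal{K}$. I would work with weighted $C^{r_0}$ polymer seminorms built from the Gaussian weight $e^{-\frac{1}{2}(1-\zeta)|z|^2}$ that defines $\|\cdot\|_\zeta$, combined with a Brydges-Slade large-field regulator. The parameters $(\lambda_k,q_k)$ are tuned along the flow so as to cancel the two relevant eigendirections of the linearised RG map, with the nonlinear remainder $K_k$ contracting by a factor $\sim\eta$ per scale. Pinning the initial data $(\lambda(\mathcal{K}),q(\mathcal{K}))$ once and for all is done by a forward-iteration plus fixed-point argument (a Banach/implicit-function argument in $\mathbf{E}_\zeta$), which simultaneously yields smoothness of $\lambda$ and $q$ in $\mathcal{K}\in B_{\epsilon_0}(0)$.

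Finally, after $N$ iterations the remainder $K_N$ is a polymer activity on the single $L^N$-block $\Lambda_N$, and the per-scale contraction accumulates to $\|K_N\| = \mathcal{O}(\eta^N)$; identifying $Z_N^{\es}(\mathcal{K},\mathcal{C}^{q(\mathcal{K})}_{\Lambda_N}f)$ with $1+K_N$ gives the claimed bound. For the specific test function $f(x)=g_N(x)-c_N$ with $g_N(x)=L^{-N(d+2)/2}g(L^{-N}x)$, the rescaling and the discrete mean-zero condition guarantee that $\mathcal{C}^{q(\mathcal{K})}_{\Lambda_N}f$ and its discrete derivatives are uniformly small on each scale, so this external field does not spoil the contraction and the $\mathcal{O}(\eta^N)$ estimate survives in the presence of $f$.
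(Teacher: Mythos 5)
Your proposal follows essentially the same route as the cited proof (Theorem 2.4 in \cite{Hil19_1}, whose ingredients are outlined in Section \ref{sec:RG-analysis_BulkFlow}): completing the square to produce $e^{\frac{1}{2}(f,\mathcal{C}^{q}f)}$ and the Gaussian ratio $Z_N^{(q)}/Z_N^{(0)}$, a finite-range decomposition with progressive integration, the $(H_k,K_k)$ split with contraction of the irrelevant part, and a fixed-point/stable-manifold argument fixing $(\lambda(\mathcal{K}),q(\mathcal{K}))$ smoothly, ending with $|Z_N^{\es}-1|=\mathcal{O}(\eta^N)$. The only presentational difference is that in the actual argument the flow is run at zero external field and $Z_N^{\es}$ is evaluated at the shifted field $\mathcal{C}^{q}_{\Lambda_N}f$ at the last scale, where one uses the uniform bound on $|\mathcal{C}^q_{\Lambda_N} f|_{N,\Lambda_N}$ for the rescaled $g_N$, rather than feeding $f$ through each RG step.
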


~\\
The study of the bulk flow is carried out by renormalisation group analysis, an iterative averaging process over different scales. By this method, the quantity $Z_N^{\es}$ in \eqref{formula...} arises in the following way.
As a first step, we write
\begin{align}
\mathcal{Z}_N(\mathcal{K},f)
= \frac{Z_N^{q(\mathcal{K})}}{Z_N^{(0)}} e^{-L^{Nd} \lambda(\mathcal{K})}
\int_{\chi_N} e^{(\varphi,f)} F_0^{\es}(\Lambda_N, \varphi) \mu_{\mathcal{C}^{q(\mathcal{K})}} (\de\varphi),
\end{align}
where
$$
\mathcal{C}^q = \mathcal{C}^q_{\Lambda_N} = \left( \mathcal{A}^q_{\Lambda_N} \right)^{-1},
\quad
\mathcal{A}^q_{\Lambda_N}
= \sum_{i,j=1}^d \left( \delta_{ij} + q_{ij} \right) \nabla_j^* \nabla_i,
$$
is the covariance of the Gaussian free field on $\Lambda_N$. For ease of notation, we dropped the subscript $\Lambda_N$ above. The map $F_0^{\es}$ contains the added Gaussian part $e^{\frac{1}{2}(\nabla \varphi, q(\mathcal{K}) \nabla \varphi)}$, the constant term $e^{L^{Nd}\lambda(\mathcal{K})}$ and the $\mathcal{K}$-term.

~\\
A \textit{finite-range decomposition} of $\mu_{\mathcal{C}^{q}} = \mu_{\mathcal{C}_1} \ast \ldots \ast \mu_{\mathcal{C}_{N}}$ enables us to integrate out iteratively scale by scale,
\begin{align*}
\int_{\chi_N} F_0^{\es}(\varphi + \phi) \mu_{\mathcal{C}^{q}}(\de \varphi)
&=
\int_{\chi_N} F_0^{\es}(\xi_1 + \ldots + \xi_{N} + \phi) \mu_{\mathcal{C}_1}(\de\xi_1) \ldots \mu_{\mathcal{C}_{N}}(\de\xi_{N})
\\ &
= \int_{\chi_N} F_1^{\es}(\xi_2 + \ldots + \xi_{N} + \phi) \mu_{\mathcal{C}_2}(\de\xi_2)\ldots \mu_{\mathcal{C}_{N}}(\de\xi_{N})
\\
&= \ldots
\\ &
= \int_{\chi_N} F_{N-1}^{\es}(\xi_{N} + \phi) \mu_{\mathcal{C}_{N}}(\de \xi_{N})
= F_N^{\es}(\phi)
.
\end{align*}

~\\
The map $F_0^{\es}$ can be written by polymer expansion as
$$
F_0^{\es}(\Lambda_N) = \sum_{X \subset \Lambda_N} e^{H_0^{\es}(X)} \circ K_0^{\es}(\Lambda_N \setminus X)
= \left( e^{H_0^{\es}} \circ K_0^{\es} \right) (\Lambda_N).
$$

This decomposition can be maintained on each scale $k\in \lbrace 1, \ldots, N \rbrace$, that is there are maps $\left(H_k^{\es}, K_k^{\es}\right)$ such that $F_k^{\es} = e^{H_k^{\es}} \circ K_k^{\es}$. This so-called \textit{circ product} acts on scale~$k$ with polymers consisting of \textit{$k$-blocks}, which are cubes of side length $L^k$ (a precise definition can be found in \eqref{eq:defn_circ_product} in Subsection \ref{subsubsec:polymers_functionals_norms}). At the last scale $N$ there is only one block left, namely the whole set $\Lambda_N$, and the circ product is just a sum of two terms, $\left(e^{H_N^{\es}} + K_N^{\es}\right)(\Lambda_N)$.

 The maps $H_k^{\es}$ are the \textit{relevant} (more precisely: relevant and marginal) directions which collect all increasing (and constant) parts in the procedure $F \mapsto \mu_{k+1}\ast F$ and they live in finite dimensional spaces. The maps $K_k^{\es}$ collect all \textit{irrelevant} directions.

~\\
This method is described and performed in detail in \cite{BS1}, \cite{BS2}, \cite{BS3}, \cite{BS4} and \cite{BS5} and adapted to gradient models in \cite{AKM16}, \cite{ABKM} and \cite{Hil19_1}.

~\\
In the next subsections we introduce the finite range decomposition, the norms and spaces for the functionals, the renormalisation map $(H_k^{\es}, K_k^{\es}) \mapsto (H_{k+1}^{\es}, K_{k+1}^{\es})$, key properties of the map and the existence of the finite volume and global flow. Most of the presented material is adopted unchanged from \cite{Hil19_1}. We just skipped details which are not needed for the extension in Section \ref{sec:RG-analysis_ObservableFlow}.

\subsection{Definitions} \label{subsec:finite-volume_bulk_single_step}

We start by describing the \textit{finite-range decomposition} of the measure $\mu_{\mathcal{C}^q}$. See \cite{Hil19_1} for details.

\subsubsection{Finite-range decomposition}\label{sec:FRD}

Let $C^{q}: \Lambda_N \rightarrow \R$ be the kernel to the operator $\mathcal{C}^q$, i.e.,
$$
\mathcal{C}^{q}\varphi (x) = \sum_{y \in \Lambda_N}C^{q}(x-y) \varphi(y).
$$
The next proposition is Theorem 2.3 in \cite{BUC18}.

\begin{prop}[Finite-range decomposition]\label{Prop:FRD_Buchholz}
Fix $q \in \R^{d \times d}_{\text{sym}}$ such that $\mathcal{C}^{q}$ is positive definite. Let $L > 3$ be an odd integer and $N \geq 1$. Then there exist positive, translation invariant operators $\mathcal{C}_k^{q}$ such that
\begin{align*}
\mathcal{C}^{q} &= \sum_{k=1}^{N+1} \mathcal{C}^{q}_k,
\\
C^{q}_k(x) &= -M_k \quad \text{for} \quad |x|_{\infty} \geq \frac{L^k}{2}, \quad k \in \lbrace 1, \ldots, N \rbrace,
\end{align*}
where $M_k \geq 0$ is a constant that is independent of $q$. The following bounds hold for any positive integer $l$ and any multiindex $\alpha$:
\begin{align*}
\sup_{x \in \Lambda_N}
\sup_{\Vert \dot{q} \Vert \leq \frac{1}{2}} \Big\vert \nabla^{\alpha} D^l_q C^{q}_k(x)(\dot{q}, \ldots, \dot{q}) \Big\vert
\leq
\begin{cases}
C_{\alpha,l} L^{-(k-1)(d-2+|\alpha|)} & \quad \text{for } d + |\alpha| > 2 \\
C_{\alpha,l} \ln(L) L^{-(k-1)(d-2+|\alpha|)} & \quad \text{for } d + |\alpha| = 2 .
\end{cases}
\end{align*}
Here, $C_{\alpha,l}$ denotes a constant that does not depend on $L$, $N$, and $k$.
\end{prop}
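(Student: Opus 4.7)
The plan is to construct $\mathcal{C}_k^q$ from a scale decomposition of a heat-kernel representation of $\mathcal{C}^q$. Since we work on the zero-average space $\chi_N$, the quadratic form $\mathcal{A}^q$ is strictly positive there, so $\mathcal{C}^q = \int_0^\infty e^{-t \mathcal{A}^q}\,\de t$ is a well-defined operator on $\chi_N$. First I would pick a non-negative smooth partition of unity $1 = \sum_{k=1}^{N+1} \rho_k(t)$ on $(0,\infty)$ with $\rho_k$ supported in a dyadic interval $[a L^{2(k-1)}, b L^{2(k-1)}]$ (the last piece absorbing the tail $t \gtrsim L^{2N}$), and tentatively set $\tilde{\mathcal{C}}_k^q := \int_0^\infty \rho_k(t)\, e^{-t \mathcal{A}^q}\,\de t$. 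Each piece is positive, they sum to $\mathcal{C}^q$, they are translation invariant by translation invariance of $e^{-t\mathcal{A}^q}$, and discrete Gaussian heat-kernel bounds make $\tilde{C}_k^q$ essentially supported in a ball of radius $\lesssim L^{k-1}$ --- but not strictly finite range.

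To obtain the \emph{strict} finite-range property I would follow the averaging construction that Buchholz adapts from earlier work of Bauerschmidt: replace the bare heat kernel in each band by its convolution with a suitable compactly supported averaging kernel of length $\sim L^{k-1}$. Because this is convolution against a probability measure, positive semi-definiteness is preserved; because the averaging kernel is compactly supported, Gaussian-decaying kernels become kernels that vanish identically outside a ball of the desired size. On the torus one can only achieve this up to a \emph{constant} kernel value --- constants act as zero on $\chi_N$ --- and that constant is precisely the parameter $-M_k$ appearing in the statement. The sum identity survives because the total constant $\sum_k M_k$ corresponds to the absent zero mode and acts as zero on $\chi_N$.

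For the quantitative bounds I would estimate under the integral. The discrete heat-kernel estimate $|\nabla^\alpha e^{-t \mathcal{A}^q}(0,x)| \leq C (1 \vee t)^{-(d+|\alpha|)/2}$, integrated against $\rho_k$ on a band of length $\sim L^{2(k-1)}$, gives the announced $L^{-(k-1)(d-2+|\alpha|)}$ rate; the $\ln L$ in the critical case $d+|\alpha| = 2$ arises because the integrand is then $\sim t^{-1}$, whose integral across a dyadic band is $\ln L$. For the $q$-derivatives I would iterate the Duhamel formula
$$D_q e^{-t \mathcal{A}^q}(\dot q) = -\int_0^t e^{-s \mathcal{A}^q} \bigl(D_q \mathcal{A}^q\, \dot q\bigr) e^{-(t-s) \mathcal{A}^q}\,\de s ;$$
the bound $\Vert \dot q \Vert \leq 1/2$ controls each insertion of $D_q \mathcal{A}^q \,\dot q$ as a bounded operator, while positive definiteness of the background $\mathcal{A}^q$ keeps it uniformly comparable to $-\Delta$, so the parabolic scaling survives after each differentiation.

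The hardest step is to reconcile strict finite range with positivity, the exact sum identity, and uniform smoothness in $q$, all at once. Naive truncation of the heat kernel destroys positivity, and pure scale-integral decompositions do not produce strictly finite-range kernels; Buchholz's averaging trick threads this needle by convolving against compactly supported kernels whose Fourier transforms are non-negative and whose length scales are tied to $L^{k-1}$. Verifying that this construction, adjusted by the torus shifts $-M_k$, yields a family $\mathcal{C}_k^q$ with the claimed analytic dependence on $q$ is the technical core of the argument.
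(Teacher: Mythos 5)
The paper does not prove this proposition at all: it is quoted verbatim as Theorem 2.3 of \cite{BUC18}, so the only thing to compare your sketch against is the construction actually carried out there (and in the earlier works of Bauerschmidt and Adams--Koteck\'y--M\"uller it builds on). Your overall frame --- write $\mathcal{C}^q=\int_0^\infty e^{-t\mathcal{A}^q}\,\de t$, cut the $t$-integral into bands of width $\sim L^{2(k-1)}$, read off the $L^{-(k-1)(d-2+|\alpha|)}$ rates and the $\ln L$ in the critical case from parabolic scaling, and control $q$-derivatives by iterating Duhamel --- is a reasonable way to see where the \emph{estimates} come from. But the step on which the whole statement hinges is wrong as written: convolving the heat kernel with a compactly supported (probability) averaging kernel does \emph{not} make a Gaussian-decaying kernel vanish outside a ball; convolution can only enlarge the support, so your modified $\tilde{\mathcal{C}}_k^q$ are still of infinite range, and the strict finite-range property (kernel constant $=-M_k$ for $|x|_\infty\geq L^k/2$) is exactly what you have not obtained. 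This is not a repairable detail of the heat-kernel band decomposition: no post-processing of $e^{-t\mathcal{A}^q}$ by positive averaging can truncate its tails while preserving positive definiteness.

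The constructions that do thread this needle use a different mechanism. Bauerschmidt's method, which Buchholz adapts to the discrete operator $\mathcal{A}^q$ with analytic $q$-dependence and the Fourier-space bounds alluded to after the proposition, exploits the finite propagation speed of the wave group: one writes $\left(\mathcal{A}^q\right)^{-1}$ as an integral of functions $W_t(\mathcal{A}^q)$ whose dependence on $\mathcal{A}^q$ is through $\cos\bigl(s\sqrt{\mathcal{A}^q}\bigr)$ (equivalently, through Chebyshev-type polynomials in the discrete setting) with $s$ restricted to $[0,t]$, so each piece is \emph{exactly} of range $t$ and is manifestly positive because $W_t\geq 0$ as a function; the constants $M_k$ then arise from the torus zero mode, and smoothness in $q$ is tracked through the explicit functional calculus rather than through Duhamel for the heat semigroup. (The alternative Brydges--Guadagni--Mitter/AKM route gets finite range from locality of averaging projections over cubes, again not from truncating heat kernels.) So your proposal correctly identifies positivity, the sum identity, the scaling of the bounds, and the role of $-M_k$, but the technical core --- strict finite range compatible with positivity and uniform $q$-smoothness --- is missing, and the device you propose for it fails.
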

In \cite{BUC18} further bounds in Fourier space are stated. 
For the sake of simplicity they are omitted here.

~\\
The last two covariances are combined to a single one:
\begin{align}
\mathcal{C}_{N,N}^{q} = \mathcal{C}_N^{q} + \mathcal{C}_{N+1}^{q}. \label{eq:Last_Scale_Cov}
\end{align}
We use the following decomposition:
\begin{align}
\mathcal{C}^{q} &= \sum_{k=1}^{N-1} \mathcal{C}^{q}_k + \mathcal{C}^{q}_{N,N}.
\label{eq:dec}
\end{align}

Let us denote by $\mu_k$ the Gaussian measure with covariance $\mathcal{C}_k^q$.

\subsubsection{Polymers, functionals and norms}\label{subsubsec:polymers_functionals_norms}

In this subsection, we discuss several key notions and introduce the setting of the scales and spaces for functionals. The representation is exactly as in \cite{Hil19_1}.

~\\
At each scale $k$ we pave the torus with blocks of side length $L^k$. These so-called \textit{$k$-blocks} are translations by $(L^k \Z)^d$ of the block $B_0 = \left\lbrace z \in \Z^d: |z_i| \leq \frac{L^k-1}{2} \right\rbrace $. Together, they form the set of $k$-blocks denoted by
$$
\mathcal{B}_k = \lbrace B: B \text{ is a $k$-block} \rbrace.
$$
Unions of blocks are called \textit{polymers}. For $X \subset \Lambda$ let $\mathcal{P}_k(X)$ be the set of all $k$-polymers in $X$ at scale~$k$.

Furthermore we need the following notations:
\begin{itemize}
\item
A polymer $X$ is \textit{connected} if for any $x,y \in X$ there is a path $x_1 = x,x_2, \ldots$, $x_n = y$ in $X$ such that $|x_{i+1} - x_i|_{\infty}=1$ for $i=1,\ldots, n-1$. The set of all connected $k$-polymers in $X$ is denoted by $\mathcal{P}_k^c(X)$. The set of connected components of a polymer $X$ is denoted by $\mathcal{C}_k(X)$.
\item
Let $\mathcal{B}_k(X)$ be the set of $k$-blocks contained in $X$ and $|X|_k = |\mathcal{B}_k(X)|$ be the number of $k$-blocks in $X$.
\item
The \textit{closure} $\bar{X} \in \mathcal{P}_{k+1}$ of $X \in \mathcal{P}_k$ is the smallest $(k+1)$-polymer containing~$X$.
\item
The set of \textit{small polymers} $\mathcal{S}_k$ is given by all polymers $X \in \mathcal{P}_k^c$ such that $|X|_k \leq 2^d$. The other polymers in $\mathcal{P}_k \setminus \mathcal{S}_k$ are \textit{large}.
\item
For any block $B \in \mathcal{B}_k$ let $\hat{B} \in \mathcal{P}_k$ be the cube of side length $(2^{d+1}+1)L^k$ centered at~$B$.
\item
The \textit{small set neighbourhood} $X^{*} \in \mathcal{P}_{k-1}$ of $X \in \mathcal{P}_k$ is defined by
$$
X^* = \bigcup_{B \in \mathcal{B}_{k-1}(X)} \hat{B}.
$$
\item
The \textit{large neighbourhood} $X^+$ of $X \in \mathcal{P}_k$ is defined by
$$
X^+ = \bigcup_{\substack{B \in \mathcal{B}_k:\\ B \text{ touches }X}} B \cup X.
$$
\end{itemize}

Additionally, we introduce a class of functionals.
\begin{itemize}
	\item Let $M(\mathcal{V}_N)$ be the set of measurable real functions on $\mathcal{V}_N$ with respect to the Borel-$\sigma$-algebra.
	\item Let $\mathcal{N}^{\es}$ be the space of real-valued functions of $\varphi$ which are in $C^{r_0}$.
	\item A map $F: \mathcal{P}_k \rightarrow \mathcal{N}^{\es}$ is called \textit{translation invariant} if for every $y \in (L^k \Z)^d$ we have $F(\tau_y(X),\tau_y(\varphi)) = F(X,\varphi)$ where $\tau_y(B) = B+y$ and $\tau_y \varphi(x) = \varphi(x-y)$.
	\item A map $F: \mathcal{P}_k \rightarrow \mathcal{N}^{\es}$ is called \textit{local} if $\varphi \big\vert_{X^*} = \psi \big\vert_{X^*}$ implies $F(X,\varphi) = F(X,\psi)$.
	\item A map $F: \mathcal{P}_k \rightarrow \mathcal{N}^{\es}$ is called \textit{shift invariant} if $F(X, \varphi + \psi) = F(X,\varphi)$ for $\psi$ such that $\psi(x) = c$, $x \in X^*$ on each connected component of $X^*$.
\end{itemize}

We set
\begin{align*}
M(\mathcal{P}_k, \mathcal{V}_N)
= \lbrace
F: \mathcal{P}_k \rightarrow \mathcal{N}^{\es} \big\vert
F(X) \in M(\mathcal{V}_N), F \text{ translation inv., shift inv., local}
\rbrace.
\end{align*}

Notice that we included $C^{r_0}$-smoothness in the definition of the space $M(\mathcal{P}_k,\mathcal{V}_N)$ which is not done in \cite{ABKM}.

Generalisations of $M(\mathcal{P}_k,\mathcal{V}_N)$ are given by $M(\mathcal{P}^c_k,\mathcal{V}_N)$, $M(\mathcal{S}_k,\mathcal{V}_N)$ and $M(\mathcal{B}_k,\mathcal{V}_N)$ where the first component is changed appropriately. We will write $M(\mathcal{P}_k)$, $M(\mathcal{P}^c_k)$, $M(\mathcal{S}_k)$ and $M(\mathcal{B}_k)$ for short.

~\\
The \textit{circ product} of two functionals $F,G \in M(\mathcal{P}_k)$ is defined by
\begin{align}
(F \circ G)(X) = \sum_{Y \in \mathcal{P}_k(X)} F(Y) G(X \setminus Y).
\label{eq:defn_circ_product}
\end{align}

The space of \textit{relevant Hamiltonians} $M_0(\mathcal{B}_k)$, a subspace of $M(\mathcal{B}_k)$, is given by all functionals of the form
$$
H(B,\varphi) = \sum_{x \in B} \mathcal{H}\left(\lbrace x \rbrace, \varphi\right)
$$
where $\mathcal{H}(\lbrace x \rbrace, \varphi)$ is a linear combination of the following \textit{relevant monomials}:
\begin{itemize}
\item
The constant monomial $M(\lbrace x \rbrace)_{\es}(\varphi) = 1$;
\item
the linear monomials $M(\lbrace x \rbrace)_{\beta}(\varphi) = \nabla^{\beta} \varphi(x)$ for $1 \leq |\beta| \leq \lfloor \frac{d}{2} \rfloor + 1$;
\item
the quadratic monomials $M(\lbrace x \rbrace)_{\beta,\gamma}(\varphi) = \nabla^{\beta} \varphi(x) \nabla^{\gamma} \varphi(x)$ for $1 = |\beta| = |\gamma|$.
\end{itemize}

~\\
Next we introduce norms on the space of functionals. Fix $r_0 \in \N$, $r_0 \geq 3$.

~\\
Define
\begin{align*}
&\bigoplus_{r=0}^{\infty} \mathcal{V}_N^{\otimes r}
\\ & \quad
= \left\lbrace g = \left(g^{(0)}, g^{(1)}, \ldots \right) \Big\vert \, g^{(r)} \in \mathcal{V}_N^{(r)}, \text{ only finitely many non-zero elements} \right\rbrace.
\end{align*}

The space of test function is given by
$$
\Phi = \Phi_{r_0} = \left\lbrace g \in \bigoplus_{r=0}^{\infty} \mathcal{V}_N^{\otimes r}: g^{(r)} = 0 \,\, \forall r \geq r_0 \right\rbrace.
$$

A norm on $\Phi$ is given as follows: On $\mathcal{V}_N^{\otimes 0} = \R$ we take the usual absolute value on~$\R$.
For $\varphi \in \mathcal{V}_N$ we define
\begin{align*}
\vert \varphi \vert_{j,X} = \sup_{x \in X^*} \sup_{1 \leq |\alpha| \leq p_{\Phi}} \mathfrak{w}_j(\alpha)^{-1} \big\vert \nabla^{\alpha} (\varphi)(x) \big\vert
\end{align*}
where $\mathfrak{w}_j(\alpha) = h_j L^{-j|\alpha|} L^{-j \frac{d-2}{2}}$, $h_j = 2^j h$ and $p_{\Phi} = \left\lfloor \frac{d}{2} \right\rfloor + 2$.
For $g^{(r)} \in \mathcal{V}_N^{\otimes r}$ we define
\begin{align*}
%&
\left\vert g^{(r)} \right\vert_{j,X}
%\\
%&\quad
= \sup_{x_1, \ldots, x_r \in X^*} \sup_{1 \leq |\alpha_1|, \ldots, |\alpha_r| \leq p_{\Phi}} \left(\prod_{l=1}^r \mathfrak{w}_j(\alpha_l)^{-1}\right) \nabla^{\alpha_1} \otimes \ldots \otimes \nabla^{\alpha_r} g^{(r)}(x_1, \ldots, x_r).
\end{align*}
Then set $|g|_{j,X} = \sup_{r \leq r_0} \left|g^{(r)}\right|_{j,X}$.

~\\
A homogeneos polynomial $P^{(r)}$ of degree $r$ on $\mathcal{V}_N$ can be uniquely identified with a symmetric $r$-linear form and hence with an element $\overline{P^{(r)}}$ in the dual of $\mathcal{V}_N^{\otimes r}$.
So we can define the pairing
$$
\langle P,g \rangle = \sum_{r=0}^{\infty} \left\langle \overline{P^{(r)}}, g^{(r)} \right\rangle
$$
and a norm
$$
\vert P \vert_{j,X} = \sup \left\lbrace \langle P,g \rangle: g \in \Phi, |g|_{j,X} \leq 1 \right\rbrace.
$$
For $F \in C^{r_0}(\mathcal{V}_N)=\mathcal{N}^{\es}$ the pairing is given by $\langle F,g \rangle_{\varphi} = \langle \mathrm{Tay}_{\varphi} F,g \rangle$ which defines a norm
$$
\vert F \vert_{j,X,T_{\varphi}} = \vert \mathrm{Tay}_{\varphi} F \vert_{j,X}
= \sup \left\lbrace \langle F,g \rangle_{\varphi}: g \in \Phi, |g|_{j,X} \leq 1 \right\rbrace.
$$
Here, $\mathrm{Tay}_{\varphi} F $ denotes the Taylor polynomial of order $r_0$ of $F$ at $\varphi$.

~\\
Let $F \in M(\mathcal{P}_k^c)$. In \cite{ABKM} weights $W_k^{X}, w_k^X, w_{k:k+1}^X \in M(\mathcal{P}_k)$ are defined. Useful properties are summarized in Lemma \ref{lemma:Properties_of_weights}. Weighted norms are given~by
\begin{align*}
\vertiii{F(X)}_{k,X} &= \sup_{\varphi} \vert F(X) \vert_{k,X,T_{\varphi}} W_k^X(\varphi)^{-1},
\\
\Vert F(X) \Vert_{k,X} &= \sup_{\varphi} \vert F(X) \vert_{k,X,T_{\varphi}} w_k^X(\varphi)^{-1},
\\
\Vert F(X) \Vert_{k:k+1,X} &= \sup_{\varphi} \vert F(X) \vert_{k,X,T_{\varphi}} w_{k:k+1}^X(\varphi)^{-1}.
\end{align*}

The global weak norm for $F \in M(\mathcal{P}_k^c)$ for $A \geq 1$ is given by
$$
\Vert F \Vert_k^{(A)} = \sup_{X \in \mathcal{P}_k^c} \Vert F(X) \Vert_{k,X} A^{|X|_k}.
$$

~\\
 A norm on relevant Hamiltonians is given as follows. For $H \in M_0(\mathcal{B}_k)$ we can write
$$
H(B,\varphi) = \sum_{x \in B}
\left(
 a_{\es} +
 \sum_{\beta \in \mathfrak{v}_1} a_{\beta} \nabla^{\beta} \varphi(x) +
\sum_{x \in B} \sum_{\beta,\gamma \in \mathfrak{v}_2} a_{\beta,\gamma} \nabla^{\beta} \varphi(x) \nabla^{\gamma} \varphi(x)
\right).
$$
Here
\begin{align*}
\mathfrak{v}_1 &= \left\lbrace \beta \in \N_0^{\mathcal{U}}, 1 \leq |\beta| \leq \left\lfloor \frac{d}{2} \right\rfloor + 1 \right\rbrace,
\\
\mathfrak{v}_2 &= \left\lbrace (\beta, \gamma) \in \N_0^{\mathcal{U}} \times \N_0^{\mathcal{U}}, |\beta| = |\gamma|=1, \beta < \gamma \right\rbrace,
\end{align*}
where $\mathcal{U} = \lbrace e_1, \ldots, e_d \rbrace$ and the expression $\beta < \gamma$ refers to any ordering of $\lbrace e_1, \ldots, e_d \rbrace$.
With these preparations we define a norm on $M_0(\mathcal{B}_k)$ as follows:
$$
\Vert H \Vert_{k,0} =
L^{dk} \left| a_{\es}\right|
+ \sum_{\beta \in \mathfrak{v}_1} h_k L^{kd} L^{-k\frac{d-2}{2}} L^{-k|\beta|} \left| a_{\beta} \right|
+ \sum_{(\beta,\gamma)\in \mathfrak{v}_2} h_k^2 \left| a_{(\beta,\gamma)} \right|.
$$

\subsubsection{The renormalisation map}\label{subsubsec:The_renormalisation_map}

We use the finite-range decomposition of $\mathcal{C}^{q}$ into covariances $\mathcal{C}^q_1,\ldots, \mathcal{C}^q_{N-1},\mathcal{C}^q_{N,N}$ defined in Subsection \ref{sec:FRD} (see \eqref{eq:dec}) and the corresponding decomposition of the measure $\mu_{\mathcal{C}^q} = \mu_1 \ast \ldots \ast \mu_N \ast \mu_{N,N}$.

The renormalisation map is defined as
\begin{align*}
\mathcal{R}_k F(\varphi) = \int_{\chi_N}F(\varphi + \xi) \mu_k(\de \xi).
\end{align*}
Then
$$
\int_{\chi_N} F(\varphi) \mu_{\mathcal{C}^{(q)}}(\de \varphi)
= \mathcal{R}_{N,N} \mathcal{R}_{N-1} \ldots \mathcal{R}_1(F)(0).
$$

The flow under $\mathcal{R}_k$ is described by two sequences of functionals $H_k \in M_0(\mathcal{B}_k)$ and $K_k \in M(\mathcal{P}_k^c)$. In the following we define those sequences %and state properties 
as far as it is needed for the understanding of the extension to observables in the next section.

~\\
The flow is given by
\begin{align*}
\mathbf{T}_k: M_0(\mathcal{B}_k) \times M(\mathcal{P}_k^c) \times \R_{\text{sym}}^{d \times d}
&\quad\rightarrow\quad
 M_0(\mathcal{B}_{k+1}) \times M(\mathcal{P}_{k+1}^c),
\\
%\quad
(H,K,q) &\quad\mapsto\quad (H_+,K_+).
\end{align*}
Note that we sometimes omit the scale $k$ from the notation; if doing so, the $+$ indicates the change of scale from $k$ to $k+1$. The maps $H_+ \in M_0(\mathcal{B}_{k+1})$ and $K_+ \in M(\mathcal{P}_{k+1})$ are chosen such that
$$
\mathcal{R}_+ (e^H \circ K)(\Lambda_N) = (e^{H_+} \circ K_+)(\Lambda_N).
$$

The relevant part of the flow on the next scale, the map $H_+$, is defined as follows: For $B_+ \in  \mathcal{B}_{k+1}$
\begin{align*}
H_+(B_+)
&= \mathbf{A}^q_k H (B_+) + \mathbf{B}^q_k K(B_+)
\\
&= \sum_{B \in \mathcal{B}_{k+1}(B_+)} \Pi_2 \mathcal{R}_{k+1} H(B)
+ \sum_{B \in \mathcal{B}_{k+1}(B_+)} \Pi_2 \mathcal{R}_{k+1} K(B).
\end{align*}

Here, $\Pi_2: M(\mathcal{B}_k) \rightarrow M_0(\mathcal{B}_k)$ is a projection on the space of relevant Hamiltonians. Heuristically, for $F \in M(\mathcal{B}_{k})$, $\Pi_2 F$ is attained as homogenisation of the second order Taylor expansion of $F(B)$ given by $\dot{\varphi} \mapsto  F(B,0) + DF(B,0) \dot{\varphi} + \frac{1}{2} D^2F(B,0)(\dot{\varphi},\dot{\varphi})$.
More precisely, $\Pi_2 F$ is the relevant Hamiltonian  $F(B,0) + l(\dot{\varphi}) + Q(\dot{\varphi},\dot{\varphi})$ where $l$ is the unique linear relevant Hamiltonian that satisfies $l(\dot{\varphi}) = DF(B,0) \dot{\varphi}$ for all $\dot{\varphi}$ who are polynomials of order $\left\lfloor \frac{d}{2} + 1 \right\rfloor$ on $B^+$, and $Q$ is the unique quadratic relevant Hamiltonian that agrees with $\frac{1}{2} D^2 F(B,0)(\dot{\varphi},\dot{\varphi})$ on all $\dot{\varphi}$ which are affine on $B^+$. These heuristics are made precise in \cite{ABKM}, Section 8.4.

For the definition of the irrelevant part $K_+$ of the flow at the next scale, set
$$
\widetilde{H}(B) = \Pi_2 \mathcal{R}_{k+1} H(B) + \Pi_2 \mathcal{R}_{k+1} K(B),
$$
and for $X \in \mathcal{P}_k$ and $U \in \mathcal{P}_{k+1}$,
\begin{align*}
&\chi(X,U) = \1_{\pi(x)=U}, \quad\mathrm{where}
\\
&\pi(X) = \bigcup_{Y \in \mathcal{C}(X)} \tilde{\pi}(Y) \quad \mathrm{and}
\\
&\tilde{\pi}(Y)
= \begin{cases}
\bar{X} & \text{ if } X \in \mathcal{P}^c \setminus \mathcal{S},
\\
B_+ & \text{ where } B_+ \in \mathcal{B}_+ \text{ with } B_+ \cap X \neq \es \text{ for } X \in \mathcal{S} \setminus \es,
\\
\es & \text{ if } X = \es.
\end{cases}
\end{align*}
Then
\begin{align}
K_+(U,\varphi) &= \mathbf{S}^q_k (H_+,K_+)(U,\varphi)
\nonumber
\\
&= \sum_{X \in \mathcal{P}} \chi(X,U) \left( e^{\widetilde{H}(\varphi)} \right)^{U \setminus X} \left( e^{\widetilde{H}(\varphi)} \right)^{- X \setminus U}
\nonumber
\\ & \qquad \times
\int 
\left[
\left( 1 - e^{\widetilde{H}(\varphi)} \right)
\circ \left( e^{H(\varphi + \xi)} - 1 \right)
\circ K(\varphi + \xi)
\right]
(X) \mu_+(\de\xi).
\label{eq:defn_K+}
\end{align}
If the dependence of $\mathbf{S}^q_k$ on $q$ is not of direct importance we omit it from the notation.

~\\
For the construction of the infinite-volume flow later we  consider the family $(K^{\Lambda})_{\Lambda}$ in dependence on the torus $\Lambda$. More precisely, we consider tori $\Lambda_N$ with increasing side length $L^N$, $N \in \N$. Let $\mathcal{P}_k(\Z^d)$ be the set of finite unions of $k$-blocks in~$\Z^d$. We need the following compatibility condition.

\begin{defn}\label{Defn:Zd-property}
We say that a family of maps $(K^{\Lambda})_{\Lambda}$ satisfies the $(\Z^d)$-property if for any $X \in \mathcal{P}_k(\Z^d)$ and for $\Lambda \subset \Lambda'$ satisfying $\diam (X) \leq \frac{1}{2} \diam(\Lambda)$ it holds that
$$
K^{\Lambda}(X) = K^{\Lambda'}(X).
$$
\end{defn}

~\\
We review the following properties of the map $(H,K) \mapsto K_+$ from Lemma 6.4 in \cite{ABKM}, and Propositions 3.8 and 3.10 in \cite{Hil19_1}.

\begin{lemma}\label{Prop:field_locality}
For $H \in M_0(\mathcal{B}_k)$ the functional $K_+$ defined above has the following properties.
\begin{enumerate}
	\item If $K \in M(\mathcal{P}_k)$, then $K_+ \in M(\mathcal{P}_+)$.
	\item If $K \in M(\mathcal{P}_k)$ factors on scale $k$, then $K_+$ factors on scale $k+1$.
	\item The map $(H,K) \mapsto K_+$ satisfies the restriction property, that is
for $U \in \mathcal{P}_{k+1}$ the value of $K_+(U)$ depends on $U$ only via the restriction $K \big\vert_{U^*}$ of $K$ to polymers in $\mathcal{P}(U^*)$.
\item Let $(K^{\Lambda})_{\Lambda}$ satisfy the $(\Z^d)$-property and let $H \in M_0(\mathcal{B})$. Then $(\mathbf{S}^{\Lambda}(H,K,q))_{\Lambda}$ also satisfies the $(\Z^d)$-property. \label{Prop:Zd-property}
\end{enumerate}
\end{lemma}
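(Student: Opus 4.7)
All four assertions are structural properties of the map defined in \eqref{eq:defn_K+}, proved by direct inspection using three ingredients: (i) the inherited translation, shift and locality properties of $H \in M_0(\mathcal{B}_k)$ and $K \in M(\mathcal{P}_k^c)$, together with those of the projection $\Pi_2$; (ii) the finite-range property of $\mu_+$ from Proposition \ref{Prop:FRD_Buchholz}, namely that the kernel $C_+^q(x,y)$ is constant (equal to $-M_{k+1}$) for $|x-y|_\infty \geq L^{k+1}/2$; (iii) the combinatorial constraint imposed by $\chi(X,U)$, which forces every connected component of $X$ to project onto a piece of $U$ and therefore to lie inside the small-set neighbourhood $U^*$.

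From (iii) alone the restriction property (3) is immediate: in \eqref{eq:defn_K+} the only evaluations of $K$ occur at subpolymers $Y$ of $X$, and $X \subset U^*$; the quantity $\widetilde{H}(B)$ entering outside the bracket is defined blockwise for $B \subset U$, so it too depends on $K$ only via $K|_{U^*}$. For (1), translation invariance of $K_+$ follows from translation invariance of all constituents (including that of $\mathcal{C}_+^q$); shift invariance follows because $H(B,\varphi+\xi)$ and $K(Y,\varphi+\xi)$ are shift invariant in $\varphi$ and any constant added can be absorbed into $\xi$; $C^{r_0}$-smoothness is preserved by Gaussian convolution; and locality in $U^*$ is obtained by combining (ii) and (iii), since the $\mu_+$-integrand depends on $\xi$ only on $U^*$, while $C_+^q$ couples $\xi(x)$ to $\varphi(y)$ only for $|x-y|_\infty < L^{k+1}/2$, which keeps $y$ inside $U^*$.

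For (2), factorisation on scale $k$ means $K(X_1 \cup X_2) = K(X_1)K(X_2)$ whenever the two polymers are separated on scale $k$. The circ product in \eqref{eq:defn_K+} splits multiplicatively over connected components of $X$, the relevant Hamiltonian $\widetilde{H}$ is a block sum, and the only residual coupling mechanism is the Gaussian integral $\mu_+$; by (ii) this integral decouples over disjoint regions provided they are separated at scale $k+1$. Re-summing over the subpolymers associated to each connected component of $U = U_1 \sqcup U_2$ then yields $K_+(U_1 \cup U_2) = K_+(U_1) K_+(U_2)$.

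The main obstacle is (4), which requires comparing two different ambient tori. Let $X$ appear in the sum defining $\mathbf{S}^\Lambda(H,K,q)(U)$ with $\diam X \leq \tfrac{1}{2}\diam\Lambda$; since $U = \pi(X)$ has diameter comparable to $\diam X$, the neighbourhood $U^*$ fits inside both $\Lambda$ and $\Lambda'$. Two identifications are then needed. First, every $K^\Lambda(Y)$ in the sum equals $K^{\Lambda'}(Y)$ by the $(\Z^d)$-property of the input family, since $Y \subset U^*$ and its diameter is controlled. Second, the Gaussian integral against $\mu_+^\Lambda$, whose integrand is supported on $U^*$, agrees with the corresponding $\mu_+^{\Lambda'}$-integral: by Proposition \ref{Prop:FRD_Buchholz} the kernels $C_+^{q,\Lambda}$ and $C_+^{q,\Lambda'}$ coincide on pairs $(x,y)$ with $|x-y|_\infty < L^{k+1}/2$, and differ otherwise only by the constant $-M_{k+1}$, which produces a uniform shift absorbed by the shift invariance established in (1). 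The same identification applies block-by-block to the $\Pi_2\mathcal{R}_+$ expressions defining $\widetilde{H}$, yielding $\mathbf{S}^\Lambda(H,K,q)(U) = \mathbf{S}^{\Lambda'}(H,K,q)(U)$, which is the $(\Z^d)$-property for the output.
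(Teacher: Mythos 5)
Your proposal is correct and follows essentially the same route as the proofs this paper relies on: the paper does not prove Lemma \ref{Prop:field_locality} itself but cites Lemma 6.4 in \cite{ABKM} and Propositions 3.8 and 3.10 in \cite{Hil19_1}, and those arguments are exactly the direct inspection of \eqref{eq:defn_K+} you describe — the constraint $\chi(X,U)$ giving the restriction property, blockwise structure plus the finite-range (constant beyond $L^{k+1}/2$) covariance combined with shift invariance giving factorisation, and torus-independence of the kernel within its range giving the $(\Z^d)$-property. One small correction: locality of $K_+(U)$ in $\varphi\vert_{U^*}$ needs no finite-range input at all (there is no coupling of $\xi$ to $\varphi$ through $\mathcal{C}^q_{k+1}$; the convolution $\int F(\varphi+\xi)\,\mu_{k+1}(\de\xi)$ is local in $\varphi$ wherever $F$ is), the finite-range property being needed only for points (2) and (4).
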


Now we sketch the extension of the map $(H,K)$ to infinite volume.

~\\
Let $\mathcal{B}_k(\Z^d)$ be the set of all $k$-blocks in $\Z^d$ and $\mathcal{P}_k(\Z^d)$ be the set of all finite unions of $k$-blocks. Since we are dealing with boxes $\Lambda$ of varying side length $L^N$ let us introduce the notation $N(\Lambda)$ for the exponent describing the side length of the box~$\Lambda$.

~\\
A relevant functional $H \in M_0(\mathcal{B}_k)$ can easily be thought of as an element dependent on a block living in $\Z^d$ instead of $\Lambda$ due to translation invariance. More precisely, given $H \in M_0(\mathcal{B}_k(\Lambda))$, we define $H^{\Z^d}$ on a block $B \in M_0(\Z^d)$ as $H(B)$ for a translation of $B$ to the fundamental domain of $\Lambda$ and suppress the index $\Z^d$ as well as the translation of the block in the notation.

~\\
The irrelevant part is extended as follows.
Let $(K^{\Lambda})_{\Lambda}$ be a family of maps which satisfy the $(\Z)^d$-property.
 For $X \in \mathcal{P}_k(\Z^d)$ choose $\Lambda$ large enough such that $k< N(\Lambda)$ and $\diam (X) \leq \frac{1}{2} \diam(\Lambda)$. Then we define
\begin{align*}
K^{\Z^d}(X) = K^{\Lambda}(X).
\end{align*}

Here we use that $X \in \mathcal{P}_k(\Z^d)$ has a straight-forward analogon in $\mathcal{P}_k(\Lambda)$ if $\Lambda$ is large enough which we do not record in the notation.

~\\
Given $(H,K^{\Z^d})$ and the finite-volume maps $\left(\mathbf{S}^{\Lambda}\right)_{\Lambda}$, we define $K_+^{\Z^d}$ as follows.
For $U \in \mathcal{P}_{k+1}(\Z^d)$ choose $\Lambda$ large enough such that $k+1 < N(\Lambda)$ and $\diam (U) \leq \frac{1}{2} \diam (\Lambda)$. Then
\begin{align*}
K_+^{\Z^d}\left(H,K^{\Z^d}\right)(U)
= \mathbf{S}^{\Lambda}\left(H,K^{\Lambda} \vert_{U^*}\right).
\end{align*}

Defining the relevant flow in infinite volume is straightforward: Fix $B \in \mathcal{B}_{k+1}\left(\Z^d\right)$ and $\left(H,K^{\Z^d}\right)$. Define
$$
H^{\Z^d}_+(B) = \mathbf{A}^q H (B) + \mathbf{B}^q K^{\Z^d} (B).
$$
As before we can skip the index $\Z^d$ on $H$.

~\\
Now we extend the norms.
There is no need to change the norm for the relevant variable since it does not depend at all on the size of the torus.
For the irrelevant variable let $X \in \mathcal{P}_k^c(\Z^d)$ and choose $\Lambda$ large enough such that $\diam(X) \leq \frac{1}{2} \diam(\Lambda)$. Then $K^{\Z^d}(X) = K^{\Lambda}(X)$ and we can use the same definition as in \cite{ABKM} for
$$
\Big\Vert K^{\Z^d} (X) \Big\Vert_k = \Big\Vert K^{\Lambda} (X) \Big\Vert_k
= \sup_{\varphi \in \mathcal{V}(X^*)}w_k^{-X}(\varphi) \vert K(X,\varphi) \vert_{k,X,T_{\varphi}}.
$$

\subsubsection{Existence of the global and finite volume flow}

First we cite the statement concerning the existence of the global flow.

\begin{prop}[Proposition 3.18 and 3.19 in \cite{Hil19_1}]
Fix $\zeta,\eta \in (0,1)$. There is $L_0$ such that for all integers $L \geq L_0$ there is $A_0,h_0$ and $\kappa$ with the following property.
Given $\epsilon > 0$ there exist $\epsilon_1 > 0$ and $\epsilon_2 > 0$ such that for each $(\mathcal{K},\mathcal{H},q) \in B_{\epsilon_1}(0)\times B_{\epsilon_2} (0)\times B_{\kappa}(0) \subset \mathbf{E} \times M_0(\mathcal{B}_0) \times \R^{(d \times m) \times (d \times m)}_{\text{sym}}$ there exists a unique global flow $\left(H_k,K_k^{\Z^d}\right)_{k \in \N}$ such that
\begin{align*}
\Vert H_k \Vert_{k,0} ,\, \left\Vert K^{\Z^d}_k\right\Vert_k^{(A)} \leq \epsilon \eta^k  \quad \text{for all } k \in \N_0,
\end{align*}
with initial condition given by
$$
K^{\Z^d}_0(X,\varphi) = e^{-\mathcal{H}(X,\varphi)} \prod_{x \in X} \mathcal{K}(\nabla \varphi(x))
$$
and
$$
\left(H_{k+1},K_{k+1}^{\Z^d}\right)
= \mathbf{T}_k^{\Z^d} \left(H_k,K_k^{\Z^d},q\right).
$$
Moreover, the flow is smooth in $(\mathcal{K},\mathcal{H},q)$ with bounds on the derivatives which are independent of $N$ and there is $0 < \delta \leq \epsilon_1$ and a smooth map
$$
\hat{\mathcal{H}}: B_{\delta}(0) \subset \mathbf{E} \rightarrow B_{\epsilon_2}(0) \subset M_0(\mathcal{B}_0)
$$
such that
$$
H_0(\hat{\mathcal{H}}(\mathcal{K}), \mathcal{K}) = \hat{\mathcal{H}}(\mathcal{K})
$$
and $q(\hat{\mathcal{H}(\mathcal{K})}) \subset B_{\kappa}(0)$ for all $\mathcal{K} \in B_{\delta}(0)$. Moreover, the derivatives of $\hat{\mathcal{H}}$ can be bounded uniformly in $N$.
\end{prop}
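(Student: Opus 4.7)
The plan is to construct the global flow and the stable manifold $\hat{\mathcal{H}}$ by combining a Banach fixed-point argument for the sequence $(H_k, K_k^{\Z^d})_{k \in \N}$ with a graph-transform/Hadamard-Perron argument for the initial Hamiltonian, both driven by the single-step estimates for $\mathbf{A}_k^q$, $\mathbf{B}_k^q$ and $\mathbf{S}_k^q$ summarized in Subsection~\ref{subsubsec:The_renormalisation_map}. Concretely, I would work on the Banach space $\mathcal{X}_\epsilon$ of sequences equipped with the norm
$$
\Vert (H_\cdot, K_\cdot) \Vert_\eta = \sup_{k \geq 0} \eta^{-k}\bigl( \Vert H_k\Vert_{k,0} + \Vert K_k^{\Z^d}\Vert_k^{(A)} \bigr),
$$
restricted to the ball of radius $\epsilon$, and rewrite the iteration $\mathbf{T}_k=(\mathbf{A}_k^q H + \mathbf{B}_k^q K,\,\mathbf{S}_k^q(H_+,K))$ as a fixed-point problem on $\mathcal{X}_\epsilon$. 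The key single-step inputs are a smoothing estimate of the schematic form
$$
\Vert \mathbf{S}_k(H_+, K) \Vert_{k+1}^{(A)} \leq \eta \Vert K\Vert_k^{(A)} + C \bigl( \Vert H\Vert_{k,0} + \Vert K\Vert_k^{(A)}\bigr)^{\!2}
$$
together with the linear bounds on $\mathbf{A}_k^q$ and $\mathbf{B}_k^q$; combined for $\epsilon_1,\epsilon_2$ small these yield a genuine contraction, which gives both existence and uniqueness of the flow with the prescribed $\epsilon \eta^k$ decay.

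For the smooth map $\hat{\mathcal{H}}$ I would view the RG iteration as a discrete dynamical system on $M_0(\mathcal{B})\times M(\mathcal{P}^c)$ with expanding direction $H$ (the monomials in $M_0$ carry scaling factors $L^{d-\text{scaling dim}}\geq 1$ under $\mathbf{A}_k$, by construction of the relevant projection $\Pi_2$) and contracting direction $K$ (with factor $\eta$). The standard graph-transform argument then produces a unique smooth submanifold of initial data along which the orbit stays in $\mathcal{X}_\epsilon$: one projects the orbit onto the unstable $H$-direction and inverts the expanding linear part to obtain a contractive equation for $\hat{\mathcal{H}}(\mathcal{K})$ whose unique fixed point is precisely the solution of $H_0(\hat{\mathcal{H}}(\mathcal{K}),\mathcal{K})=\hat{\mathcal{H}}(\mathcal{K})$. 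Smoothness of the flow and of $\hat{\mathcal{H}}$ with $N$-independent derivative bounds then follows from the smooth implicit function theorem, using that each $\mathbf{T}_k$ is smooth in $(\mathcal{K},H,K,q)$ with norms controlled scale-by-scale by the finite-range decomposition bounds of Proposition~\ref{Prop:FRD_Buchholz}.

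The passage from finite $\Lambda_N$ to $\Z^d$ is immediate from Lemma~\ref{Prop:field_locality}(\ref{Prop:Zd-property}): since $\mathbf{S}^\Lambda$ depends on $K$ only through its restriction to polymers in $U^{*}$, and since $\mathcal{C}^q_{k+1}$ has range $L^{k+1}/2$, the value $K^{\Z^d}(X)$ defined by $K^{\Lambda}(X)$ for any sufficiently large $\Lambda$ with $\diam(X)\leq\tfrac12\diam(\Lambda)$ is independent of the choice of $\Lambda$. The norm $\Vert K^{\Z^d}(X)\Vert_k$ therefore coincides with $\Vert K^{\Lambda}(X)\Vert_k$ and inherits the same bounds uniformly in $N$.

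The main obstacle I expect is not any individual step but the verification that the relevant projection $\Pi_2$ and the extraction maps $\mathbf{A}_k^q$ and $\mathbf{B}_k^q$ together remove \emph{all} expanding components from $K_+$, so that the residual irrelevant part is genuinely contracted by $\eta$ in $\Vert\cdot\Vert_{k+1}^{(A)}$. This hinges on three delicate ingredients that must work in concert: the decay of the weights $W_k^X$ under convolution with $\mu_{k+1}$, the finite-range property of $\mathcal{C}^q_{k+1}$ (which is what makes the splitting in \eqref{eq:defn_K+} factorise across disjoint polymers up to corrections absorbed by $e^{\widetilde H}$), and the reblocking gain $A^{-|X|_k+|\bar X|_{k+1}}$ from small to large sets. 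All three are established in the single-step analysis of \cite{ABKM} and \cite{Hil19_1}; the work of this proposition is to assemble them into a single contraction on $\mathcal{X}_\epsilon$ that simultaneously controls $H$ and $K$ and transports smoothly in the parameters.
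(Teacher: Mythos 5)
The proposition you are proving is not proved in this paper at all: it is quoted from \cite{Hil19_1} (Propositions 3.18 and 3.19), whose argument is exactly the stable--manifold strategy you sketch, so your overall plan is the right one. There is, however, a genuine gap in your first step. You claim that the forward iteration $(H,K)\mapsto(\mathbf{A}H+\mathbf{B}K,\,\mathbf{S}(H,K))$, rewritten coordinate-wise as a fixed-point problem on the weighted sequence space $\mathcal{X}_\epsilon$, becomes a ``genuine contraction'' once the smoothing estimate for $\mathbf{S}$ is combined with the linear bounds on $\mathbf{A}$ and $\mathbf{B}$. This cannot work as stated: the initial condition prescribes only $K_0$, not $H_0$, and $\mathbf{A}$ is \emph{not} a contraction on $M_0(\mathcal{B}_k)$ in the $\Vert\cdot\Vert_{k,0}$ norm --- the constant direction expands like $L^d$, the linear directions expand as well, and the quadratic gradient directions are marginal (their control is tied to the tuning of $q$). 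Already for $K\equiv 0$ the forward map admits the whole family of orbits $H_{k+1}=\mathbf{A}H_k$ with arbitrary small $H_0$, none of which obeys the $\epsilon\eta^k$ decay unless $H_0=0$; so no forward contraction can produce the existence \emph{and uniqueness} of the decaying flow. That uniqueness is precisely the stable-manifold assertion, and its proof requires reformulating the fixed point with the $H$-component solved \emph{backwards}, schematically $H_k=-\sum_{l\geq k}\mathbf{A}^{k-l-1}\mathbf{B}K_l$ on the expanding directions, while on the marginal directions there is no gain from inverting $\mathbf{A}$ and convergence of the backward sum comes solely from the quadratic decay $\Vert K_l\Vert_l^{(A)}\lesssim\eta^{2l}$ supplied by the single-step estimate. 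You do invoke this inversion in your second paragraph, but only for constructing $\hat{\mathcal{H}}$, and your justification ``$L^{d-\text{scaling dim}}\geq 1$'' gives no contraction on the marginal block.

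A second, smaller conflation: the graph-transform step produces, for each parameter pair $(\mathcal{H},\mathcal{K})$ entering the initial datum $K_0=e^{-\mathcal{H}}\prod_x\mathcal{K}(\nabla\varphi(x))$, the tuned initial Hamiltonian $H_0(\mathcal{H},\mathcal{K})$ of the unique decaying flow; the map $\hat{\mathcal{H}}$ is then obtained from the \emph{additional} consistency equation $H_0(\hat{\mathcal{H}}(\mathcal{K}),\mathcal{K})=\hat{\mathcal{H}}(\mathcal{K})$, solved by a further fixed-point or implicit-function argument using that $D_{\mathcal{H}}H_0$ is small, and one must also verify $q(\hat{\mathcal{H}}(\mathcal{K}))\in B_\kappa(0)$ at that stage. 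Keeping these two fixed points separate, as in \cite{Hil19_1} and \cite{ABKM}, is what makes the smoothness and the $N$-independent derivative bounds go through. Your treatment of the passage to $\Z^d$ via the $(\Z^d)$-property and the restriction property is fine.
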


 Now we cite the existence of and estimates on the finite volume flow.
For fixed $\eta$ and $\rho_0$, let us introduce the space
\begin{align}
&\mathbb{D}_k(\rho_0,\eta,\Lambda)
\nonumber
\\ & \quad
= 
\left\lbrace (H,K) \in M_0(\mathcal{B}_k) \times M(\mathcal{P}_k(\Lambda)):
H \in B_{\rho_0 \eta^k}(0),
K \in B_{\rho_0 \eta^{2k}}(0)
\right\rbrace.
\label{eq:defn_D_bulk}
\end{align}
\begin{prop}[Proposition 3.21 in \cite{Hil19_1}]\label{Prop:Existence_BulkFlow}
Fix $\zeta,\eta \in (0,1)$. There is $L_0$ such that for all odd integers $L \geq L_0$ there is $A_0,h_0,\kappa$ with the following property. There is $\bar{\delta}$ and $\bar{\epsilon}$ such that for a fixed $\Lambda$ the finite-volume flow
$$
(H_k,K_k^{\Lambda}) \mapsto (H_{k+1},K_{k+1}^{\Lambda})
$$
exists for all $k \leq N(\Lambda)$, is smooth in $\mathcal{K}\in B_{\bar{\delta}}(0)$ with bounds which are uniform in $N(\Lambda)$ and satisfies $(H_k,K_k^{\Lambda}) \in \mathbb{D}_k(\bar{\epsilon},\eta,\Lambda)$.

Moreover,
$$
\Pi_2(H_0(\mathcal{K})) = q(\mathcal{K})
$$
and
$$
K_0(\varphi,X) = K_0(\mathcal{K},H_0)(\varphi,X) = e^{H_0(\varphi,X)}\prod_{x \in X} \mathcal{K}(\nabla\varphi(x)).
$$
\end{prop}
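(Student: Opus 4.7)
The strategy is to deduce the finite-volume statement directly from the global $\mathbb{Z}^d$ flow provided by the preceding proposition, using the $(\mathbb{Z}^d)$-property from Lemma \ref{Prop:field_locality}(4) to transfer the bounds from infinite to finite volume, and to handle the remaining large-scale polymers by a direct norm estimate on the step map $\mathbf{S}^{\Lambda}_k$.

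\textbf{Step 1 (initial data).} Fix $\bar{\delta} \leq \delta$, where $\delta$ is provided by the global flow proposition, and for $\mathcal{K} \in B_{\bar{\delta}}(0)$ set $H_0 := \hat{\mathcal{H}}(\mathcal{K})$. The identity $\Pi_2(H_0(\mathcal{K})) = q(\mathcal{K})$ is then built into the construction of $\hat{\mathcal{H}}$, and $K_0^{\Lambda}(X,\varphi) := e^{H_0(X,\varphi)} \prod_{x\in X} \mathcal{K}(\nabla \varphi(x))$ is manifestly smooth in $\mathcal{K}$ and satisfies the $(\mathbb{Z}^d)$-compatibility: for $X \in \mathcal{P}_0(\mathbb{Z}^d)$ with $\diam X \leq \tfrac{1}{2}\diam \Lambda$ one has $K_0^{\Lambda}(X) = K_0^{\mathbb{Z}^d}(X)$ by construction (using translation invariance of $H_0$ and $\mathcal{K}$).

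\textbf{Step 2 (iteration and identification with the $\mathbb{Z}^d$ flow).} Define inductively $(H_{k+1}, K_{k+1}^{\Lambda}) := \mathbf{T}^{\Lambda}_k(H_k, K_k^{\Lambda}, q(\mathcal{K}))$, which is meaningful for all $k < N(\Lambda)$ since the map $\mathbf{T}^{\Lambda}_k$ is defined by local operations on blocks and integration against the explicit Gaussian $\mu_{k+1}$. The relevant part $H_k$ lives in $M_0(\mathcal{B}_k)$ and its recursion $\mathbf{A}^q H + \mathbf{B}^q K$ is block-local, so $H_k^{\Lambda}$ is canonically identified with $H_k^{\mathbb{Z}^d}$ after translating a representative block to the fundamental domain of $\Lambda$; in particular $\Vert H_k \Vert_{k,0} \leq \bar{\epsilon}\, \eta^k$ follows immediately from the global bound. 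For the irrelevant part, Lemma \ref{Prop:field_locality}(4) propagates the $(\mathbb{Z}^d)$-property through the step map, so by induction $K_k^{\Lambda}(X) = K_k^{\mathbb{Z}^d}(X)$ whenever $\diam X \leq \tfrac{1}{2}\diam \Lambda$; the required bound $\Vert K_k^{\Lambda}(X) \Vert_{k,X} A^{|X|_k} \leq \bar{\epsilon}\, \eta^{2k}$ is then inherited from the already-proven bound on $K_k^{\mathbb{Z}^d}$.

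\textbf{Step 3 (polymers that do not fit, smoothness).} For the residual polymers $X \in \mathcal{P}_k^c(\Lambda)$ with $\diam X > \tfrac{1}{2}\diam \Lambda$, the $(\mathbb{Z}^d)$-property is unavailable. Here one applies the linear estimate on the step map $\mathbf{S}^{\Lambda}_k$ from \cite{ABKM} (cf.\ the single-step contraction estimate used to prove the global-flow existence): for $L$ large and inputs inside $\mathbb{D}_k(\bar{\epsilon},\eta,\Lambda)$, one has $\Vert \mathbf{S}^{\Lambda}_k(H,K,q) \Vert_{k+1}^{(A)} \leq C \eta (\Vert H \Vert_{k,0} + \Vert K \Vert_k^{(A)})$ plus irrelevant quadratic terms, with constants independent of $N(\Lambda)$. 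Since at most finitely many scales near $k = N(\Lambda)$ are affected by the large polymers, and since the contraction factor $\eta$ is the same as for the $\mathbb{Z}^d$ flow, the inclusion $(H_k,K_k^{\Lambda}) \in \mathbb{D}_k(\bar{\epsilon},\eta,\Lambda)$ is preserved. Smoothness in $\mathcal{K}$ with $N(\Lambda)$-independent bounds on the derivatives follows from the smoothness of $\hat{\mathcal{H}}$ and of each $\mathbf{T}^{\Lambda}_k$, together with the same derivative bounds already established for the infinite-volume flow.

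\textbf{Main obstacle.} The crux of the argument is not the identification step on small polymers, which follows cleanly from the $(\mathbb{Z}^d)$-property, but rather the uniform-in-$N$ control in Step~3 of those polymers that wrap the torus at scales $k \approx N(\Lambda)$. Here one has to verify that the contraction factor $\eta$ survives the finite-volume step map even without the infinite-volume comparison, which relies on the careful choice of the constants $L_0,\, A_0,\, h_0,\, \kappa$ and $\bar{\epsilon}$ originally tuned in \cite{Hil19_1} for the global flow. Once these parameters are fixed as in the global proposition, the same inequalities carry over unchanged, which is why the same statement holds verbatim in finite volume.
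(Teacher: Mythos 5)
This proposition is quoted from \cite{Hil19_1} (Proposition 3.21) and the present paper gives no proof of it, so your proposal can only be measured against the argument it is modelled on; your overall architecture (identify the flow with the global $\Z^d$ flow where the $(\Z^d)$-property applies, control the rest by the single-step estimates) is indeed the right family of ideas, but two of your key claims do not hold as stated. First, in Step 3 you assert that polymers with $\diam X > \tfrac12 \diam\Lambda$ only occur ``at finitely many scales near $k=N(\Lambda)$''. That is false: at every scale $k<N(\Lambda)$ the set $\mathcal{P}_k^c(\Lambda)$ contains connected polymers that wrap the torus (any chain of roughly $L^{N-k}$ consecutive $k$-blocks), and the norm $\Vert K\Vert_k^{(A)}$ is a supremum over \emph{all} connected polymers. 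So the inheritance argument of Step 2 never controls the full norm at any scale, and the ``finitely many exceptional scales'' picture on which Step 3 rests collapses. The identification with the $\Z^d$ flow is genuinely needed only for the relevant part: since $\mathbf{B}$ acts on single blocks, $H_{k+1}=\mathbf{A}H_k+\mathbf{B}K_k^{\Lambda}$ agrees with the infinite-volume recursion, which is what makes $\Vert H_k\Vert_{k,0}\le\bar\epsilon\,\eta^k$ (the stable-manifold input via $\hat{\mathcal H}$) available in finite volume. The bound on $K_k^{\Lambda}$, by contrast, must be proved by induction with the finite-volume single-step estimate at \emph{every} scale and for \emph{all} polymers, exactly as is done for the observable flow in Propositions \ref{Prop:SingleStepRG_Observables} and \ref{Prop:Existence_ObservableFlow} of this paper.

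Second, the single-step estimate you invoke, $\Vert \mathbf{S}_k^{\Lambda}(H,K,q)\Vert_{k+1}^{(A)}\le C\eta\left(\Vert H\Vert_{k,0}+\Vert K\Vert_k^{(A)}\right)$ plus quadratic terms, is too weak to close the induction: with $\Vert H_k\Vert_{k,0}\sim\eta^k$ and the target $K_{k+1}\in B_{\bar\epsilon\,\eta^{2(k+1)}}$ from \eqref{eq:defn_D_bulk}, a linear-in-$H$ term of size $C\eta\,\eta^k$ destroys the $\eta^{2k}$ decay. The mechanism that makes the induction work is precisely that $D\mathbf{S}(0,0)(\dot H,\dot K)=\mathbf{C}\dot K$ has \emph{no} $H$-component (the relevant directions are removed exactly by $\mathbf{A}$ and $\mathbf{B}$), together with the contraction $\Vert\mathbf{C}\Vert\le\theta<\eta^2$ and the smoothness bounds; then a second-order Taylor expansion gives $\Vert K_{k+1}^{\Lambda}\Vert\le\theta\Vert K_k^{\Lambda}\Vert+O\!\left(\Vert H_k\Vert^2+\Vert H_k\Vert\,\Vert K_k^{\Lambda}\Vert+\Vert K_k^{\Lambda}\Vert^2\right)\le\bar\epsilon\,\eta^{2(k+1)}$ (compare Propositions \ref{Prop:Contractivity_Observables} and \ref{Prop:SingleStepRG_Observables}). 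With these two corrections --- drop the small/large polymer dichotomy, run the quadratic single-step induction for the whole $K$-norm at every scale, and use the $(\Z^d)$-property only to import the $H$-flow and the identities $\Pi_2 H_0=q(\mathcal K)$, $K_0=e^{H_0}\prod_x\mathcal K(\nabla\varphi(x))$ --- your outline matches the intended proof; as written, however, it does not establish $(H_k,K_k^{\Lambda})\in\mathbb{D}_k(\bar\epsilon,\eta,\Lambda)$.
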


\section{RG analysis for the observable flow}
\label{sec:RG-analysis_ObservableFlow}

This section is dedicated to the proof of Theorem \ref{Thm:RepresentationExtendedPartitionFunction}. The theorem contains a representation of the partition function with inserted observables $s \nabla_{m_a} \varphi(a)$ and $t \nabla_{m_b} \varphi(b)$. In order to work with such a singular external field we extend the analysis of Section \ref{sec:RG-analysis_BulkFlow}. This will truly be an extension in the sense that the bulk flow needs no modification. We will show how observables can be incorporated into the analysis to obtain the pointwise asymptotic formula in Theorem \ref{Thm:RepresentationExtendedPartitionFunction}.

We will follow the flow of these observables in detail and study the corresponding properties.
First we extend spaces and norms in Subsection \ref{sec:extension_spaces_norms}. In Subsection \ref{sec:Extension_RG_map} the RG map is defined. We have to provide a good definition for the flow such that we can extract the Gaussian covariance $\mathcal{C}^q$.
 This is achieved by using second order perturbation in the map $\mathbf{A}$ instead of a first order expansion as before. %The map $\mathbf{B}$ also will be adjusted to the observable flow. Further motivation and all details can be found in this section.

The proof of Theorem \ref{Thm:RepresentationExtendedPartitionFunction} consists of two steps. A first estimate on the covariance is proven in Subsection \ref{subsec:first_Cov_bound}, a refined one in Subsection \ref{subsec:refined_estimate}. The proof of Theorem \ref{Thm:RepresentationExtendedPartitionFunction} is then immediate from these estimates (see Subsection \ref{Subsec:Proof_Decay}).

~\\
Remember that we aim to obtain a representation of
$$
\mathcal{Z}_{N}(u,f_{ab}), \quad \text{where}
\quad
f_{ab} = s \nabla^*_{m_a} \1_a + t \nabla^*_{m_b}\1_b.
$$
Let $(H_k, K_k)$ be the bulk flow of the last section. We can rewrite $\mathcal{Z}_{N}(u,f_{ab})$ as follows:
\begin{align*}
\mathcal{Z}_{N}(u,f_{ab})
&= \int e^{(\varphi,f_{ab})} \sum_{X \subset \Lambda_N} \prod_{x \in X} \mathcal{K}(\nabla \varphi(x)) \mu_1(\de\varphi)
\\
&= 
\frac{Z_N^{\left(q(\mathcal{K})\right)}}{Z_N^{(0)}} e^{-L^{Nd} \lambda(\mathcal{K})}
\int e^{(\varphi,f_{ab})}
\left(
e^{H_0}\circ K_0
\right)(\Lambda_N,\varphi) \mu_{\mathcal{C}^q}(\de\varphi).
\end{align*}
We include $(\varphi,f_{ab})$ into the circ product and extend the maps $H_0$ and $K_0$ to
\begin{align*}
H_0^{\ext}(\varphi) &= H_0(\varphi) + s \nabla_{m_a}\varphi(a) \1_a + t \nabla_{m_b} \varphi(b) \1_b,
\\
K_0^{\ext}(\varphi) &= K_0(\varphi) e^{s \nabla_{m_a}\varphi(a) \1_a + t \nabla_{m_b} \varphi(b) \1_b}.
\end{align*}
Then
$$
\mathcal{Z}_{N}(u,f_{ab})
= \frac{Z_N^{\left(q(\mathcal{K})\right)}}{Z_N^{(0)}} e^{-L^{Nd} \lambda(\mathcal{K})}
\int e^{H_0^{\ext}} \circ K_0^{\ext}(\Lambda_N,\varphi) \mu_{\mathcal{C}^q}(\de\varphi).
$$

We want to follow the relevant observable flow explicitly in order to extract the Gaussian covariance $C^q(a,b)$. For this purpose we extend the space of functionals of the bulk flow to these observables. We introduce extended norms, where the observable part is weighted by a carefully chosen weight $l_{\obs,k}$, see Definition \ref{Def:ObservableNormWeight} and the motivation in Remark \ref{Rem:MotivationObservableNormWeight}.
In order to gain the factor $\nabla^* \nabla C^q(a,b)$ in every step we define the flow
$$
(H^{\ext}, K^{\ext}) \mapsto H_+^{\ext}= \mathbf{A}H^{\ext} + \mathbf{B}K^{\ext}
$$
such that second order perturbation is reflected in the observable part of the map~$\mathbf{A}$. Then the observable part of $H^{\ext}$ appears in $K_+^{\ext}$ only to third order (see Proposition \ref{Prop:2nd_order_perturbation_effect_on_S}) which leads to a refined single step estimate (Proposition \ref{Prop:SingleStepRG_Observables}). For the contractivity property of the extended map $(H^{\ext},K^{\ext})\mapsto K_+^{\ext}$ in Proposition \ref{Prop:Contractivity_Observables} the operator $\mathbf{B}$ also has to be adjusted.

~\\
Roughly speaking, the flow then satisfies estimates  which result in a leading term
$$
(1 + S^a)(1+S^b) \nabla^*_{m_b} \nabla_{m_a} C^q(a,b)
$$
in the covariance, see Proposition \ref{Prop:Cov_first}.

In order to show that $S^a, S^b$ do not contribute to the leading order but only at order $\frac{1}{|a-b|^{d + \nu}}$ we will have to perform an additional step: we consider the flow with just one observable in infinite volume and compare a smoothed version to the result on the scaling limit (Proposition \ref{Prop:Cov_better}). Finally Proposition \ref{Prop:Cov_better} together with Proposition~\ref{Prop:Cov_first} will result in the proof of Theorem \ref{Thm:RepresentationExtendedPartitionFunction}.

~\\
From this point on we use the following \textbf{change of notation}: quantities which belong to the bulk flow will get an superscript $\es$. Consequently, the bulk flow becomes $\left(H_k^{\es}, K_k^{\es}\right)$. The superscript "$\ext$" which was used in the motivation above will disappear in most cases, so $(H_k,K_k)$ will denote the extended flow.

\subsection{Extension of functionals, spaces and norms}
\label{sec:extension_spaces_norms}

%We first extend the space for functionals of the bulk flow to observables.

\subsubsection{Extended spaces}

As before, let $\mathcal{N}^{\es} = C^{r_0}(\chi_N,\R)$ be the space of real-valued functions of fields having at least $r_0$ continuous derivatives. We are interested in functions not only of $\varphi \in \chi_N$ but also of $s$ and $t$, but only in the dependence up to terms of the form $1,s,t,st$. We formalise this via the introduction of a quotient space, in which two functions of $\varphi,s,t$ become equivalent if their formal power series in the observable fields agree to order $1,s,t,st$, as follows.

Let $\tilde{\mathcal{N}}$ be the space of real-valued functions of $\varphi,s,t$ which are $C^{r_0}$ in $\varphi$ and $C^{\infty}$ in $s,t$. Consider the elements of $\tilde{\mathcal{N}}$ whose formal power series expansion to second-order in the external fields $s,t$ is zero. These elements form an ideal $\mathcal{I}$ in $\tilde{\mathcal{N}}$, and the quotient algebra $\mathcal{N} = \tilde{\mathcal{N}}/\mathcal{I}$ has a direct sum decomposition
$$
\mathcal{N} = \mathcal{N}^{\es} \oplus \mathcal{N}^{a} \oplus \mathcal{N}^{b} \oplus \mathcal{N}^{ab}.
$$
The elements of $\mathcal{N}^{a}, \mathcal{N}^{b}, \mathcal{N}^{ab}$ are given by elements of $\mathcal{N}^{\es}$ multiplied by $s$, by $t$ and by $st$ respectively. As functions of the observable field, elements of $\mathcal{N}$ are then identified with polynomials of degree at most $2$. For example, we identify $e^{s \nabla \varphi(a) + t \nabla \varphi(b)}$ and $1 + s \nabla \varphi(a) + t \nabla \varphi(b) + st \nabla \varphi(a) \nabla \varphi(b)$, as both are elements of the same equivalence class in the quotient space. An element $F \in \mathcal{N}$ can be written as
$$
F = F^{\es} + s F^a + t F^b + st F^{ab},
$$
where $F^{\alpha} \in \mathcal{N}^{\es}$ for each $\alpha \in \lbrace \es,a,b,ab \rbrace$. We define projections $\pi^{\alpha}: \mathcal{N} \rightarrow \mathcal{N}^{\alpha}$ by $\pi^{\es}F = F^{\es}$, $\pi^{a}F = sF^{a}$, $\pi^{b}F = t F^{b}$ and $\pi^{ab}F = st F^{ab}$.

Furthermore, let $\pi^* F = \pi^a F + \pi^b F + \pi^{ab} F$ be the projection to the observable part.

~\\
The class of functionals we are going to work with is
\begin{align*}
M^{\ext}(\mathcal{P}_k,\mathcal{V}_N)
= & \left\lbrace
 F: \mathcal{P}_k  \rightarrow \mathcal{N}
 \,\big\vert\, F^{\alpha}(X) \in M(\mathcal{V}_N) \text{ for all } X \in \mathcal{P}_k \text{ and } \alpha \in \lbrace \es,a,b,ab\rbrace,
   \right.
 \\ & \quad\quad\quad\quad\quad\quad\quad \left.
  \pi^{\es}F \in M(\mathcal{P}_k), \pi^* F \text{ shift invariant and local}
    \right\rbrace.
\end{align*}

Note that $\pi^*F$ is not required to be translation invariant.

As in the case of bulk functionals we have immediate generalisations to $M^{\ext}(\mathcal{P}_k^c)$, $M^{\ext}(\mathcal{S}_k)$ and $M^{\ext}(\mathcal{B}_k)$.

~\\
We define the \textit{coalescence scale} 
\begin{align}
j_{ab} = \Big\lfloor \log_L ( 2|a-b| ) \Big\rfloor.
\end{align}
Since by definition
$$
\frac{L^k}{2} \leq |a-b| \quad \text{for all } k \leq j_{ab},
$$
it holds that
\begin{align}
\nabla^*_j \nabla_i C_k(a,b) = 0
\quad \text{for all } k \leq j_{ab},
\quad i,j \in \lbrace 1, \ldots, d \rbrace,
\end{align}
due to the finite-range property of the covariance decomposition.

~\\
The extended space of relevant Hamiltonians $M_0^{\ext}(\mathcal{B}_k) \subset M^{\ext}(\mathcal{B}_k)$ consists of all functionals of the form
$$
H(B,\varphi) = H^{\es}(B,\varphi) + s H^a(B,\varphi) + t H^b(B,\varphi) + st  H^{ab}(B,\varphi)
$$
where
\begin{align*}
H^{\alpha} (B,\varphi) &= \1_{\alpha \in B} \left( \lambda^{\alpha} + \sum_{i=1}^d n^{\alpha}_i \nabla_i \varphi(\alpha) \right),
\quad
\lambda^{\alpha} \in \R, n^{\alpha} \in \R^d,
\quad \alpha \in \lbrace a,b \rbrace,
\\
H^{ab} (B,\varphi) &= \1_{a,b \in B} \, q^{ab},
\quad q^{ab} \in \R.
\end{align*}

We also define a subspace where no constants appear in the observable part: Let
\begin{align*}
\mathcal{V}^{(0)}_k = \lbrace H \in M_0^{\ext}(\mathcal{B}_k): \lambda^a = \lambda^b = q^{ab} = 0 \rbrace,
\end{align*}
so $H \in \mathcal{V}_k^{(0)}$ is of the form
$$
H(\varphi) = H^{\es}(\varphi) + s n^a \nabla \varphi(a)\1_a + t n^b \nabla \varphi(b) \1_b,
\quad n^a, n^b \in \R^d.
$$
Here the scalar product on $\R^d$ is hidden in the notation,
$$
n^{\alpha} \nabla\varphi(\alpha)
= \sum_{i=1}^d n_i^{\alpha} \nabla_i \varphi(\alpha).
$$

\subsubsection{Extended norms}

%We now come to the extension of norms.

\begin{defn}\label{Def:ObservableNormWeight}
Let $h_k = 2^k h$ and $l_k = L^{-\frac{d}{2}k}h_k$. For a fixed $\eta \in (0,1)$ set $g_k = \eta^k$. Fix $\rho_0 > 0$. We define the observable weight $l_{\obs,k}$ by
\begin{align*}
l_{\obs,k} &= \rho_0 g_k 2^{-k} 4^{(k-j_{ab})_+} L^{\frac{d}{2}(k \wedge j_{ab})}.
\end{align*}
\end{defn}
The parameter $\rho_0$ will be determined a-posteriori in Proposition \ref{Prop:Existence_ObservableFlow}.

~\\
In the following we provide a brief motivation for the choice of $l_{\obs,k}$. A more detailed discussion can be found in Remark \ref{Rem:MotivationObservableNormWeight}.
\begin{itemize}
\item 
The sequence $h_k$ is a scaling factor in the norm for the fields, see Subsection~\ref{subsubsec:polymers_functionals_norms}. It has the effect that in norm  $s \nabla \varphi(a) \approx l_{\obs,k} l_k$, where the growing factor $2^k$  appears on the right hand side in $l_k$. This term is eliminated by $2^{-k}$ in~$l_{\obs,k}$.
\item
$4^{(k-j_{ab})_+}$ makes a sum converging at the end of the analysis;
\item
$L^{\frac{d}{2}(k \wedge j_{ab})}$ gives the desired decay since $\left( L^{\frac{d}{2} j_{ab}} \right)^2 = \left( L^{j_{ab}} \right)^d \approx \frac{1}{|a-b|^d}$;
\item
$g_k$ makes sure that the observables live in decreasing balls.
\end{itemize}

Note that
\begin{align} \label{scale_transformation}
\frac{l_{\obs,k+1}}{l_{\obs,k}} =
\begin{cases}
\frac{\eta}{2} L^{d/2}  & \text{ if } k \leq j_{ab}-1, \\
2 \eta & \text{ else }.
\end{cases}
\end{align}

We set, for $F \in M^{\ext}(\mathcal{P}_k)$,
\begin{align*}
\big\vert F(X,\varphi) \big\vert^{\ext}_{k,X,T_{\varphi}} = \sum_{\alpha \in \lbrace \es, a,b,ab \rbrace} \big\vert F^{\alpha}(X,\varphi) \big\vert_{k,X,T_{\varphi}} l^{|\alpha|}_{\obs,k}
\end{align*}
where, with a slight abuse of notation, $|\es |=0 ,|a| = |b| = 1$ and $|ab| = 2$.
The norms $\Vert \cdot \Vert_{k,X}^{\ext}$, $\Vert \cdot \Vert_{k:k+1,X}^{\ext}$,  $\vertiii{\cdot}_{k,X}^{\ext}$ and $\Vert \cdot \Vert_{k}^{(A),\ext}$ on functionals $F \in M^{\ext}(\mathcal{P}_k^c)$ are defined as before in Section \ref{subsubsec:polymers_functionals_norms}.

%~\\
The norm on $M_0(\mathcal{B}_k)$ is extended to  $M_0^{\ext}(\mathcal{B}_k)$ as follows. Recall that we defined elements of $M_0^{\ext}(\mathcal{B}_k)$ to be functionals of the form
\begin{align*}
&H(\varphi) 
%\\&
= H^{\es}(\varphi)
+ s \1_a
 \Big( \lambda^a + \sum_i n_i^a \nabla_i \varphi(a) \Big)
+ t \1_b
 \Big( \lambda^b + \sum_i n_i^b \nabla_i \varphi(b) \Big)
+ st \1_{a,b} \, q^{ab}.
\end{align*}
Then
\begin{align*}
\left\Vert H \right\Vert_{k,0}^{\ext} 
= \left\Vert H^{\es} \right\Vert_{k,0}
+ l_{\obs,k} 
\left( 
\left|\lambda^a \right| 
+ l_k \sum_{i=1}^d \left|n^a_i \right| + \left|\lambda^b \right| + l_k \sum_{i=1}^d \left|n^b_i \right| 
\right) 
+ l_{\obs,k}^2 \left|q^{ab} \right|.
\end{align*}
We will use the following notation:
\begin{align*}
\left\Vert H^{\alpha} \right\Vert_{k,0}^{\alpha} 
&= l_{\obs,k} \left( \left|\lambda^{\alpha}\right| + l_k \sum_{i=1}^d \left|n^{\alpha}_i \right| \right)
\quad \text{for } \alpha \in\lbrace a,b \rbrace,
\\
\left\Vert H^{ab} \right\Vert_{k,0}^{ab} 
&= l_{\obs,k}^2 \left| q^{ab} \right|.
\end{align*}

\subsection{Extension of the renormalisation map}
\label{sec:Extension_RG_map}

\subsubsection{Definition of the extended map}

The goal of this section is the definition and preliminary study of the extended renormalisation map
\begin{align*}
\mathbf{T}_k^{\ext}: \R^3 \times \mathcal{V}^{(0)}_k \times M^{\ext}(\mathcal{P}_k^c)
&\quad\rightarrow\quad
\R^3 \times \mathcal{V}^{(0)}_{k+1} \times M^{\ext}(\mathcal{P}_{k+1}^c),
\\
(\lambda^a,\lambda^b,q^{ab}, H, K )
&\quad\mapsto\quad (\lambda^a_+,\lambda^b_+,q^{ab}_+, H_+, K_+).
\end{align*}

Initially, we extend the operator $\mathbf{B}_k$:
$$
\mathbf{B}_k: M^{\ext}\left(\mathcal{P}_k^c\right) \rightarrow M_0^{\ext}\left(\mathcal{B}_{k+1}\right),
\quad
\mathbf{B}_k K(B_+) = \sum_{B\in \mathcal{B}_k(B_+)} \Pi_k \mathcal{R}_{k+1} K(B)
$$
where $\Pi_k$ is the scale-dependent localisation operator
\begin{align*}
\Pi_k:  M^{\ext}(\mathcal{B}_k) \rightarrow M_0^{\ext}(\mathcal{B}_k),
\quad
\Pi_k F = \Pi_2 F^{\es} + \1_a \Pi^a_k F^a + \1_b \Pi^b_k F^b + \1_{ab}\Pi_0 F^{ab},
\end{align*}
$\Pi^{\alpha}_k$ defined explicitly below in Section \ref{subsubsec:Projection}. Roughly speaking, for $\alpha \in \lbrace a,b \rbrace$,
\begin{align*}
\Pi_k^{\alpha} = \begin{cases}
\Pi_1 & \quad \text{if }  k < j_{ab},
\\
\Pi_0 & \quad \text{if }  k \geq j_{ab}.
\end{cases}
\end{align*}
Similar to the definition of $\Pi_2$ in the bulk flow case (see Section \ref{subsubsec:The_renormalisation_map}),
$$
\Pi_0F(\varphi)=F(0),
\quad \text{and} \quad
\Pi_1^{\alpha} F(\varphi) = F(0) + l^{\alpha}(\varphi)
$$ where $l^{\alpha}(\varphi)$ is the unique map of the form $l^{\alpha}(\varphi) = \sum_j n^{\alpha}_j \nabla_j\varphi(\alpha)$ which coincides with $DF(0)(\varphi)$ for all functions $\varphi$ which are on $(B_{\alpha}^*)^*$ of the form
$$
\varphi(x) = \sum_{i} m_i (x_i-\alpha_i), 
\quad m \in \R^d.
$$

This implies that in $(\mathbf{B}_k K)^{ab}$ only the zeroth order polynomial remains after projection whereas in the $a$- and $b$-part of $\mathbf{B}_kK$ we follow the linear flow up to the scale~$j_{ab}$ but not further.

Note that $\mathbf{B}_k$ is a linear operator, so $\left( \mathbf{B}_k K \right)^{\alpha} = \mathbf{B}_k \left( K^{\alpha} \right)$.

Let us introduce the following notation: For $\alpha \in \lbrace a,b \rbrace$, we denote the constant and linear coefficients of $\mathbf{B}_k K^{\alpha}$ by
$$
 \mathbf{B}_k K^{\alpha}
=
\left( \mathbf{B}_k K^{\alpha}\right)^0
+
\sum_{i=1}^d
\left( \mathbf{B}_k K^{\alpha} \right)^1_i
\nabla_i \varphi(\alpha).
$$

~\\
Now we can give a definition of the map
$$
\mathbf{T}_k^{\ext}:
(\lambda^a,\lambda^b,q^{ab}, H,K)
\mapsto
(\lambda^a_+,\lambda^b_+,q^{ab}_+, H_+, K_+).
$$
Namely,
\begin{align*}
&\lambda^{\alpha}_+
= \lambda^{\alpha} + \left( \mathbf{B}_k K^{\alpha}\right)^0, \quad \alpha \in \lbrace a,b \rbrace,
\\
&q_+^{ab}
= q^{ab} + \mathbf{B}_k K^{ab} + \int H^a H^b \de \mu_{k+1},
\\
&\left(H_+\right)^{\es}
= \left( H^{\es} \right)_+,
\quad
H_+^{\alpha}
= H^{\alpha} + \left( \mathbf{B}_k K^{\alpha}\right)^1 \nabla\varphi(\alpha), \quad \alpha \in \lbrace a,b \rbrace,
\end{align*}
and the irrelevant $K_+$ is defined by
\begin{align*}
K_+ = e^{- s\left(\mathbf{B}_k K^a\right)^0 - t\left(\mathbf{B}_k K^b\right)^0 - st\left(\int H^a H^b \de \mu_{k+1} + \mathbf{B}_k K^{ab}\right)} \mathbf{S}_k(H,K),
\end{align*}
where $\mathbf{S}_k$ is the map from the bulk flow, defined in \eqref{eq:defn_K+}. Let us denote
\begin{align*}
\mathbf{S}^{\ext}_k (H,K) = e^{-s\left(\mathbf{B}_k K^a\right)^0 - t\left(\mathbf{B}_k K^b)\right)^0 - st\left(\int H^a H^b \de \mu_{k+1} + \mathbf{B}_k K^{ab}\right)} \mathbf{S}_k(H,K).
\end{align*}
Moreover, let us combine the definitions above into the map $\mathbf{A}_k$,
\begin{align*}
&\mathbf{A}_k: \mathcal{V}_k^{(0)} \rightarrow M^{\ext}_0(\mathcal{B}_{k+1}),
\quad
\mathbf{A}_k H = \mathbf{A}_k H^{\es} + \mathbf{A}_k H^{\obs},
\\
&\mathbf{A}_k H^{\es}(B_+) = \sum_{B \in \mathcal{B}_k(B_+)} \Pi_2\mathcal{R}_{k+1} H^{\es}(B),
\\
&\mathbf{A}_k H^{\obs} = s H^a + t H^b + st \int H^a H^b \de\mu_{k+1}.
\end{align*}

\begin{remark}
We are no longer interested in the dependence of the maps on the parameter $q$ since we will fix the bulk flow obtained in the last section - with the caveat that the choice of $\kappa$ in $q \in B_{\kappa}(0)$ depends on the choice of $L$ which will be chosen larger than in \cite{ABKM}.
\end{remark}

In the next lemma we show that the map $\mathbf{T}_k$ is well-defined, and we state first properties.
A motivation for the definition of $\mathbf{T}_k$ follows afterwards in Remark \ref{Remark:MotivationT}.

~\\
Let $K \in M^{\ext}(\mathcal{P}_k)$ satisfy \textit{field locality} if for $\alpha \in \lbrace a,b,ab \rbrace$ and for any $X \in \mathcal{P}_k$, $K^{\alpha}(X)=0$ unless $\alpha \in X$. Here we use the notation $ab \in X$ which means $a \in X$ and $b \in X$.

\begin{lemma}\label{Lemma:FirstConclusions_Observables}
Fix $(\lambda^a, \lambda^b, q^{ab},H,K) \in \R^3 \times \mathcal{V}^{(0)}_k \times M^{\ext}(\mathcal{P}_k^c)$. Then the map $\mathbf{T}_k^{\ext}$ defined above satisfies the following properties.
\begin{enumerate}
	\item
	$K_+ \in M^{\ext}(\mathcal{P}_{k+1}^c)$, and the map $\mathbf{S}_k^{\ext}$ satisfies the restriction property and preserves the $(\Z^d)$-property as well as field locality.
	% Moreover, for any $\alpha \in \lbrace a,b,ab \rbrace$, if $K^{\alpha}(X)$, $X \in \mathcal{P}_k$, is nonzero only if $\alpha \in X$, then also $K_+^{\alpha}(U)$, $U \in \mathcal{P}_{k+1}$, is only nonzero if $\alpha \in U$.
	\item
	If $K$ satisfies field locality, then $H_+ \in \mathcal{V}^{(0)}_{k+1}$, i.e., there are $n_+^a, n_+^b \in \R^d$ such that
	$$
	H_+(\varphi) = H_+^{\es}(\varphi) + s n^a_+ \nabla \varphi(a)\1_a + t n_+^b \nabla\varphi(b) \1_b.
	$$
	\item
	Let us denote $\zeta = s \lambda^a + t \lambda^b + st q^{ab}$ and $\zeta_+ s \lambda_+^a + t \lambda_+^b + st q_+^{ab}$. Then
	\begin{align}
	e^{\zeta} \mathcal{R}_{k+1} \left( e^H \circ K \right)
	= e^{\zeta_+} \left( e^{H_+} \circ K_+ \right).
	\label{RG_equation_obs}
	\end{align}
	\item
	%If $K^{\alpha} (X) = 0$ unless $\alpha \in X$, $\alpha = a,b$, and $K^{ab}(X) = 0$ unless $a, b \in X$, then $K_+$ satisfies the same field locality property.
	If $K$ satisfies field locality, then $H_+^a$ is independent of $H^b$, $K^b$ and $K^{ab}$, and the same holds for $a,b$ interchanged.
	\item
	The observable flow leaves the bulk flow unchanged, i.e.,
	$$
	\left(H_+\right)^{\es}
	= \left( H^{\es} \right)_+,
	\quad
	\left( K_+ \right)^{\es}
	= \mathbf{S}(H^{\es},K^{\es}).
	$$
	\end{enumerate}
\end{lemma}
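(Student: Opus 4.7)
My plan is to dispose of parts (1), (2), (4) and (5) first, since each follows from direct inspection of the definitions combined with the corresponding bulk-flow results (Lemma \ref{Prop:field_locality}), and then concentrate on the RG identity (3). For (5), the prefactor $e^{-s(\mathbf{B}_kK^a)^0 - t(\mathbf{B}_kK^b)^0 - st(\cdots)}$ lies in $\pi^*\mathcal{N}$, so applying $\pi^{\es}$ to the definitions of $H_+$ and $K_+$ recovers the bulk updates $(H^{\es})_+$ and $\mathbf{S}_k(H^{\es},K^{\es})$. Part (1) then reduces to noting that the $\varphi$-independent scalar prefactor preserves translation invariance of $\pi^{\es}K_+$, locality and shift invariance of $\pi^*K_+$, as well as the restriction and $(\Z^d)$-properties of $\mathbf{S}_k$ from Lemma \ref{Prop:field_locality}; field locality is transported by the polymer expansion underlying $\mathbf{S}_k$, since any monomial carrying $s$ (resp.\ $t$) in $\pi^aK_+$ (resp.\ $\pi^bK_+$) requires $a$ (resp.\ $b$) in its supporting polymer. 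Part (4) is immediate from field locality and linearity of $\mathbf{B}_k$: $\mathbf{B}_kK^a$ depends only on $K^a(Y)$ with $a\in Y$, and the polymer-expansion definition of $\mathbf{S}_k$ restricted to $\pi^a$ collects only $K^a$-contributions. For part (2), $\Pi_k^a$ is by construction the composition of the projection onto the constant at $a$ and the linear part $n^a\nabla\varphi(a)\1_a$; subtracting the constant piece via $\lambda_+^a$ leaves $(\mathbf{B}_kK^a)^1\nabla\varphi(a)\1_a$, which added to $H^a$ keeps $H_+^a$ of the form $n_+^a\nabla\varphi(a)\1_a$. Since no $st$-component of $H$ is updated, $\pi^{ab}H_+=0$, placing $H_+\in\mathcal{V}^{(0)}_{k+1}$.

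For part (3) I would check \eqref{RG_equation_obs} by expanding both sides modulo the ideal $\mathcal{I}$. Writing $H = H^{\es} + sH^a + tH^b$, in the quotient algebra $e^H = e^{H^{\es}}(1 + sH^a + tH^b + stH^aH^b)$, and $K = K^{\es} + sK^a + tK^b + stK^{ab}$. After projecting the circ product onto $\pi^{\es},\pi^a,\pi^b,\pi^{ab}$ and integrating against $\mu_{k+1}$, the $\es$-block is the standard bulk identity of Section \ref{sec:RG-analysis_BulkFlow}. The $a$-block uses linearity of $H^a$, so $\mathcal{R}_{k+1}H^a = H^a$, and the splitting of the combined contribution $H^a + \mathcal{R}_{k+1}K^a$ by $\Pi_k^a$ into a scalar $(\mathbf{B}_kK^a)^0 = \lambda_+^a - \lambda^a$ and a linear piece $(\mathbf{B}_kK^a)^1\nabla\varphi(a)$ that updates $H_+^a$; the compensating factor $e^{-s(\mathbf{B}_kK^a)^0}$ transfers the scalar onto $e^{\zeta_+}$. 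The $b$-block is identical. In the $st$-block, the new cross term yields $\mathcal{R}_{k+1}(H^aH^b) = H^aH^b + \int H^aH^b \, \de\mu_{k+1}$ because $H^a,H^b$ are linear in $\varphi$; the constant $\int H^aH^b \, \de\mu_{k+1}$ together with $\mathbf{B}_kK^{ab}$ assembles $q_+^{ab} - q^{ab}$, and the compensation $e^{-st(\cdots)}$ shifts these onto $e^{\zeta_+}$. What remains on each projection is by construction exactly $\mathbf{S}_k^{\ext}(H,K)$, completing the identity.

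The principal obstacle is the bookkeeping in the $st$-block, where multiple sources contribute after Gaussian integration: the bulk term with $K^{ab}$, the mixed pairs $sH^aK^b$ and $tH^bK^a$, the quadratic-in-$K$ product $sK^a\cdot tK^b$ (which is absorbed nonlinearly into $\mathbf{S}_k$ via its circ-product structure), the Gaussian covariance $\int H^aH^b\, \de\mu_{k+1}$, and the exponential expansion term $stH^aH^b$. One has to verify that their sum reorganises precisely into the four pieces: the relevant scalar $q_+^{ab}$, the linear $(\mathbf{B}_kK^{ab})^1$ contribution (which here is absent because $\Pi_0$ is used at $ab$), the cancellation against the compensating prefactor, and the irrelevant remainder $\pi^{ab}K_+$. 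This verification is routine once the polymer-expansion of $\mathbf{S}_k$ in \eqref{eq:defn_K+} and the characterisation of $\Pi_k^a$ via linear test functions on $(B_\alpha^*)^*$ are in hand; the role of using second-order perturbation in $\mathbf{A}_k$ rather than first-order becomes visible exactly at this step, since it is precisely $\int H^aH^b\, \de\mu_{k+1}$ that will later produce the Gaussian covariance $\nabla_{m_b}^*\nabla_{m_a}C^q(a,b)$ in Theorem \ref{Thm:RepresentationExtendedPartitionFunction}.
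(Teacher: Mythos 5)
Parts (1), (2), (4) and (5) of your proposal coincide with the paper's argument: inspection of the definitions, linearity of $\mathbf{B}_k$ together with field locality (so that $\mathbf{B}_kK^{\alpha}=\mathbf{B}_kK^{\alpha}\1_{\alpha}$ and $n_+^{\alpha}=n^{\alpha}+(\mathbf{B}_kK^{\alpha})^1$), and the observation that the observable prefactor is killed by $\pi^{\es}$, so the bulk flow is untouched.

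For part (3) you take a genuinely different route, and this is where the proposal is incomplete. The paper does not expand in the sectors $1,s,t,st$ at all: it uses that the bulk construction of $\mathbf{S}_k$ is purely algebraic in the per-block Hamiltonian $\widetilde H=\mathbf{A}_kH+\mathbf{B}_kK$, so the identity $e^{(\mathbf{A}_kH+\mathbf{B}_kK)}\circ\mathbf{S}_k(H,K)=\mathcal{R}_{k+1}(e^{H}\circ K)$ holds verbatim for the extended data, and then \eqref{RG_equation_obs} follows in two lines by moving the constant observable coefficients $(\mathbf{B}_kK^a)^0$, $(\mathbf{B}_kK^b)^0$, $\int H^aH^b\,\de\mu_{k+1}+\mathbf{B}_kK^{ab}$ out of the exponent into $\zeta_+$ and compensating them through the explicit prefactor in the definition $K_+=\mathbf{S}_k^{\ext}(H,K)$. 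Your sector-by-sector verification could in principle reach the same conclusion, but the step you label ``routine'' --- that the $a$-, $b$- and especially the $st$-contributions produced by the polymer resummation in \eqref{eq:defn_K+} ``reorganise precisely'' into $q_+^{ab}$, $\zeta_+$ and $\pi^{ab}K_+$ --- is exactly the content of the lemma and is not supplied; in particular the compensating factor in $K_+$ is a single field-independent constant multiplying $\mathbf{S}_k(H,K)$ for every polymer, whereas the constants generated in your expansion are attached to the block containing $a$ (resp.\ $b$) and appear only when that block lies outside the polymer $X$ in the circ product, so matching the two is not a term-by-term cancellation but requires the resummation the paper's master identity encapsulates. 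Two smaller imprecisions: $\Pi_k^{\alpha}$ is applied only to $\mathcal{R}_{k+1}K^{\alpha}$ inside $\mathbf{B}_k$, not to the ``combined contribution $H^a+\mathcal{R}_{k+1}K^a$'' --- for $k\geq j_{ab}$ one has $\Pi_k^{\alpha}=\Pi_0$, which would annihilate the linear part of $H^a$, while $\mathbf{A}_k$ deliberately passes $H^a$ through unchanged at every scale; and the Gaussian cross term $\int H^aH^b\,\de\mu_{k+1}$ enters through $\mathbf{A}_k$ (second-order perturbation), not through the projection. So: same approach as the paper on (1), (2), (4), (5); for (3) a different, more computational strategy whose decisive bookkeeping step is asserted rather than proved, where the paper's proof replaces it by the already-established algebraic identity for $\mathbf{S}_k$ plus the transfer of constants into $\zeta_+$.
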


\begin{proof}
\begin{enumerate}
	\item The definition immediately implies that $K_+ \in M^{\ext}(\mathcal{P}_{k+1}^c)$ and that $\mathbf{S}^{\ext}$ satisfies the restriction property and preserves the $(\Z^d)$-property, since the map $\mathbf{S}$ fulfils the desired properties. The preservation of field locality can be verified by inspection of the definition.
	\item Since $K$ satisfies field locality, it holds that $\mathbf{B}_k K^{\alpha} = \mathbf{B}_k K^{\alpha} \1_{\alpha}$. Thus we can set
	$$
	n_+^{\alpha} = n^{\alpha} + \left( \mathbf{B}_k K^{\alpha} \right)^1
	$$
	and so $H_+ \in \mathcal{V}_{k+1}^{(0)}$.
	\item
	The definition of the map $\mathbf{S}^{\ext}$ is specifically designed so that this integration property holds. Namely, use that in the bulk flow case the maps $\mathbf{A}_k, \mathbf{B}_k$ and $\mathbf{S}_k$ are made such that
$$
e^{(\mathbf{A}_kH + \mathbf{B}_k K)} \circ \mathbf{S}_k(H,K) = \mathcal{R}_{k+1}(e^{H} \circ K).
$$
Then
\begin{align*}
&e^{\zeta} \mathcal{R}_{k+1} (e^{H} \circ K)
\\ & \quad
= e^{\zeta} 
\left[
e^{(\mathbf{A}_k + \mathbf{B}_k)} \circ \mathbf{S}_k(H,K)
\right]
\\ & \quad
= e^{\zeta + s \left(\mathbf{B}_k K^a\right)^0 + t \left(\mathbf{B}_k K^b\right)^0 + st\left( \int H^a H^b \de \mu_{k+1} + \mathbf{B}_kK^{ab}\right)}
\\& \qquad\qquad \times
\left[ e^{H_+} \circ \left(
e^{ -s \left(\mathbf{B}_k K^a\right)^0 - t \left(\mathbf{B}_k K^b\right)^0 - st\left(\int H^a H^b \de \mu_{k+1} +\mathbf{B}_k K^{ab}\right)} \mathbf{S}_k(H,K) \right)
\right]
\\ & \quad
= e^{\zeta_+} \left[ e^{H_+} \circ \mathbf{S}_k^{\ext}(H,K)\right].
\end{align*}
	\item
	Since $H_+^a = H^a + \left( \mathbf{B} K^a \right)^1 \nabla \varphi(a)$ the statement follows straightforwardly by field locality.
%\red{	Since the flow of $n_k^{a}$ stops at $j_{ab}$, we may assume that $k < j_{ab}$. Let $B \in \mathcal{B}_k$ a block that contains $a$. Then $B$ cannot contain $b$, so by 2. $(\mathbf{B}_k K)^a$ does not depend on $K^b$, $K^{ab}$ or $H^b$.}
\item
Due to the definition of $\mathbf{A}_k$ and $\mathbf{B}_k$, for $H = H^{\es} + \pi^* H$ and $K = K^{\es} + \pi^* K$, it holds that $H^{\es}_+ = \mathbf{A}_kH^{\es} + \mathbf{B}_kK^{\es}$.
\end{enumerate}
\end{proof}

\begin{remark}\label{Remark:MotivationT}
We try to motivate the definition of the map $\mathbf{T}_k^{\ext}$.

In principle we want to define $H_+ = \mathbf{A}_k H + \mathbf{B}_k K$ as before in the bulk flow case through extended maps $\mathbf{A}_k$ and $\mathbf{B}_k$. We perform some changes in the definition of $\mathbf{A}_k$ and $\mathbf{B}_k$.

On the one hand, we want to extract not only to linear but also to quadratic order in $H$, so that we can observe the Gaussian covariance. Heuristically, up to second order in $H$,
$$
\mathcal{R}_+ \left(e^{H}\right)
\approx
1 + \mathcal{R}_+H + \frac{1}{2}\mathcal{R}_+ \left( H^2 \right)
$$
since
\begin{align*}
\mathcal{R}_+ \left(e^{H}\right)
\approx \mathcal{R}_+ \left( 1 + H + \frac{1}{2}H^2 \right)
= 1 + \mathcal{R}_+H + \frac{1}{2}\mathcal{R}_+ \left( H^2 \right)
\end{align*}
and
\begin{align*}
e^{\mathcal{R}_+ H + \frac{1}{2} \mathcal{R}_+(H^2) - \frac{1}{2}(\mathcal{R}_+H)^2}
&\approx 1 + \mathcal{R}_+H + \frac{1}{2}\mathcal{R}_+ \left(H^2 \right)- \frac{1}{2} \left( \mathcal{R_+}H \right)^2 + \frac{1}{2}\left( \mathcal{R}_+ H \right)^2
\\
&=  1 + \mathcal{R}_+H + \frac{1}{2}\mathcal{R}_+ \left( H^2 \right).
\end{align*}
Given $H \in \mathcal{V}_k^{(0)}$ with
\begin{align*}
&H^{\obs} = s H^a + t H^b,
\\
&
H^a(\varphi) = n^a \nabla \varphi(a) \1_a, \quad
H^b (\varphi) = n^b \nabla \varphi(b) \1_b, \quad
n^a,n^b \in \R^d,
\end{align*}
then, up to first order in $s$, $t$ and $st$,
\begin{align*}
&\mathcal{R}_{k+1} H^{\obs} + \frac{1}{2} \mathcal{R}_{k+1} \left( (H^{\obs})^2 \right) - \frac{1}{2} \left( \mathcal{R}_{k+1} (H^{\obs}) \right)^2
\\ & \quad
=
s H^a + t H^b + st \int H^a H^b \de \mu_{k+1}.
\end{align*}
Since
$$
\int H^a H^b \de \mu_{k+1} = n^a n^b \nabla^* \nabla C_{k+1}(a,b),
$$
we explicitly observe a part of the Gaussian covariance.
This motivates the definition of the map $\mathbf{A}_k$ given above. Note that the map is no longer linear, unlike in the bulk flow case.

On the other hand, the map $\mathbf{B}_k$ extracts as much from $\mathcal{R}_+K$ as is needed in order to have a contraction in the irrelevant part. In the case of observables it is enough to extract the linear order up to coalescence scale $j_{ab}$ and only the constant order above.

In a last step in the definition of the map $(H,K)\mapsto H_+$ we extract constant observable parts which arise by the application of the maps $\mathbf{A}_k$ and $\mathbf{B}_k$. We put them out of the circ product into $\zeta_+$.

The irrelevant part $K_+$ is defined such that \eqref{RG_equation_obs} holds.

\end{remark}

~\\
Let us denote by $B_a \in \mathcal{B}_k$ and $B_b \in \mathcal{B}_k$ the block at scale $k$ which contains $a$ and $b$, respectively. By definition of the coalescence scale $j_{ab}$,
$$
\frac{L^{j_{ab}-1}}{2}
< \frac{L^{j_{ab}}}{2} \leq |a-b|
< \frac{L^{j_{ab}+1}}{2}.
$$
For simplicity let us assume that there is $B \in \mathcal{B}_{j_{ab}}$ such that $a,b \in B$, but $B_a, B_b \in \mathcal{B}_{j_{ab}-1}$ are disjoint as in the following picture. All other cases can be done similarly.

\vspace{0.3cm}
\begin{tikzpicture}[scale=1.5]
	\draw[decoration= {brace, amplitude = 12 pt}, decorate] (0.9, 0) -- (0.9,4);
	\node at (0.35,2) {$L^{j_{ab}}$};
	\draw (1,0) -- (1,4);
		%\node at (1.5,0.5) {$a_3$};
		\node at (4.5,0.5) {$b$};
	\draw (2,0) -- (2,4);
	\draw (3,0) -- (3,4);
	\draw (4,0) -- (4,4);
		\node at (1.5,3.5) {$a$};
		%\node at (4.5,3.5) {$a_2$};
	\draw (5,0) -- (5,4);
	\draw (1,0) -- (5,0);
	\draw (1,1) -- (5,1);
	\draw (1,2) -- (5,2);
	\draw (1,3) -- (5,3);
	\draw (1,4) -- (5,4);
		\draw[decoration= {brace, amplitude = 6 pt}, decorate] (4,4.1) -- (5,4.1);
		\node at (4.5,4.4) {$L^{j_{ab} - 1}$};
\end{tikzpicture}

~\\
%In the following lemma we state formulas for the flow of $\lambda_k^{\alpha}$, $q_k^{ab}$ and $n_k^{\alpha}$.

\begin{lemma}\label{lemma:Further_properties_Observables}
For initial coupling constants $\lambda^a_0 = \lambda^b_0 = q^{ab}=0$, $n_0^a, n_0^b \in \R^d$ %and $(H_0,K_0) \in \mathcal{V}_0^{(0)} \times M^{\ext}(\mathcal{P}_{0}^c)$ 
we obtain the following formulas for the coupling constants:
\begin{enumerate}
	\item
	$\lambda_k^{\alpha} = \sum_{l=0}^{k-1} \left( \mathbf{B}_l K_l^{\alpha} \right)^0$,
	\item
	$q_k^{ab} = 0$ for $k \leq j_{ab}$ and
	$$
	q_k^{ab}= \sum_{l=j_{ab}}^{k-1} \left( \mathbf{B}_l K_l^{ab} + \int H_l^a H_l^b \de\mu_{l+1} \right),
	\quad \text{for } k > j_{ab},
	$$
	\item
	$n_k^{\alpha} = n_0^{\alpha} + \sum_{l=0}^{(k-1)\wedge (j_{ab}-1)} \left( \mathbf{B}_l K_l^a \right)^1$.
\end{enumerate}
\end{lemma}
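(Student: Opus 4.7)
The approach is straightforward telescoping of the recursions that define $\mathbf{T}_k^{\ext}$; the only real content is the vanishing assertion in part (2) for $k \leq j_{ab}$.

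First I would unfold each recursion. By the very definition of the extended map, $\lambda^{\alpha}_{k+1} = \lambda^{\alpha}_k + (\mathbf{B}_k K^{\alpha}_k)^0$ for $\alpha \in \lbrace a,b\rbrace$, and iterating this from the initial value $\lambda^{\alpha}_0 = 0$ gives part (1) directly. For part (3), recall that $H \in \mathcal{V}^{(0)}_k$ means the observable part of $H^{\alpha}$ is linear with coefficient $n^{\alpha}$, and from $H^{\alpha}_{k+1} = H^{\alpha}_k + (\mathbf{B}_k K^{\alpha}_k)^1 \nabla\varphi(\alpha)$ one obtains the scalar recursion $n^{\alpha}_{k+1} = n^{\alpha}_k + (\mathbf{B}_k K^{\alpha}_k)^1$. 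Telescoping gives $n^{\alpha}_k = n^{\alpha}_0 + \sum_{l=0}^{k-1}(\mathbf{B}_l K^{\alpha}_l)^1$. The truncation at $(k-1) \wedge (j_{ab}-1)$ then comes from the definition of the projection $\Pi^{\alpha}_l$: for $l \geq j_{ab}$ one has $\Pi^{\alpha}_l = \Pi_0$, so only the constant piece of $\mathcal{R}_{l+1} K^{\alpha}_l$ survives and the linear coefficient $(\mathbf{B}_l K^{\alpha}_l)^1$ vanishes automatically.

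For part (2), the recursion $q^{ab}_{k+1} = q^{ab}_k + \mathbf{B}_k K^{ab}_k + \int H^a_k H^b_k \, d\mu_{k+1}$ together with $q^{ab}_0 = 0$ telescopes to
\begin{equation*}
q^{ab}_k = \sum_{l=0}^{k-1}\left(\mathbf{B}_l K^{ab}_l + \int H^a_l H^b_l \, d\mu_{l+1}\right).
\end{equation*}
It thus suffices to show that each summand vanishes for $l \leq j_{ab}-1$. Since $H^a_l(\varphi) = n^a_l \nabla\varphi(a)\1_a$ and $H^b_l(\varphi) = n^b_l\nabla\varphi(b)\1_b$, the Gaussian integral reduces to a linear combination of entries of $\nabla^*_j \nabla_i C_{l+1}(a,b)$, which vanishes whenever $l+1 \leq j_{ab}$ by the finite-range property of the covariance decomposition recalled in Section \ref{sec:extension_spaces_norms}. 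For the polymer term, Lemma \ref{Lemma:FirstConclusions_Observables}(1) ensures that field locality is preserved inductively along the flow, so $K^{ab}_l(X) = 0$ unless $a,b \in X$; in particular $K^{ab}_l(B) = 0$ for any single $l$-block $B$ whenever no $l$-block can accommodate both $a$ and $b$, which is precisely the case for $l \leq j_{ab}-1$ by the very definition of the coalescence scale (as also illustrated in the diagram preceding the lemma). Since $\mathbf{B}_l$ is a sum over single $l$-blocks, this forces $\mathbf{B}_l K^{ab}_l = 0$ on this range, and the sum really starts at $l = j_{ab}$.

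The main obstacle is not computational but conceptual: one must verify that field locality of $K^{ab}$ is genuinely propagated by $\mathbf{S}^{\ext}_k$ throughout the flow (so that one may invoke it at every scale $l \leq j_{ab}-1$), and then combine this with the elementary geometric fact that $j_{ab}$ is the first scale at which a single block can contain both $a$ and $b$. The first point is already recorded in Lemma \ref{Lemma:FirstConclusions_Observables}(1); the second is immediate from $j_{ab} = \lfloor \log_L(2|a-b|)\rfloor$ and the side length $L^l$ of an $l$-block.
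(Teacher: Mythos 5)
Your proof is correct and takes essentially the same route as the paper, whose proof is the one-line remark that the formulas follow iteratively from the definition of the flow together with Lemma \ref{Lemma:FirstConclusions_Observables}; you simply spell out the two vanishing mechanisms (the finite-range property $\nabla^*\nabla C_{l+1}(a,b)=0$ for $l+1\le j_{ab}$, and field locality of $K_l^{ab}$ combined with the fact that no single $l$-block contains both $a$ and $b$ below the coalescence scale, which forces $\mathbf{B}_l K_l^{ab}=0$ and the vanishing of $(\mathbf{B}_l K_l^{\alpha})^1$ via $\Pi_l^{\alpha}=\Pi_0$ for $l\ge j_{ab}$). The only caveat is that the block-geometry claim for $l\le j_{ab}-1$ is exactly the simplifying assumption the paper fixes in the picture preceding the lemma (and holds automatically for the large $L$ used, since $|a-b|\ge L^{j_{ab}}/2$), so this is a presentational point rather than a gap.
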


\begin{proof}
These formulas follow iteratively by definition of the flow and Lemma \ref{Lemma:FirstConclusions_Observables}.
\end{proof}

In the next statement we will deliver a precise formulation of what was described heuristically in Remark \ref{Remark:MotivationT} when we motivated the definition of the map $\mathbf{A}_k$, namely that the relevant flow absorbs the irrelevant part up to second order.

\begin{prop}\label{Prop:2nd_order_perturbation_effect_on_S}
The $st$-part of the second derivative in direction $H$ of $\mathbf{S}^{\ext}$ is zero:
$$
\left[ D^2_H \mathbf{S}^{\ext}(0,0) (\dot{H},\dot{H}) \right]^{ab} = 0.
$$
\end{prop}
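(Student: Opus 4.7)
My plan is to exploit the defining RG identity from Lemma \ref{Lemma:FirstConclusions_Observables}(3). Specialising it to $K = 0$ and zero initial coupling constants $\lambda^a = \lambda^b = q^{ab} = 0$, the quadratic $st$-term of $\mathbf{A}_k H$ is absorbed into $q_+^{ab}$ (and hence into $\zeta_+$), leaving $H_+ = \mathbf{A}_k H^{\es} + sH^a + tH^b$ linear in $H$ and $\zeta_+ = st \int H^a H^b \, d\mu_{k+1}$. The identity then reads
\begin{align*}
e^{\zeta_+}\bigl(e^{H_+} \circ \mathbf{S}^{\ext}(H,0)\bigr) = \mathcal{R}_{k+1}(e^H),
\end{align*}
and I will Taylor-expand both sides in $H = \tau\dot H$ with $\dot H \in \mathcal{V}_k^{(0)}$ to second order in $\tau$, then compare $\pi^{ab}$-components.

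On the right-hand side I use that $s^2 = t^2 = 0$ in the quotient algebra $\mathcal{N}$, which forces $\pi^{ab}(\dot H^2) = 2 st\, \dot H^a \dot H^b$, and Wick's rule $\mathcal{R}_{k+1}(\dot H^a \dot H^b)(\varphi) = \dot H^a(\varphi)\dot H^b(\varphi) + \int \dot H^a \dot H^b \, d\mu_{k+1}$ for the centered Gaussian $\mu_{k+1}$. The $\tau^2$-coefficient of $\pi^{ab}\mathcal{R}_{k+1}(e^{\tau \dot H})$ is thus $st(\dot H^a \dot H^b + \int \dot H^a \dot H^b \, d\mu_{k+1})$. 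On the left-hand side, the quotient relations truncate the exponentials to $e^{\tau s \dot H^a} e^{\tau t \dot H^b} = 1 + \tau s \dot H^a + \tau t \dot H^b + \tau^2 st\, \dot H^a \dot H^b$ and $e^{\zeta_+} = 1 + \tau^2 st \int \dot H^a \dot H^b \, d\mu_{k+1}$, giving the same $\tau^2$-coefficient $st(\dot H^a \dot H^b + \int \dot H^a \dot H^b \, d\mu_{k+1})$ for $\pi^{ab}(e^{\zeta_+} e^{H_+})$. The two contributions match exactly.

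Consequently the $\pi^{ab}$-contribution produced by $\mathbf{S}^{\ext}(H, 0) - \mathbf{S}^{\ext}(0, 0)$ in the circ product must vanish at order $\tau^2$. Since $\mathbf{S}^{\ext}(0, 0)$ is concentrated on the empty polymer---by direct inspection of \eqref{eq:defn_K+} every summand for a non-empty polymer contains a factor $(1 - e^{\widetilde H})$, $(e^H - 1)$ or $K$ vanishing at $H = K = 0$---this aggregate vanishing can be transferred to a polymer-wise statement by expanding the explicit formula \eqref{eq:defn_K+} multiplied by $e^{-\zeta_+}$. The second-order contributions to each polymer $X$ then come from single-block terms involving $\tfrac12 \widetilde H(B)^2$ or $\tfrac12 H(B)^2$ and from two-block cross products, where field locality (Lemmas \ref{Prop:field_locality}(1) and \ref{Lemma:FirstConclusions_Observables}(1)) restricts the $\pi^{ab}$-part to polymers touching both $a$ and $b$. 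One then verifies that the prefactor $e^{-\zeta_+}$ exactly cancels the residual $st$-contributions produced after Gaussian integration, which yields $[D^2_H \mathbf{S}^{\ext}(0, 0)(\dot H, \dot H)]^{ab} = 0$.

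The main obstacle is the book-keeping across scales in the polymer-wise check: for $k < j_{ab}$ the single-block $\pi^{ab}$-contributions vanish automatically (no single $k$-block contains both $a$ and $b$), while for $k \geq j_{ab}$ they must be tracked explicitly. The explicit action of the observable projections $\Pi_k^{a}$, $\Pi_k^b$ defined in Section \ref{subsubsec:Projection} ensures that no spurious $st$-contributions arise from $\widetilde H = \Pi_k \mathcal{R}_{k+1} H$, and the two-block Gaussian cross terms produce precisely $\int \dot H^a \dot H^b \, d\mu_{k+1}$ as needed for the cancellation with $e^{-\zeta_+}$. This is exactly the design principle motivating the definition of $\mathbf{S}^{\ext}$ in Remark \ref{Remark:MotivationT}.
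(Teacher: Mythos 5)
Your high-level idea---that the statement should be forced by the conjugation identity \eqref{RG_equation_obs}, because the explicitly tracked relevant part already reproduces the full $st$-content of $\mathcal{R}_{k+1}(e^{H})$---captures the design principle of Remark \ref{Remark:MotivationT}, but as written it does not prove the proposition. The identity \eqref{RG_equation_obs} is an identity of the full circ product evaluated on $\Lambda_N$, so comparing the $\tau^2$, $st$-coefficients of the two sides only yields that the \emph{sum over all polymers} $U$ of $\bigl[D_H^2\mathbf{S}^{\ext}(0,0)(\dot H,\dot H)\bigr]^{ab}(U)$ vanishes, not that the functional vanishes polymer by polymer. The proposition, however, is a statement about an element of $M^{\ext}(\mathcal{P}^c_{k+1})$, and it is used in Proposition \ref{Prop:SingleStepRG_Observables} through the norm $\Vert\cdot\Vert_{k+1}^{(A),\ext}$, which is a supremum over polymers; an aggregate identity gives nothing there. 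You acknowledge this and defer to an expansion of \eqref{eq:defn_K+}, but that deferred verification is the entire content of the proof (it is exactly what the paper does in Lemma \ref{Lemma:2nd_order_perturbation_exact}), and your sketch of it misidentifies the mechanism; also, your cross-term bookkeeping silently uses $D_H\mathbf{S}^{\ext}(0,0)=0$, which should be justified (Proposition \ref{Prop:Contractivity_Observables}).

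Concretely: the prefactor $e^{-\zeta_+}$ (i.e.\ $e^{-st\int H^aH^b\de\mu_{k+1}}$ when $K=0$) cannot be what cancels the residual $st$-terms, because at $(H,K)=(0,0)$ it multiplies $\mathbf{S}$, which vanishes together with its first $H$-derivative; hence $D^2_H\mathbf{S}^{\ext}(0,0)=D^2_H\mathbf{S}(0,0)$ and the prefactor drops out entirely at this order. Likewise, $\widetilde H$ is not $\Pi_k\mathcal{R}_{k+1}H$ free of $st$-content: through $\mathbf{A}_k$ it contains the quadratic term $st\int H^aH^b\de\mu_{k+1}$, and it is precisely this term, sitting inside $e^{\widetilde H}$ in \eqref{eq:defn_K+}, that cancels the single-block residue $2st\int \dot H^a(B,\xi)\dot H^b(B,\xi)\,\mu_{k+1}(\de\xi)$ produced by Gaussian integration (this cancellation works at every scale, using linearity of $\dot H^a,\dot H^b$ and the mean-zero property of $\mu_{k+1}$). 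The two-block cross terms are not cancelled by anything: they equal $2n^an^b\nabla^*\nabla C_{k+1}(a,b)$ and vanish by the finite-range property for $k<j_{ab}$, while for $k\geq j_{ab}$ both observables lie in a single block so no two-block $ab$-term occurs. Until the polymer-wise computation is carried out with these corrected ingredients, the argument has a genuine gap.
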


The proof can be found in Lemma \ref{Lemma:2nd_order_perturbation_exact}. 

~\\
At this point, we have obtained that
\begin{align*}
\int e^{H_0} \circ K_0 \,\de\mu_{\mathcal{C}^q}
= e^{\zeta_N} \left( e^{H_N(\varphi=0)} + K_N(\varphi=0) \right),
\quad \zeta_N &= st q_N^{ab} + s\lambda_N^a + t \lambda_N^b.
\end{align*}
Since
$$
\sum_{k=j_{ab}}^{N}C_k(a,b) = \sum_{k=0}^{N}C_k(a,b) = C^q(a,b),
$$
it holds that
\begin{align*}
q_N^{ab} &= \left( n_0^a + S_{j_{ab}}^a \right) \left( n_0^b + S_{j_{ab}}^b \right) \nabla^*_{m_b} \nabla_{m_a} C^q(a,b) + \tilde{R}_{ab},
\\
 S_{j_{ab}}^{\alpha} &= \sum_{l=0}^{j_{ab}-1} \left( \mathbf{B}_l K_l^{\alpha}\right)^1,
\quad \tilde{R}_{ab} = \sum_{l=j_{ab}}^{N-1} \mathbf{B}_l K_l^{ab},
\\
\lambda_N^{\alpha} &= \sum_{l=0}^{N-1} \left( \mathbf{B}_l K_l^{\alpha} \right)^0.
\end{align*}

In the following section we develop estimates on the involved quantities which lead to a first bound on the covariance in Proposition \ref{Prop:Cov_first}. In order to get rid of the $S^{\alpha}_{j_{ab}}$ in the leading term, an additional argument is needed. We implement this by considering the flow of a single observable. The refined bound on the covariance can be found in Proposition \ref{Prop:Cov_better}.

\subsubsection{Estimates on the extended map}

The separation of the bulk flow into relevant and irrelevant directions with corresponding estimates can be extended to the observable flow. %This makes a first estimate on the covariance then quite straightforward.

Let $U_{\rho} \subset \mathcal{V}_k^{(0)} \times M^{\ext}(\mathcal{P}_k^c)$ be the subset
\begin{align*}
U_{\rho} = \lbrace (H,K) \in \mathcal{V}_k^{(0)} \times M^{\ext}(\mathcal{P}_k^c): \Vert H \Vert^{\ext}_{k,0} < \rho, \Vert K \Vert_k^{(A),\ext} < \rho \rbrace.
\end{align*}

\begin{prop}[Smoothness of the extended flow]\label{Prop:Smoothness_Observables}
There exists a constant $L_0$ such that for all odd integers $L \geq L_0$ there is $A_0$ and $h_0$ with the following property. For all $A \geq A_0$ and $h \geq h_0$ there exists $\rho^* = \rho^*(A)$ such that the map $\mathbf{S}^{\ext}_k$ satisfies
\begin{align*}
\mathbf{S}^{\ext}_k \in C^{\infty} \left(U_{\rho^*},M^{\ext}(\mathcal{P}_{k+1}^c)\right).
\end{align*}
For any $j_1, j_2 \in \N$ there is a constant $C^*_{j_1,j_2} = C^*_{j_1,j_2}(L,h,A)$ such that for any $(H,K) \in U_{\rho^*}$
\begin{align*}
\left\Vert
D_H^{j_1} D_K^{j_2}  \mathbf{S}_k^{\ext}(H,K)(\dot{H}^{j_1}, \dot{K}^{j_2})
\right\Vert_{k+1}^{(A),\ext}
\leq C^*_{j_1,j_2} \left( \Vert \dot{H} \Vert_{k,0}^{\ext}\right)^{j_1} \left( \Vert\dot{K} \Vert_k^{(A),\ext}\right)^{j_2}.
\end{align*}
 %$\mathbf{S}^{\ext}$ preserves $(\Z^d)$-property.
\end{prop}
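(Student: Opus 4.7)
The plan is to factor
\[
\mathbf{S}_k^{\ext}(H,K) = E(H,K) \cdot \mathbf{S}_k(H,K),
\]
where $\mathbf{S}_k$ is the bulk map (now acting on extended arguments) and $E$ is the observable prefactor appearing in the definition of $\mathbf{S}_k^{\ext}$. The decisive simplification is that the quotient-algebra relations $s^2, t^2 \in \mathcal{I}$ collapse the exponential to a polynomial of degree at most two in the observable variables:
\begin{align*}
E(H,K) = 1 &- s (\mathbf{B}_k K^a)^0 - t (\mathbf{B}_k K^b)^0 \\
&- st \left( \int H^a H^b \de\mu_{k+1} + \mathbf{B}_k K^{ab} - (\mathbf{B}_k K^a)^0 (\mathbf{B}_k K^b)^0 \right).
\end{align*}
In particular $E$ is automatically $C^\infty$ in $(H,K)$ with only finitely many nonvanishing derivatives, so its smoothness and the required multilinear estimates reduce to pointwise bounds on the four coefficients.

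Smoothness of the bulk factor $\mathbf{S}_k$ can be lifted to the extended algebra. The defining formula \eqref{eq:defn_K+} consists solely of finite sums, circ products, and Gaussian integrations, all of which extend canonically from $\mathcal{N}^{\es}$-valued to $\mathcal{N}$-valued functionals. Thus the smoothness argument underlying Proposition \ref{Prop:Existence_BulkFlow} (adopted from \cite{Hil19_1}) carries over component by component: by Lemma \ref{Lemma:FirstConclusions_Observables}(5) the $\pi^{\es}$-component of $\mathbf{S}_k(H,K)$ is exactly the bulk map $\mathbf{S}_k(H^\es, K^\es)$ and is covered by the cited proposition directly; for $\alpha \in \{a,b,ab\}$ the component $\pi^\alpha \mathbf{S}_k(H,K)$ is a polynomial expression in the observable components of $H$ and $K$ of total degree at most $|\alpha|$, so the identical bulk-style arguments apply once the weights $l_{\obs,k}^{|\alpha|}$ are inserted into the extended norms. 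The coefficient bounds needed for $E$ reduce to estimates on $(\mathbf{B}_k K^\alpha)^0$ and $\mathbf{B}_k K^{ab}$ (via the standard projection bounds on $\Pi_k$, to be provided separately in the following subsections) and on $|\int H^a H^b \de\mu_{k+1}|$, which is a finite sum of products of the linear coefficients of $H^a$ and $H^b$ with $|\nabla^*_j \nabla_i C^q_{k+1}(a,b)|$; the latter vanishes for $k+1 \leq j_{ab}$ by the finite-range property and is controlled by Proposition \ref{Prop:FRD_Buchholz} otherwise.

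The derivative estimate is then obtained by Leibniz: $D_H^{j_1} D_K^{j_2}(E \cdot \mathbf{S}_k)$ expands as a finite sum of products (derivative of $E$)$\cdot$(derivative of $\mathbf{S}_k$), and since $E$ is polynomial of degree at most two only boundedly many of these are nonzero. The main obstacle is the bookkeeping of the observable weights across scales: one must verify that the scale ratio $l_{\obs,k+1}/l_{\obs,k}$ given in \eqref{scale_transformation}, combined with the geometric contraction $\eta$ inherited from the bulk step, is strong enough to absorb the $L^{d/2}$-loss produced by $\nabla^*\nabla C_{k+1}(a,b)$ below the coalescence scale and the linear-projection loss above it. This dichotomy is precisely what dictates the explicit form of $l_{\obs,k}$ in Definition \ref{Def:ObservableNormWeight}, and once the case distinction $k \lessgtr j_{ab}$ is checked by direct computation using \eqref{scale_transformation} and Proposition \ref{Prop:FRD_Buchholz}, the desired multilinear bound in $\Vert \cdot \Vert_{k+1}^{(A),\ext}$ follows.
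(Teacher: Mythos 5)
Your overall strategy coincides with the paper's: factor $\mathbf{S}^{\ext}_k = e^F\,\mathbf{S}_k$, use that in the quotient algebra the prefactor is a polynomial of degree at most two in $s,t$ (your explicit expansion, including the $(\mathbf{B}_kK^a)^0(\mathbf{B}_kK^b)^0$ term, is correct), bound its coefficients through the $\mathbf{B}_k$-estimates and the finite-range decomposition, and conclude by the Leibniz rule. This is exactly the split into Lemma \ref{lemma:estimate_on_S} and Lemma \ref{lemma:estimate_on_prefactor_F} in the paper, and your prefactor discussion matches the paper's case-by-case derivative computation.

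The gap is in your treatment of the factor $\mathbf{S}_k$ on extended arguments, which is the actual core of the proposition. You assert that its smoothness in the norms $\Vert\cdot\Vert_{k+1}^{(A),\ext}$ ``carries over component by component'' because the observable components are polynomial in the observable inputs; but this is precisely what has to be proved, and it is not automatic. First, $\mathbf{S}_k$ on extended data is not the bulk map applied componentwise: the relevant Hamiltonian fed into it is $R_2(H,K)=\mathcal{R}_+H+st\int H^aH^b\,\de\mu_+ +\Pi\mathcal{R}_+K$, i.e.\ it contains the second-order observable perturbation and the new localisation acting on the observable components (whose boundedness in the extended norm, Lemma \ref{Boundedness_of_Projection}, is a genuinely new input); the smoothness of this modified map is Lemma \ref{lemma:R2} and is not covered by the bulk statement you invoke (Proposition \ref{Prop:Existence_BulkFlow} concerns the flow's dependence on $\mathcal{K}$, not derivative bounds of $\mathbf{S}_k$ in $(H,K)$). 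Second, ``inserting the weights $l_{\obs,k}^{|\alpha|}$'' requires re-deriving the submultiplicativity, integration-regularity and polymer-gluing estimates for the weighted norms (Lemmas \ref{lemma:submultiplicativity1}--\ref{lemma:P1}), where the scale ratio \eqref{scale_transformation} produces the extra factor $\frac{\eta^2}{4}L^d$ in Lemma \ref{lemma:P1} and thereby dictates the admissible $L_0,A_0,h_0$. Finally, your closing paragraph misplaces the difficulty: below coalescence $\nabla^*\nabla C_{k+1}(a,b)=0$, so the $L^{d/2}$ per-scale growth of $l_{\obs,k}$ is a design choice aimed at the later decay estimate rather than a loss created by the covariance, and no $\eta$-contraction is needed for this proposition since the constants $C^*_{j_1,j_2}$ may depend on $L$; contractivity is the content of Proposition \ref{Prop:Contractivity_Observables}, not of the smoothness statement.
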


The proof of this proposition can be found in Section \ref{subsec:Proof_Smoothness}.

~\\
The extended flow also satisfies contraction estimates for the derivative of $\mathbf{S}_k^{\ext}$ at zero.

\begin{prop}[Contractivity of the extended flow] \label{Prop:Contractivity_Observables}
The first derivative of $\mathbf{S}^{\ext}_k$ at $(H,K)=(0,0)$ satisfies
\begin{align*}
D \mathbf{S}^{\ext}_k(0,0) (\dot{H},\dot{K}) = \mathbf{C}_k \dot{K},
\end{align*} 
where
\begin{align*}
\mathbf{C}_k \dot{K} (U,\varphi) = \sum_{\substack{B \in \mathcal{B}_k:\\ \bar{B} = U}} (1 - \Pi) \mathcal{R}_{k+1} \dot{K}(B,\varphi) + \sum_{\substack{X \in \mathcal{P}^c_k \setminus \mathcal{B}_k\\ \pi(X) = U}} \mathcal{R}_{k+1} \dot{K}(X,\varphi).
\end{align*}
For any $\theta \in (0,1)$ there is $L_0$ such that for all odd integers $L \geq L_0$ there is $A_0$ and $h_0$ with the following property. For all $A \geq A_0$, $h \geq h_0$ the following estimate holds independent of $k$ and $N$,
\begin{align*}
\Vert \mathbf{C}_k \Vert \leq \theta.
\end{align*}
The norm on the left hand side denotes the operator norm for the map
$$
\left( M^{\ext}(\mathcal{P}_k^c), \Vert \cdot \Vert_k^{(A),\ext} \right) \rightarrow \left( M^{\ext}(\mathcal{P}_{k+1}^c), \Vert \cdot \Vert_{k+1}^{(A),\ext} \right).
$$
\end{prop}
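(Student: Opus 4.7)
The plan is to first identify $D\mathbf{S}^{\ext}_k(0,0)$ explicitly and then reduce the norm bound to the analogous bulk-flow contraction estimate from \cite{ABKM}, suitably adapted to the observable weights.

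\textbf{Identification of the derivative.} Write $\mathbf{S}^{\ext}_k(H,K) = e^{E(H,K)}\mathbf{S}_k(H,K)$ with $E(H,K) = -s(\mathbf{B}_kK^a)^0 - t(\mathbf{B}_kK^b)^0 - st(\int H^aH^b\,\de\mu_{k+1} + \mathbf{B}_kK^{ab})$. One has $E(0,0)=0$, and moreover $\mathbf{S}_k(0,0)=0$ because the integrand in \eqref{eq:defn_K+} contains the factor $(1-e^{\widetilde H})\circ(e^H-1)\circ K$ which vanishes identically at $(H,K)=(0,0)$. The product rule therefore yields $D\mathbf{S}^{\ext}_k(0,0) = D\mathbf{S}_k(0,0)$, so the observable exponential prefactor does not contribute to the linearisation at zero. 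From the bulk-flow analysis sketched in Section \ref{subsec:finite-volume_bulk_single_step} one has $D_H\mathbf{S}_k(0,0)=0$ by construction of $\mathbf{A}_k$ and $\mathbf{B}_k$, while $D_K\mathbf{S}_k(0,0)\dot K$ is given by the stated formula for $\mathbf{C}_k$, now with the projection $\Pi$ replaced by the extended $\Pi_k$.

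\textbf{Norm bound on $\mathbf{C}_k$.} Decompose $\dot K = \dot K^\es + s\dot K^a + t\dot K^b + st\dot K^{ab}$. The bulk component is handled verbatim by the bulk-flow estimate (Proposition 8.32 of \cite{ABKM} and its adaptation in \cite{Hil19_1}), contributing at most $\theta/2$. For each observable component $\alpha\in\{a,b,ab\}$ and for the first sum in $\mathbf{C}_k\dot K$, passing from the scale-$k$ to the scale-$(k+1)$ observable norm produces the factor $(l_{\obs,k+1}/l_{\obs,k})^{|\alpha|}$ given by \eqref{scale_transformation}. In the regime $k<j_{ab}$ with $\alpha\in\{a,b\}$, the projection $(1-\Pi_k^\alpha)=(1-\Pi_1)$ leaves a Taylor remainder of order at least two, which by the scaling of the field norm and the smoothing $\mathcal{R}_{k+1}$ gains a factor of order $L^{-(d/2+1)}$; combined with the weight ratio $\eta L^{d/2}/2$ this gives a contribution bounded by $C\eta/L$, small for $L$ large. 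In the regime $k\geq j_{ab}$ (the only one relevant for $\alpha=ab$), $(1-\Pi_k^\alpha)=(1-\Pi_0)$ leaves a remainder of order one with gain $L^{-d/2}$ against weight ratio $2\eta$ (resp.\ $4\eta^2$ for $\alpha=ab$), which is contractive since $\eta<1/4$. For the second sum, ranging over $X\in\mathcal{P}_k^c\setminus\mathcal{B}_k$ with $\pi(X)=U$, no projection is applied, but the standard reblocking estimate combined with the weight $A^{|X|_k}$ against $A^{|U|_{k+1}}$ (using $|X|_k\geq 2$) produces a factor that absorbs the at-most quadratic weight ratio $(l_{\obs,k+1}/l_{\obs,k})^{|\alpha|}\leq (\eta L^{d/2}/2)^{2}$ once $A$ is chosen sufficiently large relative to $L$.

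\textbf{Main obstacle.} The delicate point is the transition at the coalescence scale $k=j_{ab}$. At this scale the projection $\Pi_k^\alpha$ switches from the linear extractor $\Pi_1$ to the constant extractor $\Pi_0$, costing one order in the Taylor remainder of $(1-\Pi_k^\alpha)\mathcal{R}_{k+1}$; simultaneously the weight ratio $l_{\obs,k+1}/l_{\obs,k}$ jumps from $\eta L^{d/2}/2$ to $2\eta$. The factor $4^{(k-j_{ab})_+}$ encoded in Definition \ref{Def:ObservableNormWeight} is designed precisely so that the scaling loss from dropping to $\Pi_0$ is matched by the new weight ratio, yielding a uniform-in-$k$ contraction of prefactor $2\eta<1/2$ after multiplication with the field-norm gain $L^{-d/2}$. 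Verifying this matching case by case for the four components $\alpha\in\{\es,a,b,ab\}$ and the two regimes $k\lessgtr j_{ab}$, and confirming that the counting of small polymers $X$ with $\pi(X)=U$ is compatible with the observable field locality, is the essential new content; once this is in place all remaining estimates are routine extensions of the bulk-flow template.
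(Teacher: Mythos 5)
Your proposal is correct and follows essentially the same route as the paper: the identification $D\mathbf{S}^{\ext}_k(0,0)=D\mathbf{S}_k(0,0)=\mathbf{C}_k$ via the product rule and $\mathbf{S}_k(0,0)=0$, and the norm bound by splitting $\mathbf{C}_k\dot K$ into the single-block part, where the gain $L^{-(d/2+A(\alpha,k))}$ from $(1-\Pi_k^{\alpha})$ is played against the weight ratios \eqref{scale_transformation}, and the multi-block part, where the worst-case factor $\tfrac{\eta^2}{4}L^d$ is absorbed by taking $A$ large relative to $L$ — exactly the paper's $F+G$ decomposition in Lemma \ref{lemma:contractivity_of_C}. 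Two minor imprecisions, neither fatal: the smallness for $k\geq j_{ab}$ comes from $2\eta L^{-d/2}$ with $L$ large (needed to reach an arbitrary $\theta$), not from $\eta<1/4$; and the factor $4^{(k-j_{ab})_+}$ is chosen for the convergence of the sum defining $\tilde R_{ab}$, it actually worsens (slightly) the contraction rather than being tuned to it.
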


\begin{proof}
Here we only show the validity of the expression for $\mathbf{C}_k$. The contractivity is shown in Section \ref{subsec:Bound_C}, see Lemma \ref{lemma:contractivity_of_C}.

We claim that
$$
D \mathbf{S}^{\ext}_k(0,0)(\dot{H},\dot{K}) = D \mathbf{S}_k(0,0) (\dot{H},\dot{K}).
$$
Then the expression for $\mathbf{C}_k$ follows just as in the case of the bulk flow, see Proposition~3.13 in \cite{Hil19_1}.
The above equation holds with product rule since $\mathbf{S}_k(0,0) = 0$:
\begin{align*}
&D \mathbf{S}^{\ext} (0,0)(\dot{H},\dot{K}) 
\\
&=D_H \mathbf{S}(0,0)\dot{H}
+
D_H \left( e^{-s(\mathbf{B}K^a)^0 - t(\mathbf{B}K^b)^0 - st(\int H^a H^b \de \mu_+ +\mathbf{B}K^{ab})} \right) \dot{H} \Big\vert_{H=K=0} \mathbf{S}(0,0)
\\&\quad
+
 D_K \left( e^{-s(\mathbf{B}K^a)^0 - t(\mathbf{B}K^b)^0 - st(\int H^a H^b \de \mu_+ +\mathbf{B}K^{ab})} \right) \dot{K}\Big\vert_{H=K=0} \mathbf{S}(0,0)
\\&\quad
 +
  e^0 D_K \mathbf{S}(0,0) \dot{K}.
\end{align*}
\end{proof}

We also state bounds on the map $\mathbf{B}_k$. They are proven in Lemma \ref{lemma:Estimate_for_B}.

\begin{prop}[Bounds on $\mathbf{B}_k$]\label{Prop:Bounds_on_B_ObservableFlow}
The following bounds on the observable part of the map $\mathbf{B}_k$ hold:
\begin{align*}
\big\vert(\mathbf{B}_k K_k^{\alpha})^1 \big\vert
& \leq l_k^{-1} l_{\obs,k}^{-1} \frac{A_{\mathcal{B}}}{2} \Vert K_k \Vert_k^{(A),\ext},
\quad \alpha \in \lbrace a,b \rbrace
\\
\big\vert(\mathbf{B}_k K_k^{\alpha})^0 \big\vert
& \leq l_{\obs,k}^{-1} \frac{A_{\mathcal{B}}}{2} \Vert K_k \Vert_k^{(A),\ext},
\quad \alpha \in \lbrace a,b \rbrace
\\
\big\vert \mathbf{B}_k K_k^{ab} \big\vert
& \leq l_{\obs,k}^{-2} \frac{A_{\mathcal{B}}}{2} \Vert K_k^{ab} \Vert_k^{(A),\ext}.
\end{align*}
\end{prop}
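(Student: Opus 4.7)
The plan is to prove the three bounds by mirroring the analogous bulk--flow estimate on $\mathbf{B}_k$ (the one underpinning Proposition~\ref{Prop:Existence_BulkFlow}), applied component--wise to each observable part $K^\alpha$, $\alpha \in \{a,b,ab\}$. First I would unpack the definition $\mathbf{B}_k K(B_+) = \sum_{B \in \mathcal{B}_k(B_+)} \Pi_k\,\mathcal{R}_{k+1} K(B)$. Since $\mathbf{B}_k$ is linear and $\Pi_k$ splits along the direct--sum decomposition $\mathcal{N} = \mathcal{N}^\es \oplus \mathcal{N}^a \oplus \mathcal{N}^b \oplus \mathcal{N}^{ab}$, one has $(\mathbf{B}_k K)^\alpha = \mathbf{B}_k(K^\alpha)$. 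By field locality of $K^\alpha$, only the unique block $B$ containing $\alpha$ contributes, so the polymer sum collapses.

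Second, I would extract the scalar coefficients using the localisation operators $\Pi_0$ and $\Pi_1^\alpha$. For $\alpha\in\{a,b\}$, writing $\Pi_1^\alpha F(\varphi) = F(0) + \sum_j n_j^\alpha(F)\,\nabla_j \varphi(\alpha)$, one reads $(\mathbf{B}_k K^\alpha)^0 = \mathcal{R}_{k+1} K^\alpha(B,0)$ and $(\mathbf{B}_k K^\alpha)^1_j = n_j^\alpha(\mathcal{R}_{k+1} K^\alpha(B,\cdot))$, the latter obtained by pairing $D[\mathcal{R}_{k+1} K^\alpha(B,\cdot)](0)$ against the affine test function $\varphi(x) = x_j - \alpha_j$ on $(B^*)^*$. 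For $\alpha = ab$, $\Pi_0$ simply evaluates at $\varphi=0$. Each of these pairings is bounded by $|\mathcal{R}_{k+1} K^\alpha(B)|_{k+1,B,T_0}$ times the reciprocal weight of the test function in the norm $|\cdot|_{k+1,B}$; for the constant parts the test function is $g = 1$ (weight $1$), and for the linear part the weight $\mathfrak{w}_{k+1}(e_j) = h_{k+1} L^{-(k+1)} L^{-(k+1)(d-2)/2}$ enters, producing the extra factor $l_k^{-1}$ (up to the $l_{k+1}/l_k = 2L^{-d/2}$ ratio, which is $O(1)$ and can be absorbed in $A_{\mathcal{B}}$).

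Third, I would convert the pointwise $T_0$--bound into the weighted norm via $|K(X,\varphi)|_{k,X,T_\varphi} \leq w_k^X(\varphi)\,\Vert K(X)\Vert_{k,X}$ together with the standard stability of $w_k$ under $\mathcal{R}_{k+1}$ (cf.~Lemma~\ref{lemma:Properties_of_weights}), which already in the bulk case delivers a bound of the form $\leq \frac{A_{\mathcal{B}}}{2}\Vert K\Vert_k^{(A)}$ once $L_0, h_0, A_0$ are chosen appropriately. Inserting the defining weights of $\Vert \cdot\Vert_k^{(A),\ext}$, which carry $l_{\obs,k}$ on the $K^\alpha$ components and $l_{\obs,k}^2$ on $K^{ab}$, yields the claimed factors $l_{\obs,k}^{-1}$ and $l_{\obs,k}^{-2}$. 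The main bookkeeping obstacle is to verify that the transition $l_{\obs,k}\!\to\!l_{\obs,k+1}$ (given in \eqref{scale_transformation}) together with the ratio $l_{k+1}/l_k$ remain uniformly bounded across scales; since both ratios are $O(1)$ (in the $k\leq j_{ab}$ regime the growth of $l_{\obs,k+1}/l_{\obs,k}$ is compensated precisely by the shrinking of $l_{k+1}/l_k$ by $L^{-d/2}$, and in the $k>j_{ab}$ regime by the $2\eta$ factor), they are absorbed into the absolute constant $A_{\mathcal{B}}$ exactly as in the bulk analysis, and the three bounds follow.
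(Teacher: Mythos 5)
Your overall strategy coincides with the paper's (projection formula, duality against the test functions $1$ and $b^{\alpha}$, then the integration estimate for $\mathcal{R}_{k+1}$), and your treatment of the constant coefficients and of the $ab$-part is correct. The gap is in the linear coefficient. The paper's proof (Lemma \ref{lemma:Estimate_for_B}, following Lemma \ref{Boundedness_of_Projection}) evaluates the pairing entirely in the scale-$k$ seminorm: by \eqref{eq:formula_linear_projection}, $(\mathbf{B}_k K_k^{\alpha})^1_j=\langle \mathcal{R}_{k+1}K_k^{\alpha}(B), b^{\alpha}_j\rangle_0$, and since $|b^{\alpha}_j|_{k,B}=l_k^{-1}$, duality together with \eqref{integration_estimate} at $\varphi=0$ (where the weight is one and the factor $l_{\obs,k}$ is already contained in the extended seminorm) yields exactly $l_k^{-1}l_{\obs,k}^{-1}\frac{A_{\mathcal{B}}}{2}\Vert K_k\Vert_k^{(A),\ext}$, with no scale-change ratios appearing anywhere. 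You instead measure the test function in the scale-$(k+1)$ norm, which produces $l_{k+1}^{-1}=\frac{L^{d/2}}{2}\,l_k^{-1}$, and you dismiss the discrepancy as an $O(1)$ factor absorbable into $A_{\mathcal{B}}$. That is not legitimate: $A_{\mathcal{B}}$ is the fixed constant of Lemma \ref{lemma:Properties_of_weights}, explicitly independent of $L$, while $L^{d/2}/2$ is unbounded in $L$. Moreover, the compensation you invoke through \eqref{scale_transformation} cannot occur here, because the right-hand side of the claimed bounds carries $\Vert K_k\Vert_k^{(A),\ext}$, i.e.\ the weight $l_{\obs,k}$ at scale $k$; the ratio $l_{\obs,k+1}/l_{\obs,k}$ simply never enters these three estimates (it only shows up later, in the contraction and single-step estimates, where $K_{k+1}$ is measured at scale $k+1$).

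To make a scale-$(k+1)$ pairing work you would have to establish a genuine gain of order $L^{-d/2}$ when passing the seminorm of $\mathcal{R}_{k+1}K^{\alpha}$ from scale $k$ to $k+1$ on test functions with at least one field argument (this is the content of the contraction machinery, cf.\ the estimate $|g|_{k,B}\leq 8L^{-\frac{1}{2}d}|g|_{k+1,B}$ used in Lemma \ref{lemma:Contraction_estimate_1}), which you neither invoke nor need. The fix is simply to carry out the duality at scale $k$, as the paper does; then each of the three bounds follows in two lines from \eqref{eq:formula_linear_projection} and \eqref{integration_estimate}, with the constants coming out exactly as stated.
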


~\\
We can combine Proposition~\ref{Prop:Smoothness_Observables} and \ref{Prop:Contractivity_Observables} and additionally Proposition \ref{Prop:2nd_order_perturbation_effect_on_S} to get a refined single step estimate.

To state it, we extend the space $\mathbb{D}_k(\rho_0,\eta,\Lambda)$ (defined in \eqref{eq:defn_D_bulk}) to observables. In the following definition, $C_{\mathcal{D}}$ is fixed, determined a posteriori in the proof of Proposition~\ref{Prop:Existence_ObservableFlow}. Let
\begin{align*}
&\mathbb{D}^{\ext}_k(\rho_0,g_k,\Lambda)
%\nonumber
\\
& \quad=
 \left\lbrace (H,K) \in \mathcal{V}_k^{(0)} \times M^{\ext}(\mathcal{P}_k^c)(\Lambda): 
H \in B_{C_{\mathcal{D}}\rho_0 g_k},
K \in B_{\rho_0 g_k^2},
K^{ab} \in B_{\rho_0 g_k^3}
\right\rbrace.
\end{align*}

\begin{prop}[Single step estimate for the extended flow]\label{Prop:SingleStepRG_Observables}

Fix $\eta \in (0,1)$ and $C_{\mathcal{D}} > 1$. There is $L_0$ such that for all odd integers $L \geq L_0$ there are $A_0$ and $h_0$ with the following property. For $A \geq A_0$ and $h \geq h_0$ there is $\rho_0 > 0$ such that if $(H,K) \in \mathbb{D}_k^{\ext}(\rho_0,g_k,\Lambda)$ then
$$
\Vert \mathbf{S}^{\ext}(H,K,q) \Vert_{k+1}^{(A),\ext} \leq \rho_0 g_{k+1}^2
\quad
\text{and}
\quad
K_{k+1}^{ab} \in B_{\rho_0 g_{k+1}^3}.
$$

\end{prop}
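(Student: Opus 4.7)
The plan is to Taylor expand $\mathbf{S}^{\ext}$ around $(H,K) = (0,0)$ and combine the three structural inputs established just above: $\mathbf{S}^{\ext}(0,0)=0$, the contractivity of $\mathbf{C}_k = D\mathbf{S}^{\ext}(0,0)$ (Proposition \ref{Prop:Contractivity_Observables}) and the smoothness bounds for the higher derivatives (Proposition \ref{Prop:Smoothness_Observables}). The vanishing of the $ab$-component of $D^2_H \mathbf{S}^{\ext}(0,0)$ (Proposition \ref{Prop:2nd_order_perturbation_effect_on_S}) is what makes the finer estimate on $K^{ab}$ possible.

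First I would write the second-order Taylor expansion
\[
\mathbf{S}^{\ext}(H,K) = \mathbf{C}_k K + \mathcal{R}(H,K),
\]
where $\mathcal{R}$ is the integral Taylor remainder of order two. Proposition \ref{Prop:Smoothness_Observables} gives
\[
\Vert \mathcal{R}(H,K) \Vert_{k+1}^{(A),\ext} \leq C \bigl( \Vert H \Vert_{k,0}^{\ext} + \Vert K \Vert_k^{(A),\ext} \bigr)^2.
\]
For $(H,K) \in \mathbb{D}_k^{\ext}(\rho_0, g_k, \Lambda)$ the bracket is dominated by $C_{\mathcal{D}} \rho_0 g_k$, so this is bounded by $\tilde C \rho_0^2 g_k^2$. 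Combined with $\Vert \mathbf{C}_k K \Vert_{k+1}^{(A),\ext} \leq \theta \Vert K \Vert_k^{(A),\ext} \leq \theta \rho_0 g_k^2$ from Proposition \ref{Prop:Contractivity_Observables}, I obtain
\[
\Vert \mathbf{S}^{\ext}(H,K) \Vert_{k+1}^{(A),\ext} \leq (\theta + \tilde C \rho_0)\, \rho_0 g_k^2.
\]
I then choose $L_0$ large enough so that Proposition \ref{Prop:Contractivity_Observables} delivers $\theta < \eta^2/2$, and subsequently $\rho_0$ small enough that $\tilde C \rho_0 < \eta^2/2$. This gives the first bound $\Vert \mathbf{S}^{\ext}(H,K) \Vert_{k+1}^{(A),\ext} \leq \rho_0 \eta^2 g_k^2 = \rho_0 g_{k+1}^2$.

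For the refined bound on $K_{k+1}^{ab}$ I would project the same Taylor expansion onto $\mathcal{N}^{ab}$. Because the integration $\mathcal{R}_{k+1}$ and the localisation $\Pi_k$ preserve the grading $\mathcal{N} = \bigoplus_\alpha \mathcal{N}^\alpha$, the linear term reduces to $\mathbf{C}_k(K^{ab})$ and is bounded by $\theta \rho_0 g_k^3$ via the $ab$-part of the contractivity. The quadratic term decomposes as
\[
[D^2_H \mathbf{S}^{\ext}(0,0)(H,H)]^{ab} + 2[D_H D_K \mathbf{S}^{\ext}(0,0)(H,K)]^{ab} + [D^2_K \mathbf{S}^{\ext}(0,0)(K,K)]^{ab},
\]
and the first summand vanishes by Proposition \ref{Prop:2nd_order_perturbation_effect_on_S}. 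Each of the remaining two summands carries at least one $K$-factor, hence gains an extra power of $g_k$ compared with the pure $H$ estimate and contributes $O(\rho_0^2 g_k^3)$. The cubic and higher Taylor remainder projected onto $\mathcal{N}^{ab}$ is controlled by Proposition \ref{Prop:Smoothness_Observables} and is at worst $O(\rho_0^3 g_k^3)$. Altogether
\[
\Vert K_+^{ab} \Vert_{k+1}^{ab} \leq (\theta + C' \rho_0)\, \rho_0 g_k^3,
\]
and enlarging $L_0$ (if necessary) so that $\theta < \eta^3/2$ and shrinking $\rho_0$ so that $C' \rho_0 < \eta^3/2$ yields $K_{k+1}^{ab} \in B_{\rho_0 g_{k+1}^3}$.

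The main obstacle is the $ab$-bound, and it is exactly the place where Proposition \ref{Prop:2nd_order_perturbation_effect_on_S} is indispensable: without it the pure $H^2$ contribution to $[\mathbf{S}^{\ext}]^{ab}$ would only be $O(\rho_0^2 g_k^2)$, which is of the wrong order in $g_k$ and would block the bootstrap after one step. This is precisely why $\mathbf{A}_k$ was defined with the extra $st \int H^a H^b \, d\mu_{k+1}$ term. A secondary technical nuisance is that the weight ratio $l_{\obs,k+1}/l_{\obs,k}$ equals $\tfrac{\eta}{2}L^{d/2}$ for $k < j_{ab}$ and $2\eta$ otherwise, so the contractivity constant has to absorb a factor growing in $L$; this is why the contraction bound $\theta$ must be chosen small (and $L_0$ large) before fixing $\rho_0$, and the order of quantifiers in the proposition mirrors that.
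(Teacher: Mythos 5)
Your proposal is correct and follows essentially the same route as the paper: expand $\mathbf{S}^{\ext}$ around $(0,0)$, use the contractivity of $\mathbf{C}_k$ for the linear term and the smoothness bounds for the Taylor remainder (first to quadratic order for the general bound, then to cubic order for the $ab$-part, where Proposition \ref{Prop:2nd_order_perturbation_effect_on_S} kills the pure $H^2$ contribution), and finally fix $\theta$, $L_0$ and $\rho_0$ in that order. The only detail worth adding is the explicit requirement $C_{\mathcal{D}}\rho_0 \leq \rho^*(A)$, which guarantees $(H,K)\in U_{\rho^*}$ so that Proposition \ref{Prop:Smoothness_Observables} is applicable — the paper records this as one of the smallness conditions on $\rho_0$.
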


\begin{proof}
Fix $\theta < \eta^3$. Let $L_0$ be large enough such that Proposition \ref{Prop:Smoothness_Observables} and \ref{Prop:Contractivity_Observables} can be applied. Define $C_2^*= \max(C_{2,0}^*,C_{1,1}^*,C_{0,2}^*)$ where $C^*_{j_1,j_2}$ are the constants from Proposition \ref{Prop:Smoothness_Observables}. Choose $\rho_0$ small enough that
$$
C_{\mathcal{D}} \rho_0 \leq \rho^*(A)
\quad
\text{and}
\quad
\theta + \frac{1}{2} C_2^* \rho_0 \left( C_{\mathcal{D}} + 1 \right)^2 \leq \eta^2.
$$
Then $(H,K) \in \mathbb{D}_k^{\ext}(\rho_0,g_k,\Lambda)$ implies $(H,K) \in U_{\rho^*(A)}$ so we can apply Proposition~\ref{Prop:Smoothness_Observables} to estimate as follows.

We expand $\mathbf{S}^{\ext}$ around $(0,0)$ up to linear order,
$$
\mathbf{S}^{\ext} (H,K) = \mathbf{C} K + \int_0^1 D^2 \mathbf{S}^{\ext} (t H,tK) (H,K)^2 (1-t) \de t.
$$
Then
\begin{align*}
&\Vert \mathbf{S}^{\ext} (H,K) \Vert^{(A),\ext}_{k+1}
\\
&\quad\quad\leq
 \theta \Vert K \Vert_k^{(A),\ext} + \frac{1}{2} C_2^* \left( \left(\Vert H \Vert_{k,0}^{\ext}\right)^2 + 2 \Vert H \Vert_{k,0}^{\ext} \Vert K \Vert_k^{(A),\ext} + \left(\Vert K \Vert_k^{(A),\ext} \right)^2 \right)
\\
&\quad\quad\leq 
\rho_0 g_{k+1}^2 \frac{1}{\eta^2}
\left( \theta + \frac{1}{2} C_2^* \rho_0 (C_{\mathcal{D}} + 1)^2 \right)
\leq \rho_0 g_{k+1}^2.
\end{align*}
The last inequality follows by the assumption on $\rho_0$.

~\\
For the improved estimate on the $ab$-part we expand $\mathbf{S}^{\ext}$ up to second order and exploit the fact that we used second order perturbation in the observable flow. With Lemma~\ref{Prop:2nd_order_perturbation_effect_on_S} we obtain
\begin{align*}
K_{+}^{ab} =  
\mathbf{C} K^{ab} 
&
+ 2 \left[D_H D_K \mathbf{S}^{\ext}(0,0)(H,K)\right]^{ab} + \left[D_K^2 \mathbf{S}^{\ext}(0,0)K^2\right]^{ab}
\\
&+  \left[ \frac{1}{2} \int_0^1 D^3 \mathbf{S}^{\ext}(tH,tK)(H,K)^3 (1-t)^2 dt \right]^{ab}.
\end{align*}
Now let $C_3^* = \max (C_{3,0}^*,C_{2,1}^*,C_{1,2}^*,C_{0,3}^*)$ and choose $\rho_0$ such that additionally
$$
\theta + C^*_2 \rho_0 (2 C_{\mathcal{D}}+1) + \frac{1}{6} C^*_3 \rho_0^2(C_{\mathcal{D}} + 1)^3 \leq \eta^3
$$
is satisfied. Then
\begin{align*}
\Vert K_{+}^{ab} \Vert_{k+1}^{(A),\ext}
&\leq \theta \Vert K^{ab} \Vert_k^{(A),\ext}
+ 2 C^*_2 \Vert H \Vert_{k,0}^{\ext} \Vert K\Vert_k^{(A),\ext}
+ C^*_2 \left( \Vert K\Vert_k^{(A),\ext} \right)^2
\\ & \quad\quad\quad
+ \frac{1}{2} \frac{1}{3} C^*_3
\left( \left( \Vert H \Vert_{k,0}^{\ext}\right)^3
+ 3 \left( \Vert H \Vert_{k,0}^{\ext}\right)^2 \Vert K\Vert_k^{(A),\ext}
\right.
\\& \quad\quad\quad
\left. 
+ 3 \Vert H \Vert_{k,0}^{\ext} \left( \Vert K\Vert_k^{(A),\ext} \right)^2 + \left( \Vert K\Vert_k^{(A),\ext} \right)^3 \right)
\\
& \leq
\rho_0 g_{k+1}^3
\frac{1}{\eta^3}
\left( \theta + C^*_2 \rho_0 (2 C_{\mathcal{D}}+1) + \frac{1}{6} C^*_3 \rho_0^2(C_{\mathcal{D}} + 1)^3 \right)
\leq \rho_0 g_{k+1}^3
\end{align*}
and the proof is finished.
\end{proof}

\begin{remark} \label{Rem:MotivationObservableNormWeight}
Here we give some motivation for the choice of the weight for the extended norms and the choice of the extended localisation operator.

The relevant part of the flow at scale $k=0$ is
$$
H_0(\varphi) = H_0^{\es}(\varphi) + s n_0^a \nabla \varphi(a) \1_a + t n_0^b \nabla \varphi(b) \1_b.
$$
So at least on that scale one has a linear part in the observable flow.
The norm of the linear part creates the factor $l_{\obs,k} l_l$ which has to satisfy $l_{\obs,k} l_k \leq \rho^*(A)$ for the smoothness statement on $\mathbf{S}^{\ext}$ and $l_{\obs,k} l_k \leq \rho_0 \eta^k$ for the single step estimate. Thus $l_{\obs,k}$ has to include $\rho_0 \eta^k$ for $\rho_0$ small enough.

To get a contraction we have to put at least the constant part of the integrated irrelevant flow into the relevant flow. We aim to get an estimate
$$
\sum_{k=j_{ab}}^{N} \mathbf{B}K_k^{ab}
\leq C \frac{1}{|a-b|^{d+\nu}}
$$
Since
$$
\sum_{k=j_{ab}}^{N} \vert \mathbf{B}K_k^{ab} \vert
\leq \sum_{k=j_{ab}}^{N} l_{\obs,k}^{-2} \Vert K_k^{ab} \Vert_k^{(A),\ext}
\leq \sum_{k=j_{ab}}^{N} l_{\obs,k}^{-2} \rho_0 \eta^{3k}
$$
we need $L^{\frac{d}{2} j_{ab}}$ in $l_{\obs,k}$ for $k \geq j_{ab}$.

We cannot just put the constant $L^{\frac{d}{2} j_{ab}}$ in each $l_{\obs,k}$ for any $k$ since then $l_{\obs,k}l_k \leq \rho^*(A)$ cannot be satisfied for the scales where the linear part exists (at least at scale $0$). So we insert $L^{\frac{d}{2} (k \wedge j_{ab})}$ into the weight, until scale $j_{ab}$. Then we have to extract the linear part out of the irrelevant flow until coalescence to get a contraction since $\frac{l_{\obs,k+1}}{l_{\obs,k}}$ contains $L^{d/2}$ up to scale $j_{ab}$ which has to be extinguished for contraction by pulling out the linear part.

Another consequence of the inserted factor $L^{\frac{d}{2}k}$ into the weight is, that now we have to kill the growing sequence $h_k$ in $l_k$ so that the factor $2^{-k}$ appears in the weight.
\end{remark}

\subsection{A first estimate on the covariance}\label{subsec:first_Cov_bound}

Propositions \ref{Prop:Contractivity_Observables}, \ref{Prop:Bounds_on_B_ObservableFlow} and \ref{Prop:SingleStepRG_Observables} provide us with the following intermediate result: If $(H_k,K_k) \in \mathbb{D}_k(\rho_0,g_k,\Lambda)$, then we have good control of the differences $q_+^{ab} - q^{ab}$, $\lambda_+^{\alpha} - \lambda^{\alpha}$, $n_+^{\alpha} - n^{\alpha}$ and also of the observable part of $K_+$ (whose bulk part had been controlled along with the bulk coupling constants already in Proposition \ref{Prop:Existence_BulkFlow}). The following proposition links scales together via an inductive argument to conclude that $(H_k,K_k)$ remains in $\mathbb{D}_k$ for all $k \leq N$. It establishes a choice for the parameters $\rho_0$ and $C_{\mathcal{D}}$ as we had indicated above Proposition \ref{Prop:SingleStepRG_Observables}.

\begin{prop}[Existence of the observable flow]\label{Prop:Existence_ObservableFlow}
Fix $\eta \in (0,1)$. There is $L_0$ such that for all odd integers $L \geq L_0$ there are $A_0,h_0$ with the following property. For all $A \geq A_0$ and $h \geq h_0$ there is $\tilde{\epsilon}$ and $\rho_0$ (and $C_{\mathcal{D}}$) such that the flow $(\zeta_k,H_k,K_k)_{k \leq N}$ satisfies
\begin{align}
(H_k,K_k) \in \mathbb{D}_k(\rho_0,g_k,\Lambda) \label{eq:Staying_nice}
\end{align}
for any $k \leq N$.
\end{prop}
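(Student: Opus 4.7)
The argument is a scale-by-scale induction on $k \leq N$, built on top of the bulk result (Proposition \ref{Prop:Existence_BulkFlow}) and the single-step estimate (Proposition \ref{Prop:SingleStepRG_Observables}). Before starting, parameters are selected in the order: $\eta$ is fixed, then $L_0$ is taken large enough so that Propositions \ref{Prop:Smoothness_Observables}, \ref{Prop:Contractivity_Observables}, \ref{Prop:Bounds_on_B_ObservableFlow}, \ref{Prop:SingleStepRG_Observables}, \ref{Prop:Existence_BulkFlow} all apply and so that $4\eta L^{-d/2} \leq \eta$. Using \eqref{scale_transformation} and $l_k = 2^k h L^{-\frac{d}{2}k}$, this forces the critical ratio
\begin{equation*}
\frac{l_{\mathrm{obs},k+1}\, l_{k+1}}{l_{\mathrm{obs},k}\, l_k} \;\leq\; \eta
\qquad \text{for every } k\in\{0,\dots,N-1\},
\end{equation*}
uniformly across the coalescence scale $j_{ab}$. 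The constants $A_0,h_0$ are then chosen as required by the cited propositions, $C_{\mathcal{D}}$ is fixed large (its precise size is dictated by the telescoping estimate below), $\rho_0$ is chosen small enough that Proposition \ref{Prop:SingleStepRG_Observables} applies with this $C_{\mathcal{D}}$ and that $\rho_0\geq \bar\epsilon$ (the bulk constant in Proposition \ref{Prop:Existence_BulkFlow}), and finally $\tilde\epsilon \leq \bar\delta$ is small enough that the initial observable flow data belongs to $\mathbb{D}^{\mathrm{ext}}_0(\rho_0,1,\Lambda)$.

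For the base case $k=0$, we use $\lambda^a_0=\lambda^b_0=q^{ab}_0=0$ and $n_0^{\alpha}$ a unit vector in the direction $m_\alpha$, so that $\|H^a_0\|^{a}_{0,0}+\|H^b_0\|^{b}_{0,0}\leq 2\rho_0 h$. Adding the bulk contribution $\|H_0^{\varnothing}\|_{0,0}\leq \bar\epsilon$ from Proposition \ref{Prop:Existence_BulkFlow} and choosing $C_{\mathcal{D}}\geq 2h+\bar\epsilon/\rho_0+\frac{A_{\mathcal{B}}}{1-\eta}$, the relevant part fits into the ball of radius $C_{\mathcal{D}}\rho_0$. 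For the irrelevant part, the factorised form $K_0^{\mathrm{ext}}=K_0\,e^{s\nabla_{m_a}\varphi(a)\mathbbm{1}_a+t\nabla_{m_b}\varphi(b)\mathbbm{1}_b}$ together with the weight $l_{\mathrm{obs},0}=\rho_0$ absorbs the observable exponentials and reduces everything to the initial bulk bound $\|K_0^{\varnothing}\|_0^{(A)}\leq \bar\epsilon\leq\rho_0$ (and similarly for $K_0^{ab}$ with $\rho_0 g_0^3=\rho_0$).

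For the inductive step, assume $(H_k,K_k)\in \mathbb{D}^{\mathrm{ext}}_k(\rho_0,g_k,\Lambda)$. The bound on $K_{k+1}$ and on $K_{k+1}^{ab}$ is exactly the conclusion of Proposition \ref{Prop:SingleStepRG_Observables}. For the relevant part, $H_{k+1}^{\varnothing}=(H_k^{\varnothing})_+$ by item~5 of Lemma \ref{Lemma:FirstConclusions_Observables}, so the bulk bound $\|H_{k+1}^{\varnothing}\|_{k+1,0}\leq \bar\epsilon\eta^{k+1}$ comes from Proposition \ref{Prop:Existence_BulkFlow}. Field locality (preserved by the flow, again Lemma \ref{Lemma:FirstConclusions_Observables}) gives $n_{k+1}^a=n_k^a+(\mathbf{B}_k K_k^a)^1$, whence using Proposition \ref{Prop:Bounds_on_B_ObservableFlow} and $\|K_k\|_k^{(A),\mathrm{ext}}\leq \rho_0 g_k^2$,
\begin{equation*}
\|H_{k+1}^a\|^{a}_{k+1,0}
\;\leq\; \frac{l_{\mathrm{obs},k+1} l_{k+1}}{l_{\mathrm{obs},k} l_k}\,\|H_k^a\|^{a}_{k,0}
 \,+\, l_{\mathrm{obs},k+1} l_{k+1}\,l_k^{-1} l_{\mathrm{obs},k}^{-1}\,\tfrac{A_{\mathcal{B}}}{2}\rho_0 g_k^2
\;\leq\; \eta\,\|H_k^a\|^{a}_{k,0}+\tfrac{A_{\mathcal{B}}}{2}\rho_0\eta^{2k+1}.
\end{equation*}
Setting $u_k=\|H_k^a\|^{a}_{k,0}/\eta^k$, the recursion becomes $u_{k+1}\leq u_k+\tfrac{A_{\mathcal{B}}}{2}\rho_0\eta^k$, which iterates to $u_k\leq u_0+\tfrac{A_{\mathcal{B}}\rho_0}{2(1-\eta)}\leq \rho_0\bigl(h+\tfrac{A_{\mathcal{B}}}{2(1-\eta)}\bigr)$. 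Hence $\|H_k^a\|^{a}_{k,0}\leq \rho_0 g_k \bigl(h+\tfrac{A_{\mathcal{B}}}{2(1-\eta)}\bigr)$, and the same bound holds for $b$. Combining with the bulk part, $\|H_{k+1}\|^{\mathrm{ext}}_{k+1,0}\leq C_{\mathcal{D}}\rho_0 g_{k+1}$ by the chosen $C_{\mathcal{D}}$, closing the induction.

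The main technical obstacle is tracking the weight $l_{\mathrm{obs},k}$ across the coalescence scale $j_{ab}$: the ratio $l_{\mathrm{obs},k+1}/l_{\mathrm{obs},k}$ jumps from $\tfrac{\eta}{2}L^{d/2}$ (for $k<j_{ab}$) to $2\eta$ (for $k\geq j_{ab}$), and one must verify that, combined with the ratio $l_{k+1}/l_k=2L^{-d/2}$ from the field weights, the product stays $\leq\eta$ in both regimes --- which is the motivation for the condition $4\eta L^{-d/2}\leq \eta$ forced above. Once this uniform contraction is in place, the rest of the argument reduces to a clean telescoping of the $\mathbf{B}_k K_k$ corrections, with the parameter chain $\eta \to L_0 \to A_0, h_0 \to C_{\mathcal{D}} \to \rho_0\to\tilde\epsilon$ closing without circularity.
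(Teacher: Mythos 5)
Your proposal is correct and follows essentially the same route as the paper: induction over scales, with the base case coming from the initial data (and shrinking $\tilde{\epsilon}$ so the bulk flow sits inside the $\rho_0$-balls), the relevant observable part controlled via Proposition \ref{Prop:Bounds_on_B_ObservableFlow} together with the weight ratio $l_{\obs,k+1}l_{k+1}/(l_{\obs,k}l_k)\leq\eta$, and the irrelevant part handled by Proposition \ref{Prop:SingleStepRG_Observables}. Writing the relevant-part estimate as a one-step recursion in $u_k$ rather than the paper's explicit telescoping sum over $l\leq k\wedge(j_{ab}-1)$ is only a cosmetic difference (up to a harmless missing factor $d$ from summing over coordinates in the constant entering $C_{\mathcal{D}}$).
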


\begin{proof}
Let $L_0$ be large enough such that Propositions \ref{Prop:Smoothness_Observables}, \ref{Prop:Contractivity_Observables} and \ref{Prop:SingleStepRG_Observables} hold.

The proof of \eqref{eq:Staying_nice} is by induction on $k$ with the induction hypothesis
\begin{align*}
(IH)_k: \text{ for all } l \leq k, \, (H_l,K_l)\in \mathbb{D}_l(\rho_0,g_l,\Lambda).
\end{align*}
Note that by Proposition \ref{Prop:Existence_BulkFlow} the bulk flow satisfies
$$
\left(H_k^{\es},K_k^{\es}\right) \in B_{\bar{\epsilon}\eta^k}(0) \times B_{\bar{\epsilon}\eta^{2k}}(0)
$$
if $\mathcal{K}\in B_{\bar{\delta}}$. Furthermore, $\bar{\epsilon}$ can be made arbitrarily small by decreasing $\bar{\delta}$.
\begin{itemize}
\item Base clause $k=0$: We show that $H_0 \in B_{C_{\mathcal{D}}\rho_0}$ and $K_0 \in B_{\rho_0}$.
First, we have that, for $\alpha \in \lbrace a,b \rbrace$,
\begin{align*}
\Vert H_0^{\alpha} \Vert_{0,0} = l_{\obs,0} l_0 |n_0^{\alpha}| = \rho_0 h
\end{align*}
and thus
\begin{align*}
\Vert H_0 \Vert_{0,0}^{\ext}
= \Vert H_0^{\es} \Vert_{0,0} + \Vert H_0^a\Vert_{0,0} + \Vert H_0^b \Vert_{0,0}
\leq
\Vert H_0^{\es} \Vert_{0,0}
+ 2 \rho_0 h.
\end{align*}
Choose $\tilde{\epsilon}$ sufficiently small such that $\mathcal{K} \in B_{\tilde{\epsilon}}(0)$ implies $H_0^{\es} \in B_{\rho_0}(0)$. Let $C_{\mathcal{D}} \geq 1 + 2h$. Then
$$
\Vert H_0 \Vert_{0,0}^{\ext}
\leq C_{\mathcal{D}} \rho_0.
$$

To estimate $K_0$, note that
\begin{align*}
K_0(\varphi)
&= e^{s \nabla_{m_a} \varphi(a) \1_a + t \nabla_{m_b} \varphi(b) \1_b} K_0^{\es} (\varphi)
= e^{s \nabla_{m_a} \varphi(a) \1_a + t \nabla_{m_b} \varphi(b) \1_b}
e^{H_0^{\es}(\varphi)} \mathcal{K}(\varphi)
\\
&= e^{H_0^{\es} + s \nabla_{m_a} \varphi(a) \1_a + t \nabla_{m_b} \varphi(b) \1_b} \mathcal{K}
\\ &
= K_0^{\es}\left(\mathcal{K}, \mathcal{H} + s \nabla_{m_a} \varphi(a) \1_a + t \nabla_{m_b} \varphi(b)\1_b\right).
\end{align*}
Choose $\tilde{\epsilon}$ small enough such that $\mathcal{K} \in B_{\tilde{\epsilon}}(0)$ implies that $\mathcal{H} + s \nabla_{m_a} \varphi(a) \1_a + t \nabla_{m_b} \varphi(b)\1_b$ in turn is small enough such that
$$
K_0^{\es}(\mathcal{K}, \mathcal{H} + s \nabla_{m_a} \varphi(a) \1_a + t \nabla_{m_b} \varphi(b)\1_b) \in B_{\rho_0}(0)
$$
(use Lemma 12.2 in \cite{ABKM} for verification).

\item Induction hypothesis:
\begin{align*}
\forall \, 0 \leq l \leq k \quad (IH)_l \quad \text{holds.}
\end{align*}
\item Induction step:

For $\alpha \in \lbrace a,b \rbrace$, the following formula for the relevant observable flow holds:
$$
H_{k+1}^{\alpha} = \sum_{i=1}^d \left( \delta_{m^{\alpha}}(i) + \sum_{l=0}^{k \wedge (j_{ab}-1)} (\mathbf{B}_l K_l^{\alpha})^1_i \right) \nabla_i \varphi(\alpha).
$$
We use Proposition \ref{Prop:Bounds_on_B_ObservableFlow} and the induction hypothesis to estimate
\begin{align*}
\Vert H_{k+1}^{\alpha} \Vert_{k+1,0}^{\alpha}
& \leq l_{\obs,k+1} l_{k+1} \sum_{i=1}^d \left( \delta_{m^{\alpha}}(i) + \sum_{l=0}^{k \wedge (j_{ab}-1)} \left| (B K_l^{\alpha})^1_i \right| \right)
\\
& \leq \rho_0 g_{k+1} h \left(
1 + \frac{A_{\mathcal{B}}}{2} d \sum_{l=0}^{k \wedge (j_{ab}-1)} l_{\obs,l}^{-1} l_l^{-1} \Vert K_l \Vert_l^{(A),\ext}
\right)
\\
& \leq \rho_0 g_{k+1} h \left(
1 + \frac{A_{\mathcal{B}}}{2} h^{-1} d \sum_{l=0}^{k \wedge (j_{ab}-1)} (\rho_0 g_l)^{-1} \rho_0 g_{l}^2
\right)
\\
& \leq \rho_0 g_{k+1} h \left(
1 + \frac{A_{\mathcal{B}}}{2} h^{-1} d \sum_{l=0}^{\infty} \eta^l
\right).
\end{align*}
Let $C_{\mathcal{D}} \geq 1 + 2 h + A_{\mathcal{B}} d \frac{1}{1-\eta}$ and choose $\tilde{\epsilon}$ small enough such that $\mathcal{K} \in B_{\tilde{\epsilon}}(0)$ implies $H_k^{\es} \in B_{\rho_0 \eta^k}$. Then
\begin{align*}
\Vert H_{k+1} \Vert_{k+1,0}^{\ext}
&\leq \rho_0 \eta^{k+1} + 2 \rho_0 g_{k+1} h \left( 1 + \frac{A_{\mathcal{B}}}{2} d h^{-1} \frac{1}{1-\eta} \right)
\\ &
 \leq \rho_0 g_{k+1} \left( 1 + 2 h + A_{\mathcal{B}} d \frac{1}{1-\eta} \right)
 \leq C_{\mathcal{D}}\rho_0 g_{k+1}.
\end{align*}

For the estimate on $K_{k+1}$ we use Proposition \ref{Prop:SingleStepRG_Observables}. We can apply it by induction hypothesis and we obtain exactly what we want.
\end{itemize}

\end{proof}

From this result we can conclude a first estimate on the covariance.

\begin{prop}\label{Prop:Cov_first}
Fix $\eta \in \left(0,\frac{1}{4}\right)$. Then there is $L_1$ such that for all odd integers $L \geq L_1$ and the corresponding $A_0,h_0$ there is $\tilde{\epsilon}>0$ with the following property. For all $\mathcal{K} \in B_{\tilde{\epsilon}}\subset \mathbf{E}_{\zeta}$
\begin{align}
&\int e^{H_0(\varphi)}\circ K_0(\varphi) \mu_{\mathcal{C}^q}(\de\varphi)
= e^{\zeta_N} \left( e^{H_N(0)} + K_N(0) \right),
\label{equ:flow_for_observables}
\\
& \text{with} \quad
\zeta_N = st q_N^{ab} + s \lambda_N^a + t \lambda_N^b
\nonumber
\end{align}
where $(\zeta_k,H_k,K_k)$ is the flow from Proposition \ref{Prop:Existence_ObservableFlow}.
The term $q_N^{ab}$ can be written as follows:
\begin{align}
&q_N^{ab} = \Big( \delta_{m_a} + S_{j_{ab}}^a \Big) \Big( \delta_{m_b} + S_{j_{ab}}^b \Big) \nabla^* \nabla C^q(a,b) + \tilde{R}_{ab},
\label{eq:Cov_first}
\\ &\text{with}\quad
S_{j_{ab}}^{\alpha} = \sum_{k=0}^{j_{ab}-1} \left( \mathbf{B}K_k^{\alpha} \right)^1
\label{eq:formula_for_S_jab},
\end{align}
and there is $\tilde{C}_1$ such that for $0<\nu \leq -\frac{\ln (4 \eta)}{\ln L}$
\begin{align*}
\left|\tilde{R}_{ab}\right|
\leq \tilde{C}_1 \frac{1}{|a-b|^{d+\nu}}.
\end{align*}
Moreover, $\lambda_N^{\alpha}$ is uniformly bounded in $N$.

\end{prop}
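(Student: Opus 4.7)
My plan is to derive Proposition~\ref{Prop:Cov_first} by iterating the single-scale identity of Lemma~\ref{Lemma:FirstConclusions_Observables}(3) and reading off the structure of the accumulated coupling constants. Starting from $\zeta_0=0$, that identity applied inductively over the scales $k=0,\ldots,N-1$ and composed with the finite-range decomposition $\mu_{\mathcal{C}^q}=\mu_1\ast\cdots\ast\mu_{N,N}$ yields
$$
\int (e^{H_0}\circ K_0)(\Lambda_N,\varphi)\,\mu_{\mathcal{C}^q}(\de\varphi)=e^{\zeta_N}(e^{H_N}\circ K_N)(\Lambda_N,0).
$$
At the terminal scale $\Lambda_N$ is itself a single $N$-block, so the circ product collapses to $e^{H_N(0)}+K_N(0)$, giving \eqref{equ:flow_for_observables}. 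Proposition~\ref{Prop:Existence_ObservableFlow} ensures the flow is well-defined and stays in $\mathbb{D}_k^{\ext}(\rho_0,g_k,\Lambda)$ for every $k\le N$, so every object on the right-hand side is controlled.

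To extract the Gaussian part of $q_N^{ab}$, I use Lemma~\ref{lemma:Further_properties_Observables}. From the initial data $H_0^\alpha(\varphi)=\nabla_{m_\alpha}\varphi(\alpha)\1_\alpha$ we have $n_0^\alpha=\delta_{m_\alpha}$, and since the sum defining $n_l^\alpha$ caps at $(l-1)\wedge(j_{ab}-1)$, we have $n_l^\alpha=\delta_{m_\alpha}+S^\alpha_{j_{ab}}$ for every $l\ge j_{ab}$. A direct Gaussian computation then gives
$$
\int H_l^a H_l^b\,\de\mu_{l+1}=(\delta_{m_a}+S^a_{j_{ab}})(\delta_{m_b}+S^b_{j_{ab}})\,\nabla_{m_b}^*\nabla_{m_a}C_{l+1}(a,b)
$$
for these $l$. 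Summing over $l\in\lbrace j_{ab},\ldots,N-1\rbrace$ and using the finite-range property $C_l(a,b)=0$ for $l\le j_{ab}$, the sum $\sum_{l=j_{ab}}^{N-1}C_{l+1}(a,b)$ telescopes to $C^q_{\Lambda_N}(a,b)$, producing the leading term of \eqref{eq:Cov_first}; the leftover is $\tilde{R}_{ab}=\sum_{l=j_{ab}}^{N-1}\mathbf{B}_l K_l^{ab}$.

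The core of the argument is the estimate on $\tilde{R}_{ab}$. Proposition~\ref{Prop:Bounds_on_B_ObservableFlow} combined with $\Vert K_l^{ab}\Vert_l^{(A),\ext}\le\rho_0 g_l^3$ from Proposition~\ref{Prop:Existence_ObservableFlow} gives $|\mathbf{B}_l K_l^{ab}|\le\tfrac{A_{\mathcal{B}}}{2}\,l_{\obs,l}^{-2}\rho_0\eta^{3l}$. Substituting the definition of $l_{\obs,l}$ for $l\ge j_{ab}$ and collecting the factors $2^{-l},4^{l-j_{ab}},L^{dj_{ab}/2}$ against $\eta^{3l}$, one reaches a bound proportional to $L^{-dj_{ab}}\,16^{j_{ab}}\,(\eta/4)^l$. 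Summing the geometric series in $l$ produces a tail of order $L^{-dj_{ab}}(4\eta)^{j_{ab}}=L^{-j_{ab}(d+\nu)}$ with $\nu=-\log_L(4\eta)>0$, positive precisely because $\eta<\tfrac14$. Since $L^{j_{ab}}\le 2|a-b|<L^{j_{ab}+1}$, this converts to $|\tilde{R}_{ab}|\le C|a-b|^{-(d+\nu)}$.

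The uniform bound on $\lambda_N^\alpha=\sum_{l=0}^{N-1}(\mathbf{B}_l K_l^\alpha)^0$ is analogous but simpler: $|(\mathbf{B}_l K_l^\alpha)^0|\le\tfrac{A_{\mathcal{B}}}{2}\,l_{\obs,l}^{-1}\rho_0\eta^{2l}$, and substituting $l_{\obs,l}$ in each of the two regimes $l<j_{ab}$ and $l\ge j_{ab}$ produces absolutely convergent geometric series in $l$ whose partial sums are uniformly bounded in $N$, provided $L\ge L_1$ is large enough that $2L^{-d/2}<1$. The same reasoning applied to $(\mathbf{B}_l K_l^\alpha)^1$ (with the extra factor $l_l^{-1}$) bounds $S^\alpha_{j_{ab}}$ uniformly as well. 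I anticipate the main obstacle to be the careful bookkeeping of the scale-dependent weights $l_{\obs,l}$ across the transition at $l=j_{ab}$: the various powers of $2,4,L^{d/2}$ and $\eta$ must cancel exactly the right way to yield the exponent $\nu$, and this cancellation is precisely what dictates the smallness condition on $L^{-1}$ governing the summability of $\tilde{R}_{ab}$.
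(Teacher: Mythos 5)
Your proposal follows essentially the same route as the paper's own proof: the identity \eqref{equ:flow_for_observables} and the formula for $q_N^{ab}$ are read off from Lemma \ref{Lemma:FirstConclusions_Observables}/Lemma \ref{lemma:Further_properties_Observables} together with Proposition \ref{Prop:Existence_ObservableFlow}, and $\tilde{R}_{ab}=\sum_{l\geq j_{ab}}\mathbf{B}_l K_l^{ab}$ as well as $\lambda_N^{\alpha}$ are estimated exactly as in the paper by combining Proposition \ref{Prop:Bounds_on_B_ObservableFlow} with $K_l^{ab}\in B_{\rho_0 g_l^3}$ and the definition of $l_{\obs,l}$, yielding the factor $L^{-dj_{ab}}(4\eta)^{j_{ab}}$ and hence the exponent $d+\nu$. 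The bookkeeping of the weights across the coalescence scale is carried out correctly, so the argument is sound and not a genuinely different proof.
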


\begin{proof}
The formulas \eqref{equ:flow_for_observables}, \eqref{eq:Cov_first} and \eqref{eq:formula_for_S_jab} follow from Proposition \ref{Prop:Existence_ObservableFlow} and Lemma~\ref{lemma:Further_properties_Observables} with
$$
\tilde{R_{ab}} = \sum_{k=j_{ab}}^{N-1} \int K_k^{ab}(\xi) \mu_{k+1}(\de\xi).
$$
Fix $\eta < \frac{1}{4}$. Choose $L_1$ large enough such that $\theta < \eta^3$, and that Proposition \ref{Prop:Existence_ObservableFlow} can be applied. Then there is $\tilde{\epsilon}>0$ such that for all $\mathcal{K} \in B_{\tilde{\epsilon}}(0)$ we can estimate:
\begin{align*}
\left| \tilde{R}_{ab} \right|
&\leq 
\sum_{l= j_{ab}}^{N} \left| \int K_l^{ab} \de \mu_{l+1} \right|
\leq
\frac{A_{\mathcal{B}}}{2} \sum_{l= j_{ab}}^{N} l_{\obs,l}^{-2} \left\Vert K_l^{ab} \right\Vert_l^{(A),\ext} 
\\
& \leq 
\frac{A_{\mathcal{B}}}{2} \rho_0^{-1}  L^{-d j_{ab}} \sum_{l= j_{ab}}^{N} 4^{-2(l - j_{ab})} 4^l g_l
\leq
\frac{A_{\mathcal{B}}}{2} \rho_0^{-1}  L^{-d j_{ab}} (4 { \eta})^{j_{ab}} \sum_{k=j_{ab}}^{N} 16^{-(l-j_{ab})}
\\
& \leq
\frac{A_{\mathcal{B}}}{2} \rho_0^{-1}  L^{-d j_{ab}} (4{\eta})^{j_{ab}} \sum_{k=0}^{\infty} 16^{-k} =
\frac{A_{\mathcal{B}}}{2} \rho_0^{-1} L^{-d j_{ab}} (4{\eta})^{j_{ab}} \frac{1}{1-1/16}.
\end{align*}
If $\eta < \frac{1}{4}$ there is additional decay on terms of $|a-b|$ due to $(4 \eta)^{j_{ab}}$:
$$
(4 \eta)^{j_{ab}} \leq (4 \eta)^{\log_L(2|a-b|)} = (2|a-b|)^{\frac{\ln(4\eta)}{\ln L}}
$$
and so
$$
\left( L^{-d} 4 \eta \right)^{j_{ab}}
 \leq (2|a-b|)^{-(d-\frac{\ln(4 \eta)}{\ln L})}
 \leq (2|a-b|)^{-(d+\nu)}
$$
for $0 < \nu \leq - \frac{\ln(4 \eta)}{\ln L}$. This gives
$$
\left|\tilde{R}_{ab}\right| \leq C \frac{1}{|a-b|^{d + \nu}}.
$$
The uniform bound on $\lambda_N^{\alpha}$ follows similarly.
\end{proof}

\subsection{A refined estimate on the covariance}
\label{subsec:refined_estimate}

Proposition \ref{Prop:Cov_first} can be used to show that
\begin{align*}
&\Cov_{\gamma_{N,\beta}^u} \left( \nabla_{m_a} \varphi(a), \nabla_{m_b} \varphi(b) \right)
%\\& \quad\quad\quad\quad
=
q_N^{ab} + \mathcal{O}\left(2^N\right)
\\&
 \quad\quad\quad\quad\quad\quad
=
\left( \delta_{m_a} + S_{j_{ab}}^a \right)
\left( \delta_{m_b} + S_{j_{ab}}^b \right)
\nabla^* \nabla C^q(a,b) + \tilde{R}_{ab} + \mathcal{O}\left(2^N\right).
\end{align*}

The goal of this subsection is to establish an improved formula for $q_N^{ab}$, namely
$$
q_N^{ab} = \nabla^*_{m_b} \nabla_{m_a} C^q(a,b) + R_{ab},
\quad \text{with} \quad
\left| R_{ab} \right| \leq C \frac{1}{|a-b|^{d + \nu}}.
$$
This estimate follows from formula \eqref{eq:Cov_first} if we can show that
$$
\left| S^{\alpha}_{j_{ab}} \nabla^* \nabla C^q(a,b) \right|
\leq C \frac{1}{|a-b|^{d + \nu}}.
$$

We analyse the dependence of $S_{j_{ab}}^{\alpha}$ on $j_{ab}$ as $j_{ab} \rightarrow \infty$ in order to obtain the desired bound. Precisely, we prove the following.

\begin{prop}\label{Prop:Cov_better}
Under the assumptions of Proposition \ref{Prop:Cov_first} there is a constant~$C$ which depends on $A_{\mathcal{B}}$, $h$, and $\eta$ such that
\begin{align*}
S^a_{j_{ab}}, S^b_{j_{ab}} \leq C\eta^{j_{ab}}.
\end{align*}
\end{prop}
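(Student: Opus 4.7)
The bound $|(\mathbf{B}K_k^a)^1| \leq \tfrac{A_{\mathcal{B}}}{2h}\eta^k$ coming from Propositions~\ref{Prop:Bounds_on_B_ObservableFlow} and~\ref{Prop:Existence_ObservableFlow} yields only $|S^a_{j_{ab}}| = \mathcal{O}(1)$, so the required decay $\mathcal{O}(\eta^{j_{ab}})$ must arise from cancellation between the summands $(\mathbf{B}K_k^a)^1$, $k = 0, \ldots, j_{ab}-1$. To expose this cancellation, the plan is to compare the two-observable partial sum $S^a_{j_{ab}}$ with the full infinite-volume sum of a \emph{single-observable flow}.

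\textbf{Single-observable flow and matching.} I would redo the construction of Subsection~\ref{sec:Extension_RG_map} carrying only one observable at $a$ (no $b$, no $ab$, no $st$-coupling). Formally this corresponds to taking $j_{ab} = \infty$ everywhere: the observable weight becomes $\tilde l_{\obs,k} = \rho_0 \eta^k 2^{-k} L^{\frac{d}{2}k}$ and the projection $\Pi^a_k = \Pi_1$ at every scale. All statements of Subsection~\ref{sec:Extension_RG_map}, in particular the single-step estimate (Proposition~\ref{Prop:SingleStepRG_Observables}) and the existence result (Proposition~\ref{Prop:Existence_ObservableFlow}), carry over verbatim. This yields a flow $(H_k^{\text{single}}, K_k^{\text{single}})_{k \in \N}$ with $|(\mathbf{B}K_k^{a,\text{single}})^1| \leq \tfrac{A_{\mathcal{B}}}{2h}\eta^k$, so that
\begin{align*}
S^{a,\text{single}}_\infty := \sum_{k = 0}^\infty (\mathbf{B}K_k^{a,\text{single}})^1
\end{align*}
converges absolutely. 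For $k < j_{ab}$ the weight $l_{\obs,k}$ of Definition~\ref{Def:ObservableNormWeight} coincides with $\tilde l_{\obs,k}$, and the projection $\Pi^a_k = \Pi_1$ is the same in both flows; by Lemma~\ref{Lemma:FirstConclusions_Observables}(4) the $a$-update in the two-observable flow is independent of $(H^b, K^b, K^{ab})$. Consequently the two $a$-flows coincide for all $k < j_{ab}$, so that $(\mathbf{B}K_k^a)^1 = (\mathbf{B}K_k^{a,\text{single}})^1$ and $S^a_{j_{ab}} = S^{a,\text{single}}_{j_{ab}}$.

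\textbf{Identification $S^{a,\text{single}}_\infty = 0$.} This is the crux of the argument. I would apply Theorem~\ref{Thm:RepresentationPartitionFunction} to $\mathcal{Z}_N(u, s f_M)$ where $f_M(x) = L^{-M(d+2)/2}g(L^{-M}(x-a))$ is a macroscopically smooth approximation of $\nabla^*_{m_a} \mathbf{1}_a$, for which the theorem gives a representation in terms of $\tfrac{s^2}{2}(f_M, \mathcal{C}^q f_M)$ with remainder $\mathcal{O}(\eta^N)$. In parallel, the single-observable RG flow produces a second representation whose linear-in-$s$ coefficient is controlled by $n^{a,\text{single}}_N$. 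Equating both representations and passing first to $N \to \infty$ and then $M \to \infty$ forces $n^{a,\text{single}}_\infty = \delta_{m_a}$, equivalently $S^{a,\text{single}}_\infty = 0$.

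\textbf{Conclusion.} Putting the pieces together,
\begin{align*}
S^a_{j_{ab}} \;=\; S^{a,\text{single}}_{j_{ab}} - S^{a,\text{single}}_\infty \;=\; -\sum_{k \geq j_{ab}} (\mathbf{B}K_k^{a,\text{single}})^1,
\qquad
|S^a_{j_{ab}}| \;\leq\; \frac{A_{\mathcal{B}}}{2h(1-\eta)}\, \eta^{j_{ab}},
\end{align*}
and the analogous bound for $S^b_{j_{ab}}$ follows by interchanging the roles of $a$ and $b$. The principal obstacle is the identification step: reconciling the singular observable $\nabla^*_{m_a} \mathbf{1}_a$ with the class of macroscopic test functions to which Theorem~\ref{Thm:RepresentationPartitionFunction} directly applies, and verifying that the identification $n^{a,\text{single}}_\infty = \delta_{m_a}$ survives the double limit $N \to \infty$, $M \to \infty$ uniformly in the approximation parameters.
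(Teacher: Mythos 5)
Your overall architecture is the same as the paper's: pass to a single-observable flow (equivalently $n_0^b=0$, $j_{ab}=\infty$), note via Lemma \ref{Lemma:FirstConclusions_Observables}(4) that the $a$-flow is unchanged below the coalescence scale, show that the full sum $\sum_{k\geq 0}(\mathbf{B}_kK_k^{a})^1$ vanishes, and read off $S^a_{j_{ab}}$ as a tail of size $\mathcal{O}(\eta^{j_{ab}})$. (One smaller omission: to speak of the infinite sum at all you need the flow for all $k\in\N$, i.e.\ the extension to $\Z^d$ via the $(\Z^d)$-property and the volume-independence of $n_k^{a,\Lambda}$ for $k\leq N(\Lambda)$, which the paper establishes in Lemma \ref{lemma:indep_volume_observable}; your $(H_k^{\text{single}},K_k^{\text{single}})_{k\in\N}$ silently assumes this.)

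The genuine gap is exactly at the step you flag as the ``principal obstacle'', and the route you propose for it would not go through as stated. First, Theorem \ref{Thm:RepresentationPartitionFunction} controls the remainder $Z_N^{\es}(\mathcal{K},\mathcal{C}^q f)$ only for test functions scaled macroscopically with the torus, $f = g_N - c_N$ with $g_N(x)=L^{-N\frac{d+2}{2}}g(L^{-N}x)$; for a \emph{fixed} smooth approximation $f_M$ of $\nabla^*_{m_a}\1_a$ the relevant norm $|\mathcal{C}^q f_M|_{N,\Lambda_N}$ diverges as $N\to\infty$, so your order of limits (first $N\to\infty$ at fixed $M$) is not uniformly controlled — this is precisely the obstruction the paper points out in Section \ref{sec:SettingResults} when explaining why Theorem 2.4 of \cite{Hil19_1} cannot be applied to rough $f_{ab}$. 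Second, and more structurally, the quantity you need, $n_N^{a}$, is the coefficient of the field monomial $\nabla\varphi(a)$ in $H_N$; it is invisible in the value of the partition function at zero field and hence is not ``the linear-in-$s$ coefficient'' of any representation of $\mathcal{Z}_N(u,sf_M)$. The paper's resolution is different from smoothing the observable: it smears the \emph{position} of the point observable against $g_N(x)=L^{-N\frac{d}{2}}g(L^{-N}x)$, i.e.\ studies $\sum_x g_N(x)\,\partial_s\vert_{s=0}\ln Z_N(\varphi;x)$, and takes a directional derivative in the field direction $h_N=\mathcal{C}^q\nabla_j g_N$ to bring $n_N^a$ down, paired against a quantity converging to $(\partial h,g)_{L^2(\T^d)}$. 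This requires an additional ingredient absent from your proposal: the translation property of the single-observable flow (that $n_k^x$ is independent of the placement $x$, proved by an induction on the flow, Lemma 4.16 in the paper), which is what allows the common factor $n_N^a$ to be pulled out of the sum over $x$. The comparison with the bulk representation is then done at the single coupled scale $N$, with all error terms ($K_N^x$, $DZ_N^{\es}$, $D^2Z_N^{\es}$) bounded by $\eta^N$ times uniformly bounded field norms, so only one limit $N\to\infty$ is taken and the identification $n_\infty = n_0^a$ follows by uniqueness of limits. Without this mechanism (or a genuinely new argument controlling the double limit), your identification $S^{a,\text{single}}_\infty=0$ remains unproved, and the rest of your conclusion, while correct in form, has no foundation.
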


We start by motivating the ideas of the proof in the following section. Afterwards, the rigorous proof follows.

\subsubsection{Motivation for the proof of Proposition \ref{Prop:Cov_better}}

Using the results in Subsection \ref{subsec:first_Cov_bound} we can construct sequences $(n_k^a, n_k^b)_{k \leq j_{ab}}$ and $(q_k^{ab})_{k \leq N}$ with a coalescence scale $j_{ab}$ and
$$
n_k^{\alpha} = n_0^{\alpha} + \sum_{l=0}^{k-1} \left( \mathbf{B}_l K_l^{\alpha} \right)^1
= n_0^{\alpha} + S_k^{\alpha}.
$$
The goal is to analyse the dependence of $n_{j_{ab}}^{\alpha}$ on $j_{ab}$ as $j_{ab} \rightarrow \infty$. The key steps in the proofs are:
\begin{itemize}
	\item \textit{Single observable flow:} From 4.\ in Lemma \ref{Lemma:FirstConclusions_Observables} we can deduce that $n_k^a$ is independent of $(n_l^b)_{l\leq k}$. In particular we can choose $n_0^b = 0$ without changing the flow $n_k^a$. In this case we regard the observable at $b$ as being absent, so the concept of coalescence becomes vacuous. We use the convention that in this case $j_{ab}=\infty$.
If $n_0^b=0$ then no $b$-term or $ab$-term arise in the flow. Nevertheless, the estimates on $\mathbf{B}K^a$ and $K^a$ hold as before.

\item \textit{Extension to an infinite sequence:} We show that $(n_k^a)_{k \leq j_{ab} \wedge N}$ is independent of the size of the torus $\Lambda$. This allows us to extend the flow to an infinite sequence $n_k^{a,\Z^d}$ which can be written as
$$
n_k^{a,\Z^d} = n_0^a + \sum_{l=0}^{k-1} \left(\mathbf{B}_l K_l^{a,\Z^d} \right)^1.
$$

\item \textit{Convergence of the sequence:} A subtle argument shows that $n_k^{a,\Z^d} \rightarrow n_0^a$ and from this convergence we can deduce that
$$
\sum_{k=0}^{\infty} \left( \mathbf{B}_l K_l^{a,\Z^d} \right)^1 = 0,
\quad \mbox{and thus} \quad
\sum_{k=0}^{m-1} \left( \mathbf{B}_l K_l^{a,\Z^d} \right)^1 = \mathcal{O} \left(\eta^m\right).
$$
\item \textit{Back to finite volume:} If $\mathbf{B}_k K_k^{a,\Z^d} = \mathbf{B}_k K_k^{a,\Lambda}$ holds for any $k \leq j_{ab}-1$, then
$$
\sum_{k=0}^{j_{ab}-1} \left( \mathbf{B}_k K_k^{a,\Lambda} \right)^1 = \mathcal{O} \left(\eta^{j_{ab}}\right).
$$
\end{itemize}

~\\
The computation of the limit of $n_k^{a,\Z^d}$ can be motivated as follows.

From the result on the scaling limit in Theorem 2.1 in \cite{Hil19_1} we know that the Gaussian covariance~$\mathcal{C}^q$ arises without any correction term. We try to establish a connection to this result by smoothing the observable flow. Namely we will consider
$$
\int n_0^a (\nabla (\varphi + \xi), g_N) F_0^{\es}(\varphi + \xi) \mu_{\mathcal{C}^q}(\de\xi)
$$
for a suitable chosen $g_N$ (as in Theorem 2.1 in \cite{Hil19_1}). Here, we denote $F_0^{\es} = e^{H_0^{\es}}\circ K_0^{\es}$ the bulk flow.

On the one hand we can write this expression as
$$
\sum_x g_N(x) \int n_0^a \nabla (\varphi + \xi)(x) F_0^{\es}(\varphi + \xi) \mu_{\mathcal{C}^q}(\de\xi),
$$
which can be related to the observable flow if we show that the flow of coefficients $n_k^a$ is independent of the placing of the observable $a \in \Lambda$. Let us include the choice of a placing $a \in \Lambda$ in the notation as $Z_N(\varphi;a)$. Then
\begin{align*}
&\int n_0^a (\nabla (\varphi + \xi), g_N) F_0^{\es}(\varphi + \xi) \mu_{\mathcal{C}^q}(\de\xi)
\\
&\quad\quad\quad= \sum_x g_N(x)
\partial_s \big\vert_{s=0} \ln \int e^{s n_0^a \nabla(\varphi+\xi)(x)}F_0^{\es}(\varphi + \xi) \mu_{\mathcal{C}^q}(\de\xi)
\\
&\quad\quad\quad= \sum_x g_N(x)
 \partial_s \big\vert_{s=0} \ln Z_N(\varphi;x).
\end{align*}
On the other hand we can relate the original expression to the bulk flow and the scaling limit as follows:
\begin{align*}
&\int n_0^a (\nabla (\varphi + \xi), g_N) F_0^{\es}(\varphi + \xi) \mu_{\mathcal{C}^q}(\de\xi)
\\
& \quad\quad\quad=
n_0^a \partial_f \left[
\int e^{(\varphi + \xi,f)} F_0^{\es}(\varphi + \xi) \mu_{\mathcal{C}^q}(\de\xi)
\right]_{f=0}(\nabla^* g_N).
\end{align*}

\subsubsection{Proof of Proposition \ref{Prop:Cov_better}}

The procedure described above will be implemented here.

\paragraph{Single observable flow} 
Let $(H_k,K_k)_{k \leq N}$ be the flow from Section \ref{subsec:first_Cov_bound} with initial data $n_0^a = \delta_{m_a}$ and $n_0^b = \delta_{m_b}$ which satisfies $(H_k,K_k) \in \mathbb{D}_k (\rho_0,g_k,\Lambda)$. Remember from Lemma \ref{Lemma:FirstConclusions_Observables} that $n_+^a$ is independent of $n^b$, $K^b$ and $K^{ab}$. Thus we can consider the initial datum $n_0^b = 0$ without changing the $n_k^a$-flow. Moreover, no $b$- and $ab$-term will ever arise. We summarize the properties in the following lemma.

\begin{lemma}
\begin{enumerate}
	\item Let $n_0^{\alpha} \in \lbrace 0,\delta_m \rbrace$. For any $k \leq j_{ab} \wedge N$, the term $n_k^a$ is independent of $(n_l^b)_{l \leq k}$.
	\item If $n_0^{\alpha} = 0$ then $H_k^{\alpha} = 0 = K_k^{\alpha}$ for all $k \leq N$.
	%For $\alpha=a,b$, if $H^{\alpha}=0$ and $K^{\alpha}(X)=0$ for all $X \in \mathcal{P}$, then $(\mathbf{B}K)^{\alpha} = 0 = (\mathbf{B}K)^{ab}$ and $K_+^{\alpha}(U) = 0 = K_+^{\alpha}(U)$ for all $U \in \mathcal{P}_+$.
\end{enumerate}
\end{lemma}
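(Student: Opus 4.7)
The plan is to prove both statements by induction on the scale $k$, using the single-step field-locality assertion of Lemma~\ref{Lemma:FirstConclusions_Observables}(4) together with the algebraic structure of the quotient space $\mathcal{N} = \mathcal{N}^{\es} \oplus \mathcal{N}^a \oplus \mathcal{N}^b \oplus \mathcal{N}^{ab}$, in which $s^2 = t^2 = 0$.

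For part~(1), I would strengthen the induction hypothesis as follows: for every $l \leq k$, the quantities $n_l^a$, $\lambda_l^a$, $H_l^a$, and $K_l^a$ are independent of the initial $b$-data $n_0^b$ (equivalently, of $H_0^b$, $K_0^b$, $K_0^{ab}$). The base case is immediate from the explicit forms of $H_0$ and $K_0$. For the inductive step, the recursions from Lemma~\ref{lemma:Further_properties_Observables} give $n_{l+1}^a - n_l^a = (\mathbf{B}_l K_l^a)^1$, $H_{l+1}^a - H_l^a = (\mathbf{B}_l K_l^a)^1 \nabla \varphi(a) \1_a$, and $\lambda_{l+1}^a - \lambda_l^a = (\mathbf{B}_l K_l^a)^0$; since $\mathbf{B}_l$ is linear and acts on the $a$-component alone, each of these is determined by $K_l^a$ and hence independent of the $b$-data by hypothesis. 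For $K_{l+1}^a = \pi^a \mathbf{S}_l^{\ext}(H_l, K_l)$ I would appeal to the algebraic observation that the $a$-component of any product in $\mathcal{N}$ is produced only by configurations in which all factors lie in $\mathcal{N}^{\es} \oplus \mathcal{N}^a$ with exactly one factor in $\mathcal{N}^a$: any factor in $\mathcal{N}^b$ or $\mathcal{N}^{ab}$ would need to be multiplied by a factor containing $s$, but $st \cdot s = 0 = s^2 \cdot t$. Applied to the explicit formula for $\mathbf{S}_k^{\ext}$ and its exponential prefactor, this shows that $\pi^a \mathbf{S}_k^{\ext}(H, K)$ depends only on $H^{\es}, H^a, K^{\es}, K^a$, closing the induction. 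The restriction $k \leq j_{ab}\wedge N$ is natural: once $k \geq j_{ab}$ one has $\Pi_k^a = \Pi_0$, so $(\mathbf{B}_k K_k^a)^1 = 0$ and $n_k^a$ stops evolving.

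For part~(2), fix $\alpha \in \{a, b\}$ with $n_0^{\alpha} = 0$. I would prove by induction on $k \leq N$ that $H_k^{\alpha} = 0$, $\lambda_k^{\alpha} = 0$, and $K_k^{\alpha} = 0$. The base case follows by inspection: $H_0^{\alpha}(\varphi) = n_0^{\alpha} \nabla \varphi(\alpha) \1_\alpha = 0$, and expanding the exponential in the definition of $K_0$ in the quotient algebra shows $K_0^{\alpha} = 0$. For the inductive step, vanishing of $K_l^{\alpha}$ implies $\mathbf{B}_l K_l^{\alpha} = 0$ by linearity, so $H_{l+1}^{\alpha} = 0$ and $\lambda_{l+1}^{\alpha} = 0$ follow immediately from the recursions. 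Moreover, since $H_l^{\alpha} = K_l^{\alpha} = 0$, the $\alpha$-component of $\widetilde{H}_l = \Pi_l \mathcal{R}_{l+1}(H_l + K_l)$ also vanishes, and by the algebraic argument from part~(1) the $\pi^{\alpha}$-projection of $\mathbf{S}_l^{\ext}(H_l, K_l)$ is zero; combined with the vanishing of the $\alpha$-components in the exponential prefactor, this gives $K_{l+1}^{\alpha} = 0$.

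The only genuinely technical point in either part is the combinatorial-algebraic observation that the circ product, the Gaussian integration against $\mu_{k+1}$, and the exponential prefactor of $\mathbf{S}_k^{\ext}$ all respect the natural $\mathbb{Z}_{\geq 0}^2$-grading of the quotient algebra, so that absence of a given observable type in the inputs propagates to absence in the output. Once this grading property is recorded, both inductions reduce to direct bookkeeping.
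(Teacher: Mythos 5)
Your proof is correct and follows essentially the same route as the paper, which settles both claims with a one-line induction on the scale based on item 4 of Lemma \ref{Lemma:FirstConclusions_Observables}. Your grading argument showing that $\pi^a \mathbf{S}^{\ext}_k(H,K)$ depends only on the $\es$- and $a$-components (needed to propagate the independence of $K_k^a$, not just $H_k^a$, across scales) simply makes explicit what the paper leaves implicit in that inductive appeal, so there is no substantive difference in approach.
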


\begin{proof}
The claims follows inductively from 4. in Lemma \ref{Lemma:FirstConclusions_Observables}.
%For 2. suppose that $H^a = 0$ and $K^a(X)=0$ for all $X \in \mathcal{P}$. By definition it follows straightforward that $\mathbf{B}K^a=0=\mathbf{B}K^{ab}$. The proof for $K_+^{\alpha}$ and $K_+^{ab}$ is outsourced to Lemma \ref{lemma:estimate_on_S}.
\end{proof}

Since Propositions \ref{Prop:2nd_order_perturbation_effect_on_S}, \ref{Prop:Smoothness_Observables}, \ref{Prop:Contractivity_Observables} and \ref{Prop:Bounds_on_B_ObservableFlow} hold as before also in the case $n_0^b = 0$, the following proposition can be proven by induction in the same way as Proposition~\ref{Prop:Existence_ObservableFlow}.

\begin{prop}\label{Prop:Existence_SingleObservableFlow}
Let $n_0^a = \delta_{m_a}$ and $n_0^b=0$. By the same assumptions as in Proposition~\ref{Prop:Cov_first} the flow $(\zeta_k, H_k, K_k)_{k \leq N}$ exists with
\begin{align*}
&\zeta_k = \lambda_k^a = \sum_{l=0}^{k-1} \left( \mathbf{B}_l K_l^a \right)^1,
\\
%\quad
&H_k(\varphi) = H_k^{\es}(\varphi) + s n_k^a \nabla \varphi(a) \1_a,
\quad \text{where} \quad
n_k^a = n_0^a + \sum_{l=0}^{k-1} \left( \mathbf{B}_l K_l^a \right)^1,
\end{align*}
and
$$
(H_k^{}, K_k^{}) \in \mathbb{D}_k(\rho_0,g_k,\Lambda).
$$
\end{prop}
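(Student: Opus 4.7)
The plan is to follow the same induction argument as in the proof of Proposition~\ref{Prop:Existence_ObservableFlow}, simplified by the observation that with $n_0^b=0$ the entire $b$- and $ab$-sectors of the observable flow vanish identically. First I would verify, by induction on $k$ and using part 4 of Lemma~\ref{Lemma:FirstConclusions_Observables} (the independence of the $a$-flow from $H^b,K^b,K^{ab}$ when $K$ satisfies field locality, together with the analogous independence of the $b$-flow from the $a$-data), that
\begin{align*}
H_k^b = 0, \quad K_k^b = 0, \quad H_k^{ab} = 0, \quad K_k^{ab} = 0 \quad \text{for all } k \leq N.
\end{align*}
Field locality propagates from the base case $K_0(\varphi)=e^{H_0^{\es}+ s\nabla_{m_a}\varphi(a)\1_a}\mathcal{K}-\mathbf{1}$ via part 1 of Lemma~\ref{Lemma:FirstConclusions_Observables}, so the hypotheses of that lemma are available at every step. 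Given this, Lemma~\ref{lemma:Further_properties_Observables} reduces to $\zeta_k = s\lambda_k^a$ with $\lambda_k^a=\sum_{l=0}^{k-1}(\mathbf{B}_l K_l^a)^0$ and $n_k^a = n_0^a+\sum_{l=0}^{k-1}(\mathbf{B}_l K_l^a)^1$, which are exactly the formulas in the statement (up to the apparent typo identifying $\lambda_k^a$ with the linear coefficient).

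Next I would run the same induction as in Proposition~\ref{Prop:Existence_ObservableFlow} to show $(H_k,K_k)\in\mathbb{D}_k(\rho_0,g_k,\Lambda)$. The base case $k=0$ is strictly easier than before: the term $\Vert H_0^b\Vert_{0,0}^b$ vanishes, so
\begin{align*}
\Vert H_0\Vert_{0,0}^{\ext} \leq \Vert H_0^{\es}\Vert_{0,0} + \rho_0 h,
\end{align*}
and the same choice of $\tilde{\epsilon}$ (via Lemma~\ref{Lemma:From_K_to_V}) forces $H_0^{\es}\in B_{\rho_0}$ and $K_0\in B_{\rho_0}$; choosing $C_{\mathcal{D}}\geq 1+h$ suffices here. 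For the inductive step, Proposition~\ref{Prop:Bounds_on_B_ObservableFlow} applied only to the $a$-sector yields
\begin{align*}
\Vert H_{k+1}^a\Vert_{k+1,0}^a \leq \rho_0 g_{k+1} h\left(1 + \tfrac{A_{\mathcal{B}}}{2}\, d\, h^{-1} \sum_{l=0}^{\infty}\eta^l\right),
\end{align*}
identically to the estimate in the proof of Proposition~\ref{Prop:Existence_ObservableFlow}; together with the bulk bound $H_k^{\es}\in B_{\rho_0\eta^k}$ from Proposition~\ref{Prop:Existence_BulkFlow}, one concludes $H_{k+1}\in B_{C_{\mathcal{D}}\rho_0 g_{k+1}}$. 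The control of $K_{k+1}$ in $B_{\rho_0 g_{k+1}^2}$ follows verbatim from Proposition~\ref{Prop:SingleStepRG_Observables}, whose hypotheses are supplied by the inductive hypothesis; the $ab$-condition $K_{k+1}^{ab}\in B_{\rho_0 g_{k+1}^3}$ is trivial since $K_{k+1}^{ab}=0$.

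The main (very minor) obstacle is just bookkeeping the vanishing sectors at every step: one must confirm that field locality is preserved under $\mathbf{S}_k^{\ext}$ (provided by part 1 of Lemma~\ref{Lemma:FirstConclusions_Observables}) so that the $b$- and $ab$-components never get re-generated from the interaction between $a$- and bulk-data. Once this is recorded, every estimate used in Proposition~\ref{Prop:Existence_ObservableFlow} applies verbatim, yielding the claimed flow with the stated formulas and ball membership.
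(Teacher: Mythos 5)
Your proposal is correct and takes essentially the same route as the paper: the paper also observes that with $n_0^b=0$ the $b$- and $ab$-sectors vanish identically (via part 4 of Lemma \ref{Lemma:FirstConclusions_Observables} and the preceding single-observable lemma), and that Propositions \ref{Prop:2nd_order_perturbation_effect_on_S}, \ref{Prop:Smoothness_Observables}, \ref{Prop:Contractivity_Observables} and \ref{Prop:Bounds_on_B_ObservableFlow} hold unchanged in this case, so the induction of Proposition \ref{Prop:Existence_ObservableFlow} carries over verbatim. The only blemishes are cosmetic: the spurious ``$-1$'' in your formula for $K_0$ (the Mayer function $\mathcal{K}$ already contains the $-1$), and you correctly flagged that the exponent in $\lambda_k^a=\sum_{l=0}^{k-1}(\mathbf{B}_l K_l^a)^1$ in the statement should be the constant coefficient $(\mathbf{B}_l K_l^a)^0$.
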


\paragraph{Extension to an infinite sequence}
Now we extend $n_k^a$ to an infinite sequence. This is possible in view of the following independence property.

\begin{lemma}\label{lemma:indep_volume_observable}
Let us denote the dependence on the torus $\Lambda$ by writing $n_k^a = n_k^{a,\Lambda}$. Let $\Lambda'$ be a larger torus. Then
$$
n_k^{a,\Lambda} = n_k^{a,\Lambda'} \text{ for all } k \leq N(\Lambda).
$$
\end{lemma}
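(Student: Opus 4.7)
The formula from Proposition \ref{Prop:Existence_SingleObservableFlow} gives
\[
n_k^{a,\Lambda} = n_0^a + \sum_{l=0}^{k-1} \left( \mathbf{B}_l K_l^{a,\Lambda} \right)^1,
\]
so it suffices to prove that the increments $(\mathbf{B}_l K_l^{a,\Lambda})^1$ coincide for $\Lambda$ and any larger torus $\Lambda'$ whenever $l \leq N(\Lambda) - 1$. My plan is to establish the $(\Z^d)$-property (Definition \ref{Defn:Zd-property}) for the family $(K_l^{a,\Lambda})_\Lambda$, and then to exploit field locality of the $a$-component, together with the finite-range structure of the covariance decomposition, in order to reduce the computation of $\mathbf{B}_l K_l^a$ to a single block containing $a$, where the $(\Z^d)$-property applies directly.

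First I would verify the $(\Z^d)$-property of the initial datum. Working in the quotient algebra $\mathcal{N}$, the $a$-component of the initial extended datum is $K_0^{a,\Lambda}(X,\varphi) = K_0^{\es,\Lambda}(X,\varphi)\, \nabla_{m_a}\varphi(a)\, \1_a(X)$, i.e.\ the product of the bulk initial datum (which has the $(\Z^d)$-property by Lemma \ref{Prop:field_locality}(4) applied at scale $0$) with a factor that is manifestly local at $a$ and independent of $\Lambda$. The induction step then follows directly from item~1 of Lemma~\ref{Lemma:FirstConclusions_Observables}, which ensures that $\mathbf{S}_k^{\ext}$ preserves both field locality and the $(\Z^d)$-property. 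This yields the $(\Z^d)$-property for $(K_l^{a,\Lambda})_\Lambda$ for all $l \leq N(\Lambda)$.

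Finally I would conclude the volume independence of the increment. By field locality, $K_l^a(X) = 0$ unless $a \in X$, so in the sum $\mathbf{B}_l K_l^a(B_+) = \sum_{B \in \mathcal{B}_l(B_+)} \Pi_l^a \mathcal{R}_{l+1} K_l^a(B)$ only the $l$-block $B_a^l$ containing $a$ contributes. For $l \leq N(\Lambda) - 1$ the block $B_a^l$ has side length $L^l \leq \tfrac{1}{2} L^{N(\Lambda)}$, so the $(\Z^d)$-property gives $K_l^{a,\Lambda}(B_a^l) = K_l^{a,\Lambda'}(B_a^l)$ as functionals of the field on $(B_a^l)^*$. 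Since the covariance $\mathcal{C}_{l+1}^q$ has range $L^{l+1}/2$ by Proposition~\ref{Prop:FRD_Buchholz}, the marginal of $\mu_{l+1}$ on a neighbourhood of $(B_a^l)^*$ is volume-independent once $\Lambda$ is large enough, so the renormalisation map $\mathcal{R}_{l+1}$ (and the subsequent projection $\Pi_l^a$) produces the same relevant Hamiltonian for $\Lambda$ and $\Lambda'$. Extracting the linear coefficient therefore gives $(\mathbf{B}_l K_l^{a,\Lambda})^1 = (\mathbf{B}_l K_l^{a,\Lambda'})^1$, and summing yields the lemma. The only mild subtlety is the a priori volume dependence of $\mu_{l+1}$, but this is handled by the finite-range property of $\mathcal{C}_{l+1}^q$ via the same argument already used in Section~\ref{subsec:finite-volume_bulk_single_step} to establish the corresponding statement for the bulk flow, so no new idea is needed.
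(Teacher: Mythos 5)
Your proposal is correct and follows essentially the same route as the paper: both rest on the explicit formula $n_k^{a,\Lambda}=n_0^a+\sum_{l=0}^{k-1}(\mathbf{B}_l K_l^{a,\Lambda})^1$, the $(\Z^d)$-property of $(K_l^{a,\Lambda})_\Lambda$ preserved by $\mathbf{S}^{\ext}$ (Lemma \ref{Lemma:FirstConclusions_Observables}), and the volume-independence of $\mathbf{B}_l$ on blocks of diameter at most $\tfrac12\diam(\Lambda)$. You merely spell out details the paper leaves implicit (the induction for the $(\Z^d)$-property of the observable component, the field-locality reduction to the block containing $a$, and the finite-range argument for $\mu_{l+1}$); only note that the $(\Z^d)$-property of the initial datum follows from its explicit local formula rather than from Lemma \ref{Prop:field_locality}(4), which concerns preservation under the map.
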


\begin{proof}
From the $N$-independence of the map $\mathbf{B}$ and the $(\Z^d)$-property for $K$ we can conclude that for $k < N$ and $B \in \mathcal{P}_{k+1}$
\begin{align*}
\mathbf{B} K_k^{a,\Lambda}(B)
= \sum_{b \in \mathcal{B}_k(B)} \Pi^a_k \mathcal{R}_{k+1} K_k^{a,\Lambda}(b)
= \sum_{b \in \mathcal{B}_k(B)} \Pi^a_k \mathcal{R}_{k+1} K_k^{a,\Lambda'}(b)
= \mathbf{B} K_k^{a,\Lambda'}(B)
\end{align*}
since for $b \in \mathcal{B}_k(B)$ and $k < N$ it holds that $\diam (b) \leq \frac{1}{2}\diam(\Lambda)$. For $k \leq N$ we thus get
$$
n_k^{a,\Lambda} = n_0^a + \sum_{l=0}^{k-1} \left( \mathbf{B}K_l^{a,\Lambda} \right)^1
= n_0^a + \sum_{l=0}^{k-1} \left( \mathbf{B}K_l^{a,\Lambda'} \right)^1
= n_k^{a,\Lambda'}.
$$
\end{proof}

For $k \in \N$ define
$$
n_k^{a,\Z^d} = n_k^{a,\Lambda},
\quad \Lambda \text{ large enough such that } k \leq N (\Lambda).
$$
The sequence is well-defined by Lemma \ref{lemma:indep_volume_observable}.
By definition, it holds that
\begin{align*}
n_k^{a,\Z^d} = n_0^a + \sum_{l=0}^{k-1} \left( \mathbf{B}K_l^{a,\Z^d} \right)^1.
\end{align*}

\paragraph{Convergence of the sequence} First of all we need another generalisation. Namely, we start with an arbitrary position $x \in \Lambda$ of the observable instead of a fixed $a$.

Let
\begin{align*}
H_0(\varphi;x) &= H_0^{\es}(\varphi) + s n_0 \nabla \varphi(x) \1_x,
\quad n_0 = \delta_m \text{ for } m\in {1, \ldots, d},
\quad
 K_0 & = e^{H_0} K_0^{\es}.
\end{align*}
\begin{lemma}
The sequence $(n_k^x)_k$ is independent of the choice of the position~$x$. More precisely, fix $x,a \in \Lambda$ and $n_0$and consider two flows with initial condition $H_0^x(\varphi;x) = n_0 \nabla\varphi(x)\1_x$ and $H_0^a(\varphi;a) = n_0 \nabla\varphi(a)\1_a$ and the corresponding $K_0$. Then $n_k^x = n_k^a$ for all $k \leq N$.
\end{lemma}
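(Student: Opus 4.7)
The plan is to prove that the entire observable RG procedure is covariant under translations of the base point, so the scalar coefficients $n_k^x$ and $n_k^a$, which are just numbers (not functions of position), must coincide. Concretely, set $T = x - a \in \Lambda$ and let $\tau$ denote translation by $T$, acting on positions by $\tau y = y+T$ and on fields by $(\tau \varphi)(y) = \varphi(y-T)$. Since $\nabla_i(\tau^{-1}\varphi)(a) = \nabla_i\varphi(x)$ and $\mathbbm{1}_{x \in Y} = \mathbbm{1}_{a \in \tau^{-1} Y}$, the initial data satisfy
\begin{equation*}
H_0^x(\varphi,Y) = H_0^a(\tau^{-1}\varphi, \tau^{-1}Y),
\qquad
K_0^x(\varphi,Y) = K_0^a(\tau^{-1}\varphi, \tau^{-1}Y),
\end{equation*}
using translation invariance of $H_0^{\emptyset}$ and $K_0^{\emptyset}$ (inherited from the bulk flow of the previous section) together with the translation invariance of $\mathcal{K}$.

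The inductive claim I would prove is: for all $k \leq N$, $(H_k^x, K_k^x)(\varphi,Y) = (H_k^a, K_k^a)(\tau^{-1}\varphi,\tau^{-1}Y)$, and moreover $n_k^x = n_k^a$. The induction step rests on the fact that every building block of the map $\mathbf{T}_k^{\ext}$ commutes with $\tau$ in the appropriate sense. The renormalisation maps $\mathcal{R}_{k+1}$ commute with $\tau$ because the finite-range covariances $\mathcal{C}_{k+1}$ are translation invariant, hence so are the measures $\mu_{k+1}$. The circle product, pointwise multiplication, and the construction of $\mathbf{S}_k$ are defined purely in terms of polymer operations and integration, and are therefore equivariant under $\tau$ (which simply permutes the blocks in $\mathcal{B}_k$). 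The localisation operators transform covariantly: $\Pi^x_k F = \Pi^a_k(F \circ \tau) \circ \tau^{-1}$, and crucially the scalar coefficient extracted in the linear part $l^\alpha(\dot\varphi) = \sum_j n^\alpha_j \nabla_j \dot\varphi(\alpha)$ is defined by matching $DF(0)$ against test fields $\dot\varphi(y) = \sum_i m_i(y_i - \alpha_i)$, a procedure whose output depends only on $F$ (up to translation) and not on $\alpha$. Consequently $(\mathbf{B}_k K_k^x)^1 = (\mathbf{B}_k K_k^a)^1$, so the recursion $n_{k+1} = n_k + (\mathbf{B}_k K_k)^1$ yields identical sequences.

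The main obstacle is keeping the translation bookkeeping coherent across all pieces of the RG map -- in particular verifying (once) that the localisation operators $\Pi^\alpha_k$ and the closure/neighbourhood operations on polymers commute with $\tau$, and checking that the $(\mathbb{Z}^d)$-compatibility of the observable flow (as used in Lemma \ref{lemma:indep_volume_observable}) is not disturbed by relocating the observable. Once these routine equivariance statements are assembled, the identity $n_k^x = n_k^a$ for all $k \leq N$ follows at once by induction, which is the claim.
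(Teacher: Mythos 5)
Your proposal follows essentially the same route as the paper: the paper also proves, by induction through the structure of $\mathbf{S}^{\ext}$, the translation covariance $K^a_k(\varphi,X)=K^x_k(\tau_{x-a}\varphi,\tau_{x-a}X)$, and then uses translation invariance of the measures $\mu_{k+1}$ together with the covariant form of the linear projection (the coefficients $(\mathbf{B}_k K_k^{\alpha})^1_i = D(\mathcal{R}_{k+1}K^{\alpha}_k)(0)(\varphi_i^{\alpha})$ with $\varphi_i^{\alpha}(y)=y_i-\alpha_i$) to conclude $(\mathbf{B}_k K_k^x)^j=(\mathbf{B}_k K_k^a)^j$ and hence $n_k^x=n_k^a$ by the recursion. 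Your plan is correct and matches the paper's argument in both structure and the key covariance/invariance ingredients.
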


We can drop the superscript $x$ from the notation by this property.

\begin{proof}
Fix $x,a \in \Lambda$. We need the following "translation property" of $K^a$:
\begin{align}
\text{at any scale $k$, for any $X$ and $\varphi$, } K^a(\varphi,X) = K^x(\tau_{x-a} \varphi, \tau_{x-a}X). \label{eq:TranslationObservable}
\end{align}
We will prove \eqref{eq:TranslationObservable} subsequently. This property and translation invariance of the measure imply that the coefficients of $\mathbf{B}_k K_k^a$ equal the coefficients of $\mathbf{B}_k K_k^x$:
\begin{align*}
(\mathbf{B}K^a)^0 &= \int K^a(\varphi,B^a) \mu_+(\de\varphi)
= \int K^x(\tau_{x-a}\varphi, B^x) \mu_+(\de\varphi)
\\
&= \int K^x(\psi, B^x) \mu_+(\de\psi)
=(\mathbf{B}K^x)^0.
\end{align*}
and, since (by \eqref{eq:formula_linear_projection})
\begin{align*}
(\mathbf{B}K^a)^1_i = \langle \mathcal{R}_+ K^a, b_i^a \rangle_0
= D(\mathcal{R}_+ K^a)(0)(\varphi_i^a),
\quad \varphi_i^a(x) = x_i - a_i,
\end{align*}
we similarly get
\begin{align*}
(\mathbf{B}K^a)^1 
&= \int DK^a (\varphi,B^a) \varphi_i^a \mu_+(\de\varphi)
= \int D \left( K^x(\tau_{x-a} \varphi, \tau_{x-a}B^a) \right) \varphi_i^a \mu_+(\de\varphi)
\\
&= \int DK^x (\tau_{x-a}\varphi,B^x)(\tau_{x-a}\varphi_i^a) \mu_+(\de\varphi)
= \int DK^x(\psi, B^x)(\tau_{x-a}\varphi_i^a) \mu_+(\de\psi)
\\
&=\int DK^x(\psi,B^x)(\varphi_i^x) \mu_+(\de\varphi)
=(\mathbf{B}K^x)^1.
\end{align*}
By induction we verify that $n_k^a = n_k^x$ for any $k$.

~\\
It remains to prove \eqref{eq:TranslationObservable}. We again argue by induction. The induction hypothesis is
\begin{align}
\text{For all } l \leq k \text{ and } X \in \mathcal{P}_l, \quad K_l^a(\varphi,X) = K_l^x(\tau_{x-a}\varphi,\tau_{x-a}X).
\label{eq:InductionTranslationObservable}
\end{align}
The case $k=0$ is immediate:
\begin{align*}
K^x_0(\tau_{x-a}\varphi,\tau_{x-a}X) 
&= n_0 \nabla(\tau_{x-a}\varphi)(x) \1_x(\tau_{x-a}X) K_0^{\es}(\tau_{x-a}\varphi,\tau_{x-a}X)
\\
&=n_0 \nabla \varphi(a) \1_a(X) K_0^{\es}(\varphi,X)
=K_0^a(\varphi,X).
\end{align*}
For the induction step %let \eqref{eq:InductionTranslationObservable} hold for all $l \leq k$. W
we have to show that for all $U \in \mathcal{P}_{k+1}$
$$
\left[ \mathbf{S}^{\ext}(H,K) \right]^a(\varphi,U) = \left[ \mathbf{S}^{\ext}(H,K) \right]^{x}(\tau_{x-a}\varphi,\tau_{x-a}U).
$$ 
From the definition of $\mathbf{S}^{\ext}$ it holds that
\begin{align*}
\left[\mathbf{S}^{\ext} (H,K)\right]^a(\varphi,U)
= (\mathbf{B}K^a)^0 \mathbf{S}^{\es}(H,K)(\varphi,U)
+ \left[\mathbf{S}(H,K)\right]^a(\varphi,U).
\end{align*}
We already showed that $(\mathbf{B}K^a)^0 = (\mathbf{B}K^x)^0$ and that the bulk part satisfies translation invariance, so the first term becomes
$$
(\mathbf{B}K^a)^0 \mathbf{S}^{\es}(H,K)(\varphi,U)
= (\mathbf{B}K^x)^0 \mathbf{S}^{\es}(H,K)(\tau_{x-a}\varphi,\tau_{x-a}U).
$$
For the second term, from the definition of $\mathbf{S}$, there is always one $a-$part falling on either $e^{\tilde{H}}(U\setminus X)$ or $e^{-\tilde{H}(X \setminus U)}$ or $(1-e^{\tilde{H}})$ or $(e^H-1)$ or $K$. The others form the bulk part. The bulk part always satisfies translation invariance, so we just have to check if the $a$-part translates correctly.

If the $a$-part falls on $K$, we use the induction hypothesis and translation invariance of the measure, and translate the sum over polymers $\sum_{X \in \mathcal{P}_k}\chi(X,U)$ into $\sum_{X \in \mathcal{P}_k} \chi(X,\tau_{x-a}U)$. The input field is then $\tau_{x-a}\varphi$.

If the $a$-part falls on $e^{\tilde{H}}$, we have
\begin{align*}
\tilde{H}^a(B_a)(\varphi)
= H^a(B_a,\varphi) + \mathbf{B}K^a(B_a,\varphi)
= H^{x}(B_x,\tau_{x-a} \varphi) + \mathbf{B}K^x(B_x,\tau_{x-a}\varphi).
\end{align*}
\end{proof}

Now we can prove the convergence result.

\begin{prop}\label{Prop:Tool_for_Cov_better}
Given the assumptions of Proposition \ref{Prop:Cov_first}, the sequence
$$
\left(n_k^{a,\Z^d}\right)_{k\in \N}
$$
converges to the limit $n_{\infty} = n_0^a$.
\end{prop}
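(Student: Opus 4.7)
The strategy follows the four bullet points preceding the statement, the decisive step being the identification of $n_\infty$ via a smoothed comparison with the bulk scaling limit.

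First I establish convergence of $(n_k^{a,\Z^d})_k$ for free from the single-step estimates. Because we are in the single-observable setting ($n_0^b = 0$, so no $b$- or $ab$-terms ever arise and the coalescence scale is vacuously infinite), the weights simplify to $l_{\obs,l}\, l_l = \rho_0 h g_l$. Combining $\Vert K_l^a \Vert_l^{(A),\ext} \leq \rho_0 g_l^2$ from Proposition \ref{Prop:SingleStepRG_Observables} with the bound on $\mathbf{B}_l$ in Proposition \ref{Prop:Bounds_on_B_ObservableFlow} yields $|(\mathbf{B}_l K_l^{a,\Z^d})^1| \leq C\eta^l$. Hence $\sum_l (\mathbf{B}_l K_l^{a,\Z^d})^1$ converges absolutely and $n_\infty := \lim_k n_k^{a,\Z^d}$ exists.

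To identify $n_\infty = n_0^a$, I run the comparison sketched in the motivation. Fix $g \in C_c^\infty(\T^d)$ with $\int g = 0$, set $g_N(x) = L^{-N(d+2)/2} g(L^{-N} x)$ and $h_N = n_0^a \cdot \nabla^* g_N$. Differentiate $\ln \mathcal{Z}_N(\mathcal{K}, \lambda h_N)$ twice at $\lambda = 0$ and evaluate two ways. By Theorem \ref{Thm:RepresentationPartitionFunction} (the scaling-limit side), this equals $(h_N, \mathcal{C}^q_{\Lambda_N} h_N) + O(\eta^N) = (n_0^a)^T M_N\, n_0^a + O(\eta^N)$, where $M_N = \sum_{x,y} g_N(x) g_N(y) \nabla^*\nabla C^q_{\Lambda_N}(x,y)$. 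The same second derivative is also the gradient-gradient covariance averaged against $g_N \otimes g_N$, which by Proposition \ref{Prop:Cov_first} (combined with Lemma \ref{lemma:indep_volume_observable}, ensuring $n_{j_{xy}}^{a,\Lambda} = n_{j_{xy}}^{a,\Z^d}$) equals $\sum_{x,y} g_N(x) g_N(y)\, (n_{j_{xy}}^{a,\Z^d})^T \nabla^*\nabla C^q(x,y)\, (n_{j_{xy}}^{a,\Z^d})$ up to an $O(|x-y|^{-d-\nu})$ remainder summable against $g_N\otimes g_N$.

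Taking $N\to\infty$: for pairs $(x,y)$ with $|x-y|$ of any intermediate mesoscopic scale $L^{k(N)}$, $k(N)\to\infty$, one has $j_{xy}\to\infty$ and thus $n_{j_{xy}}^{a,\Z^d} \to n_\infty$, while short-distance pairs are controlled by the $|x-y|^{-d}$ bound of Corollary \ref{Cor:Algebraic_DecayCovariance} combined with the regularity of $g$. The limiting identity $(n_0^a)^T M_\infty\, n_0^a = n_\infty^T M_\infty\, n_\infty$ emerges, with $M_\infty$ the continuum bilinear form $\int\!\!\int g(x) g(y)\, \nabla^*\nabla C^q_{\R^d}(x,y)\,dx\,dy$. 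Since $n_0^a \mapsto n_\infty$ is linear (by linearity of the observable flow in the initial coupling, cf.\ Proposition \ref{Prop:Existence_SingleObservableFlow}), write $n_\infty = A n_0^a$. Varying $g$ makes $M_\infty$ range over a sufficiently rich family of symmetric positive forms to enforce $A^T M_\infty A = M_\infty$; combined with continuity of the full construction in the perturbation size (with $A = I$ when $\mathcal{K} = 0$), this forces $A = I$, hence $n_\infty = n_0^a$.

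\textbf{Main obstacle.} The delicate point is controlling the discrete-to-continuum passage in the bilinear form and the short-distance contributions on the covariance side simultaneously. One must verify that the discrete tensor $\nabla^*\nabla C^q_{\Lambda_N}$ smeared against $g_N\otimes g_N$ converges to the continuum form $M_\infty$, and that the polynomial-decay tail $R_{xy}$ from Proposition \ref{Prop:Cov_first}, as well as the small contribution of pairs $(x,y)$ with $|x-y|$ not mesoscopic, is genuinely negligible in the limit $N\to\infty$ before the final algebraic extraction of $n_\infty = n_0^a$.
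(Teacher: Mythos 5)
Your first step (absolute convergence of $\sum_l(\mathbf{B}_lK_l^{a,\Z^d})^1$, hence existence of $n_\infty$) matches the paper. The identification $n_\infty=n_0^a$, however, has a genuine gap. You try to extract it from a \emph{second-moment} comparison: smear the gradient--gradient covariance against $g_N\otimes g_N$, apply the pointwise two-observable asymptotics of Proposition \ref{Prop:Cov_first} pair by pair, and match against $(h_N,\mathcal{C}^q h_N)$ from Theorem \ref{Thm:RepresentationPartitionFunction}. The quantitative structure of the smeared second moment defeats this. With the gradient normalisation the pairs at a \emph{fixed} microscopic separation scale $L^j$ contribute an amount of order one to $\sum_{x,y}g_N(x)g_N(y)\nabla^*\nabla C^q(x,y)$, uniformly in $N$ (the kernel $|x-y|^{-d}$ is not absolutely integrable against $g_N\otimes g_N$; indeed the near-diagonal region carries a nonvanishing contribution, morally $-\tfrac{\delta_{ij}}{d}\int g^2$). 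So short-distance pairs are \emph{not} negligible, contrary to your claim that the $|x-y|^{-d}$ bound plus regularity of $g$ controls them, and on those pairs $n_{j_{xy}}^{a,\Z^d}$ is nowhere near $n_\infty$ while the true covariance is not even described by the leading Gaussian term. Consequently the limiting identity $(n_0^a)^TM_\infty n_0^a=n_\infty^TM_\infty n_\infty$ does not follow from the argument as written. In addition, passing from the covariance to $\sum_{x,y}g_N(x)g_N(y)\,q_N^{xy}$ requires summing the finite-$N$ remainders $K_N^x,K_N^y,K_N^{xy}$ of Theorem \ref{Thm:RepresentationExtendedPartitionFunction} over roughly $L^{Nd}$ weighted pairs; the stated bounds $\mathcal{O}(2^{-N})$, $\mathcal{O}(\eta^N4^{-N})$ have pair-dependent constants and do not obviously survive this summation, a step you do not address. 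Finally, even granting the quadratic identity for a family of forms $M_\infty(g)$, congruence invariance $A^TMA=M$ only pins $A$ down after one shows the family spans the symmetric matrices and combines this with smallness of $A-I$; as written, ``varying $g$ \dots forces $A=I$'' is not a proof.

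The paper avoids all of this by making the comparison \emph{linear} rather than quadratic: it smears a \emph{single} observable against $g_N$ and evaluates the first-order coefficient $D\big[\int n_0^a(\nabla(\varphi+\xi),g_N)F_0^{\es}(\varphi+\xi)\,\mu_{\mathcal{C}^q}(\de\xi)\big]_{\varphi=0}(h_N)$ in two ways: through the single-observable flow (using that $n_k^x$ is independent of the placement $x$, and that the smeared $K_N^x$-terms are $\mathcal{O}((2\eta)^N)$ because in the single-observable setting the weight $l_{\obs,N}$ carries $L^{Nd/2}$ at every scale), giving $n_\infty(\partial h,g)_{L^2(\T^d)}$; and through the bulk representation of Theorem \ref{Thm:RepresentationPartitionFunction}, giving $n_0^a(h,\partial^*g)_{L^2(\T^d)}$. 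A first-moment identity has no short-distance contamination and no orthogonal ambiguity, so $n_\infty=n_0^a$ follows at once. If you want to salvage a second-moment route, you would at least have to restrict to cross-correlations of two test functions with macroscopically separated supports (so that all pairs have $j_{xy}\geq N-\mathcal{O}(1)$) and then still supply the richness and error-summation arguments; as it stands, your proposal does not prove the statement.
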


\begin{proof}%[Proof of Proposition \ref{Prop:Tool_for_Cov_better}]

Convergence of the sequence is clear since by Proposition \ref{Prop:Bounds_on_B_ObservableFlow} and Proposition~\ref{Prop:Existence_SingleObservableFlow} we can bound the sum uniformly in $N$:
\begin{align*}
\sum_{l=0}^{k-1} \left| \left( \mathbf{B} K_l^{a,\Z^d} \right)^1 \right|
\leq \sum_{l=0}^{k-1} \frac{A_{\mathcal{B}}}{2} h^{-1} \eta^k
< \infty.
\end{align*}
Let us denote the limit by $n_{\infty}$.

We show $n_{\infty} = n_0^a$ by a limiting procedure involving the result on the scaling limit in Theorem 2.1 in \cite{Hil19_1}.
Let
$$
Z_N(\varphi;x) = e^{\zeta_N} \left( e^{H_N(\varphi;x)} + K_N(\varphi) \right)
$$
be the generating partition function at scale $N$, with one observable at position $x$. Let $g_N(x) = L^{-N\frac{d}{2}} g\left( L^{-N}x \right)$ for $g \in C_c^{\infty}(\T^d)$ satisfying $\int g = 0$ as in the assumptions of Theorem 2.1 in \cite{Hil19_1} and $h_N = \mathcal{C}^q \nabla_j g_N$, $h = \mathcal{C}^q \partial_j g$ for a fixed direction $j \in \lbrace 1, \ldots, d \rbrace$. We will show that
\begin{equation}
D \left[
\int n_0^a \left( \nabla (\varphi + \xi),g_N \right) F_0^{\es}(\varphi + \xi) \mu_{\mathcal{C}^q}(\de\varphi)
\right]_{\varphi = 0}(h_N)
\,
 \xrightarrow{N \rightarrow \infty}
\,
n_{\infty} (\partial h,g)_{L^2(\T^d)}
\tag{A}
\label{eq:OneDirection}
\end{equation}
by the statements on the observable flow. Here, the left hand side denotes the directional derivative of the term in brackets of $\varphi$ in direction $h_N$.

 On the other hand,
by transforming the term into derivatives of the bulk partition function and using results there, we will show that

\begin{equation}
D \left[
\int n_0^a \left( \nabla (\varphi + \xi),g_N \right) F_0^{\es}(\varphi + \xi) \mu_{\mathcal{C}^q}(\de\varphi)
\right]_{\varphi = 0}(h_N)
\,
 \xrightarrow{N \rightarrow \infty}
\,
n_0^a (h, \partial^* g)_{L^2(\T^d)}.
\tag{B}
\label{eq:TheOtherDirection}
\end{equation}
By uniqueness of the limit we can conclude that $n_{\infty} = n^a_0$.

~\\
We start by proving \eqref{eq:OneDirection}. We can transform
\begin{align*}
&D \left[
\int n_0^a \left( \nabla (\varphi + \xi),g_N \right) F_0^{\es}(\varphi + \xi) \mu_{\mathcal{C}^q}(\de\varphi)
\right]_{\varphi = 0}(h_N)
\\ &\quad =
\sum_x g_N(x)
D \left[
\partial_s \big\vert_{s=0} \ln Z_N(\varphi;x) \right]_{\varphi=0} (h_N)
\\ & \quad =
n_N^a (\nabla h_N,g_N) \frac{e^{H_N^{\es}(0)}}{Z_N^{\es}(0)}
\\ & \quad\quad\quad\quad
+
\frac{1}{Z_N^{\es}(0)}
\sum_x g_N(x) DK_N^x(0)(h_N)
+
\frac{DZ_N^{\es}(0)(h_N)}{\left( Z_N^{\es}(0)\right)^2} \sum_x g_N(x) K_N^x(0).
\end{align*}

By Lemma \ref{lemma:smoothness_of_E} we can estimate
$$
\left|e^{H_N^{\es}(0)}-1\right|
\leq C
\vertiii{
e^{H_N^{\es}(0)}-1\
}_N
\leq C \Vert H_N^{\es} \Vert_{N,0},
$$
and, since $(H_N^{\es},K_N^{\es}) \in \mathbb{D}_k(\rho_0,g_k,\Lambda)$, we conclude that
$$
\left|e^{H_N^{\es}(0)}-1\right|, \left|Z_N^{\es}(0)-1\right| \rightarrow 0.
$$
Together with the convergence result of Proposition 4.7 in \cite{Hil16} we obtain
$$
n_N^a (\nabla h_N,g_N) \frac{e^{H_N^{\es}(0)}}{Z_N^{\es}(0)}
\quad
\rightarrow
\quad
n_{\infty}(\partial h, g)_{L^2(\T^d)}
\text{ as } N \rightarrow \infty.
$$
Furthermore,
$$
\left|
\sum_x g_N(x)
\right|
 = L^{Nd/2} L^{-Nd} 
\left|\sum_x g(L^{-N}x) \right|
 \leq C L^{Nd/2} \left|\int g(x) \de x\right|,
$$
but
$$
\vert D K_N^{a}(0)(h_N) \vert
\leq l_{\obs,N}^{-1} \Vert K_N \Vert_N^{(A),\ext} \vert \mathcal{C}^q \nabla^* g_N \vert_{N,\Lambda_N}
\leq C L^{-Nd/2} \eta^N 2^N
$$
for a constant independent of $N$,
such that
$$
\left|\sum_x g_N(x)	D K_N^x(0)(h_N)\right|
 \leq C (2 \eta)^N \rightarrow 0.
$$
We estimate $DZ_N^{\es}(0)(h_N)$ as in the proof of Theorem \ref{Thm:RepresentationPartitionFunction}. Namely,
\begin{align*}
\left| D Z_N^{\es}(0)(h_N) \right|
&= \left| D \left(Z_N^{\es}-1\right)(0)(h_N) \right|
\\ &
\leq 
\left| D \left(e^{H_N^{\es}}-1\right)(0)(h_N) \right|
+ \left| D K_N^{\es}(0)(h_N) \right|
\\
& \leq C
\left(
\vertiii{e^{H_N^{\es}} - 1}_{N} |h_N|_{N,\Lambda_N}
+ \left\Vert K_N^{\es} \right\Vert_N^{(A)} |h_N|_{N,\Lambda_N}
\right).
\end{align*}
By Lemma \ref{lemma:smoothness_of_E} it holds that
$$
\vertiii{e^{H_N^{\es}} - 1}_{N}
\leq 8 \left\Vert H_N^{\es} \right\Vert_{N,0}.
$$
Moreover, similar to Lemma 5.2 from \cite{Hil16} one can show that
$$
\left| h_N \right|_{N,\Lambda_N} =
\left| \mathcal{C}^q \nabla^* g_N \right|_{N,\Lambda_N} \leq C
$$
for a constant $C$ which is independent of $N$.
With $(H_N^{\es},K_N^{\es}) \in \mathbb{D}(\rho_0,g_k,\Lambda)$ it follows that
$$
\left| D Z_N^{\es}(0)(h_N) \right|
\leq C
\left(
\Vert H_N^{\es}\Vert_{N,0} 
+ \Vert K_N^{\es} \Vert_N^{(A)}
\right)
\leq C \eta^N.
$$

Thus
$$
\frac{\left|D Z_N^{\es}(0)(h_N)\right|}{\left|Z_N^{\es}(0)\right|^2}
\rightarrow 0,
$$
and
$$
\left|\sum_x g_N(x) K_N^x(0)\right|
\leq C
L^{Nd/2} L^{-Nd/2} (2 \eta)^N \rightarrow 0.
$$

Now we prove \eqref{eq:TheOtherDirection}. We start with the following transformations:
\begin{align*}
&
D \left[
\int n_0^a \left( \nabla (\varphi + \xi),g_N \right) F_0^{\es}(\varphi + \xi) \mu_{\mathcal{C}^q}(\de\varphi)
\right]_{\varphi = 0}(h_N)
\\ & 
\quad\quad\quad
 =
n_0^a \partial_f \left[
D \left[
\int e^{(\varphi + \xi,f)} F_0^{\es}(\varphi + \xi) \mu_{\mathcal{C}^q}(\de\xi)
\right]_{\varphi=0}(h_N)
\right]_{f=0}(\nabla^* g_N)
\\ & 
\quad\quad\quad
 =
n_0^a \partial_f \left[
e^{\frac{1}{2}(f,\mathcal{C}^q f)}
D \left[
e^{(\varphi,f)} Z_N^{\es}(\mathcal{C}^q f + \varphi)
\right]_{\varphi=0}(h_N)
\right]_{f=0} (\nabla^* g_N)
\\ & 
\quad\quad\quad
 =
n_0^a \partial_f \left[
e^{\frac{1}{2}(f,\mathcal{C}^q f)}
(h_N,f) Z_N^{\es}(\mathcal{C}^q f) + D Z_N^{\es}(\mathcal{C}^q f)(h_N)
\right]_{f=0} (\nabla^* g_N)
\\ &
\quad\quad\quad 
=
n_0^a
\left[
(h_N, \nabla^* g_N) Z_N^{\es}(0)
+ D^2 Z_N^{\es}(0)(h_N)(\mathcal{C}^q \nabla^* g_N)
\right].
\end{align*}
The first term converges
$$
(h_N,\nabla^* g_N) Z_N^{\es}(0)
 \rightarrow 
(h,\partial^* g)_{L^2(\T^d)}
$$ 
as $N \rightarrow \infty$, due to $\left|Z_N^{\es}(0)-1\right| \rightarrow 0$ and the convergence result of Proposition 4.7 from \cite{Hil16}. The second term tends to zero by the following considerations which resemble the arguments in the proof of Theorem \ref{Thm:RepresentationPartitionFunction} and \eqref{eq:OneDirection}.
It holds that
$$
D^2 Z_N^{\es}(0)(h_N,\mathcal{C}^q \nabla^* g_N)
= 
D^2 \left(Z_N^{\es}-1\right)(0)(h_N,\mathcal{C}^q \nabla^* g_N),
$$
and thus
\begin{align*}
&\left|
D^2 Z_N^{\es}(0)(h_N,\mathcal{C}^q \nabla^* g_N)
\right|
\\ & \quad
\leq
\left|
D^2 \left(e^{H_N^{\es}} - 1 \right)(0)(h_N,\mathcal{C}^q \nabla^* g_N)
\right|
+
\left|
D^2 K_N^{\es}(0)(h_N,\mathcal{C}^q \nabla^* g_N)
\right|
\\ & \quad
\leq C
\left(
\vertiii{e^{H_N^{\es}} - 1}_N  + \left\Vert K_N^{\es} \right\Vert_N^{(A)}
\right)
|h_N|_{N,\Lambda_N} |\mathcal{C}^q \nabla^*g_N|_{N,\Lambda_N}.
\end{align*}
As before it holds that
$$
\left| \mathcal{C}^q \nabla^* g_N \right|_{N,\Lambda_N},
\left| h_N \right|_{N,\Lambda_N}
 \leq C
$$
for a constant $C$ which is independent of $N$, and
$$
\vertiii{e^{H_N^{\es}} - 1}_N 
\leq C \left\Vert H_N^{\es} \right\Vert_{N,0}.
$$
Together with $(H_N^{\es},K_N^{\es}) \in \mathbb{D}_k(\rho_0,g_k,\Lambda)$ we conclude that
$$
\left|
D^2 Z_N^{\es}(0)(h_N,\mathcal{C}^q \nabla^* g_N)
\right|
\leq C \eta^k
\rightarrow 0.
$$
This proves the claim.

\end{proof}

\paragraph{Back to finite volume}

Now we can prove Proposition \ref{Prop:Cov_better}.

\begin{proof}[Proof of Proposition \ref{Prop:Cov_better}]
We conclude from Proposition \ref{Prop:Tool_for_Cov_better} and the construction of the flow that
\begin{align*}
n_{\infty} = n_0^a + \sum_{k=0}^{\infty} \left( \mathbf{B}_k K_k^{a,\Z^d} \right)^1 = n_0^a
\quad
\Rightarrow
\quad
\sum_{k=0}^{\infty} \left( \mathbf{B}_k K_k^{a,\Z^d} \right)^1 = 0.
\end{align*}
Using Proposition \ref{Prop:Existence_SingleObservableFlow} we can estimate
\begin{align*}
\left|
\sum_{k=0}^{m-1} \left( \mathbf{B}_k K_k^{a,\Z^d} \right)^1
\right|
=
\left|
\sum_{k=m}^{\infty} \left( \mathbf{B}_k K_k^{a,\Z^d} \right)^1
\right|
\leq
\sum_{k=m}^{\infty} \frac{A_{\mathcal{B}}}{2} h^{-1} \eta^k
= \frac{A_{\mathcal{B}}}{2} h^{-1} \frac{1}{1-\eta} \, \eta^{m}.
\end{align*}
By definition of the infinite sequence, the $(\Z^d)$-property and the local dependence of the relevant flow it holds that for all $k \leq j_{ab}-1$
$$
\mathbf{B}_k K_k^{a,\Z^d} = \mathbf{B}_k K_k^{a,\Lambda}.
$$
Thus
$$
S^a_{j_{ab}} = \sum_{k=0}^{j_{ab}-1} \left( \mathbf{B}_k K_k^a \right)^1
= \mathcal{O}\left(\eta^{j_{ab}}\right).
$$
\end{proof}

\begin{remark}\label{rem:higher_corr}
As noted in 4.\ in Remark \ref{rem:extension_main_thm}, we can use a similar method to show a fine estimate on higher correlations. We sketch the argument here.

Fix $n \in \N$ and let $m_i \in \lbrace 1, \ldots , d \rbrace$ and $a_i \in \Lambda_N$ for $i \in \lbrace 1, \ldots, n \rbrace$.

We consider
\begin{align*}
&\Cov_{\gamma_{N,\beta}^u}
\left( \nabla_{m_1 \varphi(a_1)}, \ldots, \nabla_{m_1 \varphi(a_1)} \right)
:=
\\ & \quad\quad\quad\quad
\partial_{s_1} \ldots \partial_{s_n} \big\vert_{s_1 = \ldots = s_n = 0} \ln
\int e^{\sum_{i=1}^n s_i \nabla_{m_i} \varphi(a_i)} F_0^{\es}(\varphi) \mu_{\mathcal{C}^q}(\de\varphi).
\end{align*}
For the sake of simplicity let us assume that the fixed sites $a_i$ are in different blocks until scale $j_*-1$, and then they all live in one single block at scale $j_*$, as indicated in the following picture for the case $L=4$.

\vspace{0.3cm}
\begin{tikzpicture}[scale=1.3]
	\draw[decoration= {brace, amplitude = 12 pt}, decorate] (0.9, 0) -- (0.9,4);
	\node at (0.3,2) {$L^{j_{*}}$};
	\draw (1,0) -- (1,4);
		\node at (1.5,0.5) {$a_3$};
		\node at (4.5,0.5) {$a_4$};
	\draw (2,0) -- (2,4);
	\draw (3,0) -- (3,4);
	\draw (4,0) -- (4,4);
		\node at (1.5,3.5) {$a_1$};
		\node at (4.5,3.5) {$a_2$};
	\draw (5,0) -- (5,4);
	\draw (1,0) -- (5,0);
	\draw (1,1) -- (5,1);
	\draw (1,2) -- (5,2);
	\draw (1,3) -- (5,3);
	\draw (1,4) -- (5,4);
		\draw[decoration= {brace, amplitude = 6 pt}, decorate] (4,4.1) -- (5,4.1);
		\node at (4.5,4.5) {$L^{j_* - 1}$};
\end{tikzpicture}

~\\
We are only interested in the behaviour of functionals up to first order in $s_P$, $P~\subset~\lbrace 1,\ldots,n \rbrace$, where $s_P := \prod_{i\in P} s_i$.

Thus we consider functionals in a quotient algebra such that
$$
K = \sum_{P \subset \lbrace 1,\ldots,n \rbrace} s_P K^P
$$
for some $K^P \in \mathcal{N}^{\es}$.

We construct the flow exactly as before. Due to the special choice of sites $a_1, \ldots, a_n$, the observable part of the relevant Hamiltonian is of the following form:
$$
H^{\obs} = \sum_{i=1}^n s_i H^i + s_{\lbrace 1,\ldots,n\rbrace} H^{\lbrace 1,\ldots,n \rbrace}
$$
for a constant $H^{\lbrace 1,\ldots,n \rbrace}$ and
$$
H^i = \lambda^i + n^i \nabla\varphi(a_i).
$$
The map $\mathbf{A}$ is constructed as in the case of two observables. Note that
\begin{align*}
&
\mathcal{R}_+ H^{\obs} + \frac{1}{2} \mathcal{R}_+ \left( \left( H^{\obs} \right)^2 \right) - \frac{1}{2} \left( \mathcal{R}_+ H^{\obs} \right)^2
\\
&\quad\quad\quad =
H^{\obs} + \frac{1}{2} \sum_{i,j=1}^n s_i s_j n^in^j \nabla^* \nabla C_+(a_i,a_j)
\end{align*}
and so
$$
\mathbf{A} H = \mathbf{A}H^{\es} + H^{\obs} + \frac{1}{2} \sum_{i,j=1}^n s_i s_j n^in^j \nabla^* \nabla C_+(a_i,a_j).
$$
Then, similar to the case of two observables,
\begin{align*}
\lambda_N^i &= \sum_{k=0}^{N-1} \left( \mathcal{B}K_k^i \right)^0,
\\
q_N^{\lbrace 1,\ldots,n\rbrace} &= \frac{1}{2} \sum_{i,j=1}^n n^i_* n^j_* \nabla^* \nabla \mathcal{C}^q(a_i,a_j),
\\
n_*^i &= n_0^i + \sum_{k=0}^{j_*-1} \left( \mathbf{B}K_k^i \right)^1.
\end{align*}

The behaviour of $n_*^i$ as $j_* \rightarrow \infty$ can be analysed as before via the single observable flow. Thus
$$
n^i_* \rightarrow n_0^i
\quad
\text{as } j^* \rightarrow \infty.
$$

In summary we get
\begin{align*}
\lim_{N \rightarrow \infty}
\Cov_{\gamma_{N,\beta}^u} \left(
\nabla_{m_1} \varphi(a_1), \ldots, \nabla_{m_n} \varphi(a_n)
\right)
%\\
%&
&=
\frac{1}{2} \sum_{i,j=1}^n \nabla_{m_j}^* \nabla_{m_i} \mathcal{C}^q (a_i,a_j) + R_{ab},
\\
 \left| R_{ab} \right| &\leq C \frac{1}{|a-b|^{d+\nu}},
\quad \nu > 0.
\end{align*}

%For comparison: For the Gaussian measure $\mu_{\mathcal{C}^q}$ the truncated $n$-point correlation function is given by
%\begin{align*}
%&\lim_{N \rightarrow \infty}
%\Cov_{\mu_{\mathcal{C}^q}} \left(
%\nabla_{m_1} \varphi(a_1), \ldots, \nabla_{m_n} \varphi(a_n)
%\right)
%\\
%&
%=
%\frac{1}{2} \sum_{i,j=1}^n \nabla_{m_j}^* \nabla_{m_i} \mathcal{C}^q (a_i,a_j).
%\end{align*}

\end{remark}

\subsection{Proof of Theorem \ref{Thm:RepresentationExtendedPartitionFunction}}\label{Subsec:Proof_Decay}

The proof of Theorem \ref{Thm:RepresentationExtendedPartitionFunction} consists of two steps. By direct observation of the flow we get the estimate for $q_N^{ab}$ in Proposition \ref{Prop:Cov_first}. In a second step Proposition \ref{Prop:Cov_better} is used to get a refined leading term.

\begin{proof}[Proof of Theorem \ref{Thm:RepresentationExtendedPartitionFunction}]
Let $L_1$ and $\epsilon_1$ be as in Proposition \ref{Prop:Existence_ObservableFlow} with $\eta < \frac{1}{4}$. Then, by Proposition \ref{Prop:Cov_first}, \eqref{eq:formula_extended_partition_function} holds with the estimates on $Z_N^{\ext}(\mathcal{K},0)$ and on $\lambda_N^{\alpha}$ and with
$$
q_N^{ab} = \Big( \delta_{m_a} + S^a_{j_{ab}} \Big)
\Big( \delta_{m_b} + S^b_{j_{ab}} \Big)
\nabla^* \nabla C^q(a,b) + \tilde{R}_{ab}.
$$
Proposition \ref{Prop:Cov_better} gives the improved estimate as can be found in the statement of Theorem~\ref{Thm:RepresentationExtendedPartitionFunction}.
\end{proof}

\section{Proofs of extensions and intermediate steps}\label{sec:Proofs_ExtendedFlow}

Note that in this section any dependencies on $q$ are omitted since $q \in B_{\kappa}(0)$ is fixed with $\kappa$ depending on $\zeta$, $\eta$ and $L$ in Proposition \ref{Prop:Existence_BulkFlow}. As an exception we note the dependency explicitely in Lemma \ref{lemma:Properties_of_weights} since this is the place where the parameter~$\kappa$ is determined in dependence on $L$.

In this whole section $R$ is a parameter which depends only on $d$.

\subsection{Properties of the norms}

In this subsection we follow closely the presentation in \cite{ABKM}. Arguments from \cite{ABKM} which can be applied without any change to the extended setting will be omitted in proofs.

\subsubsection{Properties of the weights} \label{subsubsec:Properties_of_weights}

For the sake of completeness we review Theorem 7.1 from \cite{ABKM}.
The last scale weights ($k=N$) differ from \cite{ABKM} due to the modified definition of the last scale covariance (see \eqref{eq:Last_Scale_Cov}).
However, this does not change the properties of the weights as stated in the following lemma.

\begin{lemma}%[Proposition 5.9. in {\cite{ABKM}}]
\label{lemma:Properties_of_weights}
Let $L \geq 2^{d+3} + 16 R$. The weight functions $w_k$, $w_{k:k+1}$ and $W_k$ are well-defined and satisfy the following properties:
\begin{enumerate}
\item
For any $Y \subset X \in \mathcal{P}_k$, $0 \leq k \leq N$, and $\varphi \in \mathcal{V}_N$
\begin{align*}
w_k^Y(\varphi) \leq w_k^X(\varphi)
\quad\mathrm{and}\quad
w_{k:k+1}^Y(\varphi) \leq w_{k:k+1}^X(\varphi).
\end{align*}
\item
For any strictly disjoint polymers $X,Y \in \mathcal{P}_k$, $0 \leq k \leq N$, and $\varphi \in \mathcal{V}_N$
\begin{align*}
w_{k}^{X \cup Y}(\varphi) = w_k^X(\varphi) w_k^Y(\varphi).
\end{align*}
\item
For any polymers $X,Y \in \mathcal{P}_k$ such that $\mathrm{dist}(X,Y) \geq \frac{3}{4} L^{k+1}$, $0 \leq k \leq N$, and $\varphi \in \mathcal{V}_N$
\begin{align*}
w_{k:k+1}^{X \cup Y}(\varphi) = w_{k:k+1}^X(\varphi)w_{k:k+1}^Y(\varphi).
\end{align*}
\item
For any disjoint polymers $X,Y \in \mathcal{P}_k$, $0 \leq k \leq N$, and $\varphi \in \mathcal{V}_N$
\begin{align*}
W_k^{X \cup Y}(\varphi) = W_k^X(\varphi) W_k^Y(\varphi).
\end{align*}
\end{enumerate}
Moreover, there is a constant $h_0 = h_0(L,\zeta)$ such that for all $h \geq h_0$ the weight functions satisfy the following properties:
\begin{enumerate}
\setcounter{enumi}{4}
\item
For any disjoint polymers $X,Y \in \mathcal{P}_k$ and $U = \pi(X) \in \mathcal{P}_{k+1}$, $0 \leq k \leq N-1$, and $\varphi \in \mathcal{V}_N$
\begin{align*}
w_{k+1}^U(\varphi) \geq w_{k:k+1}^X(\varphi) \left( W_k^{U^+}(\varphi)\right)^2.
\end{align*}
\item
For all $0 \leq k \leq N-1$, $X \in \mathcal{P}_{k+1}$ and $\varphi \in \mathcal{V}_N$,
\begin{align*}
e^{\frac{|\varphi|^2_{k+1,X}}{2}} w_{k:k+1}^X(\varphi)
\leq w_{k+1}^X(\varphi).
\end{align*}
\end{enumerate}
Lastly, there exists a constant $\kappa = \kappa(L,\zeta)$ with the following properties:
\begin{enumerate}
\setcounter{enumi}{6}
\item
There is a constant $A_{\mathcal{P}}$ such that for $q \in B_{\kappa}(0)$, $\rho = (1+ \frac{\zeta}{4})^{1/3}-1$, $Y \in \mathcal{P}_k$, $0 \leq k \leq N$, and $\varphi \in \mathcal{V}_N$
\begin{align*}
\left( \int_{\chi_N} \left( w_k^X(\varphi + \xi) \right)^{1+\rho} \mu_{k+1}(\de \xi) \right)^{\frac{1}{1+\rho}}
\leq \left( \frac{A_{\mathcal{P}}}{2} \right)^{|X|_k} w_{k:k+1}^X(\varphi).
\end{align*}
\item
There is a constant $A_{\mathcal{B}}$ independent of $L$ such that for $q \in B_{\kappa}$, $\rho = (1+\frac{\zeta}{4})^{1/3}-1$, $B \in \mathcal{B}_k$, $0 \leq k \leq N$, and $\varphi \in \mathcal{V}_N$
\begin{align*}
\left( \int_{\chi_N} \left( w_k^B(\varphi + \xi) \right)^{1+\rho} \mu_{k+1}(\de \xi) \right)^{\frac{1}{1+\rho}}
\leq \frac{A_{\mathcal{B}}}{2}  w_{k:k+1}^B(\varphi).
\end{align*}
\end{enumerate} 
\end{lemma}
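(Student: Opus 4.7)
The plan is to adapt Theorem 7.1 of \cite{ABKM} essentially verbatim, since the weights $w_k$, $w_{k:k+1}$ and $W_k$ are defined exactly as there, except at the last scale. Properties (1)--(4) are purely structural. They follow directly from the product-over-blocks definition of each weight (each weight is a product of block-local factors times, in the case of $w_k$ and $w_{k:k+1}$, a regulator that is monotone in the polymer). Strict disjointness in (2) and the distance $\geq \frac{3}{4}L^{k+1}$ in (3) ensure that the regulator terms decouple, while (4) is immediate since $W_k$ has no regulator linking different polymers. None of this depends on the covariance decomposition, so the arguments from \cite{ABKM} transfer without change.

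For (5) and (6) I would repeat the bookkeeping from \cite{ABKM}: (6) is an algebraic identity about the scale-dependent parameters $h_k$, $L^{-j(d-2)/2}$ etc.\ entering the field seminorms and regulators, and holds provided $h \geq h_0(L,\zeta)$ with $h_0$ chosen large enough that the exponential factor $e^{|\varphi|^2_{k+1,X}/2}$ can be absorbed into the change of weight from scale $k$ to scale $k+1$. Property (5) is then obtained by combining (6) with the geometric fact (here $L \geq 2^{d+3}+16R$ enters) that $U^+$ contains enough collar around $X$ for the $W_k^{U^+}$ factor together with the upgraded regulator from $w_{k:k+1}^X$ to dominate $w_{k+1}^U$.

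Properties (7) and (8) are the place where the finite-range decomposition enters, and they are the analytic core. One expands $(w_k^X)^{1+\rho}$ as a product of Gaussian moment generating factors applied to $\xi$ and uses the operator bounds from Proposition \ref{Prop:FRD_Buchholz} to evaluate $\int e^{c|\xi|^2_{k,X}}\mu_{k+1}(\de\xi)$ scale by scale. Finiteness requires $c\,\|\mathcal{C}_{k+1}^q\|_{\mathrm{op}}<\frac12$, which is ensured by taking $\|q\|\leq \kappa(L,\zeta)$ small; this is the point at which $\kappa$ gets fixed. For $k\leq N-2$ the bound on $\mathcal{C}_{k+1}^q$ is exactly the one from Proposition \ref{Prop:FRD_Buchholz} and the proof from \cite{ABKM} applies verbatim. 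The block-level bound (8) is obtained in the same way, with the $\frac{A_{\mathcal{B}}}{2}$ arising (without an $L$-dependent factor) because integration is over a single block.

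The one genuine modification, and the main thing to verify, is the last scale $k=N-1$: here the relevant covariance is $\mathcal{C}_{N,N}^q=\mathcal{C}_N^q+\mathcal{C}_{N+1}^q$ rather than a single $\mathcal{C}_{N}^q$. I would show that the derivative and operator bounds from Proposition \ref{Prop:FRD_Buchholz} still hold for $\mathcal{C}_{N,N}^q$ up to a fixed multiplicative constant (since each summand obeys the same estimate), so the Gaussian integration argument goes through with the same $\kappa$ after possibly enlarging $A_{\mathcal{P}}$ and $A_{\mathcal{B}}$ by a factor depending only on $L$. The main obstacle is precisely this bookkeeping at $k=N-1$: one must check that all constants remain $N$-independent, which reduces to the $N$-independence of the bound on $\mathcal{C}_{N+1}^q$ given by Proposition \ref{Prop:FRD_Buchholz}. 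Once this is in hand, all eight items follow and the proof is complete.
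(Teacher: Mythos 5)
Your proposal matches the paper's treatment: the paper gives no independent proof of this lemma but simply cites Theorem~7.1 of \cite{ABKM}, noting that the only change is the modified last-scale covariance $\mathcal{C}^q_{N,N}=\mathcal{C}^q_N+\mathcal{C}^q_{N+1}$ and that this does not affect the stated properties. Your plan — transfer the \cite{ABKM} arguments verbatim for items (1)--(8) and verify that the Gaussian-integration bounds (7)--(8) survive at the last scale with $N$-independent constants, using that each summand of $\mathcal{C}^q_{N,N}$ obeys the bounds of Proposition~\ref{Prop:FRD_Buchholz} — is exactly the point the paper relies on, so your approach is correct and essentially the same.
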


%\begin{remark}\label{Remark:Weight_Independent_N}
%The weights $w_k$ and $w_{k:k+1}$ are independent of the size of the torus as long as the input polymer is small enough. This can be seen when examining the construction of the weights. The weights essentially arise as follows: Take the local quadratic form from [AKM16], integrate against the covariance of the finite-range decomposition (which is independent of the size of the torus by Remark \ref{Rem:FRD_Indep_of_N}) and add explicit local perturbing terms. These steps are independent of the size of the torus as long as the input-polymer is small enough compared to the torus.

%The weights $W_k$ are given explicitly and obviously local.
%\end{remark}

\subsubsection{Pointwise properties of the norms}

The following lemma is an extension to observables of Lemma 8.1 from \cite{ABKM}.

\begin{lemma}%[Extension of Lemma 6.1 in \cite{ABKM}]
\label{lemma:submultiplicativity1}
Assume that $F,G \in \mathcal{N}$, $X \in \mathcal{P}_k$ and $F(\varphi)$ and $G(\varphi)$ depend only on $\varphi \vert_{X^*}$. Assume furthermore that $F(\varphi + \psi) = F(\varphi), G(\varphi + \psi)=G(\varphi)$ if $\psi\vert_{X^*}$ is constant. Then
\begin{align*}
|FG|_{k,X,T_{\varphi}}^{\ext} \leq |F|_{k,X,T_{\varphi}}^{\ext} |G|_{k,X,T_{\varphi}}^{\ext}
\end{align*}
and, for $X \in \mathcal{P}_k$ and $\alpha \in \lbrace \es,a,b,ab \rbrace$,
\begin{align*}
\vert F^{\alpha} \vert_{k+1,X,T_{\varphi}}
\leq \left( 1 + |\varphi|_{k+1,X} \right)^3 \left( |F^{\alpha}|_{k+1,X,T_0} + 16 L^{-\frac{3}{2}d} \sup_{0 \leq t \leq 1} |F^{\alpha}|_{k,X,T_{t \varphi}} \right).
\end{align*}
\end{lemma}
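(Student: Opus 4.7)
Both parts of the lemma reduce to the corresponding bulk statement (Lemma 8.1 in \cite{ABKM}), so the plan is to handle the extended-norm bookkeeping first and then invoke the bulk result componentwise.

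The second inequality is in fact purely scalar: both sides feature only the bulk norm $|\cdot|_{j,X,T_{\varphi}}$ applied to the single component $F^{\alpha} \in \mathcal{N}^{\es}$, with no observable weight $l_{\obs,j}$ appearing. My plan is therefore to observe that the hypotheses of the lemma propagate to $F^{\alpha}$ via the projection $\pi^{\alpha}$, and then quote the cross-scale estimate of Lemma 8.1 in \cite{ABKM} applied to $F^{\alpha}$. Recall that in that proof the Taylor polynomial $\mathrm{Tay}_{\varphi} F^{\alpha}$ is split into a low-rank part (tensor rank $r \leq 2$) and the top-rank part ($r = r_0 = 3$); the low-rank part is Taylor expanded in $\varphi$ about $0$ to bring it into contact with $|F^{\alpha}|_{k+1,X,T_0}$, producing the polynomial factor $(1+|\varphi|_{k+1,X})^3$, while the top-rank part acquires the factor $16\, L^{-\frac{3}{2}d}$ from the weight ratio $\mathfrak{w}_{k+1}(\alpha)/\mathfrak{w}_k(\alpha) = 2 L^{-|\alpha|} L^{-(d-2)/2}$ cubed, combined with a Taylor-remainder integral over $t \in [0,1]$.

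For the first inequality I would proceed in two steps. First, using the direct sum decomposition $\mathcal{N} = \mathcal{N}^{\es} \oplus \mathcal{N}^{a} \oplus \mathcal{N}^{b} \oplus \mathcal{N}^{ab}$ together with the quotient identifications $s^2 = t^2 = s^2 t = s t^2 = 0$, the components of the product $FG$ are determined by the finite sums
\begin{align*}
(FG)^{\es} &= F^{\es}G^{\es}, \qquad (FG)^{a} = F^{\es}G^{a} + F^{a}G^{\es}, \\
(FG)^{b} &= F^{\es}G^{b} + F^{b}G^{\es}, \\
(FG)^{ab} &= F^{\es}G^{ab} + F^{a}G^{b} + F^{b}G^{a} + F^{ab}G^{\es},
\end{align*}
so each $(FG)^{\gamma}$ is a sum of products $F^{\beta_1} G^{\beta_2}$ with $|\beta_1|+|\beta_2|=|\gamma|$. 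Second, I would apply the scalar bulk submultiplicativity of Lemma 8.1 in \cite{ABKM} to each summand and regroup:
\begin{align*}
|FG|_{k,X,T_{\varphi}}^{\ext}
&= \sum_{\gamma} |(FG)^{\gamma}|_{k,X,T_{\varphi}}\, l_{\obs,k}^{|\gamma|}
\leq \sum_{\gamma} \sum_{\beta_1+\beta_2=\gamma} |F^{\beta_1}|_{k,X,T_{\varphi}} |G^{\beta_2}|_{k,X,T_{\varphi}}\, l_{\obs,k}^{|\beta_1|+|\beta_2|} \\
&= |F|_{k,X,T_{\varphi}}^{\ext}\, |G|_{k,X,T_{\varphi}}^{\ext}.
\end{align*}

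The main obstacle is structural rather than analytic: I must verify that the locality and constant-shift invariance hypotheses on $F$ and $G$ transfer to each component $F^{\alpha}$, $G^{\alpha}$, so that the bulk result applies. This is immediate, because the relation $F(\varphi+\psi) = F(\varphi)$ for $\psi|_{X^*}$ constant forces the same identity componentwise by uniqueness of the direct sum decomposition, and likewise for the dependence on $\varphi|_{X^*}$. Once these observations are in place, the proof of both inequalities collapses to algebraic regrouping and one invocation of Lemma 8.1 in \cite{ABKM} per component.
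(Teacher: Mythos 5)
Your proposal is correct and follows essentially the same route as the paper: the first inequality is obtained by expanding $FG$ componentwise in the quotient algebra, applying the bulk submultiplicativity of Lemma 8.1 in \cite{ABKM} to each product $F^{\beta_1}G^{\beta_2}$, and regrouping with $l_{\obs,k}^{|\beta_1|+|\beta_2|}=l_{\obs,k}^{|\beta_1|}l_{\obs,k}^{|\beta_2|}$, while the second inequality is, as you note, just the scalar statement of \cite{ABKM} applied to the single component $F^{\alpha}$.
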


\begin{proof}
We write the extended norm as sum $|K|^{\ext}_{k,X,T_{\varphi}} = \sum_{\alpha} l^{|\alpha|}_{\obs,k} |K^{\alpha}|_{k,X,T_{\varphi}}$ and apply Lemma 8.1. from \cite{ABKM} on each (bulk) norm $\vert F^{\alpha} G^{\beta} \vert_{k,X,T_{\varphi}}$. This yields that
$$
l_{\obs,k}^{|\alpha|+|\beta|} \left|F^{\alpha} G^{\beta}\right|_{k,X,T_{\varphi}}
\leq \left( l_{\obs,k}^{|\alpha|} \vert F^{\alpha} \vert_{k,X,T_{\varphi}}\right)
\left( l_{\obs,k}^{|\beta|}  \vert G^{\beta} \vert_{k,X,T_{\varphi}} \right).
$$
Thus
\begin{align*}
\vert FG \vert_{k,X,T_{\varphi}}
&=
\sum_{\alpha} l^{\alpha}_{\obs} \vert (FG)^{\alpha} \vert_{k,X,T_{\varphi}}
%\\ &
\leq 
\left(\sum_{\alpha}l_{\obs,k}^{|\alpha|} |F^{\alpha}|_{T_{\varphi}}\right)\left(\sum_{\alpha}l_{\obs,k}^{|\alpha|} |G^{\alpha}|_{T_{\varphi}}\right)
\end{align*}
since
\begin{align*}
FG &= F^{\es} G^{\es}
+ s \left(F^{a} G^{\es} + F^{\es} G^{a}\right)
+ t \left(F^{b} G^{\es} + F^{\es} G^{b}\right)
\\ & \quad \quad\quad
+ st \left(F^{\es} G^{ab} + F^{ab} G^{\es} + F^{a} G^{b} + F^{b} G^{a}\right).
\end{align*}
This proves the first inequality. 
The second inequality is the same as in \cite{ABKM}.
\end{proof}

The following statement is an extension to observables of Lemma 8.2 from \cite{ABKM}. %Here, $R$ is a parameter depending on $d$.
\begin{lemma}%[Extension of Lemma 6.2 in \cite{ABKM}]
\label{lemma:submultiplicativity2}
Let $\varphi \in \chi_N$. Then
\begin{enumerate}
\item
for any $F_1,F_2 \in M^{\ext}(\mathcal{P}_k)$ and any $X_1,X_2 \in \mathcal{P}_k$ we have
\begin{align*}
| F_1(X_1) F_2(X_2)|^{\ext}_{k,X_1 \cup X_2,T_{\varphi}}
\leq
| F_1(X_1)|^{\ext}_{k,X_1,T_{\varphi}} 
| F_2(X_2)|^{\ext}_{k,X_2,T_{\varphi}};
\end{align*}
\item
for any $F \in M^{\ext}(\mathcal{P}_k)$ and any polymer $X \in \mathcal{P}_k$ the bound
\begin{align*}
| F(X)|_{k+1,\pi(X),T_{\varphi}}
&\leq \max\left\lbrace 1, \frac{\eta^2}{4} L^d\right\rbrace |F(X)|_{k,X \cup \pi(X),T_{\varphi}}
\\
&\leq \max\left\lbrace 1, \frac{\eta^2}{4} L^d\right\rbrace |F(X)|_{k,X,T_{\varphi}}
\end{align*}
holds if $L \geq 2^d + R$.
\end{enumerate}
\end{lemma}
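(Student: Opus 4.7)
The plan is to reduce both statements to the corresponding bulk estimates (Lemma 8.2 in \cite{ABKM}) applied componentwise to the direct-sum decomposition $\mathcal{N} = \mathcal{N}^{\es} \oplus \mathcal{N}^a \oplus \mathcal{N}^b \oplus \mathcal{N}^{ab}$, and then to absorb the additional observable prefactors $l_{\obs,k}^{|\alpha|}$ into either factorisation (Part 1) or a uniform scaling constant (Part 2). The main work is bookkeeping; the only substantive point is the case analysis for the observable weight ratio in Part 2.

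For Part 1, I would first write the product as
\begin{align*}
(F_1 F_2)^{\alpha} = \sum_{\beta + \gamma = \alpha} F_1^{\beta} F_2^{\gamma},
\end{align*}
where the sum is over the obvious compositions in $\{\es,a,b,ab\}$. Applying the bulk Lemma 8.2(1) from \cite{ABKM} to each term yields
\begin{align*}
|F_1^{\beta}(X_1)\,F_2^{\gamma}(X_2)|_{k,X_1 \cup X_2,T_{\varphi}}
\leq |F_1^{\beta}(X_1)|_{k,X_1,T_{\varphi}}\,|F_2^{\gamma}(X_2)|_{k,X_2,T_{\varphi}}.
\end{align*}
Multiplying by $l_{\obs,k}^{|\alpha|} = l_{\obs,k}^{|\beta|} l_{\obs,k}^{|\gamma|}$ and summing over $\alpha$ exactly reproduces the product of the two extended norms, in direct analogy with the argument already used in Lemma \ref{lemma:submultiplicativity1}. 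No new ingredient is required.

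For Part 2, the second inequality $|F(X)|_{k,X \cup \pi(X),T_{\varphi}} \leq |F(X)|_{k,X,T_{\varphi}}$ follows at once from the monotonicity of the test function norm $|g|_{k,Y}$ in $Y$, since enlarging $Y^*$ shrinks the unit ball of admissible test functions. The first inequality is obtained by applying the bulk Lemma 8.2(2) componentwise,
\begin{align*}
|F^{\alpha}(X)|_{k+1,\pi(X),T_{\varphi}} \leq |F^{\alpha}(X)|_{k,X \cup \pi(X),T_{\varphi}},
\end{align*}
valid for $L \geq 2^d + R$, and then rewriting the extended norm as
\begin{align*}
|F(X)|^{\ext}_{k+1,\pi(X),T_{\varphi}}
= \sum_{\alpha} \left(\frac{l_{\obs,k+1}}{l_{\obs,k}}\right)^{|\alpha|}
l_{\obs,k}^{|\alpha|} |F^{\alpha}(X)|_{k+1,\pi(X),T_{\varphi}}.
\end{align*}

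The main (only) obstacle is controlling the prefactor $(l_{\obs,k+1}/l_{\obs,k})^{|\alpha|}$ uniformly. I would use \eqref{scale_transformation}: for $k \leq j_{ab}-1$ the ratio equals $\tfrac{\eta}{2} L^{d/2}$, giving the three possible values $1$, $\tfrac{\eta}{2} L^{d/2}$, $\tfrac{\eta^2}{4} L^{d}$ for $|\alpha| \in \{0,1,2\}$; setting $x = \tfrac{\eta}{2} L^{d/2}$, one has $x \leq \max\{1,x^2\}$ in either case $x \leq 1$ or $x \geq 1$, so the maximum is $\max\{1,\tfrac{\eta^2}{4} L^d\}$. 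For $k \geq j_{ab}$ the ratio equals $2\eta < 1$ (since $\eta < 1/4$), so all three powers are bounded by $1$. Combining the two cases produces the claimed constant, and summing the componentwise bounds then yields the stated inequality.
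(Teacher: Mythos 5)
Your proof is correct and takes essentially the same route as the paper: Part 1 is the componentwise reduction to the bulk estimate plus monotonicity of the $T_{\varphi}$-seminorm in the region (i.e.\ Lemma \ref{lemma:submultiplicativity1} together with $|F(X)|_{k,X\cup Y,T_{\varphi}}\leq |F(X)|_{k,X,T_{\varphi}}$), and Part 2 is the bulk bound from \cite{ABKM} applied to each component with the extra factor $l_{\obs,k+1}^{|\alpha|}/l_{\obs,k}^{|\alpha|}$ controlled via \eqref{scale_transformation}. Your explicit case analysis of the weight ratio (via $x\leq\max\{1,x^2\}$) is just a slightly more detailed version of the paper's one-line assertion that this ratio is bounded by $\frac{\eta^2}{4}L^d$ for $|\alpha|=1,2$.
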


In 2., the factor $\frac{\eta^2}{4} L^d$ is new in comparison to \cite{ABKM}.

\begin{proof}
The first inequality follows from Lemma \ref{lemma:submultiplicativity1} and the estimate
$$
|F(X)|_{k,X\cup Y,T_{\varphi}} \leq |F(X)|_{k,X,T_{\varphi}}.
$$ as in \cite{ABKM}.

For the second inequality note that due to the change of scale we have an additional factor
$$
\frac{l_{\obs,k+1}^{|\alpha|}}{l_{\obs,k}^{|\alpha|}}
\leq \frac{\eta^2}{4} L^d
$$
for $|\alpha| = 1,2$, which appears in the stated inequality. The remaining steps are as in \cite{ABKM}.
\end{proof}

\subsubsection{Submultiplicativity of the norms}

The following claim is based on Lemma 8.3 in \cite{ABKM}, extended to observables. %Here, $R$ is a parameter depending on $d$.

\begin{lemma}%[Extension of Lemma 6.3 in \cite{ABKM}]
\label{lemma:norm _estimate}
Let $L \geq 2^{d+3} + 16 R$ be an odd integer and $h \geq h_0(L)$, where $h_0$ is fixed in Lemma \ref{lemma:Properties_of_weights}. For $k \in \lbrace 0, \ldots, N-1 \rbrace$, let $K \in M^{\ext}(\mathcal{P}_k)$ factor at scale $k$ and let $F \in M(\mathcal{B}_k)$. Then the following bounds hold:
\begin{enumerate}
\item
$\Vert K(X) \Vert^{\ext}_{k,X} \leq \prod_{Y \in \mathcal{C}(X)} \Vert K(Y) \Vert_{k,Y}^{\ext}$ and

$\Vert K(X) \Vert^{\ext}_{k:k+1,X} \leq \prod_{Y \in \mathcal{C}(X)} \Vert K(Y) \Vert_{k:k+1,Y}^{\ext}$

and more generally the same bounds hold for any decomposition $X = \bigcup Y_i$ such that the $Y_i$ are strictly disjoint.
\item
$\Vert F^X K(Y) \Vert_{k,X \cup Y} \leq \Vert K(Y) \Vert_{k,Y} \vertiii{F}_k^{|X|_k}$ for $X,Y \in \mathcal{P}_k$ with $X$ and $Y$ disjoint.
\item For any polymers $X,Y,Z_1,Z_2 \in \mathcal{P}_k$ such that $X \cap Y = \es$, $Z_1 \cap Z_2 =  \es$, and $Z_1,Z_2 \subset \pi(X \cup Y) \cup X \cup Y$,
\begin{align*}
&\Vert F_1^{Z_1} F_2^{Z_2} F_3^{X} K(Y) \Vert_{k+1,\pi(X \cup Y)}
\\ & \qquad\qquad
\leq
\max\left\lbrace 1, \frac{\eta^2}{4} L^d\right\rbrace \Vert K(Y) \Vert_{k:k+1,Y} \vertiii{F_1}_k^{|Z_1|_k} \vertiii{F_2}_k^{|Z_2|_k} \vertiii{F_1}_k^{|X|_k}.
\end{align*}
\item
$\vertiii{\1(B)}_{k,B} = 1$ for $B \in \mathcal{B}_k$.
\end{enumerate}
\end{lemma}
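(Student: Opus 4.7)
My plan is to treat this as the observable-aware analogue of Lemma 8.3 in [ABKM]. The combinatorial skeleton of each of the four bounds is identical to the bulk case, so the real work is only in checking that (i) the extra prefactor $l_{\obs,k}^{|\alpha|}$ built into $|\cdot|^{\ext}_{k,X,T_\varphi}$ does not spoil submultiplicativity, and (ii) the change-of-scale ratio $l_{\obs,k+1}/l_{\obs,k}$ enters cleanly and produces exactly the factor $\max\{1,\eta^2 L^d/4\}$ in part 3. The two ingredients I will keep plugging in are the pointwise submultiplicativity established in Lemmas \ref{lemma:submultiplicativity1} and \ref{lemma:submultiplicativity2}, and the multiplicativity of the weight functions $w_k$, $w_{k:k+1}$, $W_k$ from Lemma \ref{lemma:Properties_of_weights}; note that these weights depend only on $\varphi$ and so are inherited unchanged from the bulk setting.

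For part 1, I use the factoring assumption $K(X)=\prod_{Y\in\mathcal{C}(X)}K(Y)$ and apply Lemma \ref{lemma:submultiplicativity1} repeatedly to obtain the pointwise bound $|K(X)|^{\ext}_{k,X,T_\varphi}\le\prod_Y|K(Y)|^{\ext}_{k,Y,T_\varphi}$, then divide by $w_k^X(\varphi)=\prod_Y w_k^Y(\varphi)$ (respectively by $w_{k:k+1}^X$) using items 2 and 3 of Lemma \ref{lemma:Properties_of_weights} and take the supremum in $\varphi$. The generalisation to an arbitrary strictly disjoint decomposition $X=\bigcup Y_i$ is the same argument, since the factoring property passes to any such decomposition.

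For part 2, I split $F^X=\prod_{B\in\mathcal{B}_k(X)}F(B)$ and apply pointwise submultiplicativity in the extended norm to get $|F^X K(Y)|^{\ext}_{k,X\cup Y,T_\varphi}\le|K(Y)|^{\ext}_{k,X\cup Y,T_\varphi}\prod_B|F(B)|_{k,B,T_\varphi}$; here $F\in M(\mathcal{B}_k)$ has no observable content, so its extended norm coincides with the bulk norm. Using multiplicativity $w_k^{X\cup Y}=w_k^X w_k^Y$ and the defining inequality $|F(B)|_{k,B,T_\varphi}\le W_k^B(\varphi)\vertiii{F(B)}_{k,B}$, together with $w_k^X(\varphi)\ge\prod_B W_k^B(\varphi)$ (one of the structural properties of the weights used in [ABKM]), delivers the claim after taking the supremum. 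For part 3 I follow the same scheme but at scale $k+1$: I first apply item 2 of Lemma \ref{lemma:submultiplicativity2} to pull the extended norm of $K(Y)$ from scale $k+1$ back down to scale $k$, which is precisely where the prefactor $\max\{1,\eta^2 L^d/4\}=\max_\alpha l_{\obs,k+1}^{|\alpha|}/l_{\obs,k}^{|\alpha|}$ is produced (equal to $1$ once $k\ge j_{ab}$, and to $\eta^2 L^d/4$ before coalescence). I then apply pointwise submultiplicativity in the extended norm, divide by $w_{k:k+1}^{\pi(X\cup Y)}$, and exploit the multiplicativity of $w_{k:k+1}$ over the decomposition of $\pi(X\cup Y)$ into the disjoint parts corresponding to $Z_1$, $Z_2$, $X$ and $Y$; the separation hypothesis of item 3 of Lemma \ref{lemma:Properties_of_weights} is guaranteed by the large-distance choice of $L\ge 2^{d+3}+16R$ together with the containment $Z_1,Z_2\subset\pi(X\cup Y)\cup X\cup Y$ and the $\pi$-induced spacing. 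Part 4 is immediate from the definition of $\vertiii{\cdot}$, since $|\1(B)|_{k,B,T_\varphi}=1$ and $W_k^B(\varphi)\ge 1$.

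The main obstacle I anticipate is bookkeeping rather than conceptual: I must ensure that when the extended norm is unfolded as $\sum_\alpha l_{\obs,k}^{|\alpha|}|\cdot|^\alpha$ and each $\alpha$-piece is routed through the multiplicativity and weight inequalities, the $l_{\obs,k}$ powers combine to give exactly the claimed single transition factor in part 3 rather than an extra $l_{\obs,k+1}/l_{\obs,k}$ coming from each observable component. This is resolved by noting that the observable content of the product lies entirely in the unique factor $K(Y)$ (the $F_i$ are bulk), so only one change-of-scale ratio can ever appear, and by the explicit computation \eqref{scale_transformation} that ratio is at most $\max\{1,\eta^2 L^d/4\}^{1/2}$ for $|\alpha|=1$ and at most $\max\{1,\eta^2 L^d/4\}$ for $|\alpha|=2$, both dominated by the single factor stated in part 3.
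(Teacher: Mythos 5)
Your overall route is the same as the paper's: the paper's proof simply observes that the extended pointwise submultiplicativity (Lemmas \ref{lemma:submultiplicativity1} and \ref{lemma:submultiplicativity2}) and the weight properties of Lemma \ref{lemma:Properties_of_weights} allow one to repeat the proof of Lemma 8.3 in \cite{ABKM} verbatim, with the new factor $\max\lbrace 1,\tfrac{\eta^2}{4}L^d\rbrace$ entering only through the change of scale \eqref{scale_transformation}, exactly as you argue; and your observation that the observable content sits in the single factor $K(Y)$, so only one change-of-scale ratio appears, is the right way to see why a single factor suffices. Parts 1, 2 and 4 of your sketch are essentially the paper's argument.

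In part 3, however, the weight bookkeeping as you describe it would fail. The target norm $\Vert\cdot\Vert_{k+1,\pi(X\cup Y)}$ is weighted by $w_{k+1}^{\pi(X\cup Y)}$, not by $w_{k:k+1}^{\pi(X\cup Y)}$, and the multiplicativity of $w_{k:k+1}$ (item 3 of Lemma \ref{lemma:Properties_of_weights}) requires $\mathrm{dist}(X,Y)\ge\tfrac{3}{4}L^{k+1}$, which is not available here: $X$, $Y$, $Z_1$, $Z_2$ are scale-$k$ polymers that may touch and may even be mapped by $\pi$ into the same $(k+1)$-block, and neither the lower bound on $L$ nor the containment $Z_1,Z_2\subset\pi(X\cup Y)\cup X\cup Y$ produces any such separation. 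The correct chain, which is the one from \cite{ABKM} that the paper imports, is: after the scale-change step (item 2 of Lemma \ref{lemma:submultiplicativity2}, which is where $\max\lbrace 1,\tfrac{\eta^2}{4}L^d\rbrace$ arises, as you have it), bound each $F_i$-block by $W_k^{B}(\varphi)\,\vertiii{F_i}_k$ and combine the $W_k$-weights by their multiplicativity (item 4), bound $\vert K(Y)\vert_{k,Y,T_\varphi}\le w_{k:k+1}^{Y}(\varphi)\,\Vert K(Y)\Vert_{k:k+1,Y}$, and then absorb the resulting product $w_{k:k+1}^{Y}(\varphi)\,\bigl(W_k^{U^+}(\varphi)\bigr)^2$ with $U=\pi(X\cup Y)$ into $w_{k+1}^{U}(\varphi)$ via item 5 of Lemma \ref{lemma:Properties_of_weights}; no separation hypothesis is needed anywhere. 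With this substitution your argument coincides with the paper's; as written, the appeal to item 3 is a step that does not go through.
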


In 3., the factor $\frac{\eta^2}{4} L^d$ is new in comparison to\cite{ABKM}.

\begin{proof}
Ingredients for the proof are the submultiplicativity of the $T_{\varphi}$-seminorm in Lemma \ref{lemma:submultiplicativity2} and properties of the weights. Since the submultiplicativity also holds for extended functionals the proof is exactly the same as in \cite{ABKM}. The new factor $\frac{\eta^2}{4} L^d$ appears in the transition from one scale to the next one using \eqref{scale_transformation}.
\end{proof}

\subsubsection{Regularity of the integration map}

We extend Lemma 8.4 from \cite{ABKM} to observables. %Here, $R$ is a parameter depending on $d$.

\begin{lemma}%[Extension of Lemma 6.4. in \cite{ABKM}]
\label{lemma:Regularity_of_Integration_Map}
Let $L \geq 2^{d+3} + 16 R$ and let $A_{\mathcal{P}}$ be the constant from Lemma \ref{lemma:Properties_of_weights}. Then
\begin{align*}
\Vert \mathcal{R}_{k+1} K(X) \Vert_{k:k+1,X}^{\ext} \leq \left( \frac{A_{\mathcal{P}}}{2}\right)^{|X|_k} \Vert K(X) \Vert_{k,X}^{\ext}.
\end{align*}
If $X$ is a block the constant is $A_{\mathcal{B}}$ which is independent of $L$.
\end{lemma}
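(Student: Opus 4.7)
The plan is to reduce the extended statement to the bulk statement (the original Lemma 8.4 of \cite{ABKM}) by decomposing $K$ along its observable components. Concretely, any $K \in M^{\ext}(\mathcal{P}_k^c)$ has the form $K = K^{\es} + s K^a + t K^b + st K^{ab}$, and since the external fields $s,t$ commute with the Gaussian integration, $\mathcal{R}_{k+1}$ acts componentwise: $(\mathcal{R}_{k+1} K)^{\alpha} = \mathcal{R}_{k+1}(K^{\alpha})$ for each $\alpha \in \{\es,a,b,ab\}$.

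The core pointwise estimate to establish is
\begin{align*}
|\mathcal{R}_{k+1} K^{\alpha}(X)|_{k,X,T_{\varphi}} \leq \int |K^{\alpha}(X)|_{k,X,T_{\varphi+\xi}} \, \mu_{k+1}(\de\xi),
\end{align*}
which follows because the Taylor polynomial operator commutes with the Gaussian convolution, so that $\mathrm{Tay}_{\varphi}\mathcal{R}_{k+1} K^{\alpha} = \int \mathrm{Tay}_{\varphi+\xi} K^{\alpha} \, \mu_{k+1}(\de\xi)$, and then pairing against a test function $g$ with $|g|_{k,X} \leq 1$ and exchanging the sup with the integral yields the bound. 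Multiplying by $l_{\obs,k}^{|\alpha|}$ and summing over $\alpha$ gives the corresponding inequality for the extended $T_{\varphi}$-seminorm, namely
\begin{align*}
|\mathcal{R}_{k+1} K(X)|_{k,X,T_{\varphi}}^{\ext} \leq \int |K(X)|_{k,X,T_{\varphi+\xi}}^{\ext} \, \mu_{k+1}(\de\xi).
\end{align*}

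Dividing by $w_{k:k+1}^X(\varphi)$ and estimating $|K(X)|_{k,X,T_{\varphi+\xi}}^{\ext} \leq \Vert K(X)\Vert_{k,X}^{\ext}\, w_k^X(\varphi+\xi)$ reduces everything to controlling
\begin{align*}
\sup_{\varphi} w_{k:k+1}^X(\varphi)^{-1} \int w_k^X(\varphi+\xi)\, \mu_{k+1}(\de\xi).
\end{align*}
By Jensen's inequality with exponent $1+\rho$ for $\rho = (1+\zeta/4)^{1/3}-1$, this quantity is bounded by the $L^{1+\rho}$-norm handled in Lemma \ref{lemma:Properties_of_weights}(7), which gives a factor $(A_{\mathcal{P}}/2)^{|X|_k}$; when $X$ is a single block one uses part (8) instead to obtain the $L$-independent constant $A_{\mathcal{B}}/2$. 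Combining these gives the claim.

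This proof is essentially the bulk argument of \cite{ABKM} applied in parallel to each observable sector; no obstacle of substance arises, because the observable weight $l_{\obs,k}$ depends only on the scale $k$ (and on $|\alpha|$) and can be pulled out of the Gaussian integration. The only conceptual point worth checking is that the projection formula $(\mathcal{R}_{k+1} K)^{\alpha} = \mathcal{R}_{k+1}(K^{\alpha})$ holds in the quotient algebra $\mathcal{N} = \tilde{\mathcal{N}}/\mathcal{I}$, which is immediate since truncation of the formal power series in $s,t$ at order $1, s, t, st$ commutes with Gaussian convolution in $\varphi$.
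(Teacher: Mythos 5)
Your proof is correct and follows essentially the same route as the paper: the paper simply observes that the argument of Lemma 8.4 in \cite{ABKM} uses no special property of the $T_{\varphi}$-seminorm and hence applies verbatim to the extended norm, which is exactly what you reconstruct (componentwise action of $\mathcal{R}_{k+1}$ on the quotient algebra, commuting Taylor expansion with the Gaussian convolution, factoring out $l_{\obs,k}^{|\alpha|}$, and then invoking the weight inequalities (7) and (8) of Lemma \ref{lemma:Properties_of_weights} via Jensen). No gap; your write-up just makes explicit the reduction the paper leaves implicit.
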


\begin{proof}
The proof in \cite{ABKM} does not use any special property of the $T_{\varphi}$-seminorm, so it works exactly as in \cite{ABKM}.
\end{proof}

For later reference we state the following inequality which appears in the proof of Lemma 8.4 from \cite{ABKM}.
\begin{lemma}
Assume that Lemma \ref{lemma:Regularity_of_Integration_Map} holds. Then
\begin{align}
\vert \mathcal{R}_{k+1} K(X) \vert_{k,X,T_{\varphi}}^{\ext}
\leq \Vert K(X) \Vert_{k,X}^{\ext} \left( \frac{A_{\mathcal{P}}}{2} \right)^{|X|_k} w_{k:k+1}^X(\varphi).
\label{integration_estimate}
\end{align}
If $X$ is a block the constant is $A_{\mathcal{B}}$ which is independent of $L$.
\end{lemma}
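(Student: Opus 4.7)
The claim is essentially a repackaging of Lemma \ref{lemma:Regularity_of_Integration_Map} in pointwise form, so the plan is short: undo the supremum in the definition of $\Vert \cdot \Vert^{\ext}_{k:k+1,X}$ and then feed in the bound from that lemma.

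First I would recall that, by the very definition of the weighted norm,
\begin{align*}
\Vert \mathcal{R}_{k+1} K(X) \Vert_{k:k+1,X}^{\ext}
= \sup_{\psi} \vert \mathcal{R}_{k+1} K(X) \vert_{k,X,T_{\psi}}^{\ext}\, w_{k:k+1}^X(\psi)^{-1},
\end{align*}
so for any fixed $\varphi$ one has the trivial inequality
\begin{align*}
\vert \mathcal{R}_{k+1} K(X) \vert_{k,X,T_{\varphi}}^{\ext}
\leq \Vert \mathcal{R}_{k+1} K(X) \Vert_{k:k+1,X}^{\ext}\, w_{k:k+1}^X(\varphi).
\end{align*}

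Next I would insert Lemma \ref{lemma:Regularity_of_Integration_Map}, which provides the bound $\Vert \mathcal{R}_{k+1} K(X) \Vert_{k:k+1,X}^{\ext} \leq (A_{\mathcal{P}}/2)^{|X|_k} \Vert K(X) \Vert_{k,X}^{\ext}$, and conclude
\begin{align*}
\vert \mathcal{R}_{k+1} K(X) \vert_{k,X,T_{\varphi}}^{\ext}
\leq \Vert K(X) \Vert_{k,X}^{\ext} \left( \frac{A_{\mathcal{P}}}{2} \right)^{|X|_k} w_{k:k+1}^X(\varphi),
\end{align*}
which is exactly \eqref{integration_estimate}. The block case is identical, simply invoking the block statement of Lemma \ref{lemma:Regularity_of_Integration_Map} with constant $A_{\mathcal{B}}$ in place of $A_{\mathcal{P}}/2$.

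There is no real obstacle here; the point of isolating the inequality is that the proof of Lemma \ref{lemma:Regularity_of_Integration_Map} actually establishes \eqref{integration_estimate} as an intermediate step, from which the norm bound is obtained by taking a supremum and using the defining relation $\vertiii{F}_{k,X}^{\ext} = \sup_{\varphi} |F(X)|_{k,X,T_{\varphi}}^{\ext} W_k^X(\varphi)^{-1}$ (respectively $w$-weighted versions). Extracting \eqref{integration_estimate} from that argument only requires recording the pointwise step before the final supremum, so no additional work beyond the two lines above is necessary.
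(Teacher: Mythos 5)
Your argument is correct and is essentially the paper's own justification: since $\Vert \mathcal{R}_{k+1}K(X)\Vert_{k:k+1,X}^{\ext}=\sup_{\psi}\vert \mathcal{R}_{k+1}K(X)\vert_{k,X,T_{\psi}}^{\ext}\,w_{k:k+1}^{X}(\psi)^{-1}$ with a strictly positive weight, the pointwise bound \eqref{integration_estimate} is just Lemma \ref{lemma:Regularity_of_Integration_Map} with the supremum unfolded, and the block case follows identically with $A_{\mathcal{B}}$. The paper records \eqref{integration_estimate} as the intermediate step inside the proof of Lemma \ref{lemma:Regularity_of_Integration_Map} (Lemma 8.4 in \cite{ABKM}), which is the same equivalence read in the opposite direction, so no gap remains.
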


\subsubsection{The extended projection $\Pi_k$ to relevant Hamiltonians} \label{subsubsec:Projection}

We extend the space of relevant Hamiltonians to observables.

Let $\mathcal{U}= \lbrace e_1, \ldots, e_d \rbrace$. The monomials which appear in \cite{ABKM} are
\begin{align*}
M(\lbrace x \rbrace)_{\es}(\varphi) &= 1, \\
M(\lbrace x \rbrace)_{\beta}(\varphi) &= \nabla^{\beta} \varphi(x), \\
M(\lbrace x \rbrace)_{\beta,\gamma}(\varphi) &= \nabla^{\beta} \varphi(x) \nabla^{\gamma} \varphi(x).
\end{align*}

Then the corresponding index sets are
\begin{align*}
\mathfrak{v}_0 &= \lbrace\es\rbrace, \\
\mathfrak{v}_1 &= \lbrace \beta: \beta \in \N_0^{\mathcal{U}}, 1 \leq |\beta| \leq \lfloor d/2 \rfloor + 1 \rbrace, \\
\mathfrak{v}_2 &= \lbrace (\beta,\gamma): \beta,\gamma \in \N_0^{\mathcal{U}}, |\beta| = |\gamma| = 1, \beta < \gamma \rbrace.
\end{align*}
Here, $\beta < \gamma$ refers to any ordering of $\mathcal{U}$. We additionally define
\begin{align*}
\mathfrak{v}_0^{\alpha} &= \lbrace\es\rbrace, \quad\alpha \in \lbrace a,b,ab \rbrace, \\
\mathfrak{v}_1^{\alpha} &= \lbrace \beta \in \N_0^{\mathcal{U}}: |\beta| = 1 \rbrace, \quad\alpha \in\lbrace a,b \rbrace.
\end{align*}
We set
$$
\mathfrak{v}^{\ext} = \mathfrak{v}_0 \cup \mathfrak{v}_1 \cup \mathfrak{v}_2 \cup \mathfrak{v}_0^a \cup \mathfrak{v}_1^a \cup \mathfrak{v}_0^b \cup \mathfrak{v}_1^b \cup\mathfrak{v}_0^{ab}.
$$
The space of relevant Hamiltonians is given by
$$
\mathcal{V}^{\ext} = \mathcal{V}_0 \oplus \mathcal{V}_1 \oplus \mathcal{V}_2 \oplus \mathcal{V}_0^a \oplus \mathcal{V}_1^a \oplus \mathcal{V}_0^b \oplus \mathcal{V}_1^b \oplus\mathcal{V}_0^{ab}
$$
where
\begin{align*}
\mathcal{V}_0 &= \R , %\quad
\\
\mathcal{V}_1 &= \text{span} \lbrace M_{\mathfrak{m}}(B): \mathfrak{m} \in \mathfrak{v}_1 \rbrace, %\quad
\\
\mathcal{V}_2 &= \text{span} \lbrace M_{\mathfrak{m}}(B): \mathfrak{m} \in \mathfrak{v}_2 \rbrace ,
\\
\mathcal{V}_0^{\alpha} &= \R, \quad \alpha \in \lbrace a,b,ab \rbrace, %\qquad
\\
\mathcal{V}_1^{\alpha} &= \text{span} \lbrace M_{\mathfrak{m}}(\lbrace\alpha \rbrace): \mathfrak{m} \in \mathfrak{v}_1^{\alpha} \rbrace, \quad \alpha \in \lbrace a,b \rbrace.
\end{align*}
As in \cite{ABKM}, we set
\begin{align*}
b_{\beta}(z) = \binom{z_1}{\beta_1} \cdots \binom{z_d}{\beta_d},
\quad z \in \Z^d, \quad \beta \in \N_0^{\lbrace 1, \ldots, d \rbrace}.
\end{align*}
We extend the basis for polynomials on $\Z^d$ for $\alpha \in \lbrace a,b \rbrace$ by
$$
b_{\beta}^{\alpha}(z) = \binom{z_1 - \alpha_1}{\beta_1} \ldots \binom{z_d - \alpha_d}{\beta_d}.
$$
Using these functions we can extend the space $\mathcal{P}$ in \cite{ABKM} to observables by defining
\begin{align*}
\mathcal{P}_0^{\alpha} &= \R, \quad \alpha \in \lbrace a,b,ab \rbrace,
\\
%\qquad
\mathcal{P}_1^{\alpha} &= \text{span} \lbrace b_{\beta}^{\alpha}: \beta \in \mathfrak{v}_1^{\alpha} \rbrace,
\quad \alpha \in \lbrace a,b \rbrace,
\end{align*}
and setting
\begin{align*}
\mathcal{P}^{\ext} = \mathcal{P} \oplus \mathcal{P}_0^a \oplus \mathcal{P}_1^a \oplus \mathcal{P}_0^b \oplus \mathcal{P}_1^b \oplus \mathcal{P}_0^{ab}.
\end{align*}
Now we can formulate the extension of Lemma 8.5 from \cite{ABKM}. The notation $\langle F,g\rangle_{\varphi} = \langle \text{Tay}_{\varphi} F,g \rangle$ is used, as in \cite{ABKM}.

\begin{lemma}%[Lemma 6.5 in \cite{ABKM}]
\label{lemma:existence_of_projection}
Let $K \in M^{\ext}(\mathcal{P}_k^c,\chi_N)$ and let $B \in \mathcal{B}_k$. Then there exists one and only one $H \in \mathcal{V}^{\ext}$ such that
$$
\langle H,g \rangle_0 = \langle K(B),g \rangle_0 
\quad \text{for all } g \in \mathcal{P}^{\ext}.
$$
More precisely, for $\alpha \in \lbrace a,b \rbrace$,
$$
H^{\alpha}(\varphi) = K^{\alpha}(0) + n^{\alpha} \nabla\varphi(\alpha),
$$  where
\begin{align}
n_{\gamma}^{\alpha} = \langle K^{\alpha}(B),b_{\gamma}^{\alpha} \rangle_0
\quad \text{for all } \gamma \in \mathfrak{v}_1^{\alpha}
\label{eq:formula_linear_projection}
\end{align}
and 
$$
q^{ab} = K^{ab}(0).
$$
\end{lemma}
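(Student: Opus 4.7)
The plan is to exploit the direct-sum structure
$\mathcal{V}^{\ext}=\mathcal{V}\oplus\mathcal{V}_0^a\oplus\mathcal{V}_1^a\oplus\mathcal{V}_0^b\oplus\mathcal{V}_1^b\oplus\mathcal{V}_0^{ab}$
together with the corresponding decomposition
$\mathcal{P}^{\ext}=\mathcal{P}\oplus\mathcal{P}_0^a\oplus\mathcal{P}_1^a\oplus\mathcal{P}_0^b\oplus\mathcal{P}_1^b\oplus\mathcal{P}_0^{ab}$
and the fact that the projections $\pi^{\es},\pi^{a},\pi^{b},\pi^{ab}$ act compatibly with the pairing. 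Since each $g\in\mathcal{P}_j^{\alpha}$ lives in a specific homogeneous tensor power of $\mathcal{V}_N$ (degree $0$ for $\mathcal{P}_0^{\alpha}$ and degree $1$ for $\mathcal{P}_1^{\alpha}$), the pairing $\langle H,g\rangle_0=\langle\operatorname{Tay}_0 H,g\rangle$ decomposes as a sum of decoupled pairings on the summands. Thus the problem reduces to showing existence and uniqueness of the projection on each summand separately.

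For the bulk summand $\mathcal{V}\subset\mathcal{V}^{\ext}$, existence and uniqueness of $H^{\es}$ with $\langle H^{\es},g\rangle_0=\langle K^{\es}(B),g\rangle_0$ for all $g\in\mathcal{P}$ is Lemma 8.5 of \cite{ABKM}, which I would simply cite. For the summand $\mathcal{V}_0^{ab}$, the statement is trivial: $H^{ab}\in\R$ is a constant, $\mathcal{P}_0^{ab}=\R$ consists of constants, and the pairing with $1\in\mathcal{P}_0^{ab}$ forces $H^{ab}=K^{ab}(B,0)=K^{ab}(0)$.

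The nontrivial content is the case $\alpha\in\{a,b\}$. Here $H^{\alpha}\in\mathcal{V}_0^{\alpha}\oplus\mathcal{V}_1^{\alpha}$ is of the form $H^{\alpha}(\varphi)=c^{\alpha}+\sum_{i=1}^d n_i^{\alpha}\nabla_i\varphi(\alpha)$. I would first test against constants: for $g=1\in\mathcal{P}_0^{\alpha}$ one reads off $\langle H^{\alpha},1\rangle_0=c^{\alpha}$, which together with the requirement $\langle H^{\alpha},1\rangle_0=\langle K^{\alpha}(B),1\rangle_0=K^{\alpha}(B,0)$ uniquely determines the constant. Next, for $g=b_{e_j}^{\alpha}\in\mathcal{P}_1^{\alpha}$, a direct computation using $b_{e_j}^{\alpha}(z)=z_j-\alpha_j$ gives $\nabla_i b_{e_j}^{\alpha}(\alpha)=\delta_{ij}$, so that
\[
\langle H^{\alpha},b_{e_j}^{\alpha}\rangle_0
=\sum_{i=1}^d n_i^{\alpha}\,\nabla_i b_{e_j}^{\alpha}(\alpha)
=n_j^{\alpha}.
\]
Comparing with $\langle K^{\alpha}(B),b_{e_j}^{\alpha}\rangle_0$ forces \eqref{eq:formula_linear_projection}. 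The matrix of the pairing between $\mathcal{V}_1^{\alpha}$ and $\mathcal{P}_1^{\alpha}$ in the natural bases is the identity, so this system is uniquely solvable, proving both existence and uniqueness on this summand.

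There is no real obstacle here beyond the bookkeeping of matching each basis of relevant monomials with its dual basis of test polynomials; the observable summands are designed precisely so that the finite-dimensional pairing is non-degenerate. Assembling the projections on the individual summands into the single $H\in\mathcal{V}^{\ext}$ yields the lemma.
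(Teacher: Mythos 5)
Your proposal is correct and follows essentially the same route as the paper: cite \cite{ABKM} for the bulk summand, read off the constants $\lambda^{\alpha}=K^{\alpha}(B,0)$ and $q^{ab}=K^{ab}(B,0)$ directly, and determine the linear coefficients by testing against the basis $b^{\alpha}_{\gamma}$, where your computation $\nabla_i b^{\alpha}_{e_j}(\alpha)=\delta_{ij}$ is exactly the paper's verification that the pairing matrix $B_{\beta\gamma}=b^{\alpha}_{\gamma-\beta}(\alpha)=\1_{\beta=\gamma}$ is the identity, giving \eqref{eq:formula_linear_projection}.
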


\begin{defn}
We define $\Pi K(B) = H$ where $H$ is given by Lemma \ref{lemma:existence_of_projection}.
\end{defn}

\begin{proof}[Proof of Lemma \ref{lemma:existence_of_projection}]
The bulk part of $K$ is handled in \cite{ABKM}. The constant observable part of $H \in \mathcal{V}^{\ext}$ is given by
\begin{align*}
\lambda^a = K^a(B,0),
\quad
\lambda^b = K^b(B,0),
\quad
q^{ab} = K^{ab}(B,0).
\end{align*}
We turn to the linear observable part of $H$. We claim that for $\alpha \in \lbrace a,b \rbrace$  there is a unique $H^{1,\alpha}~\in~\mathcal{V}_1^{\alpha}$ such that
\begin{align*}
\langle H^{1,\alpha},g \rangle_0
= \langle K^a(B),g \rangle_0
\quad\text{for all } g \in \mathcal{P}_1^{\alpha}.
\end{align*}
An element $H^{1,\alpha} \in \mathcal{V}_1^{\alpha}$ is of the form $ \sum_{\beta \in \mathfrak{v}_1^{\alpha}} n_{\beta}^{\alpha} M_{\beta}(\lbrace \alpha \rbrace)$ for some $n_{\beta}^{\alpha}$ yet to be determined.

Testing against the basis $\lbrace b_{\beta}^{\alpha}: \beta \in \mathfrak{v}_1^{\alpha}\rbrace$ of $\mathcal{P}_1^{\alpha}$ we have to show that there is a family $\left(n_{\beta}^{\alpha}\right)_{\beta \in \mathfrak{v}_1^{\alpha}}$ such that
\begin{align*}
\sum_{\beta \in \mathfrak{v}_1^{\alpha}} n_{\beta}^{\alpha}
\langle M_{\beta}(\lbrace\alpha \rbrace), b_{\gamma}^{\alpha} \rangle_0
= \langle K^{\alpha}(B), b_{\beta}^{\alpha} \rangle_0
\quad \text{for all } \gamma \in \mathfrak{v}_1^{\alpha}.
\end{align*}
The last equality is equivalent to
\begin{align*}
\sum_{\beta \in \mathfrak{v}_1^{\alpha}} n_{\beta}^{\alpha}
B_{\beta\gamma}
= \langle K^{\alpha}(B), b_{\beta}^{\alpha} \rangle_0
\quad \text{for all } \gamma \in \mathfrak{v}_1^{\alpha}
\end{align*}
with
$$
B_{\beta\gamma} = \langle \nabla^{\beta}\varphi(\alpha), b_{\gamma}^{\alpha} \rangle_0
= \langle \text{Tay}_0 \nabla^{\beta} \varphi(\alpha), b_{\gamma}^{\alpha} \rangle
= \nabla^{\beta} b_{\gamma}^{\alpha}(\alpha)
= b_{\gamma - \beta}^{\alpha}(\alpha).
$$
For $\beta, \gamma \in \mathfrak{v}_1^{\alpha}$ we get that $B_{\beta,\gamma} = \1_{\beta = \gamma}$ and thus
\begin{align*}
n_{\gamma}^{\alpha} = \langle K^{\alpha}(B),b_{\gamma}^{\alpha} \rangle_0
\quad \text{for all } \gamma \in \mathfrak{v}_1^{\alpha}.
\end{align*}
\end{proof}

The following statement is an extension to observables of Lemma 8.7 from \cite{ABKM}. %Here, $R$ is a parameter depending on $d$.

\begin{lemma}%[Extension of Lemma 6.7 in \cite{ABKM}] 
\label{Boundedness_of_Projection}
There exists a constant $C$ such that for $L\geq 2^d + R$ and $0 \leq k \leq N-1$
$$
\Vert \Pi_k K(B) \Vert_{k,0}^{\ext} \leq C |K(B)|_{k,B,T_0}^{\ext}.
$$
\end{lemma}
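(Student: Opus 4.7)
The plan is to decompose both sides according to the direct sum decomposition $\mathcal{N} = \mathcal{N}^{\es} \oplus \mathcal{N}^a \oplus \mathcal{N}^b \oplus \mathcal{N}^{ab}$ and bound each component separately, using the explicit formulas from Lemma \ref{lemma:existence_of_projection} for the coupling constants. For the bulk part, the estimate $\Vert \Pi_2 K^{\es}(B) \Vert_{k,0} \leq C\,|K^{\es}(B)|_{k,B,T_0}$ is exactly Lemma 8.7 from \cite{ABKM}, so it can be cited directly. Multiplying by $l_{\obs,k}^{0} = 1$ on both sides and comparing with the definitions of $\Vert\cdot\Vert_{k,0}^{\ext}$ and $|\cdot|_{k,B,T_0}^{\ext}$ handles this summand.

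For the constant observable parts I would use $\lambda^\alpha = K^\alpha(B,0)$ ($\alpha \in \{a,b\}$) and $q^{ab} = K^{ab}(B,0)$ and the observation that testing against the constant test function $g = 1 \in \mathcal{V}_N^{\otimes 0}$ (whose norm $|1|_{k,B} = 1$) gives $|K^\alpha(B,0)| = |\langle K^\alpha(B),1\rangle_0| \leq |K^\alpha(B)|_{k,B,T_0}$. Multiplying by $l_{\obs,k}^{|\alpha|}$ yields the required bound on these summands.

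The genuinely extended step is the linear part. By \eqref{eq:formula_linear_projection}, $n_\gamma^\alpha = \langle K^\alpha(B), b_\gamma^\alpha\rangle_0$ for $\gamma \in \mathfrak{v}_1^\alpha$, so
\[
|n_\gamma^\alpha| \leq |K^\alpha(B)|_{k,B,T_0}\,|b_\gamma^\alpha|_{k,B}.
\]
For $\gamma = e_i$ the function $b_{e_i}^\alpha(z) = z_i - \alpha_i$ is a degree-one polynomial, so only the first-order discrete derivative $\nabla_i b_{e_i}^\alpha \equiv 1$ survives, giving
\[
|b_{e_i}^\alpha|_{k,B} = \mathfrak{w}_k(e_i)^{-1} = h_k^{-1} L^{k|\alpha|} L^{k(d-2)/2} = h_k^{-1} L^{kd/2}.
\]
The key identity is $l_k = h_k L^{-kd/2}$, whence $l_k\,|b_{e_i}^\alpha|_{k,B} = 1$. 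Multiplying by $l_{\obs,k} l_k$ then yields
\[
l_{\obs,k} l_k\,|n_i^\alpha| \leq l_{\obs,k}\,|K^\alpha(B)|_{k,B,T_0},
\]
which matches the corresponding linear contribution to $\Vert H\Vert_{k,0}^{\ext}$ once one multiplies by a constant depending only on $d$ (to absorb the sum over $i\in\{1,\ldots,d\}$).

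Summing the bulk, constant-observable and linear-observable estimates and replacing all weights by the maximum of the constants produced in each step gives the claimed bound. The only point requiring care is the verification that the $b_\gamma^\alpha$ are legitimate test functions in $\Phi$ with the stated norm --- this is a one-line computation because all higher derivatives vanish --- so I do not anticipate any substantive obstacle; the argument is essentially an observable-by-observable adaptation of the proof in \cite{ABKM}.
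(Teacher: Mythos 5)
Your proposal is correct and follows essentially the same route as the paper: cite \cite{ABKM} for the bulk part, bound the constant observable coefficients by $|K^{\alpha}(B,0)| \leq |K^{\alpha}(B)|_{k,B,T_0}$, and bound the linear coefficients via \eqref{eq:formula_linear_projection} together with the norm identity $|b^{\alpha}_{e_i}|_{k,B} = l_k^{-1}$ (which the paper asserts and you verify explicitly from $\mathfrak{w}_k$ and $l_k = L^{-kd/2}h_k$). No substantive difference beyond your spelling out that computation and the harmless dimensional constant absorbing the sum over $i$.
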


\begin{proof}
The bulk part of the estimate is done in \cite{ABKM}. What remains to prove is
$$
\Vert \Pi_k^{\alpha} K^{\alpha}(B) \Vert_{k,0}^{\alpha}
\leq C l^{|\alpha|}_{\obs,k} |K^{\alpha}(B)|_{k,B,T_0}.
$$
Since for the constant part of the projection we have $\lambda^{\alpha} = K^{\alpha}(B,0)$ for $\alpha \in \lbrace a,b \rbrace$ and $q^{ab} = K^{ab}(B,0)$ we just have to estimate the coefficients $n^{\alpha}$ of the linear part of the projection.

Since $n^{\alpha} = \langle K^{\alpha}(B), b^{\alpha} \rangle_0$ (see \eqref{eq:formula_linear_projection} in Lemma \ref{lemma:existence_of_projection}) we have to show that
$$
l_{\obs,k} l_k \vert \langle K^{\alpha}(B), b^{\alpha} \rangle_0 \vert
\leq C l_{\obs,k} |K^{\alpha}(B)|_{k,B,T_0}.
$$
However, this follows directly from the definition of the $T_{\varphi}$-seminorm and since $|b^{\alpha}|_{k,B}~=~l^{-1}_k$:
\begin{align*}
 \langle K^{\alpha}(B), b^{\alpha} \rangle_0
\leq |b^{\alpha}|_{k,B}  \sup_{|g|_{k,B} \leq 1} \langle K^{\alpha}(B),g\rangle_0
\leq l^{-1}_k \vert K^{\alpha}(B) \vert_{k,B,T_0}.
\end{align*}

\end{proof}

We extend Lemma 8.8 from \cite{ABKM} to observables.
\begin{lemma}%[Extension of Lemma 6.8 in \cite{ABKM}]
\label{lemma:relevant_variable}
For $H \in M_0^{\ext}$, $L \geq 3$, and $0 \leq k \leq N$ we have
$$
|H|_{T_{\varphi}}^{\ext} 
\leq (1 + |\varphi|_{k,B})^2 \Vert H \Vert_{k,0}^{\ext}
\leq 2 (1 + |\varphi|^2_{k,B}) \Vert H \Vert_{k,0}^{\ext}.
$$
\end{lemma}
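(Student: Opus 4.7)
The statement decomposes cleanly along the direct sum $H = H^{\es} + sH^a + tH^b + st H^{ab}$, and the extended norm splits accordingly as
\[
|H|_{k,B,T_\varphi}^{\ext}=|H^{\es}|_{k,B,T_\varphi}+l_{\obs,k}|H^a|_{k,B,T_\varphi}+l_{\obs,k}|H^b|_{k,B,T_\varphi}+l_{\obs,k}^2|H^{ab}|_{k,B,T_\varphi}.
\]
So the plan is to handle the bulk piece by quoting the analogous bulk estimate (Lemma 8.8 in \cite{ABKM}), which gives $|H^{\es}|_{k,B,T_\varphi}\leq (1+|\varphi|_{k,B})^2 \Vert H^{\es}\Vert_{k,0}$, and to compute the observable pieces by hand since they depend only affinely (respectively trivially) on $\varphi$.

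For $\alpha\in\{a,b\}$ write $H^{\alpha}(\varphi)=\lambda^{\alpha}+\sum_{i=1}^d n_i^{\alpha}\nabla_i\varphi(\alpha)$ (with the indicator $\1_{\alpha\in B}$ absorbed, since otherwise the contribution vanishes). The Taylor polynomial $\mathrm{Tay}_\varphi H^\alpha$ has only a $0$\textsuperscript{th}-order piece $\lambda^{\alpha}+n^{\alpha}\nabla\varphi(\alpha)$ and a $1$\textsuperscript{st}-order piece $\dot\varphi\mapsto n^{\alpha}\nabla\dot\varphi(\alpha)$. Testing against $g$ with $|g|_{k,B}\leq 1$ and using $\mathfrak{w}_k(e_i)=h_kL^{-k}L^{-k(d-2)/2}=l_k$, together with $|\nabla_i\varphi(\alpha)|\leq l_k|\varphi|_{k,B}$ and $|\nabla_i g^{(1)}(\alpha)|\leq l_k$, gives
\[
|H^\alpha|_{k,B,T_\varphi}\leq |\lambda^\alpha|+l_k|\varphi|_{k,B}\sum_{i=1}^d|n_i^\alpha|+l_k\sum_{i=1}^d|n_i^\alpha|\leq\bigl(|\lambda^\alpha|+l_k{\textstyle\sum_i}|n_i^\alpha|\bigr)(1+|\varphi|_{k,B}).
\]
Multiplying by $l_{\obs,k}$ identifies the right-hand side with $\Vert H^\alpha\Vert_{k,0}^{\alpha}(1+|\varphi|_{k,B})$. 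The $ab$-piece is even simpler: $H^{ab}=q^{ab}$ is a constant, so $|H^{ab}|_{k,B,T_\varphi}=|q^{ab}|$ and $l_{\obs,k}^2|H^{ab}|_{k,B,T_\varphi}=\Vert H^{ab}\Vert_{k,0}^{ab}$.

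Summing the four contributions and bounding $(1+|\varphi|_{k,B})$ by $(1+|\varphi|_{k,B})^2$ in the observable terms yields the first inequality
\[
|H|_{k,B,T_\varphi}^{\ext}\leq (1+|\varphi|_{k,B})^2\Vert H\Vert_{k,0}^{\ext}.
\]
The second inequality is purely algebraic: $(1+x)^2\leq 2(1+x^2)$ for all $x\in\R$, applied with $x=|\varphi|_{k,B}$. There is no real obstacle here, only a bookkeeping check that the chosen normalisation of $\Vert\cdot\Vert_{k,0}^{\ext}$ (with the factors $l_{\obs,k}$ and $l_k$) matches exactly what comes out of the pairing $\langle\cdot,g\rangle$ on the observable monomials; this is an immediate consequence of $\mathfrak{w}_k(e_i)=l_k$.
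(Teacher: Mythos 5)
Your proposal is correct and follows essentially the same route as the paper: decompose the extended norms along $\lbrace \es,a,b,ab\rbrace$, quote the bulk estimate of Lemma 8.8 in \cite{ABKM} for $H^{\es}$, and bound the observable monomials directly from the definition of the $T_{\varphi}$-seminorm using $\mathfrak{w}_k(e_i)=l_k$, so that $|\nabla_i\varphi(\alpha)|\leq l_k|\varphi|_{k,B}$ and the linear test-function pairing contributes $l_k\sum_i|n_i^{\alpha}|$. The only cosmetic difference is that the paper bounds $|\nabla\varphi(\alpha)\1_{\alpha}(B)|_{T_{\varphi}}\leq l_k(1+|\varphi|_{k,B}^2)$ directly rather than factoring out $(1+|\varphi|_{k,B})$ as you do, which changes nothing of substance.
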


\begin{proof}
The only difference to \cite{ABKM} is that additional terms in $|H|_{T_{\varphi}}^{\ext} $ and $\Vert H \Vert_{k,0}^{\ext}$ exist:
\begin{align*}
|H|_{T_{\varphi}}^{\ext}
&= |H^{\es}|_{T_{\varphi}}
+ l_{\obs,k} \left( |\lambda^a| + |n^a \nabla \varphi(a) \1_a|_{T_{\varphi}} \right)
\\
& \quad\quad \quad\quad +
 l_{\obs,k} \left( |\lambda^b| + |n^b \nabla \varphi(b) \1_b|_{T_{\varphi}} \right)
+ l_{\obs,k}^2 |q^{ab}|,
\\
\Vert H \Vert_{k,0}^{\ext}
&=
\Vert H^{\es} \Vert_{k,0}
+ l_{\obs,k} \left( |\lambda^a| + l_k |n^a| \right)
+ l_{\obs,k} \left( |\lambda^b| + l_k |n^b| \right)
+ l_{\obs,k}^2 |q^{ab}|.
\end{align*}
Thus the proof is finished if we show that, for $\alpha \in \lbrace a,b \rbrace$,
\begin{align*}
l_{\obs,k} |n^{\alpha} \nabla \varphi(\alpha) \1_{\alpha}(B)|_{T_{\varphi}}
\leq (1 + |\varphi|_{k,B})^2 l_{\obs,k} l_k |n^{\alpha}|.
\end{align*}
This follows straightforwardly since
\begin{align*}
 |\nabla \varphi(\alpha) \1_{\alpha}(B)|_{T_{\varphi}}
=  (|\nabla \varphi(a) | + l_k) \1_{\alpha}(B)
\leq l_k |\varphi |_{k,B} + l_k
\leq l_k \left(1 + | \varphi |_{k,B}^2\right).
\end{align*}
\end{proof}

The following lemma is an extension of Lemma 8.9 from \cite{ABKM}. %Here, $R$ is a parameter depending on $d$.

\begin{lemma}%[Extension of Lemma 6.9 in \cite{ABKM}] 
\label{lemma:Contraction_estimate_1}
Let $A(\alpha,k) = 0$ when $k \geq j_{ab}$, $\alpha\in \lbrace a,b,ab\rbrace$, and $A(\alpha,k) = 1$ when $k < j_{ab}$, $\alpha \in \lbrace a,b\rbrace$.
There exists a constant $C$ such that for $L \geq 2^d + R$, for $\alpha \in \lbrace a,b,ab \rbrace$, 
\begin{align*}
\vert (1-\Pi_k^{\alpha}) K^{\alpha}(B) \vert_{k+1,B,T_0}
\leq
C L^{-(d/2 + A(\alpha,k))} \vert K^{\alpha} \vert_{k,B,T_0}.
\end{align*}
\end{lemma}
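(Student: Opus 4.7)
The plan is to Taylor-expand $(1-\Pi_k^{\alpha})K^{\alpha}(B)$ around $\varphi = 0$ and to exploit, term by term, the vanishing conditions that $\Pi_k^\alpha$ is designed to enforce. The basic scaling mechanism is the elementary ratio
$$
\frac{\mathfrak{w}_{k+1}(\beta)}{\mathfrak{w}_k(\beta)} = 2\,L^{-|\beta|-(d-2)/2} \leq 2\,L^{-d/2} \qquad (|\beta|\geq 1),
$$
which yields $|g^{(r)}|_{k,B}\leq (2L^{-d/2})^r |g^{(r)}|_{k+1,B}$ for every $r\geq 1$. Together with the identity $|K^\alpha(B)|_{k,B,T_0} = \sum_{r\geq 0}\frac{1}{r!}|D^r K^\alpha(B,0)|_{k,B}$ read off from the definition of the $T_0$-seminorm, this immediately handles the case $A(\alpha,k)=0$ (where $\Pi_k^\alpha = \Pi_0$): the $r=0$ Taylor term drops out and every surviving term of order $r \geq 1$ acquires at least the factor $2L^{-d/2}$, giving the claimed bound with a single power $L^{-d/2}$.

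In the case $A(\alpha,k) = 1$ an additional factor $L^{-1}$ must be extracted at $r = 1$. Taylor terms of order $r\geq 2$ are unproblematic because $(2L^{-d/2})^r\leq 4L^{-d}\leq C\,L^{-d/2-1}$ for $d\geq 2$ and $L$ large. The delicate piece is the linear part $DK^\alpha(B,0) - l^\alpha$: by the defining property of $\Pi_1^\alpha$ in Lemma \ref{lemma:existence_of_projection}, this linear functional annihilates every test function that is affine on $(B_\alpha^*)^*$. I would accordingly replace $g^{(1)}$ in the pairing by $g^{(1)} - P g^{(1)}$, where $P g^{(1)}(x) = g^{(1)}(\alpha) + \sum_i \nabla_i g^{(1)}(\alpha)(x_i - \alpha_i)$ is the discrete first-order Taylor polynomial of $g^{(1)}$ at $\alpha$. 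A standard discrete Taylor remainder estimate then controls $|\nabla^\beta(g^{(1)} - P g^{(1)})(x)|$ on $B^*$ by a single higher derivative of $g^{(1)}$, and combined with $|\nabla^\gamma g^{(1)}|\leq \mathfrak{w}_{k+1}(\gamma)$ for $|\gamma|=2$ yields $|g^{(1)} - P g^{(1)}|_{k,B} \leq C\,L^{-d/2-1}$. Bounding $|DK^\alpha(B,0) - l^\alpha|_{k,B}\leq C|K^\alpha(B)|_{k,B,T_0}$, where the piece $|l^\alpha|_{k,B}$ is controlled through Lemma \ref{Boundedness_of_Projection}, then finishes the $r=1$ contribution.

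The main obstacle, and the only genuinely new content relative to Lemma 8.9 of \cite{ABKM}, is the verification of the $L^{-1}$ gain in the linear piece. Because $\Pi_k^\alpha$ is of order at most one — as opposed to the bulk projection $\Pi_2$ — the contraction exponent $d/2 + A(\alpha,k) \leq d/2 + 1$ is two powers of $L$ weaker than in the bulk setting, and the cleaner two-derivative gain exploited there must be replaced by the one-derivative affine-polynomial subtraction on the test function described above.
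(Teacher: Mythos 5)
Your argument is correct and is essentially the paper's own proof: the paper likewise tests degree by degree, kills the orders removed by $\Pi_0$ resp.\ $\Pi_1$ through the defining property of the projection, gains $L^{-d/2}$ per remaining test-function slot from the change of scale in the weights $\mathfrak{w}_k$, and for the linear term under $\Pi_1$ subtracts the affine approximation of the test function (phrased there as $\min_{P \in \mathcal{P}_1^{\alpha}} |\varphi - P|_{k,B}$ and estimated in Lemma \ref{lemma:...} by exactly your discrete Taylor-remainder computation with $s=1$), combined with boundedness of the projection. If anything, your observation that the $r \geq 2$ terms yield $(2L^{-d/2})^r \leq C L^{-(d/2+1)}$ for $d \geq 2$ is slightly more careful than the paper's written bound $8L^{-d/2}$ for those terms, since the stated exponent $d/2 + A(\alpha,k)$ with $A(\alpha,k)=1$ does require that extra power.
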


\begin{proof}
We start with $\alpha \in \lbrace a,b,ab \rbrace$ and $k \geq j_{ab}$, i.e. $\Pi_k^{\alpha} = \Pi_0$. Note that
$$
\left\vert (1 - \Pi_0) K^{\alpha}(B) \right\vert_{k+1,B,T_0}
= \sup \left\lbrace \left\langle (1-\Pi_0) K^{\alpha},g \right\rangle_0: g \in \Phi, |g|_{k+1,B} \leq 1 \right\rbrace.
$$
For $g \in \chi^{\otimes r}$, $r \geq 1$, it holds that
$$
\left\langle \left(1-\Pi_0\right) K^{\alpha},g\right\rangle_0 
= \left\langle K^{\alpha},g \right\rangle_0
$$
since $\Pi_0 K^{\alpha}$ depends only on the first order Taylor polynomial. For $g \in \chi^{\otimes r}$, $r \geq 1$, we can use the estimate
$$
|g|_{k,B} \leq 8 L^{-\frac{1}{2} d}|g|_{k+1,B}
$$
 as in \cite{ABKM}.
Thus
$$
\left\vert \left\langle (1-\Pi_0) K^{\alpha},g\right\rangle_0 \right\vert
\leq \vert K^{\alpha} \vert_{k,B,T_0} |g|_{k,B}
\leq 8 L^{-\frac{1}{2} d}|g|_{k+1,B} \vert K^{\alpha} \vert_{k,B,T_0}.
$$
For $g \in \chi^{\otimes 0} = \R = \mathcal{P}_0^{\alpha}$ it holds that
$$
\left\langle \Pi_0 K^{\alpha} , g \right\rangle_0
= \left\langle K^{\alpha},g \right\rangle_0
$$
and thus
$$
\left\langle (1-\Pi_0) K^{\alpha},g \right\rangle_0
= 0 \quad \text{for all } g \in \R.
$$
This argument finishes the case $k \geq j_{ab}$.

Now let $\alpha \in \lbrace a,b \rbrace$ and $k < j_{ab}$, i.e., $\Pi_k^{\alpha} = \Pi_1$. As above we can use for all $g \in \chi^{\otimes r}$ and $r \geq 2$
$$
\left\vert \left\langle (1-\Pi_1) K^{\alpha},g \right\rangle_0 \right\vert 
= \left\vert \left\langle K^{\alpha},g \right\rangle_0 \right\vert
\leq 8 L^{-\frac{1}{2} d} \vert K^{\alpha} \vert_{k,B,T_0} |g|_{k+1,B}.
$$
Again,
$$
\left\langle (1-\Pi_0) K^{\alpha},g\right\rangle_0 = 0
\quad
\text{for all }
g \in \R = \mathcal{P}_0^{\alpha}.
$$
Let $\varphi \in \chi$. For all $P \in \mathcal{P}_1^{\alpha}$ we have $\langle \Pi_1 K^{\alpha},P \rangle_0 = \langle K^{\alpha},P \rangle_0$. Using additionally boundedness of $\Pi$, we can estimate
\begin{align*}
\left\vert
\left\langle (1-\Pi_1) K^{\alpha},\varphi \right\rangle_0
\right\vert
&= \min_{P \in \mathcal{P}_1^{\alpha}}
\left\vert
\left\langle (1-\Pi_1) K^{\alpha},\varphi - P \right\rangle_0
\right\vert
\\ &
\leq \vert (1-\Pi_1) K^{\alpha} \vert_{k,B,T_0} \min_{P \in \mathcal{P}_1^{\alpha}} |\varphi - P|_{k,B}
\\ &
\leq C \vert K^{\alpha} \vert_{k,B,T_0} \min_{P \in \mathcal{P}_1^{\alpha}} |\varphi - P|_{k,B}.
\end{align*}
With Lemma \ref{lemma:...} below the proof is finished.
\end{proof}

\begin{lemma}%[Extension of Lemma 6.10 in \cite{ABKM}]
\label{lemma:...}
There exists a constant $C$ such that for $L \geq 2^d + R$ and for all $\varphi \in \chi$
\begin{align*}
\min_{P \in \mathcal{P}_1^{\alpha}} |\varphi - P|_{k,B}
\leq C L^{-(\frac{d}{2} + 1)} \vert \varphi \vert_{k+1,B}.
\end{align*}
\end{lemma}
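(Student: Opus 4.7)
The plan is to choose a very specific minimiser, namely the first-order Taylor polynomial of $\varphi$ at the point $\alpha$. Since $\alpha \in B$ (this is precisely the situation in which $\Pi_k^{\alpha}$ is applied in Lemma \ref{lemma:Contraction_estimate_1}), define
\begin{align*}
P(z) = \sum_{i=1}^d \nabla_i\varphi(\alpha)\,(z_i - \alpha_i) \in \mathcal{P}_1^{\alpha},
\quad
\psi = \varphi - P.
\end{align*}
By construction $\nabla_i \psi(\alpha) = 0$ for every $i$. Note that the seminorm $|\cdot|_{k,B}$ only involves derivatives of orders $1,\dots,p_{\Phi}$, so $P$ need only cancel the linear Taylor coefficient; the constant part is invisible to the seminorm.

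For multi-indices $\beta$ with $|\beta| \geq 2$ one has $\nabla^{\beta}\psi = \nabla^{\beta}\varphi$, and a direct computation with the weights $\mathfrak{w}_j(\beta) = h_j L^{-j|\beta|} L^{-j(d-2)/2}$ and $h_{k+1}/h_k = 2$ yields, for every $x \in B^*$,
\begin{align*}
\mathfrak{w}_k(\beta)^{-1} |\nabla^{\beta}\psi(x)|
= \frac{h_{k+1}}{h_k}\, L^{-|\beta|}\, L^{-(d-2)/2}\; \mathfrak{w}_{k+1}(\beta)^{-1}|\nabla^{\beta}\varphi(x)|
\leq 2 L^{-(d/2+1)} |\varphi|_{k+1,B},
\end{align*}
since $|\beta| \geq 2$ makes $L^{-|\beta|} L^{-(d-2)/2} \leq L^{-(d/2+1)}$.

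The less automatic case is $|\beta|=1$, where $\nabla_i\psi(x) = \nabla_i\varphi(x) - \nabla_i\varphi(\alpha)$ is nonzero. Here I would use a discrete mean value / telescoping argument along a lattice path from $\alpha$ to $x$ inside $B^*$: since $\mathrm{diam}(B^*) \leq C L^k$ and the path has length $\mathcal{O}(L^k)$, one gets
\begin{align*}
|\nabla_i\psi(x)| \leq C L^k \sup_{z \in B^*} \sup_{|\gamma|=2} |\nabla^{\gamma}\varphi(z)|
\leq C L^k\, \mathfrak{w}_{k+1}(\gamma)\,|\varphi|_{k+1,B},
\end{align*}
and substituting the explicit values of $\mathfrak{w}_k$ and $\mathfrak{w}_{k+1}$ gives exactly the same factor $C L^{-(d/2+1)}|\varphi|_{k+1,B}$ as in the higher-order case. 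Taking the supremum over $1 \leq |\beta| \leq p_{\Phi}$ and over $x \in B^*$ finishes the proof.

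The only subtle point is the order-one bound: one has to make sure that the lattice path from $\alpha$ to any $x \in B^*$ stays in the neighbourhood on which $|\varphi|_{k+1,B}$ actually controls second derivatives (that is, in $B^*$, viewed as a subset of the slightly larger neighbourhood $(B)^*$ at scale $k+1$). This is a geometric check that holds for $L \geq 2^d + R$ with the $R$ in the statement, but it is the step that has to be written out carefully; all other estimates are pure weight arithmetic.
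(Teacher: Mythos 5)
Your proposal is correct and follows essentially the same route as the paper: the paper's proof simply invokes Lemma 8.10 of \cite{ABKM} with the Taylor order $s=1$, i.e.\ it takes $P=\mathrm{Tay}^1_{\alpha}\varphi$ as the competitor, which is exactly your choice (the constant term being invisible to the seminorm), and the remainder estimate there is the same weight arithmetic for $|\beta|\geq 2$ plus a discrete Taylor-remainder (telescoping) bound for $|\beta|=1$ that you sketch. The geometric point you flag (the path staying where $|\varphi|_{k+1,B}$ controls second differences) is precisely what the condition $L\geq 2^d+R$ handles in \cite{ABKM}, so nothing is missing in substance.
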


\begin{proof} The statement is an extension of Lemma 8.10 from \cite{ABKM}.
The proof is as in \cite{ABKM} with the only difference being the choice of parameter $s = 1$, which originally was $s = \left\lfloor \frac{d}{2} \right\rfloor + 1$. The reason for this change is that $\mathcal{P}_1^{\alpha} = \text{span} \left\lbrace b_{\beta}^{\alpha} : |\beta| = 1 \right\rbrace$, whereas in the bulk flow higher derivatives are also allowed. Then $P = \text{Tay}_a^s \varphi$ provides the minimizer. 
\end{proof}

\subsection{Smoothness of the extended renormalisation map} \label{subsec:Proof_Smoothness}

In this section we prove Proposition \ref{Prop:Smoothness_Observables} which claims that there is $L_0$ and corresponding $A_0$ and $h_0$ and a parameter $\rho^*(A)$ such that $\mathbf{S}_k^{\ext} \in U_{\rho^*(A)}$ with bounds on derivatives which are uniformly in $N$. %Moreover
%$\mathbf{S}^{ext}$ preserves $(\Z^d)$-property (see Definition \ref{Defn:Zd-property}) and 
%it holds that if $H^b = 0$ and $K^b(X)=0$ for all $X\in \mathcal{P}$, then $K_+^b(U)=0$ for all $U \in \mathcal{P}_+$ and also for $a,b$ interchanged.

~\\
Remember that
$$
\mathbf{S}^{\ext}(H,K)
=e^{
-s\left( \mathbf{B} K^a \right)^0
- t \left( \mathbf{B} K^b \right)^0
- st \left( \int H^a H^b \de\mu_+ + \mathbf{B} K^{ab} \right)
}
\mathbf{S}(H,K)
$$
where we drop the subscript $k$ and $k+1$ in the notation. To nevertheless note the change of scale, we abbreviate $k+1$ by $+$.

Let us denote
$$
F = sF^a + t F^b + st F^{ab} :=  -s(\mathbf{B} K^a)^0 - t (\mathbf{B} K^b)^0 - st \left( \int H^a H^b \de \mu_+ +\mathbf{B} K^{ab}\right).
$$
We divide the proof of Proposition \ref{Prop:Smoothness_Observables} into two steps.
The first step is the analysis of $\mathbf{S}$.

\begin{lemma} \label{lemma:estimate_on_S}
There is $L_0$ such that for all odd integers $L \geq L_0$ there is $A_0,h_0$ with the following property. For all $A \geq A_0$, $h \geq h_0$ there is $\rho^* = \rho^*(A)$ such that
$$
\mathbf{S} \in C^{\infty} \left( U_{\rho^*}, M^{\ext}(\mathcal{P}_{k+1}^c) \right)
$$
and for any $p,q \in \N$ there is a constant $C_{p,q} = C_{p,q}(L,h,A)$ such that for any $(H,K) \in U_{\rho^*}$
 \begin{align*}
 \left\Vert D_H^p D_K^q \mathbf{S}(H,K)(\dot{H}^p, \dot{K}^q) \right\Vert_{k+1}^{(A),\ext}
 \leq
 C_{p,q} \left( \Vert \dot{H} \Vert_{k,0}^{\ext} \right)^p \left( \Vert \dot{K} \Vert_k^{(A),\ext} \right)^q.
 \end{align*}

%Moreover, $\mathbf{S}$ preserves $(\Z^d)$-property and it holds that if $H^b=0$ and $K^b(X)=0$ for all $X \in \mathcal{P}$ then $\mathbf{S}^b(H,K)(U)=0$ for all $U \in \mathcal{P}_+$.
 \end{lemma}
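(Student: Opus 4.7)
The plan is to adapt the smoothness proof for the bulk renormalisation map $\mathbf{S}$ from \cite{ABKM} (the analogue result there is proved at Proposition 8.36 / 8.38, and used in \cite{Hil19_1}) line by line to the extended functional setting, replacing each bulk norm estimate by its observable counterpart that was established in the preceding subsection. The key conceptual point is that $\mathbf{S}$ is the \emph{same} algebraic object as in the bulk flow (the formula \eqref{eq:defn_K+} does not involve the observable structure at all); only the norm in which we measure the in- and output has changed, from $\Vert\cdot\Vert_{k}^{(A)}$ to $\Vert\cdot\Vert_{k}^{(A),\ext}$. Since $\Vert\cdot\Vert_{k,X,T_\varphi}^{\ext}=\sum_{\alpha}l_{\obs,k}^{|\alpha|}\vert\cdot^\alpha\vert_{k,X,T_\varphi}$, and the weights $l_{\obs,k}^{|\alpha|}$ are scalar constants, every pointwise, submultiplicative, and integration estimate extends verbatim once Lemmas \ref{lemma:submultiplicativity1}, \ref{lemma:submultiplicativity2}, \ref{lemma:norm _estimate} and \ref{lemma:Regularity_of_Integration_Map} are in place.

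Concretely, I would proceed in three steps. First, unfold \eqref{eq:defn_K+} to write $\mathbf{S}(H,K)(U,\varphi)$ as a finite sum over polymers $X\in\mathcal{P}_k$ with $\chi(X,U)=1$ of products of four factors: two ``filler'' factors $(e^{\widetilde H(\varphi)})^{U\setminus X}$ and $(e^{\widetilde H(\varphi)})^{-X\setminus U}$, and the integrated polymer activity $\int[(1-e^{\widetilde H(\varphi)})\circ(e^{H(\varphi+\xi)}-1)\circ K(\varphi+\xi)](X)\,\mu_+(\de\xi)$. Each of these ingredients depends smoothly on $(H,K)$ in the extended norm: the exponential map $H\mapsto e^H-1$ is analytic on any ball $\{\Vert H\Vert^{\ext}_{k,0}<\rho^*\}$ by submultiplicativity of $|\cdot|^{\ext}_{k,X,T_\varphi}$ (Lemma \ref{lemma:submultiplicativity1}) combined with Lemma \ref{lemma:relevant_variable}; the circ product is a finite sum so inherits smoothness; and $K\mapsto\mathcal{R}_{k+1}K$ is linear with operator bound given by Lemma \ref{lemma:Regularity_of_Integration_Map}. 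Composing, $\mathbf{S}$ is $C^\infty$ on $U_{\rho^*}$.

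Second, for the derivative bound I would differentiate the explicit expression $p$ times in $H$ and $q$ times in $K$ via the Leibniz rule, obtaining a finite sum (with combinatorial coefficients depending only on $p$, $q$ and the polymer combinatorics on $X$) of terms each of which has the same shape as $\mathbf{S}$ itself, with some copies of $H,K$ replaced by $\dot H,\dot K$. Estimating term by term with Lemma \ref{lemma:norm _estimate} (to distribute $\vertiii{\cdot}_k$-norms onto the filler exponentials $e^{\widetilde H}$), Lemma \ref{Boundedness_of_Projection} (to control $\widetilde H=\Pi_k\mathcal{R}_{k+1}(H+K)$), and Lemma \ref{lemma:Regularity_of_Integration_Map} (to turn the $\mu_+$ integration into the transition from $\Vert\cdot\Vert_{k,X}^{\ext}$ to $\Vert\cdot\Vert_{k:k+1,X}^{\ext}$), each term is bounded by a constant depending on $L,h,A,p,q$ times $(\Vert\dot H\Vert^{\ext}_{k,0})^p(\Vert\dot K\Vert^{(A),\ext}_k)^q$ and a geometric factor $(CA^{-1})^{|X|_k}$ produced by the circ product and the choice of $A$ in the global weak norm. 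The standard polymer summation argument from \cite{ABKM}, together with the large-set regulator built into $\Vert\cdot\Vert^{(A),\ext}_{k+1}$, then sums up these geometric factors to a finite constant once $A\geq A_0(L)$ and $h\geq h_0(L)$; this determines $\rho^*(A)$.

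The only mild obstacle, and the place where the observable setting makes a genuine quantitative difference, is the change-of-scale factor appearing in Lemma \ref{lemma:submultiplicativity2}(2), namely $\max(1,\tfrac{\eta^2}{4}L^d)$, which is strictly larger than $1$ for our $\eta$ and $L$. This forces $L_0$ in the present lemma to be chosen as in \cite{ABKM} but with additional $L$-dependent slack to absorb this factor through $A$; analogous bookkeeping is already contained in the single-step estimate Proposition \ref{Prop:SingleStepRG_Observables}, so the choices are compatible. All other ingredients (the polymer combinatorics, the smoothness of the exponential in a Banach algebra, the regularity of $\mathcal{R}_{k+1}$, and the boundedness of $\Pi_k$) transfer from the bulk case unchanged, so no further technical novelty is needed beyond writing out the bookkeeping.
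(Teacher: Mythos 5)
Your overall plan is sound and rests on the same toolkit the paper uses (the extended submultiplicativity, weight and integration lemmas), but your concrete route differs from the paper's. The paper does not differentiate the unfolded polymer formula directly: following \cite{ABKM}, it factors $\mathbf{S}(H,K)=P_1\bigl(E(R_2(H,K)),E(-R_2(H,K)),1-E(R_2(H,K)),R_1(P_3(P_2(E(H),K)))\bigr)$ and obtains smoothness and the derivative bounds by the chain rule from the extended versions of the lemmas on $E,P_1,P_2,P_3,R_1,R_2$ (Lemmas \ref{lemma:smoothness_of_E}--\ref{lemma:P1}), with $\rho^*(A)$ produced by a nesting of neighbourhoods; the change-of-scale factor $\tfrac{\eta^2}{4}L^d$ is simply carried inside the constant in Lemma \ref{lemma:P1} (it only becomes critical later, in the contraction bound, where it is beaten by the choice of $A$). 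Your direct Leibniz-plus-polymer-summation argument can be made to work, but it forces you to redo by hand exactly the combinatorics that is packaged in $P_1$ and $P_3$: the delicate point is not the factor from Lemma \ref{lemma:submultiplicativity2} but the reallocation of $A$-powers, i.e.\ bounding $A^{|U|_{k+1}}$ against $A^{|X|_k}$ and against $\vertiii{e^{\widetilde H}}^{|U\setminus X|_k}$ with $|U\setminus X|_k\sim L^d|U|_{k+1}$, which is where the condition $A\geq A_0(L,d)$ of Lemma \ref{lemma:P1} comes from. So your approach buys directness at the price of repeating that bookkeeping, while the paper's buys modularity (the same intermediate lemmas are reused for the contraction and the second-derivative computation).

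One substantive imprecision you should fix: in the extended setting the filler Hamiltonian inside \eqref{eq:defn_K+} is not $\widetilde H=\Pi_k\mathcal{R}_{k+1}(H+K)$ but $\widetilde H=\mathbf{A}_kH+\mathbf{B}_kK$, which contains the additional second-order observable term $st\int H^aH^b\,\de\mu_{k+1}$; this is precisely the modification the paper encodes in the map $R_2$ and it is what makes the conjugation identity \eqref{RG_equation_obs} and the cancellation in Proposition \ref{Prop:2nd_order_perturbation_effect_on_S} work. For the smoothness bounds this extra term is harmless -- it is constant in $\varphi$, bilinear in $H$, and controlled by Lemma \ref{lemma:Estimate_for_A} -- but as written your argument proves smoothness of a slightly different map than the one actually used in the observable flow, so the term must be inserted into your $\widetilde H$ (and into the Leibniz differentiation in $H$) before the proof matches the statement.
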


The second step includes the analysis of the prefactor $e^F$.

\begin{lemma} \label{lemma:estimate_on_prefactor_F}
Assume that Lemma \ref{lemma:estimate_on_S} holds. Then 
$$
\mathbf{S}^{ext} \in C^{\infty}\left( U_{\rho^*},M^{\ext}(\mathcal{P}_{k+1}^c)\right)
$$
and for each $p,q \in \N$ there is a constant $C^*_{p,q}$ such that for any $(H,K) \in U_{\rho^*}$,
$$
\left\Vert D_H^p D_K^q \mathbf{S}^{\ext}(H,K)(\dot{H}^p,\dot{K}^q) \right\Vert_{k+1}^{(A),\ext}
\leq
C^*_{p,q} \left( \Vert \dot{H} \Vert_{k,0}^{\ext} \right)^p \left( \Vert \dot{K} \Vert_k^{(A),\ext} \right)^q.
$$
\end{lemma}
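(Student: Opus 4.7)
The plan is to reduce everything to Lemma \ref{lemma:estimate_on_S} by exploiting the quotient algebra structure of $\mathcal{N}$. The key observation is that the exponent $F := -s(\mathbf{B}K^a)^0 - t(\mathbf{B}K^b)^0 - st\bigl(\int H^a H^b\,\de\mu_{k+1} + \mathbf{B}K^{ab}\bigr)$ lies entirely in the observable part; since $s^2 = t^2 = 0$ modulo the ideal $\mathcal{I}$, a short computation shows $F^2 = 2st\,F^a F^b$ and $F^n = 0$ for $n \geq 3$, so the exponential series truncates to
\begin{align*}
e^F = 1 + sF^a + tF^b + st\bigl(F^{ab} + F^a F^b\bigr).
\end{align*}
Thus $\mathbf{S}^{\ext} = e^F\,\mathbf{S}$ is a finite polynomial expression in $\mathbf{S}$ and in the three scalars $F^a, F^b, F^{ab}$, each of which is itself polynomial (in fact linear or bilinear) in the input $(H,K)$. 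In particular all derivatives of $e^F$ of sufficiently high order vanish identically.

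Next I would record that the maps $(H,K) \mapsto F^a, F^b, F^{ab}$ are smooth on all of $\mathcal{V}_k^{(0)} \times M^{\ext}(\mathcal{P}_k^c)$: $F^a = -(\mathbf{B}K^a)^0$ and $F^b = -(\mathbf{B}K^b)^0$ are linear in $K$, whereas $F^{ab}$ is linear in $K^{ab}$ and bilinear in $H$. Their values and those of the finitely many nonvanishing derivatives are controlled by Proposition \ref{Prop:Bounds_on_B_ObservableFlow} for the $\mathbf{B}$-terms and, for the quadratic term in $H$, by the explicit identity $\int H^a H^b\,\de\mu_{k+1} = n^a_i n^b_j \nabla^*_j \nabla_i C_{k+1}(a,b)$ combined with the $N$-independent covariance bounds from Proposition \ref{Prop:FRD_Buchholz}. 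This produces scale-$k$ bounds of the form $|F^a|, |F^b| \lesssim l_{\obs,k}^{-1}\|K\|_k^{(A),\ext}$ and $|F^{ab}| \lesssim l_{\obs,k}^{-2}\bigl(\|K\|_k^{(A),\ext} + (\|H\|_{k,0}^{\ext})^2\bigr)$.

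Finally I would combine the previous two steps with Lemma \ref{lemma:estimate_on_S} via the Leibniz rule
\begin{align*}
D_H^p D_K^q \mathbf{S}^{\ext} = \sum_{\substack{p_1+p_2=p \\ q_1+q_2=q}} \binom{p}{p_1}\binom{q}{q_1} \bigl(D_H^{p_1} D_K^{q_1} e^F\bigr) \bigl(D_H^{p_2} D_K^{q_2} \mathbf{S}\bigr)
\end{align*}
and the submultiplicativity of the extended norm (Lemma \ref{lemma:norm _estimate} together with Lemma \ref{lemma:submultiplicativity2}) to produce the desired estimate. The main obstacle I foresee is the careful bookkeeping of observable weights when passing from scale $k$ to scale $k+1$: the prefactors $l_{\obs,k}^{-1}$ and $l_{\obs,k}^{-2}$ carried by $F^a, F^b$ and $F^{ab}$ must be compensated by the output weights $l_{\obs,k+1}$, $l_{\obs,k+1}$ and $l_{\obs,k+1}^2$ attached to the $s$-, $t$- and $st$-components. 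Relation \eqref{scale_transformation} guarantees that the ratios $l_{\obs,k+1}/l_{\obs,k}$ are uniformly bounded in $k$, so all $k$-dependence is absorbed into constants $C^*_{p,q}$ depending only on $L$, $h$, $A$, and $\eta$, completing the proof.
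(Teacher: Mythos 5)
Your proposal is correct and follows essentially the same route as the paper: writing $\mathbf{S}^{\ext}=e^{F}\mathbf{S}$, exploiting that in the quotient algebra $e^{F}$ truncates to $1+sF^a+tF^b+st(F^{ab}+F^aF^b)$ (the paper encodes this as the case-by-case list of the finitely many nonvanishing derivatives), bounding $F^a,F^b,F^{ab}$ via the $\mathbf{B}$-estimates and the covariance bound for $\int H^aH^b\,\de\mu_{k+1}$ (Lemmas \ref{lemma:Estimate_for_B} and \ref{lemma:Estimate_for_A}), compensating the $l_{\obs,k}^{-1}$, $l_{\obs,k}^{-2}$ factors with the scale ratios from \eqref{scale_transformation}, and concluding by the Leibniz rule together with Lemma \ref{lemma:estimate_on_S}. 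No gaps worth noting.
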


Proposition \ref{Prop:Smoothness_Observables} follows from Lemma~\ref{lemma:estimate_on_prefactor_F} with the assumptions of Lemma \ref{lemma:estimate_on_S}.

~\\
We first prove Lemma \ref{lemma:estimate_on_prefactor_F}.

\begin{proof}[Proof of Lemma \ref{lemma:estimate_on_prefactor_F}]
We show smoothness via bounds on the derivatives.

Since $F$ is a constant in $\varphi$, we can estimate
\begin{align*}
&\Big\Vert D_H^p D_K^q \mathbf{S}^{\ext}(H,K)(\dot{H}^p,\dot{K}^q) \Big\Vert_{k+1}^{(A),\ext}
=
\Big\Vert D_H^p D_K^q \left[ e^{F} \mathbf{S}(H,K) \right](\dot{H}^p,\dot{K}^q) \Big\Vert_{k+1}^{(A),\ext}
\\
& \quad\quad
 \leq C_{p,q}
\sum_{\substack{p_1 + p_2=p\\ q_1 + q_2=q}}
\Big\Vert D_H^{p_1} D_K^{q_1} \left[ e^{F} \right] (\dot{H}^{p_1},\dot{K}^{q_1})
D_H^{p_2} D_K^{q_2} \mathbf{S}(H,K) (\dot{H}^{p_2},\dot{K}^{q_2}) \Big\Vert_{k+1}^{(A),\ext}
\\
& \quad\quad
\leq C_{p,q}
\sum_{\substack{p_1 + p_2=p\\ q_1 + q_2=q}} \sup_U \bigg\lbrace
 A^{|U|_{k+1}}
 \Big\vert 
 D_H^{p_1} D_K^{q_1} \left[ e^{F (U)} \right] (\dot{H}^{p_1},\dot{K}^{q_1})  \Big\vert^{\ext}_{k+1,U,T_{0}} 
 \\ &   \qquad \qquad \qquad \qquad \qquad \qquad
 \Big\Vert
 D_H^{p_2} D_K^{q_2} \mathbf{S}(H,K)(U) (\dot{H}^{p_2},\dot{K}^{q_2}) \Big\Vert_{k+1,U}^{\ext} \bigg\rbrace.
\end{align*}
By assumption $\mathbf{S}$ is smooth with the desired bounds, so it is enough to show that
$$
\Big\vert 
 D_H^{p_1} D_K^{q_1} \left( e^{F(U)} \right) (\dot{H}^{p_1},\dot{K}^{q_1})  \Big\vert^{\ext}_{k+1,U,T_{0}}
 \leq C \left( \Vert \dot{H}\Vert_{k,0}^{\ext} \right)^{p_1}
 \left( \Vert \dot{K}\Vert_{k}^{(A),\ext} \right)^{q_1}.
$$

Note that if $a,b \notin U$ then $e^{F(U)}=1$ such that any derivative $D_H^{p_1}$ or $ D_K^{q_1}$ gives just zero which is not optimal for the supremum. Thus either $a,b \notin U$ and $p_1=q_1=0$ or $\alpha \in U$ for $\alpha \in \lbrace a,b,ab \rbrace$.
In the first case we are done -- the constant we get is $1$. In the second case we go through all possible cases. Let $(H,K) \in U_{\rho^*}$.
\begin{itemize}
\item $p_1=0$, $q_1=0$:

We use Lemma \ref{lemma:Estimate_for_B}, Lemma \ref{lemma:Estimate_for_A} and estimate \eqref{scale_transformation} to get
\begin{align*}
\big\vert e^{F(U)} \big\vert^{\ext}_{k+1,U,T_0}
= 
1 &+ \vert F^a(U) \vert + \vert F^b(U) \vert + \vert F^{ab}(U) \vert + \vert F^a(U) F^b(U) \vert
\\
= 1 &+ l_{\obs,k+1} \left( \big\vert(\mathbf{B}K^a)^0\big\vert + \big\vert(\mathbf{B}K^b)^0\big\vert \right) 
\\
& 
+ l_{\obs,k+1}^2 \left( \left\vert\mathbf{B}K^{ab}\right\vert+\left\vert\int H^a H^b \de \mu_+\right\vert + \left\vert(\mathbf{B}K^a)^0 (\mathbf{B}K^b)^0\right\vert \right)
\\
\leq
1 &+ \frac{A_{\mathcal{B}}}{2} L^{d/2} \frac{\eta}{2} \rho^*
+ L^d \frac{\eta^2}{4} \rho^* \left( \frac{A_{\mathcal{B}}}{2} + \frac{A_{\mathcal{B}}^2}{4} \rho^* + C_{FRD} h^{-2} \rho^* \right)
\end{align*}
which is bounded by a constant.

\item $p_1=0$, $q_1=1$:
By  Lemma \ref{lemma:Estimate_for_B} and estimate \eqref{scale_transformation} we get
\begin{align*}
&\left\vert D_K e^{F(U)} \dot{K} \right\vert^{\ext}_{k+1,U,T_0}
\\
& \quad
=
l_{\obs,k+1} \left( \big\vert (\mathbf{B}\dot{K}^a)^0 \big\vert + \big\vert (\mathbf{B}\dot{K}^b)^0 \big\vert \right)
\\ & \quad \qquad 
+ l_{\obs,k+1}^2 \left( \big\vert (\mathbf{B}\dot{K}^{ab})^0  \big\vert + |(\mathbf{B} K^a)^0 | \big\vert (\mathbf{B}\dot{K}^b)^0 \big\vert + |(\mathbf{B}K^b)^0 | \big\vert (\mathbf{B}\dot{K}^a)^0 \big\vert \right)
\\ & \quad \leq 
l_{\obs,k+1} l_{\obs,k}^{-1} A_{\mathcal{B}} \Vert \dot{K} \Vert_k^{(A),\ext}
+ l_{\obs,k+1}^2 l_{\obs,k}^{-2} \left( \frac{A_{\mathcal{B}}}{2} + 2  \left(\frac{A_{\mathcal{B}}}{2}\right)^2 \rho^*   \right) \Vert \dot{K} \Vert_k^{(A),\ext}
\\ &\qquad \leq
C \Vert \dot{K} \Vert_k^{(A),\ext}.
\end{align*}

\item $p_1=0$, $q_1=2$:
By  Lemma \ref{lemma:Estimate_for_B} and estimate \eqref{scale_transformation} we get
\begin{align*}
\left\vert D_K^2 \left(e^{F(U)}\right) (\dot{K},\dot{K}) \right\vert^{\ext}_{k+1,U,T_0}
&=l_{\obs,k+1}^2 2 \big\vert (\mathbf{B}\dot{K}^a)^0 \big\vert \big\vert (\mathbf{B}\dot{K}^b)^0 \big\vert
\\&
 \leq
 2 l_{\obs,k+1}^2 l_{\obs,k+1}^{-2} \left( \frac{A_{\mathcal{B}}}{2} \right)^2   \left( \Vert \dot{K} \Vert_k^{(A),\ext} \right)^2
 \\ &
 \leq C \left(\Vert \dot{K} \Vert_k^{(A),\ext}\right)^2.
\end{align*}

\item $p_1=0$, $q_1>2$:
The derivative is zero.

\item $p_1=1$, $q_1=0$:
By Lemma \ref{lemma:Estimate_for_A} we get
\begin{align*}
\left\vert D_H e^{F(U)} \dot{H} \right\vert^{\ext}_{k+1,U,T_0}
&= l^2_{\obs,k+1} \left\vert \int \dot{H}^a H^b \de \mu_{+} + \int H^a \dot{H}^b \de \mu_{+} \right\vert
\\ &
\leq 2 C_{FRD} l_{\obs,k+1}^2 l_{\obs,k}^{-2} h_k^{-2} \rho^* \Vert \dot{H} \Vert_{k,0}^{\ext}
\\
&\leq
C \Vert \dot{H} \Vert_{k,0}^{\ext}.
\end{align*}

\item $p_1=1$, $q_1>0$:
The derivative is zero.

\item $p_1=2$, $q_1=0$:
By Lemma \ref{lemma:Estimate_for_A} we get
\begin{align*}
\left\vert D_H^2 e^F \dot{H}^2 \right\vert^{\ext}_{k+1,U,T_0}
&= l_{\obs,k+1}^2 2 \Big\vert \int \dot{H}^a \dot{H}^b \de \mu_+ \Big\vert
\\ &
\leq 
2 C_{FRD} l_{\obs,k+1}^2 l_{\obs,k}^{-2} h_k^{-2} \left( \Vert \dot{H} \Vert_{k,0}^{\ext} \right)^2
\leq
 C \left( \Vert \dot{H} \Vert_{k,0}^{\ext} \right)^2.
\end{align*}

\item $p_1=2$, $q_1>0$:
The derivative is zero.
\end{itemize}
In summary we get
 \begin{align*}
 &\left\Vert
D_H^p D_K^q \mathbf{S}^{\ext} (H,K)(\dot{H}^p,\dot{K}^q)
\right\Vert_{k+1}^{(A),\ext}
 \\ & \quad
  \leq
 C_{p,q} \sum_{\substack{p_1 + p_2 = p\\q_1+q_2=q}}
\left(\Vert \dot{H} \Vert_{k,0}^{\ext}\right)^{p_1}
 \left(\Vert \dot{K} \Vert_{k}^{(A),\ext}\right)^{q_1}
 \left\Vert D_H^{p_2} D_K^{q_2} \mathbf{S}(H,K)(\dot{H}^{p_2},\dot{K}^{q_2}) \right\Vert_{k+1}^{(A),\ext}.
 \end{align*}

%To show field locality, let $U \in \mathcal{P}_{k+1}$. By definition $F$ depends on $K$ only through $K^a\big\vert_{B^a}$, $K^b\big\vert_{K^b}$ and $K^{ab}\big\vert_{B^{ab}}$. Thus we apparently have
%$$
%\mathbf{S}^{\ext} (H,K)(U) = \mathbf{S}^{\ext}(H,K\big\vert_{U^*})(U).
%$$

%To show that $\mathbf{S}^{\ext}$ preserves $(\Z^d)$-property, let $(K^{\Lambda})_{\Lambda}$ be a family of maps that satisfies field locality. Let $U \in \mathcal{P}_{k+1}$ such that $\diam U \leq \frac{1}{2} \diam \Lambda$. Let $\Lambda'$ be a larger torus. Since $F^{\Lambda}$ depends linearly on $K$ and $K$ satisfies the $(\Z^d)$-property, the map $F$ preserves this property.

%Now let $H^b =0$ and $K^b(X) = 0$ for all $X \in \mathcal{P}$. Let $U \in \mathcal{P}_+$. Then also $F^b = 0$ and thus
%$$
%K_+^b(U) =
%\left(
%e^F \mathbf{S}(H,K)
%\right)^b(U)
%=
%\mathbf{S}^b(H,K)(U).
%$$
%By assumption $\mathbf{S}^b(H,K)(U)=0$.
\end{proof}

Now we turn to the analysis of $\mathbf{S}$ and the proof of Lemma \ref{lemma:estimate_on_S}.

~\\
 As in \cite{ABKM}, the strategy is to write the map $\mathbf{S}^{\ext}$ as a composition of simpler maps and show smoothness for those maps. We follow closely the presentation in \cite{ABKM} and do not repeat arguments in proofs which can be applied without change to the extended setting here.

We consider the following spaces:
\begin{align*}
\mathbf{M}^{(A)} &= \left( M^{\ext}(\mathcal{P}_k^c), \Vert \cdot \Vert_k^{(A),\ext} \right),
\\
\mathbf{M}'^{(A)} &= \left( M^{\ext}(\mathcal{P}_{k+1}^c), \Vert \cdot \Vert_{k+1}^{(A),\ext} \right),
\\
\mathbf{M}_0 &= \left( M^{\ext}(\mathcal{B}_k), \Vert \cdot \Vert_{k,0}^{\ext} \right),
\\
\mathbf{M}_{|||} &= \left( M^{\ext}(\mathcal{B}_k), \vertiii{ \cdot }_k^{\ext} \right).
\end{align*}

We need a slight modification of $\mathbf{M}^{(A)}$. Define $\mathcal{P}_k^{c'} \subset \mathcal{P}_k$ as
$$
\mathcal{P}_k^{c'} =\lbrace X \in \mathcal{P}_k: \pi(X) \in \mathcal{P}_{k+1}^c \rbrace.
$$
 The space $M^{\ext}(\mathcal{P}_k^{c'})$ of functionals is defined similarly to $M^{\ext}(\mathcal{P}_k^c)$ except that $\mathcal{P}_k^c$ is replaced by $\mathcal{P}_k^{c'}$ in the definition.

A norm on $M^{\ext}(\mathcal{P}_k^{c'})$ with parameters $A,B > 1$ is given by
\begin{align*}
\Vert K \Vert_k^{(A,B),\ext} = \sup_{X \in \mathcal{P}_k^{c'}} A^{|X|_k} B^{|\mathcal{C}(X)|} \Vert K(X) \Vert_{k,X}^{\ext}.
\end{align*}
We also use the norm $\Vert \cdot \Vert^{(A,B),\ext}_{k:k+1}$ where we replace the $\Vert \cdot \Vert_{k,X}^{\ext}$ norm by the norm $\Vert \cdot \Vert_{k:k+1,X}^{\ext}$ on the right hand side.

As in \cite{ABKM}, we introduce short hand notations for the corresponding normed spaces
\begin{align*}
\widehat{\mathbf{M}}^{A,B} = \left\lbrace M(\mathcal{P}_k^{c'}), \Vert \cdot \Vert_k^{(A,B),\ext} \right\rbrace,
\quad
\widehat{\mathbf{M}}^{A,B}_{:} = \left\lbrace M(\mathcal{P}_k^{c'}), \Vert \cdot \Vert_{k:k+1}^{(A,B),\ext} \right\rbrace.
\end{align*}

 The map $\mathbf{S}$ is, as in \cite{ABKM}, rewritten in terms of the following maps. Observe the use of the subspace $\mathcal{V}_k^{(0)}$ of $\mathbf{M}_0$ here in the definition of $R_2$ in comparison to \cite{ABKM}. However, on the bulk flow part, this subspace coincides with the whole space. Another difference to \cite{ABKM} is the definition of the map $R_2$, since the second order perturbation in the observable part appears.
 \begin{align*}
 &E: \mathbf{M}_0 \rightarrow \mathbf{M}_{|||}, \quad E(H) = e^H,
 \\
 &P_1: \mathbf{M}_{|||} \times \mathbf{M}_{|||} \times \mathbf{M}_{|||} \times \widehat{\mathbf{M}}_{:}^{(A/(2 A_{\mathcal{P}}),B)} \rightarrow \mathbf{M}'^{(A)},
 \\
 &P_1(I_1,I_2,J,K)(U) = \sum_{\substack{X_1,X_2 \in \mathcal{P}_k \\ X_1 \cap X_2 = \es}} \chi(X_1 \cup X_2,U)I_1^{U \setminus (X_1 \cup X_2)}I_2^{(X_1 \cup X_2)\setminus U} J^{X_1}K(X_2)
 \\
 &P_2: \mathbf{M}_{|||} \times \mathbf{M}^{(A)} \rightarrow \mathbf{M}^{(A/2)},
 \quad P_2(I,K) = (I-1)\circ K,
 \\
 &P_3: \mathbf{M}^{(A/2)} \rightarrow \widehat{\mathbf{M}}^{(A/2,B)}, 
 \quad P_3 K (X,\varphi) = \prod_{y \in \mathcal{C}(X)} K(Y,\varphi),
 \\
& R_1: \widehat{\mathbf{M}}^{(A/2,B)} \rightarrow \widehat{\mathbf{M}}_{:}^{(A/(2A_{\mathcal{P}}),B)},
 \quad
  R_1(P) = \mathcal{R}_+ P,
 \\
 & R_2: \mathcal{V}^{(0)}_k \times \mathbf{M}^{(A)} \rightarrow \mathbf{M}_0,
 \quad
 R_2(H,K) = \mathcal{R}_+ H + st \int H^a H^b \de \mu_+  + \Pi \mathcal{R}_+ K.
 \end{align*}
Then
\begin{align*}
&\mathbf{S}(H,K) = 
\\
&P_1\left(
E(R_2(H,K)),
E(-R_2(H,K)),
1-E(R_2(H,K)),
R_1(P_3(P_2(E(H),K)))
\right).
\end{align*}

In the following we extend estimates on these maps to observables.

\subsubsection{The immersion $E$}

The following statement is an extension of Lemma 9.3 from \cite{ABKM} to observables.

\begin{lemma}%[Extension of Lemma 9.3 in \cite{ABKM}] 
\label{lemma:smoothness_of_E}
Let $L \geq 3$.
The map
$$
E: B_{\frac{1}{8}} (0) \subset \mathbf{M}_0 \rightarrow \mathbf{M}_{|||},
\quad
E(H) = e^H,
$$
is smooth and for any $r \in \N$ there is a constant $C_r$ (which is independent or $A$) such that for all $H \in B_{\frac{1}{8}}(0)$
$$
\vertiii{ D^r E(H) (\dot{H}_1, \ldots, \dot{H}_r) }_k^{\ext}
= \vertiii{ e^H \dot{H}_1 \ldots \dot{H}_r }_k^{\ext}
 \leq C_r \Vert \dot{H}_1 \Vert_{k,0}^{\ext} \cdots \Vert \dot{H}_r \Vert_{k,0}^{\ext}.
$$
Moreover, for all $H \in B_{\frac{1}{8}}(0)$,
$$
\vertiii{ e^H - 1 }_{k}^{\ext} \leq 8 \Vert H \Vert_{k,0}^{\ext}.
$$
\end{lemma}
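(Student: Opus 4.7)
The strategy is to reproduce the bulk proof (Lemma 9.3 in \cite{ABKM}) verbatim, relying on the fact that the two structural ingredients used there -- submultiplicativity of the pointwise seminorm (Lemma \ref{lemma:submultiplicativity1}) and the quadratic bound on relevant Hamiltonians in terms of $\Vert \cdot \Vert_{k,0}$ (Lemma \ref{lemma:relevant_variable}) -- have both been extended to the observable setting with the same shape of estimates. Since the weight $l_{\obs,k}$ decorating the observable components of $|\cdot|^{\ext}_{k,B,T_\varphi}$ does not depend on the field $\varphi$, the $\varphi$-absorption by $W_k^B$ proceeds identically to the bulk case.

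First I would establish smoothness. The space $\mathcal{N}$ is a commutative algebra containing $e^H$ as a convergent power series, and for $H \in \mathbf{M}_0$ supported on a single block $B$, the series $\sum_{n \geq 0} H^n/n!$ converges in $\mathbf{M}_{|||}$ by submultiplicativity of $\vertiii{\cdot}_{k,B}^{\ext}$ (a direct consequence of Lemma \ref{lemma:submultiplicativity1} combined with the fact that $W_k^B$ absorbs $e^{|H|_{k,B,T_\varphi}^{\ext}}$ for $\Vert H \Vert_{k,0}^{\ext} < 1/8$). By the standard argument for analytic functions on a commutative Banach algebra, $E$ is smooth with the claimed formula $D^r E(H)(\dot{H}_1, \ldots, \dot{H}_r) = e^H \dot{H}_1 \cdots \dot{H}_r$.

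Next I would bound the derivatives. By Lemma \ref{lemma:submultiplicativity1},
\begin{align*}
|e^H \dot{H}_1 \cdots \dot{H}_r|_{k,B,T_\varphi}^{\ext}
\leq |e^H|_{k,B,T_\varphi}^{\ext} \prod_{i=1}^r |\dot{H}_i|_{k,B,T_\varphi}^{\ext},
\end{align*}
and the same lemma applied to the partial sums of the exponential series gives $|e^H|_{k,B,T_\varphi}^{\ext} \leq e^{|H|_{k,B,T_\varphi}^{\ext}}$. Lemma \ref{lemma:relevant_variable} then yields $|\dot{H}_i|_{k,B,T_\varphi}^{\ext} \leq 2(1+|\varphi|_{k,B}^2) \Vert \dot{H}_i \Vert_{k,0}^{\ext}$ and $|H|_{k,B,T_\varphi}^{\ext} \leq 2(1+|\varphi|_{k,B}^2) \Vert H \Vert_{k,0}^{\ext} \leq (1+|\varphi|_{k,B}^2)/4$, so that
\begin{align*}
|e^H \dot{H}_1 \cdots \dot{H}_r|_{k,B,T_\varphi}^{\ext}
\leq 2^r (1+|\varphi|_{k,B}^2)^r e^{(1+|\varphi|_{k,B}^2)/4} \prod_{i=1}^r \Vert \dot{H}_i \Vert_{k,0}^{\ext}.
\end{align*}
Since the weight $W_k^B$ dominates any polynomial times $e^{|\varphi|_{k,B}^2/2}$ (property 6 of Lemma \ref{lemma:Properties_of_weights} and the definition of $W_k^B$), dividing by $W_k^B(\varphi)$ and taking the supremum over $\varphi$ gives the claimed bound with $C_r = 2^r e^{1/4} \sup_{\varphi}(1+|\varphi|_{k,B}^2)^r e^{|\varphi|_{k,B}^2/4} W_k^B(\varphi)^{-1}$, which is finite and independent of $A$.

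For the final estimate on $\vertiii{e^H - 1}_k^{\ext}$, I would write $e^H - 1 = H \int_0^1 e^{tH}\,dt$ and apply submultiplicativity together with the pointwise bound $|e^{tH}|_{k,B,T_\varphi}^{\ext} \leq e^{(1+|\varphi|_{k,B}^2)/4}$ (valid uniformly in $t \in [0,1]$), followed by Lemma \ref{lemma:relevant_variable} on $H$ and absorption by $W_k^B$. The explicit constant $8$ arises by tracking the factor $2(1+|\varphi|_{k,B}^2)$ from Lemma \ref{lemma:relevant_variable} together with the supremum $\sup_\varphi 2(1+|\varphi|_{k,B}^2) e^{(1+|\varphi|_{k,B}^2)/4} W_k^B(\varphi)^{-1} \leq 8$, which holds for sufficiently large $h$ (already built into the constant $h_0$ of Lemma \ref{lemma:Properties_of_weights}).

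The only genuine obstacle is bookkeeping: verifying that the extended norm $|\cdot|_{k,B,T_\varphi}^{\ext}$ on a product of several $H$'s respects submultiplicativity with no hidden combinatorial blow-up in the $l_{\obs,k}$-weighted decomposition; but this was settled precisely in the proof of Lemma \ref{lemma:submultiplicativity1}, where the weight factors $l_{\obs,k}^{|\alpha|}$ combine multiplicatively along with the $T_\varphi$-bounds on the $\alpha$-components.
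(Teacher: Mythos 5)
Your argument is correct and takes essentially the same route as the paper, whose proof simply observes that the only observable-specific ingredient is the quadratic bound $|H|_{k,B,T_{\varphi}}^{\ext} \leq 2(1+|\varphi|_{k,B}^2)\Vert H \Vert_{k,0}^{\ext}$ of Lemma \ref{lemma:relevant_variable} and then defers to the bulk proof of Lemma 9.3 in \cite{ABKM} — precisely the argument you reconstruct via Lemma \ref{lemma:submultiplicativity1} and weight absorption. One minor slip: the domination of the polynomial-times-$e^{|\varphi|_{k,B}^2/4}$ factor comes from the definition of the strong weight $W_k^B$ in \cite{ABKM}, not from property 6 of Lemma \ref{lemma:Properties_of_weights} (which compares $w_{k:k+1}$ with $w_{k+1}$), but this does not affect the validity of your proof.
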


\begin{proof}

The difference to \cite{ABKM} is that $H \in \mathbf{M}_0$ is of the following form:
$$
H = H^{\es}
+ s \left( \lambda^a + \sum_i n_i^a \nabla_i \varphi(a) \right) \1_a
+ t \left( \lambda^b + \sum_i n_i^b \nabla_i \varphi(b) \right) \1_b
+ st q^{ab}.
$$
In Lemma \ref{lemma:relevant_variable} it is shown that for the extended relevant variable $H \in \mathbf{M}_0$
\begin{align*}
|H|_{k,B,T_{\varphi}}^{\ext} \leq 2(1 + |\varphi |^2_{k,B}) \Vert H \Vert_{k,0}^{\ext}.
\end{align*}
This is the only ingredient for the proof where the observables play a role; for $\Vert H \Vert_{k,0}^{\ext} \leq \frac{1}{8}$ the remaining proof follows as in \cite{ABKM}.

\end{proof}

\subsubsection{The map $P_2$}

We extend Lemma 9.4 from \cite{ABKM} to the setting with observables. Here, $h_0(L)$ is fixed in Lemma \ref{lemma:Properties_of_weights}.

\begin{lemma}%[Extension of Lemma 9.4 in \cite{ABKM}]
\label{lemma:P2}
Let $L \geq 2^{d+3} + 16 R$ and $h \geq h_0(L)$.
Consider the map
$$
P_2: \mathbf{M}_{|||} \times \mathbf{M}^{(A)}
\rightarrow \mathbf{M}^{(A/2)},
\quad
P_2(I,K) = (I-1) \circ K.
$$
Restricted to $B_{\rho_1}(1) \times B_{\rho_2} (0)$ with $\rho_1 < (2A)^{-1}$ and $\rho_2 < \frac{1}{2}$, the map $P_2$ is smooth for any $A \geq 2$ and satisfies
\begin{align*}
\frac{1}{j_1!j_2!} 
&
\Vert (D_I^{j_1} D_K^{j_2} P_2)(I,K)(\dot{I}, \ldots\dot{I},\dot{K}, \ldots,\dot{K}) \Vert_k^{(A/2),\ext}
\\ &
\qquad \qquad 
\leq
\left( 2A \vertiii{ \dot{I} }_k^{\ext}\right)^{j_1} \left( 2 \Vert \dot{K} \Vert_k^{(A),\ext} \right)^{j_2}.
\end{align*}
This implies in particular for $I \in B_{\rho_1}(1)$ and $K \in B_{\rho_2}(0)$ that
\begin{align*}
\Vert P_2(I,K) \Vert_k^{(A/2),\ext} \leq 2A \vertiii{ I-1 }_k^{\ext} + 2 \Vert K \Vert_k^{(A),\ext}.
\end{align*}
\end{lemma}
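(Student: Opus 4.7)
The strategy is to follow the proof of Lemma~9.4 in \cite{ABKM} step by step, replacing each application of a bulk submultiplicativity estimate by its observable-extended analogue from Lemmas~\ref{lemma:submultiplicativity1}, \ref{lemma:submultiplicativity2}, and \ref{lemma:norm _estimate}. The first step is to unfold the circ product: for any $X \in \mathcal{P}_k$,
\begin{align*}
P_2(I,K)(X) = \sum_{Y \subseteq X} (I-1)^Y\, K(X \setminus Y),
\end{align*}
where $(I-1)^Y = \prod_{B \in \mathcal{B}_k(Y)} (I(B)-1)$ and $K$ is extended multiplicatively to $\mathcal{P}_k$ over its connected components. Since this expression is polynomial in $(I-1)$ and (through the factorisation) in $K$, smoothness is manifest; only the bound on the derivatives needs attention. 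In particular $P_2(1,0) = 0$, which will justify the ``in particular'' clause at the end via a linear Taylor expansion between $(1,0)$ and $(I,K)$.

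Second, I would compute $D_I^{j_1} D_K^{j_2} P_2(I,K)(\dot I^{j_1},\dot K^{j_2})$ by the product rule. Differentiating $(I-1)^Y$ in $\dot I$ picks out the $j_1$ blocks of $Y$ that receive $\dot I$ in place of $(I-1)$; differentiating $\prod_{U \in \mathcal{C}(X\setminus Y)} K(U)$ in $\dot K$ picks out the $j_2$ connected components of $X\setminus Y$ that receive $\dot K$ in place of $K$. Dividing by $j_1!j_2!$ then produces
\begin{align*}
\frac{1}{j_1!j_2!} D_I^{j_1} D_K^{j_2} P_2(I,K)(\dot I^{j_1},\dot K^{j_2})(X)
= \sum_{Y \subseteq X} \sum_{\substack{Z \subseteq Y\\ |Z|_k = j_1}} \sum_{\substack{S \subseteq \mathcal{C}(X\setminus Y)\\ |S| = j_2}} \dot I^Z (I-1)^{Y\setminus Z} \prod_{U \in S} \dot K(U) \prod_{U \notin S} K(U).
\end{align*}

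Third, I would apply Lemma~\ref{lemma:norm _estimate}(1)--(2) in its observable-extended form to bound the $\|\cdot\|_{k,X}^{\ext}$-norm of each summand by the product of the individual factors' $\vertiii{\cdot}_k^{\ext}$ and $\|\cdot\|_{k,\cdot}^{\ext}$ norms, and then multiply by $(A/2)^{|X|_k}$. Distributing $A^{|X|_k}$ one power per block, the $A$-weights on blocks of $Z$ and of $Y\setminus Z$ combine with $\vertiii{\dot I}_k^{\ext}$ and $\vertiii{I-1}_k^{\ext}$ respectively, while the $A$-weights on blocks in a connected component $U \subseteq X\setminus Y$ combine with $\|\dot K(U)\|_{k,U}^{\ext}$ or $\|K(U)\|_{k,U}^{\ext}$ to produce either $\|\dot K\|_k^{(A),\ext}$ or $\|K\|_k^{(A),\ext}$. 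The remaining factor $2^{-|X|_k}$ then supplies a block-wise geometric weight. Summing over all $Y$, $Z$, and $S$ with the chosen cardinalities and using the smallness hypotheses $\vertiii{I-1}_k^{\ext} < (2A)^{-1}$ and $\|K\|_k^{(A),\ext} < 1/2$, the per-block contribution is bounded by $\tfrac{1}{2}(1 + 2A\vertiii{I-1}_k^{\ext} + 2\|K\|_k^{(A),\ext}) \le 1$ (after correct combinatorial accounting), so that the supremum over $X \in \mathcal{P}_k^c$ converges and yields exactly the target bound $(2A\vertiii{\dot I}_k^{\ext})^{j_1}(2\|\dot K\|_k^{(A),\ext})^{j_2}$.

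The main obstacle I anticipate is the combinatorial bookkeeping that combines two different kinds of ``slots''—single blocks carrying $\dot I$ or $(I-1)$, versus connected polymers carrying $\dot K$ or $K$—while ensuring the correct matching of binomial factors $\binom{|Y|_k}{j_1}$ and $\binom{|\mathcal{C}(X\setminus Y)|}{j_2}$ against the target exponents $j_1$ and $j_2$ on the right-hand side. Once the block-wise geometric factorisation is arranged correctly, no genuinely new analytical input is required: the extended Lemma~\ref{lemma:norm _estimate} has been tailored so that the bulk argument of \cite{ABKM} transfers verbatim, with the observable-tagged factors $l_{\obs,k}^{|\alpha|}$ riding alongside the bulk contributions throughout.
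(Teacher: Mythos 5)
Your proposal is correct and follows essentially the same route as the paper, whose proof of this lemma consists precisely of observing that the extended norm estimates (Lemmas \ref{lemma:submultiplicativity1}--\ref{lemma:norm _estimate}) let the argument of Lemma 9.4 in \cite{ABKM} go through unchanged. Your unfolding of the circ product, the block/component bookkeeping for the derivatives, and the Taylor argument from $P_2(1,0)=0$ for the final bound are exactly the content of that transferred ABKM proof, just written out in more detail than the paper records.
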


\begin{proof}
Ingredients here are the norm estimates in Lemma \ref{lemma:norm _estimate} which also hold for the extended norms. Thus the claim follows as in \cite{ABKM}.
\end{proof}

\subsubsection{The map $P_3$}

The following lemma is based on Lemma 9.5 in \cite{ABKM} and extended to observables. Here, $h_0(L)$ is fixed in Lemma \ref{lemma:Properties_of_weights}.

\begin{lemma}%[Extension of Lemma 9.5 in \cite{ABKM}]
\label{lemma:P3}
Assume $L \geq 2^{d+3} + 16R$ and $h \geq h_0(L)$.
Let $A \geq 2$ and $B \geq 1$. Consider the map 
$$
P_3: \mathbf{M}^{(A/2)} \rightarrow \widehat{\mathbf{M}}^{(A/2,B)},
\quad
P_3K(X) = \prod_{Y \in \mathcal{C}(X)} K(Y).
$$ 
Its restriction to $B_{\rho} (0)$ is smooth for any $\rho$ such that $\rho \leq (2B)^{-1}$ and it satisfies the following bound for $j \geq 0$,
$$
\frac{1}{j!}
\left\Vert (D^j P_3 K)(\dot{K}, \ldots, \dot{K})
\right\Vert_k^{(A/2,B),\ext} \leq\left( 2B \Vert \dot{K} \Vert_{k,r}^{(A/2),\ext} \right)^j.
$$
\end{lemma}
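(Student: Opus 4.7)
The plan is to exploit the fact that $P_3$ is multilinear: evaluated on a fixed $X \in \mathcal{P}_k^{c'}$ with connected components $\mathcal{C}(X) = \{Y_1, \ldots, Y_n\}$, the map $P_3 K(X) = \prod_{i=1}^n K(Y_i)$ is a polynomial of degree $n$ in $K$, each factor being linear. Smoothness is therefore automatic, and the Leibniz rule (with each slot appearing at most to first order) gives, for $0 \leq j \leq n$,
$$
\frac{1}{j!} D^j (P_3 K)(X)(\dot{K}^j)
= \sum_{\substack{S \subseteq \mathcal{C}(X) \\ |S| = j}} \prod_{Y \in S} \dot{K}(Y) \prod_{Y \in \mathcal{C}(X) \setminus S} K(Y),
$$
with all derivatives of order $j > n$ vanishing identically.

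Next I would apply the submultiplicativity on connected components (item~1 of Lemma~\ref{lemma:norm _estimate}, valid for the extended norms) to obtain the pointwise bound
$$
\frac{1}{j!} \big\Vert D^j (P_3 K)(X)(\dot{K}^j) \big\Vert_{k,X}^{\ext}
\leq \sum_{\substack{S \subseteq \mathcal{C}(X) \\ |S| = j}} \prod_{Y \in S} \Vert \dot{K}(Y) \Vert_{k,Y}^{\ext} \prod_{Y \notin S} \Vert K(Y) \Vert_{k,Y}^{\ext}.
$$
Inserting $\Vert F(Y) \Vert_{k,Y}^{\ext} \leq (A/2)^{-|Y|_k} \Vert F \Vert_k^{(A/2),\ext}$ for both $F = \dot{K}$ and $F = K$ and collecting exponents in $(A/2)$ yields
$$
\frac{1}{j!} \big\Vert D^j (P_3 K)(X)(\dot{K}^j) \big\Vert_{k,X}^{\ext}
\leq (A/2)^{-|X|_k} \binom{n}{j} \big( \Vert \dot{K} \Vert_k^{(A/2),\ext} \big)^j \big( \Vert K \Vert_k^{(A/2),\ext} \big)^{n-j}.
$$

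To conclude, I would multiply by $(A/2)^{|X|_k} B^n$ so as to reconstruct the $(A/2,B),\ext$-weight, and combine the crude combinatorial bound $\binom{n}{j} \leq 2^n$ with the smallness hypothesis $\Vert K \Vert_k^{(A/2),\ext} \leq \rho \leq (2B)^{-1}$. This gives
$$
B^n \binom{n}{j} \big( \Vert \dot{K} \Vert_k^{(A/2),\ext} \big)^j \rho^{n-j}
\leq (2B)^n (2B)^{-(n-j)} \big( \Vert \dot{K} \Vert_k^{(A/2),\ext} \big)^j
= \big( 2B \Vert \dot{K} \Vert_k^{(A/2),\ext} \big)^j,
$$
a bound uniform in $n$, so taking the supremum over $X \in \mathcal{P}_k^{c'}$ yields the claim. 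The only genuinely delicate point is that the number of connected components $n$ of $X$ is unbounded, which is precisely why the factor $(2B)^{-(n-j)}$ stemming from the smallness of $\rho$ is needed to absorb the combinatorial factor $\binom{n}{j}$; no other obstacle is anticipated, the argument being a straightforward extension of the corresponding bulk estimate in \cite{ABKM}.
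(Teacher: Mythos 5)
Your argument is correct and is essentially the paper's proof: the paper simply defers to Lemma 9.5 of \cite{ABKM}, whose proof is exactly this multilinearity-plus-factorization argument, using item 1 of Lemma \ref{lemma:norm _estimate} (factorization of the extended norm over strictly disjoint connected components) and then absorbing the binomial factor $\binom{n}{j}\leq 2^n$ via $\rho\leq (2B)^{-1}$ to get a bound uniform in the number of components. Nothing in your route differs in substance from the cited argument.
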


\begin{proof}
The proof follows as in \cite{ABKM} by using 1. from Lemma \ref{lemma:norm _estimate}.
\end{proof}

\subsubsection{The map $R_2$}

The following statement is an extension of Lemma 9.8 in \cite{ABKM}. The estimates look different from those in \cite{ABKM} due to the second order perturbation in the observable flow. %Here, $R$ is a parameter which depends on $d$.

\begin{lemma}
%[Extension of Lemma 9.8 in \cite{ABKM}]
\label{lemma:R2}
Assume $L \geq 2^{d+3} + 16 R$.
Consider
$$
R_2: \mathcal{V}_k^{(0)} \times \mathbf{M}^{(A)} \rightarrow \mathbf{M}_0,
\quad
R_2(H,K)
= \mathcal{R}_+ H + st \int H^a H^b \de\mu_+ + \Pi \mathcal{R}_+ K.
$$

For any $h \geq 1$ and $A \geq 1$ the map $R_2$ is smooth and there is a constant $C$ which is independent of $A$ such that
\begin{align*}
&\Vert D^{j_1}_H D^{j_2}_K R_2(H,K)(\dot{H}, \ldots, \dot{H}, \dot{K}, \ldots, \dot{K}) \Vert_{k,0}^{\ext}
\\& \qquad
\leq C 
\begin{cases}
\Vert H \Vert_{k,0}^{\ext} + \Vert H^a \Vert^a_{k,0} \Vert H^b \Vert^b_{k,0} + \Vert K \Vert_k^{(A),\ext}
& \text{ if } j_1 = j_2 = 0 \\
\left(\Vert \dot{H} \Vert_{k,0}^{\ext} + \Vert \dot{H}^a \Vert^a_{k,0} \Vert H^b \Vert^b_{k,0} + \Vert H^a \Vert^a_{k,0} \Vert \dot{H}^b \Vert^b_{k,0} \right)
& \text{ if } j_1 = 1, j_2 = 0 \\
\Vert \dot{K} \Vert_k^{(A),\ext}
& \text{ if } j_1 = 0, j_2 = 1 \\
\Vert \dot{H}^a \Vert^a_{k,0} \Vert \dot{H}^b \Vert^b_{k,0}
& \text{ if } j_1 = 2, j_2 = 0
\end{cases}
\end{align*}
and $D^{j_1}_H D^{j_2}_k R_2(H,K)(\dot{H}, \ldots, \dot{H}, \dot{K}, \ldots, \dot{K}) = 0$ else.
\end{lemma}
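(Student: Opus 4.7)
The key observation is that $R_2$ is affine in $K$ and a polynomial of degree at most two in $H$, so all derivatives beyond those listed are identically zero; smoothness is therefore automatic once each of the listed derivatives is bounded. Decompose $R_2 = R_2^{(1)} + R_2^{(2)} + R_2^{(3)}$ with
\[
R_2^{(1)}(H) = \mathcal{R}_{k+1} H, \qquad
R_2^{(2)}(H) = st \int H^a H^b \de \mu_{k+1}, \qquad
R_2^{(3)}(K) = \Pi \mathcal{R}_{k+1} K.
\]
The three pieces are uncoupled: $R_2^{(1)}$ and $R_2^{(3)}$ are linear, and $R_2^{(2)}$ is a symmetric bilinear form in $(H^a,H^b)$. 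Hence the four listed cases correspond, respectively, to bounding $\|R_2^{(1)}(H) + R_2^{(2)}(H) + R_2^{(3)}(K)\|_{k,0}^{\ext}$, $\|R_2^{(1)}(\dot H) + 2R_2^{(2)}(H,\dot H)\|_{k,0}^{\ext}$, $\|R_2^{(3)}(\dot K)\|_{k,0}^{\ext}$, and $\|2 R_2^{(2)}(\dot H,\dot H)\|_{k,0}^{\ext}$, where I abuse notation and write $R_2^{(2)}(H,\tilde H) = st \int H^a \tilde H^b \de \mu_{k+1} + st \int \tilde H^a H^b \de \mu_{k+1}$ for the polarisation.

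The estimate on $R_2^{(1)}$ follows by combining Lemma \ref{lemma:relevant_variable} (which controls $|H|_{k,B,T_\varphi}^{\ext}$ by $\|H\|_{k,0}^{\ext}$) with Lemma \ref{lemma:Regularity_of_Integration_Map} (which bounds $\mathcal{R}_{k+1}$ on blocks by $A_{\mathcal{B}}/2$), yielding $\|\mathcal{R}_{k+1} H\|_{k,0}^{\ext} \leq C \|H\|_{k,0}^{\ext}$. The estimate on $R_2^{(3)}$ then follows by inserting Lemma \ref{Boundedness_of_Projection} (boundedness of $\Pi$) into $\vert \mathcal{R}_{k+1} K(B) \vert_{k,B,T_0}^{\ext}$ and applying the pointwise bound \eqref{integration_estimate}, giving $\|\Pi \mathcal{R}_{k+1} K\|_{k,0}^{\ext} \leq C A_{\mathcal{B}} \|K\|_k^{(A),\ext}$.

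The genuinely new ingredient is the bilinear term $R_2^{(2)}$. Since $H \in \mathcal{V}_k^{(0)}$ has observable part of the form $H^a(\varphi) = n^a \nabla\varphi(a)$ (and analogously for $H^b$), and since $\mu_{k+1}$ is centered Gaussian, a direct computation gives
\[
\int H^a H^b \de \mu_{k+1} = \sum_{i,j=1}^d n_i^a n_j^b \, \nabla_j^* \nabla_i C_{k+1}(a,b).
\]
By Proposition \ref{Prop:FRD_Buchholz}, $|\nabla_j^* \nabla_i C_{k+1}(a,b)| \leq C L^{-kd}$ uniformly in $a,b$. Since $\int H^a H^b \de \mu_{k+1}$ is a constant, its $(ab)$-norm contribution is $l_{\obs,k+1}^2$ times this scalar, and the ratio $(l_{\obs,k+1}/l_{\obs,k})^2 L^{-kd}$ is bounded (using \eqref{scale_transformation}) by a constant times $L^{-kd} \cdot L^d = L^{-(k-1)d}$, which is in turn dominated by $l_k^2 = h_k^2 L^{-kd}$ once $h \geq h_0$. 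This gives
\[
l_{\obs,k+1}^2 \Big| \int H^a H^b \de \mu_{k+1} \Big| \leq C h^{-2} \, l_{\obs,k}^2 l_k^2 |n^a| |n^b| \leq C \|H^a\|_{k,0}^{a} \|H^b\|_{k,0}^{b},
\]
which handles the $j_1 = 0$, $j_2 = 0$ case as well as the $j_1 = 1, 2$ cases after polarisation.

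The main obstacle is the bookkeeping of the observable weight ratios across scales, which has to absorb the $L^{-kd}$ decay of the finite-range kernel; the correct dependence is secured precisely by the factor $L^{d/2(k \wedge j_{ab})}$ inserted into $l_{\obs,k}$, as motivated in Remark \ref{Rem:MotivationObservableNormWeight}, together with the factor $h_k^{-2} \leq h^{-2}$ provided by $l_k^2$. No further smoothness arguments are needed, since every derivative of order $(j_1, j_2)$ not appearing in the statement is identically zero by the polynomial structure of $R_2$.
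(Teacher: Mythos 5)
Your proposal is correct and follows essentially the same route as the paper: you split $R_2$ into the three pieces, bound the bilinear observable term by the explicit Gaussian computation $\int H^a H^b \de\mu_{k+1} = \sum_{i,j} n_i^a n_j^b \nabla_j^*\nabla_i C_{k+1}(a,b)$ together with the finite-range-decomposition bound (this is exactly the content of the paper's Lemma \ref{lemma:Estimate_for_A}), treat $\Pi\mathcal{R}_{k+1}K$ via boundedness of the projection (Lemma \ref{Boundedness_of_Projection}) and regularity of the integration map (Lemma \ref{lemma:Regularity_of_Integration_Map}), and dispose of all higher derivatives by the polynomial structure. Two harmless slips to tidy up: the $ab$-component of $\Vert\cdot\Vert_{k,0}^{\ext}$ carries the weight $l_{\obs,k}^2$, not $l_{\obs,k+1}^2$ (the output of $R_2$ is measured at scale $k$, so the scale-change ratio from \eqref{scale_transformation} never enters and the bound needs no condition relating $h$ and $L$), and for $\mathcal{R}_{k+1}H$ the clean argument is that the linear observable monomials are invariant under convolution with the centered Gaussian, $\mathcal{R}_{k+1}H^{\alpha}=H^{\alpha}$, while the bulk part is bounded as in the reference \cite{ABKM}, rather than via Lemma \ref{lemma:relevant_variable} combined with Lemma \ref{lemma:Regularity_of_Integration_Map}, which produce weighted-norm rather than coefficient-norm bounds.
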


\begin{proof}
The extended norm consists of the following terms:
\begin{align*}
&\Vert R_2(H,K) \Vert_{k,0}^{\ext}
= \sum_{\alpha \in \lbrace \es, a, b, ab \rbrace} \left\Vert \left(R_2(H,K) \right)^{\alpha} \right\Vert_{k,0}^{\alpha}
\\ & 
= \Vert \mathcal{R}_+ H^{\es} \Vert_{k,0} + \Vert H^a \Vert^a_{k,0} + \Vert H^b \Vert^b_{k,0} + \left\Vert \int H^a H^b \de\mu_+ \right\Vert_{k,0}^{ab}
+ \sum_{\alpha \in \lbrace \es,a,b,ab \rbrace} \left\Vert \Pi^{\alpha} \mathcal{R}_+ K^{\alpha} \right\Vert_{k,0}^{\alpha}.
\end{align*}
The first four terms can be estimated, using Lemma \ref{lemma:Estimate_for_A}, as follows:
\begin{align*}
&\left\Vert \mathcal{R}_+ H^{\es} \right\Vert_{k,0} + \Vert H^a \Vert^a_{k,0} + \Vert H^b \Vert^b_{k,0} + \left\Vert \int H^a H^b \de\mu_+ \right\Vert_{k,0}^{ab}
\\ & \qquad \qquad \qquad \qquad
\leq C \Vert H \Vert_{k,0}^{\ext} + C_{FRD} h^{-1} \Vert H^a \Vert^a_{k,0} \Vert H^b \Vert^b_{k,0}.
\end{align*}
Derivatives with respect to $H$ are bounded similarly since
$$
\left[D_H R_2(H,K) \dot{H}\right]^{\obs} = s \dot{H}^a + t \dot{H}^b + st \left( \int \dot{H}^a H^b \de\mu_+ + \int H^a \dot{H}^b \de \mu_+ \right)
$$
and
$$
\left[D_H^2 R_2(H,K) (\dot{H})^2\right]^{\obs}
=  2 st \int \dot{H}^a \dot{H}^b \de \mu_+. 
$$

It remains to show that, for $\alpha \in \lbrace a,b,ab \rbrace$,
$$
\Vert \Pi^{\alpha} \mathcal{R}_+ K^{\alpha} \Vert_{k,0}^{\alpha}
\leq C \Vert K \Vert_k^{(A)}.
$$
To show this inequality, we use Lemma \ref{Boundedness_of_Projection} to obtain
\begin{align*}
\Vert \Pi^{\alpha} \mathcal{R}_+ K^{\alpha} \Vert_{k,0}^{\alpha} \leq C |\mathcal{R}_+ K|_{k,B,T_0}^{\ext}.
\end{align*}
For the extended seminorm it holds as in \cite{ABKM} that
$$
\Vert F(B) \Vert_{k:k+1,B}^{\ext} = \sup_{\varphi} w_{k:k+1}^{-B}(\varphi)|F(B)|_{k,B,T_{\varphi}}^{\ext}
\geq |F(B)|_{k,B,T_0}^{\ext}.
$$
Thus
$$
\Vert \Pi^{\alpha} \mathcal{R}_+ K^{\alpha} \Vert_{k,0}^{\alpha}
\leq C \Vert \mathcal{R}_+ K(B) \Vert_{k:k+1,B}^{\ext}.
$$
Now we can proceed as in \cite{ABKM}, using Lemma \ref{lemma:Regularity_of_Integration_Map}.

Due to the linearity with respect to $K$ the bounds for the derivatives with respect to $K$ follow from the case without derivatives.

\end{proof}

\subsubsection{The map $R_1$}

We extend Lemma 9.7 from \cite{ABKM} to our setting. %Here, $R$ is a parameter dependent on $d$.

\begin{lemma}
%[Extension of Lemma 9.7 in \cite{ABKM}]
\label{lemma:R1}
Assume $L \geq 2^{d+3} + 16 R$.
Consider the map
$$
R_1: \widehat{\mathbf{M}}^{(A/2,B)} \rightarrow \widehat{\mathbf{M}}_{:}^{(A/(2A_{\mathcal{P}}),B)},
 \quad
  R_1(P) = \mathcal{R}_+ P.
$$
For $B \geq 1$ and any $A \geq 4 A_{\mathcal{P}}$ the map $R_1$ is smooth and satisfies
\begin{align*}
\Vert D_P^j R_1(P)(\dot{P},\ldots,\dot{P}) \Vert_{k:k+1}^{(A/(2A_{\mathcal{P}}),B),\ext}
\leq 
 \left( \Vert \dot{P} \Vert_k^{(A/2),\ext}\right)^j \left( \Vert P \Vert_k^{(A/2),\ext} \right)^{1-j}
\end{align*}
for $j \in \lbrace 0,1 \rbrace$. The derivatives vanish for $j > 1$.
\end{lemma}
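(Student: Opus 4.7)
The proof is essentially immediate once two ingredients are in place. First, the integration map $\mathcal{R}_+$ is linear in its argument, and $R_1$ is just the restriction of this map. Hence $R_1$ is itself a bounded linear operator (assuming we can verify the claimed continuity estimate), so $D_P R_1(P) \dot{P} = R_1(\dot{P})$ is independent of $P$, and $D_P^j R_1 \equiv 0$ for every $j \geq 2$. This already disposes of the statement about vanishing higher derivatives and reduces the estimates for $j = 0$ and $j = 1$ to the single operator-norm bound
\begin{align*}
\Vert R_1(P) \Vert_{k:k+1}^{(A/(2A_{\mathcal{P}}),B),\ext} \leq \Vert P \Vert_k^{(A/2),\ext}.
\end{align*}

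Second, the pointwise (in the polymer argument) regularity bound for the integration map has already been extended to the observable setting in Lemma \ref{lemma:Regularity_of_Integration_Map}, which gives
\begin{align*}
\Vert \mathcal{R}_+ P(X) \Vert_{k:k+1,X}^{\ext} \leq \left( \frac{A_{\mathcal{P}}}{2} \right)^{|X|_k} \Vert P(X) \Vert_{k,X}^{\ext}
\end{align*}
for every $X \in \mathcal{P}_k^{c'} \subset \mathcal{P}_k$. I would then multiply both sides by the weight $(A/(2 A_{\mathcal{P}}))^{|X|_k} B^{|\mathcal{C}(X)|}$ and take the supremum over $X \in \mathcal{P}_k^{c'}$. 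The powers of $A_{\mathcal{P}}$ cancel, producing
\begin{align*}
\Vert R_1(P) \Vert_{k:k+1}^{(A/(2A_{\mathcal{P}}),B),\ext} \leq \sup_{X \in \mathcal{P}_k^{c'}} \left(\frac{A}{4}\right)^{|X|_k} B^{|\mathcal{C}(X)|} \Vert P(X) \Vert_{k,X}^{\ext},
\end{align*}
and since $A/4 \leq A/2$ (so $(A/4)^{|X|_k} \leq (A/2)^{|X|_k}$ for every $|X|_k \geq 0$), this is bounded by $\Vert P \Vert_k^{(A/2),\ext}$ (interpreted via the natural extension of the weak norm to $\mathcal{P}_k^{c'}$ with the $B^{|\mathcal{C}(X)|}$ factor, which is how the map $R_1$ is set up as domain $\widehat{\mathbf{M}}^{(A/2,B)}$).

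Applying exactly the same chain of inequalities with $P$ replaced by the direction $\dot{P}$ handles the case $j = 1$. Since the argument relies only on the linearity of $\mathcal{R}_+$ together with Lemma \ref{lemma:Regularity_of_Integration_Map}, no separate treatment of the observable components is required; the extended seminorm $|\cdot|_{k,X,T_\varphi}^{\ext}$ is simply the sum over $\alpha \in \{\es,a,b,ab\}$ of weighted bulk seminorms, and the regularity lemma already handles each component with the same constant $A_{\mathcal{P}}/2$. There is no genuine obstacle here beyond bookkeeping; the only condition is the hypothesis $A \geq 4 A_{\mathcal{P}}$ to ensure that $A/(2 A_{\mathcal{P}}) \geq 2$, so that the rescaled norm on the left-hand side is still of the right type.
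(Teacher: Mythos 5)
Your proposal is correct and follows essentially the same route as the paper: linearity of $\mathcal{R}_+$ disposes of all derivatives of order $\geq 1$, and the operator-norm bound is exactly the pointwise estimate of Lemma \ref{lemma:Regularity_of_Integration_Map} combined with the weight bookkeeping $(A/(2A_{\mathcal{P}}))\cdot(A_{\mathcal{P}}/2)=A/4\leq A/2$. The paper states this in two lines; you merely make the cancellation of the $A_{\mathcal{P}}$ factors explicit, which is fine.
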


\begin{proof}
The statement for $j=0$ follows directly from Lemma \ref{lemma:Regularity_of_Integration_Map}. Note that the map $R_1$ is linear in $P$ so that the statement for $j>0$ is trivial.
\end{proof}

\subsubsection{The map $P_1$}

In the following we extend Lemma 9.6 from \cite{ABKM} to observables. Here, $h_0(L)$ is fixed in Lemma \ref{lemma:Properties_of_weights}.

\begin{lemma}%[Extension of Lemma 9.6 in \cite{ABKM}]
\label{lemma:P1}
Assume $L \geq \max \left\lbrace 2^{d+3} + 16R,4d(2^d + R) \right\rbrace$, and $h \geq h_0(L)$.
Consider the map
\begin{align*}
& P_1:
\mathbf{M}_{|||} \times \mathbf{M}_{|||} \times \mathbf{M}_{|||} \times \widehat{\mathbf{M}}_{:}^{(A/(2 A_{\mathcal{P}}),B)} \rightarrow \mathbf{M}'^{(A)},
 \\
 &P_1(I_1,I_2,J,K)(U) = \sum_{\substack{X_1,X_2 \in \mathcal{P}_k \\ X_1 \cap X_2 = \es}} \chi(X_1 \cup X_2,U)I_1^{U \setminus (X_1 \cup X_2)}I_2^{(X_1 \cup X_2)\setminus U} J^{X_1}K(X_2).
\end{align*}
Let $A_0(L,d) = (48 A_{\mathcal{P}})^{\frac{L^d}{\alpha}}$ with $\alpha =(1 + 2^d)^{-1}(1 + 6^d)^{-1}$.
If $A \geq A_0, B = A$ and if $\rho_1$, $\rho_2$, $\rho_3$ satisfy
\begin{align*}
\rho_1 \leq \frac{1}{2},
\quad
\rho_2 \leq A^{-2},
\quad
\rho_3 \leq 1,
\end{align*}
then the map $P_1$ restricted to $U= B_{\rho_1}(1) \times B_{\rho_1}(1) \times B_{\rho_2}(0) \times B_{\rho_3}(0)$ is smooth and satisfies
\begin{align*}
&\frac{1}{i_1!i_2!j_1!j_2!}
\\
&\left\Vert
D_{I_1}^{i_1} D_{I_2}^{i_2} D_{J}^{j_1} D_{K}^{j_2}
P_1(I_1,I_2,J,K)
(\dot{I_1}, \ldots, \dot{I_1}, \dot{I_2},\ldots,\dot{I_2}, \dot{J},\ldots,\dot{J}, \dot{K}, \ldots, \dot{K})
\right\Vert_{k+1,r}^{(A),\ext}
\\
&\leq \frac{\eta^2}{4} L^d
\left( \vertiii{ \dot{I}_1 }^{\ext} \right)^{i_1}
\left( \vertiii{ \dot{I}_2 }^{\ext} \right)^{i_2}
\left( A^2 \vertiii{ \dot{J} }^{\ext} \right)^{j_1}
\left( \left\Vert \dot{K} \right\Vert_{k:k+1}^{(A/(2A_{\mathcal{P}}),B),\ext} \right)^{j_2}.
\end{align*}
\end{lemma}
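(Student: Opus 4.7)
The proof extends the argument of Lemma 9.6 in \cite{ABKM}, with modifications to accommodate the extended norms and the appearance of the new prefactor $\frac{\eta^2}{4}L^d$. Since $P_1$ is polynomial of total degree at most $4$ in its arguments (quadratic in the pair $(I_1,I_2)$ taken together and linear separately in $J$ and $K$), all derivatives can be read off from the explicit definition via Leibniz rule. Concretely, applying $\frac{1}{i_1! i_2! j_1! j_2!}\, D^{i_1}_{I_1} D^{i_2}_{I_2} D^{j_1}_J D^{j_2}_K$ to the summand indexed by $(X_1, X_2)$ produces a sum of at most $\binom{|U\setminus(X_1\cup X_2)|_k}{i_1} \binom{|(X_1\cup X_2)\setminus U|_k}{i_2}\binom{|X_1|_k}{j_1}$ products in which the chosen $I_1$, $I_2$, $J$ factors are replaced by the corresponding dotted variables (and $K$ by $\dot K$ if $j_2=1$); the resulting polynomial combinatorial factors in $|X_1 \cup X_2|_k$ are absorbed into $A^{|X|_k}$ for $A$ large.

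For the pointwise bound, fix $U \in \mathcal{P}^c_{k+1}$, $\varphi \in \chi_N$, and a pair $(X_1, X_2) \in \mathcal{P}_k \times \mathcal{P}_k$ with $X_1 \cap X_2 = \es$ and $\pi(X_1 \cup X_2) = U$. I apply the submultiplicativity of the extended $T_\varphi$-seminorm (Lemma~\ref{lemma:submultiplicativity1}) to factor the summand over the disjoint pieces $U \setminus (X_1\cup X_2)$, $(X_1 \cup X_2) \setminus U$, $X_1$ and $X_2$, then divide by the weight $w_{k+1}^U(\varphi)$ using property 5 of Lemma~\ref{lemma:Properties_of_weights}, which provides the factorisation $w_{k+1}^U \geq w_{k:k+1}^{X_2} (W^{U^+}_k)^2$. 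Passing from the $k$-norm on the polymer $X_2$ to the $(k+1)$-norm on $U = \pi(X_1 \cup X_2)$ is done via Lemma~\ref{lemma:norm _estimate} part 3: this is the single step that produces the prefactor $\frac{\eta^2}{4}L^d$, which is precisely the bound on $(l_{\obs,k+1}/l_{\obs,k})^2$ from \eqref{scale_transformation}.

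The sum over $(X_1, X_2)$ is then controlled as in \cite{ABKM} by the standard entropy bound on polymers. After the pointwise bound one is left to estimate, uniformly in $U$, a sum over $(X_1,X_2)$ of products of the form $(A/(2A_{\mathcal{P}}))^{-|X_1\cup X_2|_k} B^{-|\mathcal C(X_2)|}$ multiplied by powers of $\rho_1,\rho_2,\rho_3$ controlled by the sizes of $X_1$, $X_2$ and their complement within $U$. This is handled by splitting into small and large polymer contributions: the number of connected $k$-polymers containing a fixed block is at most $(2^d)^{|X|_k}$, the projection $\pi$ is at most $(1+6^d)$-to-one on small polymers, and large polymers contract in size by a factor $\geq 2$ under closure. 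The choice $A \geq A_0(L,d)=(48 A_{\mathcal{P}})^{L^d/\alpha}$ with $\alpha = (1+2^d)^{-1}(1+6^d)^{-1}$ absorbs all resulting combinatorial constants.

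The principal obstacle in adapting the \cite{ABKM} argument to the observable setting is ensuring that the prefactor $\frac{\eta^2}{4}L^d$ appears only once and does not compound into $(\frac{\eta^2}{4}L^d)^{|U|_{k+1}}$, which would destroy the sum. This is the case because the observable labels $a, b, ab$ are localized at the two sites $a,b$: by the field-locality property preserved by the flow (Lemma~\ref{Lemma:FirstConclusions_Observables}), at most one block within $X_1 \cup X_2$ carries each observable label, so Lemma~\ref{lemma:norm _estimate} part 3 is invoked only once for the entire product rather than independently on each block, and the remaining bulk factors supply only bounds exactly of the form used in the original \cite{ABKM} argument.
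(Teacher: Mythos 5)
Your proposal is correct and takes essentially the same route as the paper, which simply defers to the proof of Lemma 9.6 in \cite{ABKM} and observes that the only new ingredient is the factor $\frac{\eta^2}{4}L^d$ produced by the change-of-scale step in Lemma \ref{lemma:norm _estimate}, i.e.\ by the bound on $l_{\obs,k+1}/l_{\obs,k}$ in \eqref{scale_transformation} — exactly the mechanism you identify. One small correction: the factor appears only once per polymer $U$ not because of field locality of the arguments (no such hypothesis is made in the lemma), but because the observable weight ratio enters solely through the single passage from the scale-$k$ to the scale-$(k+1)$ extended seminorm, with the observable degree capped at $|\alpha|\le 2$ in the quotient algebra, while the fixed-scale submultiplicativity of Lemma \ref{lemma:submultiplicativity1} is exact in the weights and loses nothing.
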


\begin{proof}
The difference to \cite{ABKM} is the additional factor $\frac{\eta^2}{4} L^d$ here which appears in Lemma \ref{lemma:norm _estimate}. Apart from that the proof is the same as in \cite{ABKM}.
\end{proof}

\begin{remark}\label{P_1_improved}
Consider the case of the bulk flow, i.e., set $s=t=0$. When inspecting the proof of Lemma 9.6 in \cite{ABKM}, we get
\begin{align*}
A^{|U|_{k+1}}
&
\left\Vert
D_{I_1} D_{I_2} D_{J} D_{K}
P_1 (I_1,I_2,J,K)(U)(\dot{I_1},\dot{I_2},\dot{J}, \dot{K})
\right\Vert
\\ &
\leq
A^{-x|U|_{k+1}}A^2 
\vertiii{ \dot{I}_1 }
\vertiii{ \dot{I}_2 }
\vertiii{ \dot{J} }
\left\Vert \dot{K} \right\Vert_{k:k+1}^{(A/(2A_{\mathcal{P}}),B) }
\end{align*}
for $x \in (0,2\alpha)$. Namely, we have that
\begin{align*}
A^{|U|_{k+1}}
&
\left\Vert
D_{I_1} D_{I_2} D_{J} D_{K}
P_1 (I_1,I_2,J,K)(U)(\dot{I_1},D_{I_2},D_{J}, D_{K})
\right\Vert
\\ &
\leq
\left(
\frac{(48 A_{\mathcal{P}})^{2L^d}}{A^{2 \alpha}} 
\right)^{|U|_{k+1}}A^2 
\vertiii{ \dot{I}_1 }
\vertiii{ \dot{I}_2 }
\vertiii{ \dot{J} }
\left\Vert \dot{K} \right\Vert_{k:k+1}^{(A/(2A_{\mathcal{P}}),B) }
\\ &
\leq A^{-x|U|_{k+1}}A^2 
\vertiii{ \dot{I}_1 }
\vertiii{ \dot{I}_2 }
\vertiii{ \dot{J} }
\left\Vert \dot{K} \right\Vert_{k:k+1}^{(A/(2A_{\mathcal{P}}),B) }
\end{align*}
if we choose
$$
A \geq \left(48 A_{\mathcal{P}} \right)^{\frac{2 L^d}{2 \alpha - x}}.
$$
\end{remark}

\subsubsection{Proof of Lemma \ref{lemma:estimate_on_S}}

For the sake of completeness we review the proof as it is done in \cite{ABKM}.

\begin{proof}[Proof of Lemma \ref{lemma:estimate_on_S}]
The assertion follows from the smoothness of the individual maps $E,P_1,P_2,P_3,R_1$ and $R_2$ and the chain rule.

Let $A_0$ be as in Lemma \ref{lemma:P1} and set $B=A$.
By Lemma \ref{lemma:P1} there exists a neighbourhood
$$
O_1 = B_{\rho_1}(1) \times B_{\rho_1}(1) \times B_{\rho_2}(0) \times B_{\rho_3}(0)
$$
such that $P_1$ is smooth in $O_1$.
By Lemma \ref{lemma:smoothness_of_E} there is a neighbourhood
$$
O_2 = B_{\rho_4}(0) \subset B_{\frac{1}{8}}(0)
$$
such that $E$ is smooth in $O_2$ and $E(O_2) \subset B_{\rho_1}(1)$ and $1-E(O_2) \subset B_{\rho_2}(0)$.
By Lemma \ref{lemma:R2} there is a neighbourhood
$$
O_3 = B_{\rho_5}(0) \times B_{\rho_6}(0)
$$
such that $R_2$ is smooth in $O_3$ and $R_2(O_3) \subset O_2$.
This defines the first restriction on $U_{\rho^*}$, namely
$$
U_{\rho^*} \subset B_{\rho_5}(0) \times B_{\rho_6}(0)
$$
The second restriction comes from the condition
$$
R_1 \left( P_3 \left( P_2 \left( E(H),K \right) \right) \right) \in B_{\rho_3}(0).
$$
By Lemma \ref{lemma:R1} there is a neighbourhood
$$
O_4 = B_{\rho_7}(0)
$$
such that $R_1$ is smooth in $O_4$ and $R_1(O_4) \subset B_{\rho_3}(0)$.
By Lemma \ref{lemma:P3} there is a neighbourhood
$$
O_5 \subset B_{\rho}(0)
$$
such that $P_3$ is smooth in $O_5$ and $P_3(O_5) \subset O_4$.
By Lemma \ref{lemma:P2} there is a neighbourhood
$$
O_6 = B_{\rho_8}(1) \times B_{\rho_9}(0)
$$
such that $P_2$ is smooth in $O_6$ and $P_2(O_6) \subset O_5$.
Finally, by Lemma \ref{lemma:smoothness_of_E} there is a neighbourhood
$$
O_7 = B_{\rho_{10}}(0) \subset B_{\rho_4}(0)
$$
such that $E(O_7) \subset B_{\rho_8}(1)$.
We obtain the second restriction:
$$
U_{\rho_*} \subset B_{\rho_{10}}(0) \times B_{\rho_9}(0).
$$
The combination of both constraints yields that $\mathbf{S}$ is $C^{\infty}$ in the set
$$
U_{\rho_*} \subset B_{\rho_{10} \wedge \rho_5}(0) \times B_{\rho_9 \wedge \rho_6}(0).
$$

The chain rule implies the bounds on the derivatives.
\end{proof}

\begin{remark} \label{S_improved}
Remark \ref{P_1_improved} and chain rule implies that in the case of the bulk flow there is a constant $C_1$ such that for any $x \in (0,2 \alpha)$ and $(H,K) \in U_{\rho}$
\begin{align*}
&
A^{|U|_{k+1}}
\left\Vert
D_H D_K D_q \mathbf{S}_k (H,K,q) (\dot{H},\dot{K},\dot{q})(U)
\right\Vert_{k+1,U}
\\ &
\leq
 C_1 A^{-x |U|_{k+1}} A^4
\Vert \dot{H} \Vert_{k,0}
\Vert \dot{K} \Vert_k^{(A)}
\Vert \dot{q} \Vert,
\end{align*}
where the factors $A$ come from the estimates on $D_J P_1$, $DP_3$, and $D_I P_2$.
\end{remark}

%\begin{proof}[Proof of Proposition \ref{Prop:Smoothness_Observables}]
%This is the combination of Lemma \ref{lemma:estimate_on_prefactor_F} and Lemma \ref{lemma:estimate_on_S}.
%\end{proof}

%\newpage
\subsection{Derivatives of the extended renormalisation map at $(0,0)$}

In this section we prove the bounds on $\mathbf{C}$ stated in Proposition \ref{Prop:Contractivity_Observables}, the bounds on $\mathbf{B}$ stated in Proposition \ref{Prop:Bounds_on_B_ObservableFlow}, a bound on the second order part in $\mathbf{A}$ as used in the proof of Lemma \ref{lemma:estimate_on_prefactor_F}, and we compute the $ab$-part of the second derivative of $\mathbf{S}^{\ext}$ at $(0,0)$ as stated in Proposition \ref{Prop:2nd_order_perturbation_effect_on_S}.

\subsubsection{Bound on the extended operator $\mathbf{C}$} \label{subsec:Bound_C}

Let $K \in M^{\ext}(\mathcal{P}_k^c)$, $U \in \mathcal{P}_k^c$, and $\varphi \in \chi_N$. Then $\mathbf{C}K$ can be decomposed into two parts,
\begin{align}
\mathbf{C}K(U,\varphi) = F(U,\varphi) + G(U,\varphi).
\end{align}
The large-polymer part $F \in M^{\ext}(\mathcal{P}^c_{k+1})$ is defined by
\begin{align*}
F(U,\varphi) = \sum_{\substack{X \in \mathcal{P}_k^c \setminus \mathcal{B}_k \\ \pi(X) = U}} \mathcal{R}_+ K(X,\varphi),
\end{align*}
and $G$ satisfies $G(U,\varphi) = 0$ for all $U \in \mathcal{P}^c_{k+1} \setminus \mathcal{B}_{k+1}$, otherwise, for $U = B_+ \in\mathcal{B}_{k+1}$,
\begin{align*}
G(B_+,\varphi) = \sum_{B \in \mathcal{B}_k(B_+)} G(B,\varphi)
\quad
\text{with}
\quad
G(B,\varphi) = (1-\Pi) \mathcal{R}_+ K(B,\varphi).
\end{align*}
We restate the key bound from Proposition \ref{Prop:Contractivity_Observables} as Lemma \ref{lemma:contractivity_of_C} below. 

\begin{lemma}\label{lemma:contractivity_of_C}
For any $\theta \in (0,1)$ there exists an $L_0$ such that for all odd integers $L \geq L_0$ there is $A_0$ and $h_0$ with the following property. For all $A \geq A_0$ and for all $h \geq h_0$,
\begin{align*}
\Vert \mathbf{C} \Vert^{(A),\ext}_+ \leq \theta
\end{align*}
independently of $k$ and $N$.
\end{lemma}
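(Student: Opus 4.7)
The plan is to follow the proof of the corresponding bulk contractivity result (Proposition~3.13 of \cite{Hil19_1}) and to verify that the three observable sectors $\alpha\in\{a,b,ab\}$ also contract. The $\alpha=\es$ component reduces to the original bulk statement, so the new work is to control the extra scale-change ratios $l_{\obs,k+1}^{|\alpha|}/l_{\obs,k}^{|\alpha|}$ from \eqref{scale_transformation} together with the factor $\max\{1,\tfrac{\eta^2}{4}L^d\}$ that Lemma~\ref{lemma:submultiplicativity2}(2) inserts at each change of scale.

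Decompose $\mathbf{C}K=F+G$ as in the statement of the lemma. For the single-block part $G$, fix $\alpha\in\{a,b,ab\}$ and $B_+\in\mathcal{B}_{k+1}$. Combining Lemma~\ref{lemma:Contraction_estimate_1} (subtracting $\Pi_k^\alpha$ gains the factor $L^{-(d/2+A(\alpha,k))}$) with Lemma~\ref{lemma:Regularity_of_Integration_Map} and the weight bound in Lemma~\ref{lemma:Properties_of_weights} yields, for each $B\in\mathcal{B}_k(B_+)$,
\begin{align*}
l_{\obs,k+1}^{|\alpha|}\,\Vert(1-\Pi_k^\alpha)\mathcal{R}_{k+1}K^\alpha(B)\Vert_{k+1,B_+}
\leq C\,l_{\obs,k+1}^{|\alpha|}\,L^{-(d/2+A(\alpha,k))}\,\Vert K^\alpha(B)\Vert_{k,B}.
\end{align*}
Field locality forces $K^\alpha(B)=0$ unless $\alpha\subset B$, so at most one sub-block of $B_+$ contributes in each sector. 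A simple geometric argument (for $L$ large) shows that $a$ and $b$ can both lie in a common $k$-block only when $k\geq j_{ab}$, so the sector $\alpha=ab$ at scales $k<j_{ab}$ vanishes altogether. In the surviving cases, inserting \eqref{scale_transformation} gives
\begin{align*}
\frac{l_{\obs,k+1}^{|\alpha|}}{l_{\obs,k}^{|\alpha|}}\cdot L^{-(d/2+A(\alpha,k))}\leq
\begin{cases}\eta/(2L), & \alpha\in\{a,b\},\ k<j_{ab},\\ 2\eta\,L^{-d/2}, & \alpha\in\{a,b\},\ k\geq j_{ab},\\ 4\eta^2\,L^{-d/2}, & \alpha=ab,\ k\geq j_{ab},\end{cases}
\end{align*}
each of which is bounded by $\theta/2$ for $L\geq L_0$ sufficiently large.

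For the large-polymer part $F$, submultiplicativity (item~3 of Lemma~\ref{lemma:norm _estimate}, which carries the new factor $\max\{1,\tfrac{\eta^2}{4}L^d\}$) together with Lemma~\ref{lemma:Regularity_of_Integration_Map} gives, for each $X\in\mathcal{P}_k^c\setminus\mathcal{B}_k$ with $\pi(X)=U$,
\begin{align*}
A^{|U|_{k+1}}\Vert\mathcal{R}_{k+1}K(X)\Vert_{k+1,U}^{\ext}
\leq \max\!\left\{1,\tfrac{\eta^2}{4}L^d\right\}\!\left(\tfrac{A_{\mathcal{P}}}{2}\right)^{|X|_k}\!A^{|U|_{k+1}-|X|_k}\Vert K\Vert_k^{(A),\ext}.
\end{align*}
For large $X$ one has $|X|_k\geq 2^d+1$ and the standard reblocking estimate $|U|_{k+1}\leq c|X|_k$ with some $c\in(0,1)$ (for $L$ large), which yields $A^{|U|_{k+1}-|X|_k}\leq A^{-(1-c)|X|_k}$; choosing $A\geq A_0$ large enough makes this decay dominate both $(A_{\mathcal{P}}/2)^{|X|_k}$ and the $L^d$-growth from the observable scale change. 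Summing over large connected polymers $X$ with $\pi(X)=U$ using the usual combinatorial entropy bound produces a convergent geometric series, and the total contribution is bounded by $\theta/2$.

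The main obstacle is the $ab$-sector below coalescence, where the scale-change ratio $l_{\obs,k+1}^2/l_{\obs,k}^2=\tfrac{\eta^2}{4}L^d$ is largest while the localisation $\Pi_0$ produces no contraction. This case is killed for $G$ by field locality and for $F$ by the combinatorial decay $A^{-(1-c)|X|_k}$ after choosing $A$ sufficiently large. Once both pieces are individually bounded by $\theta/2$, summing gives the claimed contraction $\Vert\mathbf{C}\Vert_+^{(A),\ext}\leq\theta$, uniformly in $k$ and $N$.
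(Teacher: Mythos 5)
Your decomposition and overall mechanism coincide with the paper's own proof: the single-block part $G$ is contracted by the localisation gain $L^{-(d/2+A(\alpha,k))}$ from Lemma~\ref{lemma:Contraction_estimate_1}, which cancels the ratio $l_{\obs,k+1}^{|\alpha|}/l_{\obs,k}^{|\alpha|}$ of \eqref{scale_transformation} and leaves a factor made small by taking $L$ large, while $F$ is handled by reblocking and choosing $A$ large (depending on $L$) to absorb the factor $\tfrac{\eta^2}{4}L^d$; your case bookkeeping for the ratios is the same as the paper's. However, your single-block estimate is not justified by the lemmas you invoke: Lemma~\ref{lemma:Contraction_estimate_1} is a statement about the $T_0$-seminorm only, and combining it with Lemma~\ref{lemma:Regularity_of_Integration_Map} and the weight properties does not control $\Vert(1-\Pi_k^\alpha)\mathcal{R}_{k+1}K^\alpha(B)\Vert_{k+1,B_+}$, which is a supremum over all fields $\varphi$; the operator $1-\Pi_k^\alpha$ gives no gain at general $\varphi$. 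The paper passes from $T_0$ to $T_\varphi$ via the second inequality of Lemma~\ref{lemma:submultiplicativity1}, which produces the additional term $16L^{-\frac{3}{2}d}\sup_{0\le t\le 1}\vert(1-\Pi_k^\alpha)\mathcal{R}_{k+1}K^\alpha(B)\vert_{k,B,T_{t\varphi}}$, and that term is then bounded using estimates of the type of Lemma~\ref{lemma:relevant_variable} and Lemma~\ref{Boundedness_of_Projection} for $\Pi_k^\alpha\mathcal{R}_{k+1}K^\alpha$ at $T_{t\varphi}$, together with the weight inequalities, before absorbing the field growth into $w_{k+1}^{B_+}$. This is the core of the single-block argument; the inequality you assert is true, but not by the route you give, so this step needs to be supplied.

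Two smaller points. The ``large-polymer'' part $F$ contains every connected polymer that is not a single block, in particular small polymers with $2\le|X|_k\le 2^d$, which $\pi$ maps to a single $(k+1)$-block; your restriction to $|X|_k\ge 2^d+1$ omits these, although the same mechanism applies since then $|U|_{k+1}=1\le|X|_k-1$ (the paper treats them as a separate sum with gain $A^{-1}$, at the price of a combinatorial constant and another $L$-dependent lower bound on $A$). Also, your geometric claim that $a$ and $b$ can share a $k$-block only for $k\ge j_{ab}$ is not literally true for all placements, since a $k$-block has $\ell^\infty$-diameter $L^k-1>L^k/2$; the paper avoids this by the standing simplifying assumption on the position of $a,b$ relative to the block decomposition, and your argument should either invoke the same assumption or treat the remaining configurations.
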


The proof is very similar to the proof in \cite{ABKM}. For the argument of the large-polymer part $F$ we have to deal with the additional factor $\frac{\eta^2}{4} L^d$ arising in the transformation of scales from the factor $\frac{l_{\obs,k+1}^{|\alpha|}}{l_{\obs,k}^{|\alpha|}}$, see  2. in Lemma \ref{lemma:submultiplicativity2}.

The following lemma extends Lemma 10.2. from \cite{ABKM} to observables.

\begin{lemma}%[Extension of Lemma 10.2. in \cite{ABKM}]
Let $L \geq 2^{d+3} + 16 R$. There is $A_0$ such that for all $A \geq A_0$
\begin{align*}
\Vert F \Vert_{k+1}^{(A),\ext} \leq \frac{\theta}{2} \Vert K \Vert_k^{(A),\ext}.
\end{align*}
\end{lemma}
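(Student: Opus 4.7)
The plan is to follow the proof of Lemma~10.2 in \cite{ABKM} step by step, keeping track of where the extended norm introduces a new scale-transition factor. Fix $U \in \mathcal{P}_{k+1}^c$. By the triangle inequality,
\begin{align*}
\Vert F(U) \Vert_{k+1,U}^{\ext}
\leq
\sum_{\substack{X \in \mathcal{P}_k^c \setminus \mathcal{B}_k \\ \pi(X) = U}}
\Vert \mathcal{R}_{k+1} K(X) \Vert_{k+1,U}^{\ext}.
\end{align*}
I would first apply Lemma~\ref{lemma:submultiplicativity2}(2) pointwise (after taking the appropriate weighted supremum) to compare the scale-$(k+1)$ norm of $\mathcal{R}_{k+1}K(X)$ with its scale-$k$ counterpart. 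Because the observable part of the norm picks up an extra factor $l_{\obs,k+1}^{|\alpha|}/l_{\obs,k}^{|\alpha|}$, this step costs a multiplicative constant $\max\{1,\tfrac{\eta^2}{4}L^d\}$, which is the only novelty compared to \cite{ABKM}. Next I would invoke the integration estimate Lemma~\ref{lemma:Regularity_of_Integration_Map} together with property~(5) of Lemma~\ref{lemma:Properties_of_weights}, giving
\begin{align*}
\Vert \mathcal{R}_{k+1}K(X) \Vert_{k+1,U}^{\ext}
\leq
\max\!\left\{1,\tfrac{\eta^2}{4}L^d\right\} \left(\tfrac{A_{\mathcal{P}}}{2}\right)^{|X|_k} \Vert K(X) \Vert_{k,X}^{\ext}.
\end{align*}

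After substituting $\Vert K(X)\Vert_{k,X}^{\ext}\leq A^{-|X|_k}\Vert K\Vert_k^{(A),\ext}$, I would multiply by $A^{|U|_{k+1}}$ and rearrange to obtain
\begin{align*}
A^{|U|_{k+1}}\Vert F(U)\Vert_{k+1,U}^{\ext}
\leq
\max\!\left\{1,\tfrac{\eta^2}{4}L^d\right\}
\Vert K\Vert_k^{(A),\ext}
\sum_{\substack{X \in \mathcal{P}_k^c \setminus \mathcal{B}_k \\ \pi(X)=U}}
A^{|U|_{k+1}-|X|_k}\left(\tfrac{A_{\mathcal{P}}}{2}\right)^{|X|_k}.
\end{align*}
The remaining task is combinatorial: the condition $X\in\mathcal{P}_k^c\setminus\mathcal{B}_k$ forces $|X|_k\geq 2$, and since $X$ is connected and projects to $U$, standard counting (as in Lemma~6.18 of \cite{BS4} and the proof of Lemma~10.2 in \cite{ABKM}) gives a bound on the number of such polymers by $C(L,d)^{|U|_{k+1}}$ together with enough decay per extra block in $|X|_k$ to produce a geometric series in $A^{-1}(A_{\mathcal{P}}/2)$.

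Choosing $A_0$ large enough (depending on $L$, $d$, $\eta$, and $A_{\mathcal{P}}$) to kill both the combinatorial factor $C(L,d)^{|U|_{k+1}}$ and the new prefactor $\max\{1,\tfrac{\eta^2}{4}L^d\}$, the whole expression is bounded by $\tfrac{\theta}{2}\Vert K\Vert_k^{(A),\ext}$ uniformly in $U$, which gives the claim after taking the supremum over $U$. The main obstacle is of bookkeeping rather than technique: verifying that the extra scale factor $\tfrac{\eta^2}{4}L^d$ can indeed be absorbed into the choice of $A_0$ without spoiling the other estimates of the preceding lemmas, and that the combinatorial counting used for the bulk flow still applies verbatim since $F$ is defined via the same polymer structure as in \cite{ABKM}.
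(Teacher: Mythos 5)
Your proposal is correct and takes essentially the same route as the paper: pass from the scale-$(k+1)$ norm to the $k{:}k{+}1$ norm at the cost of the new factor $\tfrac{\eta^2}{4}L^d$ (submultiplicativity plus weight monotonicity), apply the integration estimate, and rerun the polymer counting of Lemma 10.2 in \cite{ABKM}, enlarging $A_0$ to absorb that factor. Just note that the uniform-in-$U$ decay comes not from $|X|_k\ge 2$ but from the dichotomy you implicitly invoke via \cite{ABKM}: large polymers satisfy $|X|_k\ge(1+2\alpha)|U|_{k+1}$, while small non-block polymers have $\pi(X)$ equal to a single block, and the paper splits the sum into exactly these two contributions before choosing $A$ large.
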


\begin{proof}
Lemma \ref{lemma:submultiplicativity2} states that for $U = \pi(X)$
\begin{align*}
\Big\vert \mathcal{R}_+ K(X,\varphi) \Big\vert_{k+1,U,T_{\varphi}}^{\ext}
\leq \frac{\eta^2}{4} L^d \Big\vert \mathcal{R}_+ K(X,\varphi) \Big\vert_{k,X,T_{\varphi}}^{\ext}.
\end{align*}
By Lemma \ref{lemma:Properties_of_weights} it follows that
\begin{align*}
w_{k:k+1}^X(\varphi) \leq w_{k+1}^U(\varphi).
\end{align*}
We conclude that
\begin{align*}
\Vert \mathcal{R}_+ K(X,\varphi) \Vert_{k+1,U}^{\ext}
\leq \frac{\eta^2}{4} L^d \Vert \mathcal{R}_+ K(X,\varphi) \Vert_{k:k+1,X}^{\ext}.
\end{align*}
By this inequality we can estimate
\begin{align}
& A^{|U|_{k+1}} \Vert F(U) \Vert_{k+1,U}^{\ext}
\nonumber
\\ & 
\leq A^{|U|_{k+1}} \frac{\eta^2}{4} L^d
\left(
\sum_{\substack{X \in \mathcal{P}_k^c \setminus \mathcal{S}_k \\ \pi(X) = U}}
\Vert \mathcal{R}_+ K(X) \Vert_{k:k+1,X}^{\ext}
+
\sum_{\substack{X \in \mathcal{P}_k^c \setminus \mathcal{S}_k \\ \pi(X) = U}}
\Vert \mathcal{R}_+ K(X) \Vert_{k:k+1,X}^{\ext}
\right).
\label{eq:bla}
\end{align}
We bound the two summands in \eqref{eq:bla} seperately. The first term can be estimated similar to \cite{ABKM}, with a change in the choice of $A$:
\begin{align*}
&
A^{|U|_{k+1}} \frac{\eta^2}{4} L^d \sum_{\substack{X \in \mathcal{P}_k^c \setminus \mathcal{S}_k \\ \pi(X) = U}}
\Vert \mathcal{R}_+ K(X) \Vert_{k:k+1,X}^{\ext}
\\
& \quad\quad\quad\quad\quad\quad\quad
\leq
\Vert K \Vert_k^{(A),\ext} \frac{\eta^2}{4} L^d \sum_{\substack{X \in \mathcal{P}_k^c \setminus \mathcal{S}_k \\ \bar{X} = U}} \left( A_{\mathcal{P}} A^{-\frac{2\alpha}{1+2\alpha}} \right)^{|X|_k},
\end{align*}
where $\alpha = \left[ (1 + 2^d)(1 + 6^d) \right]^{-1}$.
Let
\begin{align*}
A \geq \left( \frac{A_{\mathcal{P}}}{\bar{\delta}} \frac{4}{\theta} \frac{\eta^2}{4} L^d \right)^{\frac{1+2\alpha}{2\alpha}}
\end{align*}
where $\bar{\delta}$ is the constant from Lemma C.2 in \cite{ABKM}.
Then
\begin{align*}
\sum_{\substack{X \in \mathcal{P}_k^c \setminus \mathcal{S}_k \\ \pi(X) = U}}
\Vert \mathcal{R}_+ K(X) \Vert_{k:k+1,X}^{\ext}
\leq
\frac{\theta}{4} \Vert K \Vert_k^{(A),\ext}.
\end{align*}
For a bound on the second contribution in \eqref{eq:bla} we again follow closely the proof from \cite{ABKM}, with a change in the choice of $A$. For $U \in \mathcal{B}_{k+1}$ we have
\begin{align*}
A^{|U|_{k+1}} \frac{\eta^2}{4} L^d \sum_{\substack{X \in \mathcal{P}_k^c \setminus \mathcal{S}_k \\ \pi(X) = U}}
\Vert \mathcal{R}_+ K(X) \Vert_{k:k+1,X}^{\ext}
\leq
A \Vert K \Vert_k^{(A),\ext} L^d (2^{d+1} + 1)^{d 2^d} \frac{A_{\mathcal{P}}^2}{A^2} \frac{\eta^2}{4} L^d.
\end{align*}
If
\begin{align*}
A \geq \frac{4}{\theta} A_{\mathcal{P}}^2 L^d (2^{d+1} + 1)^{d 2^d} \frac{\eta^2}{4} L^d,
\end{align*}
then
\begin{align*}
A^{|U|_{k+1}} \frac{\eta^2}{4} L^d \sum_{\substack{X \in \mathcal{P}_k^c \setminus \mathcal{S}_k \\ \pi(X) = U}}
\Vert \mathcal{R}_+ K(X) \Vert_{k:k+1,X}^{\ext}
\leq
\frac{\theta}{4} \Vert K \Vert_k^{(A),\ext}.
\end{align*}
For $A$ large enough this finishes the claim.
\end{proof}

Next we consider the contribution from single blocks. We extend Lemma 10.4 from \cite{ABKM} to observables.

\begin{lemma}%[Extension of Lemma 10.4 in \cite{ABKM}]
There is $L_0$ such that for all $L \geq L_0$, $h \geq h_0(L)$
and for all $A \geq 1$
\begin{align*}
\Vert G \Vert_{k+1}^{(A),\ext} \leq \frac{\theta}{2} \Vert K \Vert_k^{(A),\ext}.
\end{align*}
\end{lemma}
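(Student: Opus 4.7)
The plan is to adapt the proof of the bulk analogue (Lemma 10.4 in \cite{ABKM}) by tracking, for each $\alpha \in \lbrace\es, a, b, ab\rbrace$, the additional scale-transition factors $l_{\obs,k+1}^{|\alpha|}/l_{\obs,k}^{|\alpha|}$ from \eqref{scale_transformation}. First I would fix $U = B_+ \in \mathcal{B}_{k+1}$ (so that $|U|_{k+1} = 1$ and only blocks contribute) and expand
$$A \, \Vert G(B_+) \Vert_{k+1, B_+}^{\ext} \;\leq\; A \sum_{B \in \mathcal{B}_k(B_+)} \sum_{\alpha} l_{\obs,k+1}^{|\alpha|} \, \Vert ((1-\Pi^\alpha)\mathcal{R}_{k+1} K^\alpha)(B) \Vert_{k+1, B_+}.$$

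For each term in the sum, the key tool is the second inequality of Lemma~\ref{lemma:submultiplicativity1}, which splits
\[
|((1-\Pi)\mathcal{R}_+ K)^\alpha(B)|_{k+1, B_+, T_\varphi} \;\leq\; (1+|\varphi|_{k+1, B_+})^3 \bigl( \mathrm{I}^\alpha_B + 16 L^{-3d/2}\, \mathrm{II}^\alpha_B(\varphi)\bigr),
\]
with $\mathrm{I}^\alpha_B = |(1-\Pi^\alpha)\mathcal{R}_+ K^\alpha(B)|_{k+1, B_+, T_0}$ and $\mathrm{II}^\alpha_B(\varphi) = \sup_{t\in[0,1]} |(1-\Pi^\alpha)\mathcal{R}_+ K^\alpha(B)|_{k, B_+, T_{t\varphi}}$. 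The $T_0$-part $\mathrm{I}^\alpha_B$ is controlled by Lemma~\ref{lemma:Contraction_estimate_1} (and its bulk counterpart for $\alpha=\es$, which gives the stronger contraction $L^{-d/2-\lfloor d/2\rfloor - 1}$ coming from $\Pi_2$), followed by the integration estimate \eqref{integration_estimate}. The $T_{t\varphi}$-part $\mathrm{II}^\alpha_B$ is handled by combining the boundedness of $\Pi$ (Lemma~\ref{Boundedness_of_Projection}) with Lemma~\ref{lemma:Regularity_of_Integration_Map}, yielding a bound proportional to $\Vert K(B) \Vert_{k, B}^{\ext} \, w_{k:k+1}^B(t\varphi)$. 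The growing prefactor $(1+|\varphi|_{k+1,B_+})^3$ is then absorbed via property~6 of Lemma~\ref{lemma:Properties_of_weights}, which converts the weight $w_{k+1}^{-B_+}(\varphi)$ on the outside into $w_{k:k+1}^{-B_+}(\varphi)$ (up to a constant); monotonicity of the weights (property~1) finishes the passage to $w_{k:k+1}^{-B}$.

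After assembling these bounds, summing over the $L^d$ blocks $B \in \mathcal{B}_k(B_+)$, and incorporating the prefactor $A$, the contribution of each $\alpha$-sector has the schematic form
$$ A \cdot \frac{l_{\obs,k+1}^{|\alpha|}}{l_{\obs,k}^{|\alpha|}} \cdot L^d \cdot \bigl( L^{-d/2-A(\alpha,k)} + 16 L^{-3d/2}\bigr) \cdot \Vert K \Vert_k^{(A),\ext},$$
up to multiplicative constants from the integration map and projection. By \eqref{scale_transformation} the $l_{\obs}$-ratio is at most $(\eta/2) L^{d/2}$ for $|\alpha|=1$ (below $j_{ab}$) and $(\eta^2/4) L^d$ for $|\alpha|=2$ (below $j_{ab}$); above $j_{ab}$ the ratios drop to $2\eta$ and $4\eta^2$ respectively. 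For $\alpha = \es$ the ratio is $1$ and $\Pi_2$ gives contraction $L^{-d/2-\lfloor d/2\rfloor -1}$, exactly as in \cite{ABKM}. The remaining cases each carry a negative net power of $L$ once one combines the block sum $L^d$, the scale ratio, the projection contraction, and the universal $L^{-3d/2}$ from the $\mathrm{II}^\alpha_B$-route.

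The main obstacle is the $|\alpha|=2$ case below the coalescence scale, where the scale ratio is largest ($(\eta^2/4) L^d$) and the projection $\Pi_0$ gives only $L^{-d/2}$: here the product $A \cdot (\eta^2/4) L^d \cdot L^d \cdot L^{-d/2}$ from the projection-acting block needs to be balanced, and one must either exploit the extra smallness $\eta^2$ or rely on the $L^{-3d/2}$ route. A careful accounting shows that the worst-case combined factor is of the form $C\, \eta^2 L^{d/2}$ (projection route, applied to the single block containing $ab$) plus $C\, \eta^2 L^{d/2}$ (integration route, with the $L^d$ block sum absorbed by $L^{-3d/2}$); both are contractive provided $\eta$ is small and $L$ is chosen accordingly. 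Finally, choosing $L_0$ large enough to accommodate each of the four $\alpha$-sectors with the desired total bound $\theta/2$, together with the standing constraints $A \geq A_0$, $h \geq h_0(L)$ from Lemma~\ref{lemma:Properties_of_weights}, completes the proof.
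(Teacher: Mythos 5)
Your overall skeleton matches the paper's proof (split via the second inequality of Lemma \ref{lemma:submultiplicativity1}, control the $T_0$-part by Lemma \ref{lemma:Contraction_estimate_1} and the $T_{t\varphi}$-part by boundedness of the projection together with Lemma \ref{lemma:Regularity_of_Integration_Map}, then absorb the polynomial prefactor with the weight properties). However, your final bookkeeping contains a genuine gap that prevents the argument from closing. You attach the block-sum factor $L^d$ to \emph{every} sector $\alpha$, but for $\alpha\in\lbrace a,b,ab\rbrace$ the observable parts carry the indicator $\1_{\alpha\in B}$ (field locality), so the sum over $B\in\mathcal{B}_k(B_+)$ collapses to a \emph{single} block and no $L^d$ appears; this is exactly the mechanism that lets the scale ratio $\tfrac{\eta}{2}L^{d/2}$ (for $|\alpha|=1$, $k<j_{ab}$) be cancelled by the contraction $L^{-(d/2+1)}$ from $\Pi_1$, leaving $\tfrac{\eta}{2}L^{-1}$. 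Moreover, the case you single out as the ``main obstacle'' --- $|\alpha|=2$ below the coalescence scale with ratio $\tfrac{\eta^2}{4}L^d$ --- does not occur in $G$ at all: for $k<j_{ab}$ no single $k$-block contains both $a$ and $b$ (that is why Lemma \ref{lemma:Contraction_estimate_1} does not even define $A(ab,k)$ for $k<j_{ab}$), and above $j_{ab}$ the ratio is only $4\eta^2$. Your proposed resolution, a worst-case factor of order $C\eta^2 L^{d/2}$ made small ``provided $\eta$ is small and $L$ is chosen accordingly'', is not a proof of the statement: $\eta$ is fixed in advance and the bound must hold for all $L\geq L_0$, while $C\eta^2L^{d/2}\to\infty$ as $L\to\infty$.

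A second, smaller error: there is no residual factor of $A$ to ``balance''. Since $U=B_+$ is a single $(k+1)$-block and each contributing $B$ is a single $k$-block, the prefactor $A^{|U|_{k+1}}=A$ cancels against $A^{-|B|_k}=A^{-1}$ when $\Vert K(B)\Vert_{k,B}$ is bounded by $A^{-1}\Vert K\Vert_k^{(A),\ext}$; this is precisely why the lemma holds uniformly for all $A\geq 1$, with only $L$ (and $h\geq h_0(L)$) being tuned. With these corrections --- no $L^d$ in the observable sectors, no $ab$-sector below coalescence, no surviving $A$ --- the summation gives a bound of the form $C\bigl(L^{d-d'}+L^{-1}+L^{-d/2}+\dots\bigr)\Vert K\Vert_k^{(A),\ext}$, and choosing $L_0$ large yields $\theta/2$, which is the paper's argument.
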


\begin{proof}
Remember that $G(U) = 0$ for $U \notin \mathcal{B}_{k+1}$ and
$$
G(B_+) = \sum_{B \in \mathcal{B}_k(B_+)} G(B) = \sum_{B \in \mathcal{B}_k(B_+)} (1-\Pi) \mathcal{R}_+ K(B)
$$
for $B_+ \in \mathcal{B}_{k+1}$. Thus
\begin{align*}
\Vert G \Vert_{k+1}^{(A),\ext}
&\leq
A \sup_{\varphi} w_{k+1}^{-B'}(\varphi) \sum_{B \in \mathcal{B}_k(B_+)} \vert G(B)\vert_{k+1,B,T_{\varphi}}^{\ext}
\\&
\leq
A \sup_{\varphi} w_{k+1}^{-B'}(\varphi) \sum_{B \in \mathcal{B}_k(B_+)} \sum_{\alpha \in \lbrace \es,a,b,ab \rbrace} \1_{\alpha \in B} l_{\obs,k+1}^{|\alpha|} \vert G^{\alpha}(B)\vert_{k+1,B,T_{\varphi}}.
\end{align*}
Fix $\alpha \in \lbrace a,b,ab\rbrace$. We use the second inequality in Lemma \ref{lemma:submultiplicativity1} to get
\begin{align*}
\vert G^{\alpha}(B) \vert_{k+1,B,T_{\varphi}}
& \leq
\left( 1 + |\varphi|_{k+1,B} \right)^3
\Big(
\vert (1-\Pi_k^{\alpha}) \mathcal{R}_+ K^{\alpha}(B) \vert_{k+1,B,T_0}
%\right.
\\ & \qquad \qquad
%\left.
+ 16 L^{-\frac{3}{2}d} \sup_{0 \leq t \leq 1} \vert (1-\Pi_k^{\alpha}) \mathcal{R}_+ K^{\alpha}(B) \vert_{k,B,T_{t\varphi}}
\Big).
\end{align*}
By Lemma \ref{lemma:Contraction_estimate_1} we proceed the estimate as follows
\begin{align*}
\vert G^{\alpha}(B) \vert_{k+1,B,T_{\varphi}}
& \leq 
\left( 1 + |\varphi|_{k+1,B} \right)^3
\Big(
C L^{-(d/2 + A(\alpha,k))} \vert \mathcal{R}_+ K^{\alpha} \vert_{k,B,T_0}
%\right.
\\
& \qquad \qquad
%\left.
+ 16 L^{-\frac{3}{2}d} \sup_{0 \leq t \leq 1} \vert (1-\Pi_k^{\alpha}) \mathcal{R}_+ K^{\alpha}(B) \vert_{k,B,T_{t\varphi}}
\Big).
\end{align*}
We continue as in \cite{ABKM} with the estimates
\begin{align*}
\vert \mathcal{R}_+ K^{\alpha}(B) \vert_{k,B,T_0}
& \leq  l_{\obs,k}^{-|\alpha|}  A_{\mathcal{B}} \Vert K \Vert_{k,B},
\\
\vert \Pi_k^{\alpha} \mathcal{R}_+ K^{\alpha}(B) \vert_{k,B,T_{t\varphi}}
& \leq C (1 + |\varphi|_{k,B})^2 A_{\mathcal{B}} l_{\obs,k}^{-|\alpha|} \Vert K \Vert_{k,B},
\quad \text{and}
\\
\vert \mathcal{R}_+ K^{\alpha}(B) \vert_{k,B,T_{t\varphi}}
& \leq A_{\mathcal{B}} w_{k:k+1}^B(\varphi) l_{\obs,k}^{-|\alpha|} \Vert K(B) \Vert_{k,B},
\end{align*}
where we have the additional factor $l_{\obs,k}^{-|\alpha|}$ on the right hand sides in contrast to \cite{ABKM}. We obtain
\begin{align*}
&\vert G^{\alpha}(B) \vert_{k+1,B,T_{\varphi}}
\\
& \qquad\qquad
 \leq l_{\obs,k}^{-|\alpha|}
\left( 1 + |\varphi|_{k+1,B} \right)^3
\left(
C L^{-(d/2 + A(\alpha,k))} A_{\mathcal{B}} \Vert K \Vert_{k,B}
\right.
\\
& \qquad\qquad \quad \quad \left.
+ 16 L^{- \frac{3}{2} d} A_{\mathcal{B}} w_{k:k+1}^B(\varphi) \Vert K \Vert_{k,B}
+ 16 L^{- \frac{3}{2} d} C (1 + |\varphi|_{k,B})^2 A_{\mathcal{B}} \Vert K \Vert_{k,B}
\right)
\\
& \qquad\qquad
 \leq A_{\mathcal{B}} C l_{\obs,k}^{-|\alpha|}
\left( 1 + |\varphi|_{k+1,B} \right)^5
\Vert K \Vert_{k,B}
\left(
L^{-(d/2 + A(\alpha,k))} + L^{-\frac{3}{2}d} w_{k:k+1}^B(\varphi)
\right)
\\
& \qquad\qquad \leq
C' l_{\obs,k}^{-|\alpha|} w_{k+1}^{B'}(\varphi) \Vert K \Vert_{k,B}
\left(
L^{-(d/2 + A(\alpha,k))} + L^{-\frac{3}{2}d}
\right).
\end{align*}
For $\alpha = \es$ we use the result from \cite{ABKM}, namely that
$$
\vert G^{\es}(B) \vert_{k+1,B,T_{\varphi}}
 \leq
C' w_{k+1}^{B'}(\varphi) \Vert K \Vert_{k,B}
\left(
L^{-d'} + L^{-\frac{3}{2}d}
\right)
$$
with $d' = \frac{d}{2} + \lfloor d/2 \rfloor + 1 > d$.

Let $d'(\alpha,k) = d'$ for $\alpha = \es$ and $d'(\alpha,k) = d/2 + A(\alpha,k)$ else. We combine the estimates obtained so far and obtain
\begin{align*}
\Vert G \Vert_{k+1}^{(A),\ext}
& \leq
C'  \sum_{\alpha \in \lbrace \es,a,b,ab \rbrace} \sum_{B \in \mathcal{B}_k(B_+)} \1_{\alpha \in B} l_{\obs,k+1}^{|\alpha|} l_{\obs,k}^{-|\alpha|} A^{|B|_k} \Vert K \Vert_{k,B}
\left(
L^{-d'(\alpha,k)} + L^{-\frac{3}{2}d}
\right).
\end{align*}
In the case $\alpha = \es$, the sum over all $B \in \mathcal{B}_k(B_+)$ gives an additional factor $L^d$. In contrast, for $\alpha \in \lbrace a,b,ab \rbrace$, the sum reduces to one term so this factor does not arise. However, we have
$$
\left( \frac{l_{\obs,k+1}}{l_{\obs,k}} \right)^{|\alpha|}
=
\begin{cases}
(2 \eta)^{|\alpha|}			&\text{if } \alpha \in \lbrace a,b,ab \rbrace, \, k \geq j_{ab},
\\
\left(\frac{\eta}{2} L^{d/2} \right)^{|\alpha|}			&\text{if } \alpha \in \lbrace a,b \rbrace, \, k < j_{ab}
\end{cases}
$$
which is canceled by $L^{-d'(\alpha,k)}$.
In summary we thus get
\begin{align*}
\Vert G \Vert_{k+1}^{(A),\ext}
& \leq
C \Vert K \Vert_k^{(A),\ext}
\left(
L^{d-d'} + L^{-\frac{1}{2}d} + L^{-1} + L^{-d} + L^{-\frac{d}{2}} + L^{-\frac{3}{2}d}
\right).
\end{align*}
Now choose $L$ large enough such that
$$
\Vert G \Vert_{k+1}^{(A),\ext}
\leq \frac{\theta}{2} \Vert K \Vert_k^{(A),\ext}.
$$

\end{proof}

\subsubsection{Bounds on the extended operator $\mathbf{B}$}

Here we prove Proposition \ref{Prop:Bounds_on_B_ObservableFlow}. We restate the result in the following lemma.

\begin{lemma}\label{lemma:Estimate_for_B}
For $\alpha \in \lbrace a,b \rbrace$, with the constant $A_{\mathcal{B}}$ from Lemma \ref{lemma:Properties_of_weights} which is independent of $L$, the following estimates hold:
\begin{align*}
\big\vert(\mathbf{B} K_k^{\alpha})^1 \big\vert
& \leq l_k^{-1} l_{\obs,k}^{-1} \frac{A_{\mathcal{B}}}{2} \Vert K_k \Vert_k^{(A),\ext},
\\
\big\vert(\mathbf{B} K_k^{\alpha})^0 \big\vert
& \leq l_{\obs,k}^{-1} \frac{A_{\mathcal{B}}}{2} \Vert K_k \Vert_k^{(A),\ext},
\\
\big\vert \mathbf{B} K_k^{ab} \big\vert
& \leq l_{\obs,k}^{-2} \frac{A_{\mathcal{B}}}{2} \Vert K_k \Vert_k^{(A),\ext}.
\end{align*}
\end{lemma}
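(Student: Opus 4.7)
The strategy is to reduce each estimate to a bound on a single block, then use the representation of the projection coefficients as pairings with explicit test polynomials, and finally apply the integration bound \eqref{integration_estimate}. First, the family $(K_k)$ satisfies field locality (it is preserved by $\mathbf{S}^{\ext}$, see part 1 of Lemma \ref{Lemma:FirstConclusions_Observables}), so $K_k^{\alpha}(B) = 0$ unless $\alpha \in B$ (for $\alpha \in \{a,b\}$) or $a,b \in B$ (for $\alpha = ab$). Consequently, in the sum
\[
\mathbf{B}_k K_k^{\alpha}(B_+) = \sum_{B \in \mathcal{B}_k(B_+)} \Pi_k^{\alpha} \mathcal{R}_{k+1} K_k^{\alpha}(B)
\]
only the block $B_{\alpha}$ containing $\alpha$ contributes, so there is no loss of generality in treating a single block.

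Next I would identify the coefficients. For the constant part, $(\mathbf{B}_k K_k^{\alpha})^0 = (\mathcal{R}_{k+1} K_k^{\alpha})(B_{\alpha}, 0)$, and for the linear part, Lemma \ref{lemma:existence_of_projection} gives $(\mathbf{B}_k K_k^{\alpha})^1_{\gamma} = \langle \mathcal{R}_{k+1} K_k^{\alpha}(B_{\alpha}), b_{\gamma}^{\alpha} \rangle_0$. Using $|{\cdot}|_{k,B,T_0}$ and the defining duality $|\langle F, g\rangle_0| \leq |F|_{k,B,T_0}\,|g|_{k,B}$, together with $|b_{\gamma}^{\alpha}|_{k,B} \leq l_k^{-1}$ (as in the proof of Lemma \ref{Boundedness_of_Projection}), yields
\[
\bigl|(\mathbf{B}_k K_k^{\alpha})^0\bigr| \leq |\mathcal{R}_{k+1} K_k^{\alpha}(B_{\alpha})|_{k,B_{\alpha},T_0}, \qquad
\bigl|(\mathbf{B}_k K_k^{\alpha})^1\bigr| \leq l_k^{-1}\,|\mathcal{R}_{k+1} K_k^{\alpha}(B_{\alpha})|_{k,B_{\alpha},T_0},
\]
and similarly $|\mathbf{B}_k K_k^{ab}| \leq |\mathcal{R}_{k+1} K_k^{ab}(B_{ab})|_{k,B_{ab},T_0}$ for the scale where $a,b$ share a block.

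The last step is to convert $|\mathcal{R}_{k+1} K_k^{\alpha}(B)|_{k,B,T_0}$ into the extended global norm. Inequality \eqref{integration_estimate}, specialised to a single block (so that the constant is $A_{\mathcal{B}}$) and evaluated at $\varphi = 0$ (where $w_{k:k+1}^B(0) = 1$ by construction of the weights, see Subsection \ref{subsubsec:Properties_of_weights}), gives
\[
|\mathcal{R}_{k+1} K_k^{\alpha}(B)|_{k,B,T_0} \leq \tfrac{A_{\mathcal{B}}}{2}\,\|K_k^{\alpha}(B)\|_{k,B}.
\]
From the definition of the extended seminorm, $|K_k^{\alpha}(B)|_{k,B,T_{\varphi}} \leq l_{\obs,k}^{-|\alpha|} |K_k(B)|_{k,B,T_{\varphi}}^{\ext}$, so $\|K_k^{\alpha}(B)\|_{k,B} \leq l_{\obs,k}^{-|\alpha|}\|K_k(B)\|_{k,B}^{\ext} \leq l_{\obs,k}^{-|\alpha|}\|K_k\|_k^{(A),\ext}$ (for $A\geq 1$). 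Combining with the factor $l_k^{-1}$ in the linear case produces precisely the three stated inequalities.

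The only subtle point is the book-keeping for the linear observable coefficient: one must verify that the test polynomial $b_{\gamma}^{\alpha}$ measured in $|\cdot|_{k,B}$ yields exactly the factor $l_k^{-1}$, which follows from the weight $\mathfrak{w}_k(\gamma) = l_k$ when $|\gamma|=1$. Everything else is a direct composition of existing lemmas and does not require a genuinely new estimate.
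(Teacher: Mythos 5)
Your argument is correct and follows essentially the same route as the paper: identify the coefficients via Lemma \ref{lemma:existence_of_projection}, bound the linear one by the duality pairing with $b^{\alpha}$ using $|b^{\alpha}|_{k,B}=l_k^{-1}$, and control $\mathcal{R}_{k+1}K^{\alpha}(B)$ at $\varphi=0$ by the single-block integration estimate with constant $A_{\mathcal{B}}/2$ together with the factor $l_{\obs,k}^{-|\alpha|}$ from the extended norm. The paper writes the constant-coefficient bounds directly as $\int|K^{\alpha}(B,\xi)|\,\mu_{k+1}(\de\xi)$ rather than through \eqref{integration_estimate}, but this is the same estimate; your field-locality reduction to the single block $B_{\alpha}$ is a harmless extra remark.
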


\begin{proof}
The proof is similar to the one of Lemma \ref{Boundedness_of_Projection}. First, by Lemma \ref{lemma:existence_of_projection},
\begin{align*}
\big\vert(\mathbf{B} K_k^{\alpha})^1 \big\vert
= \big\vert \langle \mathcal{R}_+ K_l^{\alpha},b^{\alpha} \rangle_0 \big\vert
\leq |b^{\alpha}|_{k,B} \big\vert \mathcal{R}_+ K_k^{\alpha}(B) \big\vert_{k,B,T_0}
 \leq l_k^{-1} l_{\obs,k}^{-1} \frac{A_{\mathcal{B}}}{2} \Vert K_k \Vert_k^{(A),\ext}.
\end{align*}
Furthermore,
\begin{align*}
\big\vert(\mathbf{B} K_k^{\alpha})^0 \big\vert
 \leq \int \big\vert K_k^{\alpha}(B,\xi) \big\vert \mu_{k+1}(\de\xi)
\leq l_{\obs,k}^{-1} \frac{A_{\mathcal{B}}}{2} \Vert K_k \Vert_k^{(A),\ext}
\end{align*}
and similarly,
\begin{align*}
\big\vert \mathbf{B} K_k^{ab} \big\vert
 \leq \int \big\vert K_k^{ab}(B,\xi) \big\vert \mu_{k+1}(\de\xi)
\leq l_{\obs,k}^{-2} \frac{A_{\mathcal{B}}}{2} \Vert K_k \Vert_k^{(A),\ext}.
\end{align*}
\end{proof}

\subsubsection{Bound on the extended operator $\mathbf{A}$}

\begin{lemma}\label{lemma:Estimate_for_A}
Let $H^a = n^a \nabla \varphi(b)$, $H^b = n^b \nabla \varphi(b)$, $k \geq j_{ab}$.
Then
$$
\Big\vert \int H^a H^b \de \mu_{k+1} \Big\vert
\leq C_{FRD} l_{\obs,k}^{-2} h_k^{-2} \Vert H^a \Vert^a_{k,0} \Vert H^b \Vert_{k,0}^b.
$$
\end{lemma}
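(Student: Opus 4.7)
The plan is to compute $\int H^a H^b\,\de\mu_{k+1}$ explicitly via the Gaussian Wick rule and then to bound the resulting second-gradient kernel of $\mathcal{C}_{k+1}$ using the finite-range decomposition bounds of Proposition \ref{Prop:FRD_Buchholz}.

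First, since $\mu_{k+1}$ is a centered Gaussian with covariance $\mathcal{C}_{k+1}$ and since $H^a(\varphi) = \sum_i n_i^a \nabla_i\varphi(a)$, $H^b(\varphi) = \sum_j n_j^b \nabla_j\varphi(b)$ are linear functionals of the field, one has
\begin{align*}
\int H^a H^b\,\de\mu_{k+1}
 = \sum_{i,j=1}^d n_i^a\, n_j^b\,\mathbb{E}_{\mu_{k+1}}\!\bigl[\nabla_i\varphi(a)\,\nabla_j\varphi(b)\bigr]
 = \sum_{i,j=1}^d n_i^a\, n_j^b\, \nabla_i^{(a)}\nabla_j^{(b)} C_{k+1}(a-b),
\end{align*}
where $\nabla_i^{(a)}$ and $\nabla_j^{(b)}$ act on the $a$- and $b$-argument respectively. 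Thus, up to signs, the integral is a linear combination of $\nabla^*_j\nabla_i C_{k+1}(a-b)$, which are second discrete derivatives of the kernel of $\mathcal{C}_{k+1}$.

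Second, the bound from Proposition \ref{Prop:FRD_Buchholz} applied with $|\alpha|=2$ (using $d \geq 2$, hence $d+|\alpha|>2$) yields
\begin{align*}
\bigl|\nabla_i^{(a)}\nabla_j^{(b)} C_{k+1}(a-b)\bigr| \leq C L^{-kd}
\end{align*}
for each $i,j$, with $C$ depending only on $d$. Therefore, absorbing the sum over $i,j$ into the constant,
\begin{align*}
\Bigl|\int H^a H^b\,\de\mu_{k+1}\Bigr|
\leq C L^{-kd}\,\Bigl(\sum_i |n^a_i|\Bigr)\Bigl(\sum_j |n^b_j|\Bigr).
\end{align*}

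Third, the definition of the $\alpha$-part of the extended norm on relevant Hamiltonians gives $\sum_i |n_i^\alpha| = (l_{\obs,k} l_k)^{-1}\,\|H^\alpha\|_{k,0}^\alpha$ for $\alpha\in\{a,b\}$, since the $\lambda^\alpha$ components vanish. Substituting this and using the identity $l_k^2 = h_k^2 L^{-kd}$ (which follows from $l_k = h_k L^{-kd/2}$) one obtains
\begin{align*}
\Bigl|\int H^a H^b\,\de\mu_{k+1}\Bigr|
\leq C\, L^{-kd}\,(l_{\obs,k} l_k)^{-2}\,\|H^a\|_{k,0}^a\,\|H^b\|_{k,0}^b
= C\, l_{\obs,k}^{-2}\, h_k^{-2}\,\|H^a\|_{k,0}^a\,\|H^b\|_{k,0}^b,
\end{align*}
which is the claimed bound with $C_{FRD} := C$. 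Note that the assumption $k\geq j_{ab}$ plays no active role in the argument (for $k<j_{ab}$ the integral vanishes identically by the finite-range property of the covariance decomposition, so the bound is trivial). There is no real obstacle here: the estimate is a routine application of the finite-range decomposition bound, and the only thing to verify is the algebraic cancellation $L^{-kd}/l_k^2 = h_k^{-2}$, which is immediate from the definitions of $l_k$ and $h_k$.
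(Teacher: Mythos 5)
Your proposal is correct and follows essentially the same route as the paper: compute the Gaussian integral to obtain the mixed second discrete derivative $\nabla^*\nabla C_{k+1}(a,b)$, bound the coefficients via $|n^{\alpha}| \leq l_{\obs,k}^{-1} l_k^{-1} \Vert H^{\alpha} \Vert_{k,0}^{\alpha}$, and invoke the finite-range decomposition bound of Proposition \ref{Prop:FRD_Buchholz} with $|\alpha|=2$ together with the cancellation $L^{-kd} l_k^{-2} = h_k^{-2}$. Your explicit write-out of the scaling algebra (and the remark that for $k<j_{ab}$ the integral vanishes by the finite-range property) simply fills in the steps the paper labels ``straightforward.''
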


\begin{proof}
Note that
 $$
\int \nabla \varphi(a) \nabla \varphi(b) \mu_{k+1}(\de\varphi) = \nabla^* \nabla C_{k+1} (a,b)
$$
and 
$$
|n^a| \leq l_{\obs,k}^{-1}l_k^{-1} \Vert H^a \Vert^a_{k,0}.
$$ 
By the properties of the finite-range decomposition the proof follows straightforwardly.
\end{proof}

\subsubsection{Second derivative of $\mathbf{S}^{\ext}$ at $(0,0)$}

Here we prove Proposition \ref{Prop:2nd_order_perturbation_effect_on_S}. We restate the result in the following lemma.

\begin{lemma}\label{Lemma:2nd_order_perturbation_exact}
The $st$-part of the second derivative in direction $H$ of $\mathbf{S}^{\ext}$ is zero:
$$
\left[ D^2_H \mathbf{S}^{\ext}(0,0) (\dot{H},\dot{H}) \right]^{ab} = 0.
$$
\end{lemma}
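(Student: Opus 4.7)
\textbf{Proof plan for Lemma \ref{Lemma:2nd_order_perturbation_exact}.}

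The plan is to exploit the fact that $\mathbf{A}$ was designed precisely so that $e^{\mathbf{A}H}$ matches the exponential second moment $\mathcal{R}_{k+1}(e^H)$ to second order in $H$ on the observable part (cf.\ Remark~\ref{Remark:MotivationT}). First I would reduce the problem from $\mathbf{S}^{\ext}$ to $\mathbf{S}$. Writing $\mathbf{S}^{\ext}(H,K)=e^{F(H,K)}\mathbf{S}(H,K)$ with $F=-s(\mathbf{B}K^a)^0-t(\mathbf{B}K^b)^0-st(\int H^aH^b\,d\mu_{k+1}+\mathbf{B}K^{ab})$, Leibniz yields three terms at $(0,0)$. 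The term carrying $D_H^2 e^F|_0$ is multiplied by $\mathbf{S}(0,0)$, which vanishes on non-empty polymers; the cross term is killed by $D_HF|_{(0,0)}(\dot H)=-st\bigl(\int\dot H^a\cdot 0\,d\mu_{k+1}+\int 0\cdot\dot H^b\,d\mu_{k+1}\bigr)=0$; the last term is simply $D_H^2\mathbf{S}(0,0)(\dot H)^2$. Hence it suffices to show $\bigl[D_H^2\mathbf{S}(0,0)(\dot H)^2\bigr]^{ab}=0$.

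For this I would differentiate the defining RG identity $\mathcal{R}_{k+1}(e^H\circ K)=e^{\mathbf{A}H+\mathbf{B}K}\circ\mathbf{S}(H,K)$ twice in $H$ at $(H,K)=(0,0)$. At $K=0$ the identity becomes $\mathcal{R}_{k+1}(e^H)=e^{\mathbf{A}H}\circ\mathbf{S}(H,0)$. Using $\mathbf{S}(0,0)=\epsilon$ (the unit of the circ algebra), $D_H\mathbf{S}(0,0)(\dot H)=0$ (computed as in the bulk case because the integrand of $\mathbf{S}$ factors as $(1-e^{\tilde H})\circ(e^H-1)\circ K$ and both factors vanish at $H=0$), and $e^{\mathbf{A}(0)}=1$, Leibniz for the circ product collapses to
\begin{equation*}
\mathcal{R}_{k+1}(\dot H\,\dot H)=D_H^2 e^{\mathbf{A}H}\big|_0(\dot H)^2+1\circ D_H^2\mathbf{S}(0,0)(\dot H)^2.
\end{equation*}

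Projecting onto the $ab$-component, the left-hand side equals $2\bigl(\dot H^a\dot H^b+\int\dot H^a\dot H^b\,d\mu_{k+1}\bigr)$. On the right, $D_H^2 e^{\mathbf{A}H}|_0(\dot H)^2=(\mathbf{A}_1\dot H)^2+2\mathbf{A}_2(\dot H,\dot H)$; its $ab$-part is $2(\mathbf{A}_1\dot H)^a(\mathbf{A}_1\dot H)^b+2[\mathbf{A}_2(\dot H,\dot H)]^{ab}=2\dot H^a\dot H^b+2\int\dot H^a\dot H^b\,d\mu_{k+1}$ — the linear part of $\mathbf{A}$ contributes the pointwise product and the quadratic correction $st\int H^aH^b\,d\mu_{k+1}$ in $\mathbf{A}H^{\obs}$ contributes exactly the covariance term. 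These cancel the left-hand side and leave $\bigl[1\circ D_H^2\mathbf{S}(0,0)(\dot H)^2\bigr]^{ab}=0$. Since $(1\circ F)(U)=\sum_{Z\subset U}F(Z)$, Möbius inversion on the polymer lattice implies $[D_H^2\mathbf{S}(0,0)(\dot H)^2]^{ab}(U)=0$ for every polymer $U$, finishing the proof.

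The main obstacle will be justifying the RG identity $e^{\tilde H}\circ\mathbf{S}(H,K)=\mathcal{R}_{k+1}(e^H\circ K)$ at the level of polymer functionals in the extended setting, and handling the circ algebra carefully (in particular the behaviour of the identity $\epsilon$ and constant functional $1$ under $\circ$, and the Möbius inversion step that promotes $[1\circ F]^{ab}=0$ to $F^{ab}=0$). Once this algebra is set up, the remaining verifications are routine direct computations of first and second derivatives of $\mathbf{A}H$, $e^H$, and $\mathcal{R}_{k+1}$ along the observable directions.
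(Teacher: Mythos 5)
Your reduction from $\mathbf{S}^{\ext}$ to $\mathbf{S}$ (Leibniz, $\mathbf{S}(0,0)=0$ on non-empty polymers, $D_HF|_{(0,0)}=0$) is exactly the paper's first step and is fine. The second half, however, has a genuine gap. The conjugation identity $e^{\mathbf{A}H+\mathbf{B}K}\circ\mathbf{S}(H,K)=\mathcal{R}_{k+1}(e^H\circ K)$ is constructed and valid only when both sides are evaluated on the full torus $\Lambda_N$ (this is how it is stated in Section \ref{subsubsec:The_renormalisation_map} and used in the proof of Lemma \ref{Lemma:FirstConclusions_Observables}); it is not a polymer-wise identity. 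Differentiating it twice in $H$ therefore only yields, after your cancellation of the $\mathbf{A}$-terms against $\mathcal{R}_{k+1}(\dot H\dot H)$, the single scalar relation $\sum_{U}\bigl[D^2_H\mathbf{S}(0,0)(\dot H)^2\bigr]^{ab}(U,\varphi)=0$, i.e.\ the vanishing of the sum over all $(k+1)$-polymers at $\Lambda_N$. Your final Möbius-inversion step presupposes that $\bigl[1\circ D^2_H\mathbf{S}(0,0)(\dot H)^2\bigr]^{ab}(U)=0$ for \emph{every} $U$, which is precisely the polymer-wise version of the identity that is not available; so the inversion has nothing to invert. Since the lemma is needed polymer-wise (in Proposition \ref{Prop:SingleStepRG_Observables} it is used inside the norm $\Vert\cdot\Vert^{(A),\ext}_{k+1}$, a supremum over polymers), the global statement your argument delivers is strictly weaker than the claim.

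The paper avoids this by computing $\bigl[D^2_H\mathbf{S}(0,0)(\dot H,\dot H)\bigr]^{ab}(U)$ directly from the explicit formula \eqref{eq:defn_K+}: the product rule produces three types of terms; the term with $X\setminus U\neq\es$ drops out because the constraint $X=B$ forces $X\subset U$; the mixed term has no $ab$-component (linearity of $\dot H^a,\dot H^b$ in the field and $\int\dot H^{\alpha}(B,\xi)\,\de\mu_+=0$); and in the remaining term the single-block case cancels exactly against the quadratic piece $st\int H^aH^b\de\mu_+$ built into $\widetilde H=\mathbf{A}H$, while the two-distinct-blocks case gives $2n_an_b\nabla^*\nabla C_{k+1}(a,b)$, which vanishes for $k<j_{ab}$ by the finite-range property of the covariance decomposition. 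Note that this finite-range ingredient, and the block-indicator structure $\1_{a,b\in B_+}$ of the quadratic term in $\mathbf{A}$ (which makes that term vanish identically below the coalescence scale), do not appear in your bookkeeping — a further sign that the purely algebraic route does not capture the local, polymer-wise content of the lemma. If you want to salvage your approach you would first have to establish a localized version of the conjugation identity (e.g.\ via the restriction property of $\mathbf{S}$), at which point the direct computation is no longer any harder.
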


\begin{proof}
Note that
\begin{align*}
D_H^2 \mathbf{S}^{\ext} (0,0) (\dot{H},\dot{H})
= D_H^2 \mathbf{S}(0,0)(\dot{H},\dot{H})
\end{align*}
since $\mathbf{S}(0,0)=0$ and
\begin{align*}
D_H \left( e^{-s\left(\mathbf{B}K^a\right)^0 - t\left(\mathbf{B}K^b\right)^0 - st\left(\int H^a H^b \de \mu_+ +\mathbf{B}K^{ab}\right)} \right) \Big\vert_{H=K=0} \dot{H}
=0.
\end{align*}
By the product rule we get a sum of the following three terms:
\begin{align*}
 D_H^2 \mathbf{S}^{\ext} & (0,0) (\dot{H},\dot{H})
\\
& =
2 \sum_{X \in \mathcal{P}_k} \chi(X,U)
D_H \left( \left( e^{\tilde{H}}\right)^{U \setminus X} \right) \dot{H} \Big\vert_{H=K=0} \times
\\
& \qquad\qquad\qquad\qquad\qquad\qquad
\int D_H \left( \left( e^H - e^{\tilde{H}} \right)^X \right) \dot{H} \Big\vert_{H=K=0} \de\mu_+
\\
& \qquad +
2 \sum_{X \in \mathcal{P}_k} \chi(X,U)
D_H \left( \left( e^{\tilde{H}}\right)^{-X \setminus U} \right) \dot{H} \Big\vert_{H=K=0} \times
\\
& \qquad\qquad\qquad\qquad\qquad\qquad
\int D_H \left( \left( e^H - e^{\tilde{H}} \right)^X \right) \dot{H} \Big\vert_{H=K=0} \de\mu_+
\\
& \qquad +
\sum_{X \in \mathcal{P}_k} \chi(X,U)
\int D_H^2 \left( \left( e^H - e^{\tilde{H}} \right)^X \right) (\dot{H},\dot{H}) \de\mu_+.
\end{align*}
Let us consider the second term in the right hand side above.
We compute
\begin{align*}
D_H \left( \left( e^H - e^{\tilde{H}} \right)^X \right) \dot{H} \big\vert_{H=K=0}
= \1_{X=B} \left( \dot{H}(B) - D_H \tilde{H}(B) \dot{H} \big\vert_{H=K=0} \right).
\end{align*}
The constraint $X=B$ for any $B\in \mathcal{B}_k$ implies that $X \setminus U = \es$ for any $U$ satisfying $\chi(X,U) \neq 0$.
Thus the second term is zero.

~\\
The $ab$-part of the first term is zero as well. We compute
\begin{align*}
D_H \left( \left(e^{\tilde{H}} \right)^{U \setminus X} \right) \dot{H} \Big\vert_{H=K=0}
= \sum_{B \in \mathcal{B}_k( U \setminus X)} \left( \tilde{A} \dot{H}^{\es} + s \dot{H}^a + t \dot{H}^b \right)(B)
\end{align*}
and
\begin{align*}
\int D_H & \left( \left( e^H - e^{\tilde{H}} \right)^X \right) \dot{H} \Big\vert_{H=K=0} \de\mu_+
\\
&\qquad \qquad=
\1_{X=B} \int
\dot{H}^{\es}(B,\varphi + \xi) + s \dot{H}^a (B,\varphi + \xi) + t \dot{H}^b (B,\varphi + \xi) 
\\
&\qquad \qquad \qquad\qquad
- \tilde{A} \dot{H}^{\es}(B,\varphi) - s \dot{H}^a (B,\varphi ) - t \dot{H}^b (B,\varphi)
\de \mu_+
\\
&\qquad \qquad = 
\1_{X=B} \int
\dot{H}^{\es}(B,\varphi + \xi)- \tilde{A} \dot{H}^{\es}(B,\varphi) \de\mu_+.
\end{align*}
The last equality holds since 
$$\dot{H}^a(B,\varphi + \xi) = \dot{H}^a(B,\varphi) + \dot{H}^a(B,\xi)
$$ and 
$$\int \dot{H}^a(B,\xi) \de\mu_+ =~0$$ due to linearity.
Thus the first term has bulk parts and $a$- and $b$-parts, but the projection to the $ab$-part is zero.

~\\
For the third term we distinguish the case that $X=B$ for $B \in \mathcal{B}_k$ and $X=B \cup B'$ for $B,B' \in \mathcal{B}_k$, $B \neq B'$.
In the case $X=B$ we compute
\begin{align*}
&\int D_H^2 \left( \left(e^H - e^{\tilde{H}} \right)^B \right) (\dot{H},\dot{H}) \de\mu_+
\\ & \qquad \qquad\qquad
= \int \left( \dot{H}(B,\varphi + \xi) \right)^2 - 2 st \int \dot{H}^a(B) \dot{H}^b(B)\de\mu_+
\\ & \qquad \qquad\qquad\qquad
- \left( \tilde{A} \dot{H}^{\es}(B,\varphi) + s \dot{H}^a(B,\varphi) + t \dot{H}^b(B,\varphi) \right)^2 \de\mu_+
\\& \qquad \qquad\qquad
= 2 \int \dot{H}^a(B,\varphi + \xi) \dot{H}^b(B,\varphi + \xi)\de\mu_+
- 2 \int \dot{H}^a(B,\xi) \dot{H}^b(B,\xi)\de\mu_+
\\ &\qquad \qquad\qquad\qquad
-2 \int \dot{H}^a(B,\varphi) \dot{H}^b(B,\varphi)\de\mu_+
%\\ &\qquad \qquad\qquad
=0.
\end{align*}
In the other case we compute
\begin{align*}
&\int
D_H \left( \left(e^H - e^{\tilde{H}} \right)^B \right) \dot{H}
 D_H \left( \left(e^H - e^{\tilde{H}} \right)^{B'} \right) \dot{H}
 \de\mu_+
 \\ & \qquad\qquad
 = \int
 \left( \dot{H}(B,\varphi + \xi) - \tilde{A}\dot{H}^{\es}(B,\varphi) -s \dot{H}^a(B,\varphi) - t  \dot{H}^b(B,\varphi) \right)
 \\
 & \qquad\qquad\qquad
  \left( \dot{H}(B',\varphi + \xi) - \tilde{A}\dot{H}^{\es}(B',\varphi) -s \dot{H}^a(B',\varphi) - t  \dot{H}^b(B',\varphi) \right)
  \de\mu_+.
\end{align*}
We project this term to the $ab$-part and obtain:
\begin{align*}
\int & \left( \dot{H}^a(B,\varphi + \xi) - \dot{H}^a(B,\varphi) \right)
\left(\dot{H}^b(B',\varphi + \xi) - \dot{H}^b(B',\varphi) \right) \de\mu_+
\\ & \quad \qquad\qquad
+ \int \left( \dot{H}^b(B,\varphi + \xi) - \dot{H}^b(B,\varphi) \right)
\left(\dot{H}^a(B',\varphi + \xi) - \dot{H}^a(B',\varphi) \right)\de\mu_+
\\
&= \quad
 \int \dot{H}^a(B,\xi) \dot{H}^b(B',\xi) \de\mu_+
+ \int \dot{H}^b(B,\xi) \dot{H}^a(B',\xi) \de\mu_+.
\end{align*}

Now we distinguish the scales $k \geq j_{ab}$ and the scales $k<j_{ab}$. If $k \geq j_{ab}$, then $a,b \in B_{ab} \in \mathcal{B}_k$, and either $B=B_{ab}$ and the $B'$-term is zero, or vice versa.
If $k<j_{ab}$ only the choices $B \cup B' =B_a \cup B_b$ and $B \cup B' =B_b\cup B_a$ are relevant. Then we get
\begin{align*}
\int \dot{H}^a(B,\xi) & \dot{H}^b(B',\xi) \de\mu_+
+ \int \dot{H}^b(B,\xi) \dot{H}^a(B',\xi) \de\mu_+
\\ & \quad
=
2 n_an_b \int \nabla \varphi(a) \nabla \varphi(b) \de\mu_{k+1}
%\\& \quad
= 2 n_a n_b \nabla^* \nabla C_{k+1}(ab).
\end{align*}
Due to the definition of the scale $j_{ab}$ and the finite-range property of the covariances we have
$$
\nabla^* \nabla C_{k+1}(a,b) = 0 \text{ for all } k<j_{ab}.
$$
This finishes the claim.
\end{proof}

\bibliography{meinbib}
\bibliographystyle{alpha}

\end{document}